\newtheorem{theorem}{Theorem}[section]
\newtheorem{proposition}[theorem]{Proposition}
\newtheorem{lemma}[theorem]{Lemma}
\newtheorem{claim}[theorem]{Claim}
\newtheorem{corollary}[theorem]{Corollary}
\newtheorem{definition}[theorem]{Definition}
\newtheorem{remark}[theorem]{Remark}
\theoremstyle{remark}
\newcommand{\cA}{\mathcal{A}}\newcommand{\cB}{\mathcal{B}}
\newcommand{\cC}{\mathcal{C}}
\newcommand{\cE}{\mathcal{E}}\newcommand{\cF}{\mathcal{F}}
\newcommand{\cQ}{\mathcal{Q}}
\newcommand{\bC}{\mathbb{C}}
\newcommand{\bF}{\mathbb{F}}
\newcommand{\bN}{\mathbb{N}}
\newcommand{\bR}{\mathbb{R}}
\newcommand{\bZ}{\mathbb{Z}}
\newcommand{\1}{\mathds{1}}
\newcommand{\poly}{\operatorname{poly}}
\newcommand{\Enc}{\operatorname{Enc}}
\newcommand{\evl}{\operatorname{ev}}
\newcommand{\filler}{\cdot}
\newcommand{\nc}{\newcommand}
\nc{\on}{\operatorname}
\nc{\Spec}{\on{Spec}}
\nc{\Aut}{\textit{Aut}}
\nc{\id}{\textit{id}}
\nc{\chr}{\on{char}}
\nc{\im}{\on{im}}
\nc{\Hom}{\on{Hom}}
\nc{\lcm}{\on{lcm}}
\nc{\dual}[1]{\prescript{t}{}{#1}}
\nc{\transpose}[1]{{#1}^{\intercal}}
\nc{\Sym}{\on{Sym}}
\nc{\End}{\on{End}}
\nc{\stab}{\on{stab}}
\nc{\Li}{\on{Li}}
\nc{\spn}{\on{span}}
\nc{\sgn}{\on{sgn}}
\nc{\supp}{\on{supp}}
\nc{\Unif}{\on{Unif}}
\newcommand\footnoteref[1]{\protected@xdef\@thefnmark{\ref{#1}}\@footnotemark}
\title{Near-Asymptotically-Good Quantum Codes with Transversal CCZ Gates and Sublinear-Weight Parity-Checks\thanks{Research supported in part by a ONR grant N00014-24-1-2491 and a UC Noyce initiative award. L.~Golowich acknowledges support from a National Science Foundation Graduate Research Fellowship under Grant No.~DGE 2146752.}}
\author{Louis Golowich \\
  UC Berkeley \\
  \href{mailto:lgolowich@berkeley.edu}{\texttt{lgolowich@berkeley.edu}}
  \and
  Venkatesan Guruswami \\
  UC Berkeley \\
  \href{mailto:venkatg@berkeley.edu}{\texttt{venkatg@berkeley.edu}}
}
\begin{document}

\pagenumbering{gobble}

\maketitle
\thispagestyle{empty}

\begin{abstract}
  It is a major challenge to construct good quantum codes supporting fault-tolerant (e.g.~transversal) non-Clifford gates with low-weight parity-check measurements. In this paper, we construct the first known quantum codes with linear dimension and distance supporting transversal non-Clifford gates that have sublinear locality (i.e.~parity-check weight). Specifically, we construct codes with transversal $CCZ$ gates that have dimension and distance $\Theta(N)$ and locality $O(\sqrt{N})$, where $N$ denotes the block length. We furthermore design an efficient decoding algorithm for these codes. The alphabet size of these codes is $q=\Theta(\sqrt{N})$, but it can be reduced to a constant (e.g.~$q=2$) while incurring a polylogarithmic loss in other parameters. We also show how to decrease the locality to $O(N^{1/3})$, albeit with a larger alphabet size and slightly lower distance.

\smallskip
  We construct these codes as products of classical codes with appropriate algebraic structure. While our quantum codes are subsystem codes with non-commuting gauge operators, we show they nevertheless permit error correction from noisy syndrome measurements.

\smallskip
  As byproducts, we prove multiple technical results of independent interest. In particular, our efficient decoder can be viewed as a new multivariate generalization of Prony's method for reconstructing a function from partial access to its Fourier transform. Meanwhile, our distance analysis involves new connections to the classical study of maximally recoverable codes. Our results on product codes also resolve a conjecture of Bravyi \& Hastings (2014) in the large-alphabet regime, by providing a new construction of quantum codes with dimension and distance $\Theta(N)$ and locality $N^\epsilon$ for arbitrary $\epsilon>0$.

\end{abstract}

\newpage

\tableofcontents

\newpage

\pagenumbering{arabic}

  

\section{Introduction}


A central problem in quantum computing lies in the construction of quantum error-correcting codes that permit efficient fault-tolerant computation. In particular, \emph{non-Clifford gates} present one of the most challenging components of a quantum computation to render fault-tolerant. Indeed, circuits consisting solely of Clifford gates are efficiently classically simulable, and are more easily performed fault-tolerantly. Universal quantum computation can be obtained by adding a single non-Clifford gate, such as $CCZ$ or $T$, to the Clifford gate set.

Fault-tolerant non-Clifford gates are often obtained by applying the gate transversally (i.e.~in a depth-1 circuit) to a carefully constructed quantum code on which this physical circuit induces the desired logical gate. Yet such codes have proven difficult to construct, and there are no known codes supporting transversal non-Clifford gates that achieve even close to the asymptotically optimal parameters for dimension (message length), distance (error resilience), and locality (parity-check weight).

Indeed, while the recent line of work \cite{krishna_towards_2019,wills_constant-overhead_2024,golowich_asymptotically_2025,nguyen_good_2025} gave the first known asymptotically good $[[N,\Theta(N),\Theta(N)]]$\footnote{An $[[N,K,D]]_q$ code refers to a quantum code of length $N$, dimension $K$, and distance $D$ with alphabet size (i.e.,~local qudit dimension) $q$.} constructions, meaning that the dimension and distance scale linearly with the block length $N$, these codes have poor locality, as they require parity-check (i.e.,~stabilizer) measurements of linear weight. This high check weight induces a significant space-time overhead for performing error correction on these codes (see e.g.,~\cite{nguyen_quantum_2024}). Fault-tolerant error correction on these codes therefore also requires some sort of concatenation to protect against errors during the parity-check measurements, which require large circuits to perform.

Codes supporting transversal non-Clifford gates with constant or polylogarithmic locality have also been constructed \cite{bombin_exact_2007,bombin_topological_2007,bombin_gauge_2015,zhu_non-clifford_2023,scruby_quantum_2024,golowich_quantum_2024,zhu_topological_2025,breuckmann_cups_2024}. However, all such known constructions have distance at most\footnote{The recent manuscript \cite{zhu_transversal_2025} claimed a construction of qLDPC codes of distance $\Theta(N^{2/3})$ with transversal $CCZ$ gates. However, we have become aware of a gap in the presented distance analysis. We believe this issue can likely be resolved with a modified construction, but an such an update has not yet been published.} $O(\sqrt{N})$, and typically have poor dimension as well. Specifically, the best previously known parameters among codes supporting transversal non-Clifford gates with locality $\leq\sqrt{N}$ are given by the $[[N,\Theta(\sqrt{N}),\Theta(\sqrt{N})]]$ codes with constant locality of \cite{zhu_topological_2025}, along with the $[[N,N^{1-\epsilon},\tilde{\Theta}(N^{1/3})]]$ codes with polylogarithmic locality of \cite{golowich_quantum_2024}.

In this paper, we construct the first known quantum codes with transversal non-Clifford gates that have linear dimension and distance, and sublinear locality. Specifically, we construct such codes with locality $O(\sqrt{N})$ over an alphabet of size $O(\sqrt{N})$, and we design an efficient decoding algorithm for these codes. This alphabet size can be reduced to a constant, while incurring a polylogarithmic loss in other parameters. We furthermore show how to decrease the locality to the $O(N^{1/3})$, albeit with a larger alphabet size and slightly lower distance. Our constructions build on the homological product paradigm previously studied by \cite{bravyi_homological_2014,zeng_minimal_2020}. As a byproduct, we also resolve (in the large alphabet regime) an open question of \cite{bravyi_homological_2014} by constructing homological product codes of linear dimension and distance with locality $N^\epsilon$ for arbitrarily small $\epsilon>0$ (though without transversal gates).

We remark that our codes permit magic state distillation with sub-polylogarithmic overhead. Until recently, it was a major open question to obtain such codes, which have since been constructed by the works \cite{wills_constant-overhead_2024,golowich_asymptotically_2025,nguyen_good_2025,golowich_quantum_2024,nguyen_quantum_2024}. Formally, given $m$ copies of the magic state $CCZ\ket{+}^{\otimes 3}$, our codes can be applied in the protocol described in \cite{bravyi_magic-state_2012} to distill $m/\log^\gamma(1/\epsilon)$ copies of the magic state with noise rate $\epsilon$, where the yield parameter $\gamma:=\log(N/K)/\log(D)$ approaches $0$ as $N\rightarrow\infty$. The smaller locality of our codes compared to those of \cite{wills_constant-overhead_2024,golowich_asymptotically_2025,nguyen_good_2025}, which otherwise achieve similar parameters, may translate to reduced space-time overhead for magic state distillation; we leave the problem of performing a precise theoretical as well as practical space-time comparison for future work (see in particular \cite{nguyen_quantum_2024}, where space-time overhead is of central importance).

To obtain our codes described above, we prove multiple algebraic results of independent interest, which constitute much of the technical work in this paper. Specifically, our efficent decoder mentioned above is based on a polynomial-time algorithm we develop for decoding \emph{dual} tensor products of Reed-Solomon codes. This latter algorithm, which is highly intricate, provides a new multivariate generalization of the well-studied Prony's method (dating back to 1795 \cite{de_prony_essai_1795}) for recovering a function given limited samples from its Fourier transform. Meanwhile, to bound the distance of our codes with locality $O(N^{1/3})$, we construct the first known codes possessing algebraic structure that exhibit a \emph{higher-order product-expansion} property, and have good dual codes\footnote{The known applications of product-expanding codes in quantum error correction crucially require the dual codes to have good distance \cite{panteleev_asymptotically_2022,leverrier_quantum_2022-1,dinur_good_2023,dinur_expansion_2024}.}. This property crucially underlies the recent construction of nearly good quantum locally testable codes \cite{dinur_expansion_2024}, and was previously only known for unstructured random codes \cite{kalachev_maximally_2025} (whose techniques we build on).

\subsection{Main Results}
\label{sec:mainres}
We now state our results on codes with transversal non-Clifford gates more formally. Recall that for a finite field $\bF_q$ of characteristic $p$, the $q$-ary $CCZ$ gate acts on three $q$-dits as\footnote{\label{footnote:trdef} Here $\tr_{\bF_q/\bF_p}:\bF_q\rightarrow\bF_p$ denotes the $\bF_p$-linear field trace map. Specifically, $\tr_{\bF_q/\bF_p}(\alpha)$ is defined to be the trace of the $\bF_p$-linear operator given by multiplication by $\alpha\in\bF_q$, when viewing $\bF_q$ as a $\bF_p$-linear vector space. Also, we will sometimes need the slightly more general $CCZ^a$ gate for $a\in\bF_q$; see Definition~\ref{def:crz}.}
\begin{equation*}
  CCZ\ket{z_1}\ket{z_2}\ket{z_3} = e^{2\pi i\tr_{\bF_q/\bF_p}(z_1z_2z_3)/p}\ket{z_1}\ket{z_2}\ket{z_3} \hspace{1em} \forall z_1,z_2,z_3\in\bF_q.
\end{equation*}
We say that a $[[N,K,D]]_q$ code supports a \emph{transversal $CCZ$ gate} if physical $CCZ$ gates can be applied on disjoint triples of physical qudits across three code states in order to induce logical $CCZ$ gates on $K$ disjoint triples of encoded logical qudits.

Our first main result provides the first known codes supporting transversal non-Clifford gates with linear dimension and distance, and sublinear locality:

\begin{theorem}[Informal statement of Theorem~\ref{thm:transRS}, Corollary~\ref{cor:subRSdec}, and Lemma~\ref{lem:alphred}]
  \label{thm:prod2inf}
  There exists an infinite family of $[[N,\Theta(N),\Theta(N)]]_q$ quantum subsystem codes of locality $O(\sqrt{N})$ that support transversal $CCZ$ with alphabet size $q=O(\sqrt{N})$. These codes have a polynomial-time decoding algorithm against $\Theta(N)$ adversarial errors.

  Furthermore, the alphabet size $q$ can be reduced to a constant (e.g.~$q=2$) while preserving the other parameters up to polylogarithmic factors, yielding\footnote{Recall the notation $\tilde{\Theta}(N):=\Theta(N\cdot(\log N)^{-O(1)})$.} $[[N,\tilde{\Theta}(N),\tilde{\Theta}(N)]]_2$ codes of locality $\tilde{O}(\sqrt{N})$ that support transversal $CCZ$.
\end{theorem}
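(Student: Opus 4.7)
The plan is to construct the codes as homological (equivalently, dual tensor) products of classical codes with Reed-Solomon-type algebraic structure, exploiting the multiplicativity of polynomials under pointwise multiplication to obtain the transversal $CCZ$ property. The starting point is three Reed-Solomon codes $C_1,C_2,C_3$ over $\bF_q$ with $q=\Theta(\sqrt{N})$, chosen so that the componentwise triple product $c_1 \cdot c_2 \cdot c_3$ of codewords lies in a Reed-Solomon code dual to (or compatible with) the stabilizer space; this degree-additivity is exactly what makes a depth-one layer of physical $CCZ$ gates across three code blocks implement a logical $CCZ$ on each encoded triple of qudits. A bivariate dual tensor product of these codes then yields a quantum subsystem code of block length $N=n^2$ whose $X$- and $Z$-type checks act only on rows or columns of an $n\times n$ grid, giving locality $O(\sqrt{N})$, while the tensor structure makes dimension and distance multiplicative, for a first-pass count of $\Theta(N)$ each.

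To turn the distance count into a rigorous $\Theta(N)$ bound, I would use the connection to maximally recoverable tensor codes flagged in the abstract. The point is that any low-weight operator surviving as a nontrivial bare logical operator of the subsystem code must project to a pattern that evades both the row and the column Reed-Solomon parity checks of the underlying dual tensor product; the MDS nature of Reed-Solomon together with maximal-recoverability-style bounds forces such patterns to have weight $\Omega(N)$. I would also verify that the non-commuting gauge operators do not destroy this guarantee and that syndrome extraction remains robust under noise, as claimed in the paper.

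The main obstacle will be the polynomial-time decoding algorithm correcting $\Theta(N)$ adversarial errors, which the abstract advertises as a multivariate Prony method. My plan here is to treat codewords of the dual tensor product of Reed-Solomon codes as multivariate polynomials whose evaluation (Fourier) transforms have sparse monomial support, and to view the received word as partial, corrupted access to this transform. Classical univariate Prony locates the support by building an annihilator polynomial via a structured linear system and extracting its roots; I would generalize this by solving a linear system over $\bF_q$ for the coefficients of a multivariate annihilator ideal, using elimination adapted to the bivariate product structure to recover the error locations, and then solving a final linear system on the correctly-identified coordinates to recover the error values. Pushing this to correct $\Theta(N)$ errors in polynomial time, while interfacing correctly with the subsystem structure and remaining stable to noisy syndromes, is where essentially all of the technical work sits.

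Finally, to obtain the constant-alphabet statement, I would concatenate the outer code with a small (constant block length) inner quantum code that itself supports transversal $CCZ$, such as a suitable punctured quantum Reed-Muller code. Transversal $CCZ$ composes cleanly through concatenation, so each encoded logical qudit of the outer code inherits a transversal $CCZ$ at the physical level; the locality blows up by only a $\mathrm{polylog}(N)$ factor coming from the size of the inner code needed to support $CCZ$ on qubits and from the binary expansion of outer-alphabet symbols, yielding the claimed $[[N,\tilde{\Theta}(N),\tilde{\Theta}(N)]]_2$ codes with $\tilde{O}(\sqrt{N})$ locality. The decoder of the previous step lifts through the concatenation in standard fashion once the inner code comes with its own efficient bounded-distance decoder.
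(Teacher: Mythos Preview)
Your high-level architecture is right in spirit but diverges from the paper in several load-bearing places, and at least one of those divergences is a genuine gap.

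\textbf{Distance.} You propose to get the $\Theta(N)$ distance from maximal-recoverability-style arguments. The paper does not do this for the two-fold product: the distance of $Q=Q^1\otimes Q^2$ comes from \emph{product-expansion} of pairs of Reed--Solomon codes (the Polishchuk--Spielman lemma, Theorem~\ref{thm:pe2RS}), fed into the general subsystem-product distance bound Theorem~\ref{thm:subpe}. The MR/maximally-extendable machinery appears only for the \emph{three}-fold product (Theorem~\ref{thm:prod3inf}), where higher-order product-expansion is needed and no direct analogue of Polishchuk--Spielman is available. Your ``MDS plus MR forces weight $\Omega(N)$'' sketch is not obviously a proof: a nontrivial element of $(Q_Z+Q_X^\perp)\setminus Q_X^\perp$ need not evade all row and column checks, and the actual argument (Lemma~\ref{lem:sspedis}) is an induction that crucially uses the product-expansion inequality, not an erasure-pattern argument.

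\textbf{Construction and transversal $CCZ$.} The code is a subsystem product of \emph{two} quantum Reed--Solomon codes, not a product built from three classical codes; the ``three'' in $CCZ$ refers to three code blocks on which the gate acts. More importantly, the transversal-$CCZ$ condition is not simply ``$c_1\cdot c_2\cdot c_3$ lands in the dual of the stabilizer.'' The paper isolates the condition $L^{*3}\cap\bigl(S*(L+S)^{*2}\bigr)=\{0\}$ (Theorem~\ref{thm:transgen}), and showing it holds for Reed--Solomon requires a delicate choice of degree windows (Theorem~\ref{thm:transRS}) precisely because the product-expansion hypothesis forces the Reed--Solomon rates into a constrained regime. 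Your plan does not anticipate this tension.

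\textbf{Decoder.} Your Prony-style outline (build a multivariate annihilator, do elimination, read off error locations) captures the first move---the paper does compute a low-degree $e_0$ with $e_0\cdot c$ in a slightly larger dual-tensor code---but misses the step that makes the bivariate case work: a single annihilator can fail because the corruption may look like a low-degree \emph{rational} function along a row or column. The paper's fix is to compute a basis of all valid annihilators and then take \emph{gcd's of their restrictions to each row and column} to detect and excise these bad lines (Algorithm~\ref{alg:decoder}, Lemma~\ref{lem:algtech}), followed by two clean-up passes. ``Elimination adapted to the bivariate structure'' does not capture this, and it is where the analysis (again via product-expansion) lives.

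\textbf{Alphabet reduction.} The paper does not concatenate with an inner quantum code. Instead it uses that $\bF_q$-multiplication is $\bF_{q'}$-multilinear (Lemma~\ref{lem:alphred}): view the code over the subfield, repeat each coordinate $e^{r-1}$ times, and permute so that the resulting multilinear form becomes a sum of disjoint monomials. Your concatenation idea is plausible as an alternative, but you would need an inner code that supports transversal $CCZ$ and has the right compatibility with the outer subsystem structure; the paper's route sidesteps that entirely.
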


While our codes in Theorem~\ref{thm:prod2inf} are subsystem codes, meaning that the weight-$O(\sqrt{N})$ parity-check measurements do not commute, we show in Appendix~\ref{sec:ec} that these codes nevertheless support a sort of ``single-shot'' error correction that works even in the presence of measurement errors (to the extent possible for a code of locality $O(\sqrt{N})$). Hence the subsystem nature of these codes does not pose any fundamental disadvantage compared to non-subsystem codes.

Furthermore, magic state distillation typically uses concatenation to ensure error-free syndrome measurements for the code with a transversal non-Clifford gate. In this setting, the single-shot error correction described above is therefore not necessary, but may lead to further improvements in efficiency. Specifically, by correcting some of the errors arising during the syndrome measurement, the single-shot property may allow for a more lightweight base code in the concatenation.

We construct our codes in Theorem~\ref{thm:prod2inf} by taking the product of a pair of quantum codes, which are in turn built from classical Reed-Solomon codes. Our efficient decoder in Theorem~\ref{thm:prod2inf} relies on a novel decoder for dual tensor products of such Reed-Solomon codes, which has implications of independent interest, as described in Section~\ref{sec:keydecoding} below.

Our second main result shows how to obtain codes with even smaller locality using higher-order products:

\begin{theorem}[Informal statement of Theorem~\ref{thm:tripleprodparam}]
  \label{thm:prod3inf}
  For every $\epsilon>0$, there exists an infinite family of $[[N,\Theta(N),\Theta(N^{1-\epsilon})]]_q$ quantum subsystem codes with locality $\Theta(N^{1/3})$ that support transversal $CCZ$ with alphabet size $q=2^{\poly(N)}$.
\end{theorem}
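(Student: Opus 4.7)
The plan is to construct the $[[N,\Theta(N),\Theta(N^{1-\epsilon})]]_q$ codes as a triple homological product, in direct analogy with the pair-product construction underlying Theorem~\ref{thm:prod2inf}. I would take three constituent classical (or CSS) codes of block length $n\approx N^{1/3}$ each, and combine them via a $3$-fold product to obtain a quantum subsystem code of block length $N\approx n^3$. Because the parity checks of the product code factor through the low-weight checks of individual components, the locality is at most $O(n)=O(N^{1/3})$, matching the target.

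Transversal $CCZ$ is natural for such a triple-product construction because the $CCZ$ gate is itself trilinear in the computational basis. Physical $CCZ$ gates applied across three code copies, on aligned triples of qudits indexed by the product structure, should induce logical $CCZ$'s on the encoded qudits via essentially the same mechanism as in the $3$D toric-code style constructions of \cite{bombin_exact_2007,zhu_non-clifford_2023}, lifted to the algebraic product setting used here. I expect this part to go through cleanly once the triple-complex is set up, mirroring the pair-product argument.

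The main obstacle, and where the exotic alphabet size $q=2^{\poly(N)}$ enters, is the distance analysis. For a pair-product the quantum distance reduces to bounding the distance of a \emph{dual tensor product} of the form $C_1\otimes C_2^\perp + C_1^\perp\otimes C_2$, which the paper handles by a new connection to maximally recoverable codes. For a triple-product the analogous object involves nested mixtures of the $C_i$'s and $C_i^\perp$'s across all three factors, and one needs a \emph{higher-order product-expansion} property for these mixtures together with good dual distance of each $C_i$. This property was previously known only for unstructured random codes \cite{kalachev_maximally_2025}; I would instantiate the construction using the paper's promised algebraic codes satisfying higher-order product-expansion with good dual codes, which the introduction advertises as one of the main technical byproducts.

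The hardest step is therefore proving higher-order product-expansion for some algebraically structured family of codes with good dual distance, lifting the Kalachev--Panteleev random argument to a setting where the randomness is absorbed into a very large field via maximally-recoverable-type constructions. I expect the accumulated slack across the three factors in the inductive product-expansion bound to account for the $N^\epsilon$ loss in the distance, while the need for a maximally-recoverable-like algebraic family over exponentially large fields forces the alphabet size $q=2^{\poly(N)}$. Finally, linearity of the dimension follows from multiplying the rates of the three constituents, each of which can be taken to be constant.
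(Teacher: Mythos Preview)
Your overall architecture is right: a subsystem product of $t=3$ length-$n$ CSS codes gives locality $O(n)=O(N^{1/3})$, the dimension is the product of the constituent dimensions, and the distance bound reduces via Theorem~\ref{thm:subpe} to establishing higher-order product-expansion for collections of the form $({Q^1_Z}^\perp,\dots,{Q^{i-1}_Z}^\perp,Q^i_X)$. You also correctly identify that the paper's new algebraic product-expanding family (Theorem~\ref{thm:peptRSinf}, proved by lifting the Kalachev--Panteleev maximally-extendable argument to randomly punctured tensor Reed-Solomon codes) is what drives both the alphabet size $q=2^{\poly(N)}$ and the $N^\epsilon$ distance loss.

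The genuine gap is the transversal $CCZ$ mechanism. The $3$ in ``$3$-fold product'' has nothing to do with the $3$ in $CCZ$: the pair-product codes of Theorem~\ref{thm:prod2inf} already support transversal $CCZ$, and the paper explicitly notes (Section~\ref{sec:discussion}) that its gates are \emph{not} obtained via cup products or $3$D-toric-code-style topology. Instead, $CCZ$ comes from Theorem~\ref{thm:transgen}: one must exhibit a subspace $L=\bigotimes_i L_i$ of $Z$-logicals satisfying the pointwise multiplication condition $L^{*3}\cap(S*(L+S)^{*2})=\{0\}$, where $S=Q_Z\cap Q_X^\perp$. This is an algebraic fact about the component codes $Q^i_X,Q^i_Z$, not a consequence of the product structure per se.

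This matters because verifying~(\ref{eq:multprop}) and achieving the product-expansion hypotheses of Theorem~\ref{thm:subpe} place competing constraints on the dimensions of $Q^i_X,Q^i_Z$. The paper's instantiation (Section~\ref{sec:tripleprod}) uses randomly punctured tensor Reed-Solomon codes with carefully tuned degree parameters and auxiliary multiplier vectors $\gamma_i$ precisely to thread this needle; the verification of~(\ref{eq:multprop}) in Claims~\ref{claim:L3prop} and~\ref{claim:SLSprop} is a nontrivial case analysis that does not follow from general principles. Your proposal treats the $CCZ$ step as routine (``should go through cleanly''), but it is where a substantial part of the work lies.
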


We construct our codes in Theorem~\ref{thm:prod3inf} by taking the product of three quantum codes, which we in turn build from randomly punctured tensor products of classical Reed-Solomon codes. As described in Section~\ref{sec:keypetRS} below, we leverage the combined algebraic structure and local testability of tensor Reed-Solomon codes to ensure good distance in Theorem~\ref{thm:prod3inf}, for which our proof involves novel results of independent interest.


\subsection{Additional Results on Product Constructions}
In this section, we provide more details on the product constructions we use, and we describe additional results that we prove for such products.

As mentioned above, the products we consider build upon the homological product construction of \cite{bravyi_homological_2014}, along with the closely related subsystem products studied by \cite{zeng_minimal_2020}. \cite{bravyi_homological_2014} constructed asymptotically good quantum codes with locality $O(\sqrt{N})$ from the product of two uniformly random CSS codes. Their distance analysis relied on this randomness in a seemingly ad-hoc manner, and they were unable to extend their analysis to higher-order products of more than two codes, which would yield smaller locality.

Our first key insight is that the notion of \emph{product-expansion}, introduced by \cite{panteleev_asymptotically_2022,kalachev_two-sided_2023}, provides a pseudorandomness property that is sufficient for good distance of homological and subsystem products. As such, we are able to derandomize the \cite{bravyi_homological_2014} construction, leading to several improvements:
\begin{enumerate}
\item We prove Theorem~\ref{thm:prod2inf} and Theorem~\ref{thm:prod3inf} above by taking products of codes with specific algebraic structure needed for the transversal $CCZ$ gates.
\item Whereas \cite{bravyi_homological_2014} only showed good distance and dimension for their codes of locality $O(\sqrt{N})$, we show that these codes are also locally testable, meaning that the weight of a corruption can be estimated by measuring a random low-weight check (see Definition~\ref{def:MDS} and Definition~\ref{def:CSScode}).
\item We resolve a conjecture of \cite{bravyi_homological_2014} (in the large alphabet regime) by extending the construction to higher-order products of locality $N^\epsilon$ for arbitrarily small $\epsilon>0$.
\end{enumerate}
The latter two results above are stated formally below; note the these codes are in fact ordinary (non-subsystem) CSS codes, but do not support transversal $CCZ$ gates.

\begin{theorem}[Informal statement of Corollary~\ref{cor:ssperandom} and Corollary~\ref{cor:sspemanyrandom}]
  \label{thm:homprodinf}
  There exists an infinite family of
  asymptotically good $[[N,\Theta(N),\Theta(N)]]_2$ quantum homological product codes of locality $O(\sqrt{N})$ that are locally testable with constant soundness.

  Furthermore, for every $\epsilon>0$, there exists an infinite family of $[[N,\Theta(N),\Theta(N)]]_q$ quantum homological product codes of locality $O(N^\epsilon)$ and alphabet size $q=2^{\poly(N)}$.
\end{theorem}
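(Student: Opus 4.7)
The plan is to construct both families via the homological product of chain complexes attached to CSS codes, following Bravyi--Hastings, and to replace their use of uniformly random codes by classical codes enjoying the product-expansion property of Panteleev--Kalachev. Recall that a CSS code is encoded by a length-$3$ chain complex $\bF_q^{m_X}\xrightarrow{\partial_X^\intercal}\bF_q^{n}\xrightarrow{\partial_Z}\bF_q^{m_Z}$ with $\partial_Z\partial_X^\intercal=0$, and the degree-$1$ part of the tensor product of two such complexes again defines a CSS code, of length $\Theta(n_1n_2)$, locality $O(n_1+n_2)$, and dimension determined by the Künneth formula. The first step is a general distance-and-testability lemma: if every relevant pair $(C,C')$ of constituent classical codes is $\kappa$-product-expanding, then every low-weight element in the $Z$-kernel modulo $X$-image of the product complex must agree, up to a genuinely local correction, with a pure tensor in $C^{(1)}\otimes C^{(2)}$, and hence has weight $\Omega(\kappa\cdot\min(n_1d_2,n_2d_1))$. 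The very same decomposition yields the local-testability inequality, because product-expansion is by definition the statement that ``far from the tensor code'' implies ``many nonzero syndrome coordinates''.

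For part~1, I would take $n_1=n_2=\Theta(\sqrt{N})$ and set each factor to be a CSS code built from binary classical codes $(C_X^{(i)},C_Z^{(i)\perp})$ that are simultaneously asymptotically good and constant product-expanding. The existence of such binary pairs follows from standard probabilistic arguments in the product-expansion literature (Kalachev--Kalachev), so the product yields the claimed asymptotically good $[[N,\Theta(N),\Theta(N)]]_2$ code of locality $O(\sqrt{N})$, locally testable with constant soundness once one normalizes by the locality.

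For part~2, I would iterate: take the homological product of $t=\lceil1/\epsilon\rceil$ CSS codes each on $\Theta(N^{1/t})$ qudits, so the result has length $\Theta(N)$ and locality $O(N^{1/t})=O(N^\epsilon)$, with linear rate following again from an iterated Künneth formula. The main obstacle --- and the source of the alphabet blowup to $q=2^{\poly(N)}$ --- is that linear distance now requires a \emph{higher-order} ($t$-wise) product-expansion of the constituent codes, a substantially stronger pseudorandomness property than the pairwise case that is not currently known to hold for any explicit or binary family. The distance would be obtained by induction on the number of tensor directions: at each step, $t$-wise product-expansion lets us strip a low-weight logical operator along one direction while losing only a factor $\Omega(\kappa)$, reducing after $t=O(1)$ steps to the classical distance of a single factor. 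To feed the induction I would invoke the Kalachev--Kalachev existence theorem that uniformly random codes over a sufficiently large field (of size exponential in $t$ and in the block length) are $t$-wise product-expanding with positive probability, which is exactly what forces $q=2^{\poly(N)}$. The most delicate point, and where I expect to spend most of the effort, is ensuring that the higher-order product-expansion constant does not degrade through the $t$-fold induction, so that the final distance remains $\Omega(\kappa^t N)=\Theta(N)$ for constant $t$, and that the chain-complex-level formulation correctly handles the $X$- and $Z$-sides of the iterated CSS presentation simultaneously.
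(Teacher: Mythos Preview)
Your high-level strategy---prove distance by an inductive direction-peeling argument driven by product-expansion, use random binary codes for $t=2$, and invoke the Kalachev large-alphabet $t$-wise product-expansion theorem for general $t$---matches the paper's exactly. Two technical points need correcting.

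First, Bravyi--Hastings (and this paper) work with \emph{single-sector} chain complexes, not length-$3$ ones: a single space $C=\bF_q^n$ with boundary $\partial=H_X^\top H_Z$ (requiring $\dim Q_X=\dim Q_Z$), so the $t$-fold homological product lives on $\bF_q^{n^t}$ with length exactly $n^t$ and locality $\leq tn$. Your length-$3$ formulation would produce a multi-term complex whose middle CSS code has a more complicated length. More importantly, the precise product-expansion hypothesis the induction needs is that for each $i\in[t]$ the collection $(B_*(\cC_1),\dots,B_*(\cC_{i-1}),Z_*(\cC_i))$---a mixed tuple of $i-1$ boundary spaces and one cycle space---is product-expanding; ``every relevant pair'' is too vague to drive the step where one contracts along direction $t$ against a cohomology witness.

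Second, your local-testability argument has a genuine gap. Soundness is a bound on the \emph{filling constant} (given a boundary $b$, find a low-weight preimage), which is not the same problem as lower-bounding nontrivial cycle weight, and does not follow from ``the very same decomposition''. The paper proves soundness separately, and only for $t=2$: starting from an arbitrary filling $a$ of $b=\partial^{\cA}a$, one applies product-expansion of $(B_*(\cC_1),B_*(\cC_2))$ to $b$ to adjust $a$ by an element of $\im\partial_1\otimes\im\partial_2$ so that $\partial_i^{(i)}a$ has few nonzero direction-$i$ columns, and then subtracts an element of $Z_*(\cC_1)\otimes Z_*(\cC_2)$ supported on those columns to bound $|a|$. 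Product-expansion is indeed the engine, but the argument is distinct from the distance proof.
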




The code parameters in Theorem~\ref{thm:homprodinf} are surpassed by the recent breakthrough line of work constructing asymptotically good quantum LDPC codes \cite{breuckmann_balanced_2021,panteleev_asymptotically_2022,leverrier_quantum_2022-1,dinur_good_2023} and nearly good quantum locally testable codes \cite{dinur_expansion_2024,kalachev_maximally_2025} of constant locality. All of these constructions rely on a specific operation called a ``lifted'' or ``balanced'' product \cite{panteleev_quantum_2022,breuckmann_balanced_2021,panteleev_asymptotically_2022}, which can only be applied to highly symmetric codes possessing a large group symmetry. In contrast, our codes in Theorem~\ref{thm:homprodinf} use the more basic and widely applicable homological product construction, which is in essence an ordinary tensor product. To the best of our knownledge, Theorem~\ref{thm:homprodinf} provides the first known quantum codes of linear dimension and distance and of locality $\leq N^\epsilon$ for arbitrarily small $\epsilon>0$ that do not rely on lifted/balanced products.

\subsection{Roadmap}
The remainder of this paper is organized as follows. Section~\ref{sec:techover} provides a technical overview of our results, and concludes with some additional remarks and directions for future work in Section~\ref{sec:discussion}. We then provide necessary preliminary notions and results that will be used throughout this paper in Section~\ref{sec:prelim}. In Section~\ref{sec:peprod}, we present our distance bounds on quantum codes obtained via products of product-expanding codes. We present the core classical algorithm behind our decoder for Theorem~\ref{thm:prod2inf}, which is self-contained and has independent implications, in Section~\ref{sec:dualtensordec}. We then apply this algorithm to construct the quantum decoder in Section~\ref{sec:qdec}. We present our general paradigm for performing transversal non-Clifford gates on product codes in Section~\ref{sec:transversal}, and also apply this paradigm to construct the transversal gates in Theorem~\ref{thm:prod2inf}. Section~\ref{sec:perp} and Section~\ref{sec:tripleprod} provide our product-expansion result (underlying the distance bound) and transversal gate construction, respectively, for Theorem~\ref{thm:prod3inf}. Appendix~\ref{sec:peproofs} and Appendix~\ref{sec:omitdec} provide proofs that were omitted from the main body. Appendix~\ref{sec:ec} provides background on performing error correction at the level of quantum code states, and shows that our subsystem product codes support error correction from noisy syndromes.

\section{Technical Overview}
\label{sec:techover}
In this section, we provide a technical overview of key aspects of of our work. Specifically, in Section~\ref{sec:codeoverview}, we outline our code constructions in Theorem~\ref{thm:prod2inf} and Theorem~\ref{thm:prod3inf}, and describe the main ingredients for bounding the distance and constructing transversal $CCZ$ gates. We then provide more details on our decoding algorithm for Theorem~\ref{thm:prod2inf} in Section~\ref{sec:keydecoding} below, and on our distance proof for Theorem~\ref{thm:prod3inf} in Section~\ref{sec:keypetRS} below. We conclude the overview with some additional remarks and directions for future work in Section~\ref{sec:discussion}.

\subsection{Overview of Code Constructions}
\label{sec:codeoverview}
In this section, we provide an overview of our code constructions in Theorem~\ref{thm:prod2inf} and Theorem~\ref{thm:prod3inf}. Specifically, in Section~\ref{sec:codeinf} we present some basic coding theory background, and describe the product paradigm we consider. We then describe the main idea behind our distance bounds and transvesal $CCZ$ gates in Section~\ref{sec:subproddisinf} and Section~\ref{sec:subprodccz}, respectively.

In the overview below, we focus on the subsystem product paradigm used in Theorem~\ref{thm:prod2inf} and Theorem~\ref{thm:prod3inf}; see Section~\ref{sec:sscc} for background on the closely related homological product paradigm, which gives the (non-subsystem) codes in Theorem~\ref{thm:homprodinf}. Also see Appendix~\ref{sec:ec} for a description of how to perform error correction on our subsystem products in the presence of syndrome errors.

\subsubsection{Quantum Codes and Products}
\label{sec:codeinf}
In this section, we describe quantum subsystem codes, and the product paradigm of \cite{zeng_minimal_2020} for constructing them. More details can be found in Section~\ref{sec:prelim}.

We begin with some basic code definitions. Recall that a length-$n$ classical (linear) code of alphabet size $q$ is a subspace $C\subseteq\bF_q^n$, and the dual code is defined as $C^\perp=\{a\in\bF_q^n:a\cdot b=0\ \forall b\in C\}$. A quantum CSS code is a pair $Q=(Q_X,Q_Z)$ of length-$n$ classical codes satisfying the orthogonality condition $Q_X^\perp\subseteq Q_Z$, and has dimension $k=\dim(Q_Z)-\dim(Q_X^\perp)$.

In this paper, by a quantum subsystem CSS code (or simply ``subsystem code''), we mean a pair $Q=(Q_X,Q_Z)$ of length-$n$ classical codes, with no orthogonality condition. Such a subsystem code has dimension
\begin{equation*}
  k:=\dim(Q_Z+Q_X^\perp)-\dim(Q_X^\perp)=\dim(Q_Z)-\dim(Q_Z\cap Q_X^\perp),
\end{equation*}
and distance
\begin{equation*}
  d := \min_{c\in((Q_X+Q_Z^\perp)\setminus Q_Z^\perp)\cup((Q_Z+Q_X^\perp)\setminus Q_X^\perp)}|c|.
\end{equation*}

Given non-subsystem CSS codes $Q^i=(Q^i_X,Q^i_Z)$ of length $n$ and dimension $k_i$ for $i\in[t]$, the subsystem product code $Q=(Q_X,Q_Z)=\bigotimes_{i\in[t]}Q^i$ is defined by
\begin{equation*}
  Q_X = \bigotimes_{i\in[t]}Q^i_X, \hspace{1em} Q_Z = \bigotimes_{i\in[t]}Q^i_Z.
\end{equation*}
The product code $Q$ has length $N=n^t$, dimension $K=\prod_{i\in[t]}k_i$, and locality $\leq t\cdot n$. The main focus of our work is to construct such codes with (close to) linear distance, which also support transversal non-Clifford gates. We address these two objectives in Section~\ref{sec:subproddisinf} and Section~\ref{sec:subprodccz} below, respectively.

\subsubsection{Good Distance from Product-Expansion}
\label{sec:subproddisinf}
Our key approach to obtaining product codes with good distance is to leverage the product-expansion property introduced by \cite{panteleev_asymptotically_2022,kalachev_two-sided_2023}.

To define product-expansion, we need the following notation: for classical codes $(C_i\subseteq\bF_q^{n_i})_{i\in[t]}$, let
\begin{equation*}
  C^{(i)}=\bF_q^{n_1}\otimes\cdots\otimes\bF_q^{n_{i-1}}\otimes C_i\otimes\bF_q^{n_{i+1}}\otimes\cdots\otimes\bF_q^{n_t}.
\end{equation*}
Also, for $c\in C^{(i)}$, let $|c|_i\in[0,\prod_{j\neq i}n_j]$ denote the number of nonzero direction-$i$ columns in $c$.

\begin{definition}[\cite{kalachev_two-sided_2023}]
  \label{def:peinf}
  The \textbf{product-expansion $\rho$} of a collection $(C_i\subseteq\bF_q^{n_i})_{i\in[t]}$ of classical codes is the largest real number $\rho\geq 0$ such that for every $c\in C^{(1)}+\cdots+C^{(t)}$, there exists a decomposition
  \begin{equation*}
    c=c_1+\cdots+c_t
  \end{equation*}
  with each $c_i\in C^{(i)}$ such that
  \begin{equation*}
    |c| \geq \rho\sum_{i\in[t]}n_i|c_i|_i.
  \end{equation*}
\end{definition}

When $t=1$, the product-expansion of a single code simply equals its relative distance. Hence product-expansion is a sort of high-dimensional generalization of code distance.

Below, we state our main result bounding subsystem product code distance via product-expansion. For simplicity in exposition, we let $n=n_1=\cdots=n_t$.

\begin{theorem}[Informal statement of Theorem~\ref{thm:subpe}]
  \label{thm:subpeinf}
  Fix some $t\in\bN$. For each $i\in[t]$, let $Q^i=(Q^i_X,Q^i_Z)$ be a (non-subsystem) CSS code of length $n$ and dimension $\geq k$. For every $i\in[t]$, assume that the collections of codes $({Q^1_Z}^\perp,\dots,{Q^{i-1}_Z}^\perp,Q^i_X)$ and $({Q^1_X}^\perp,\dots,{Q^{i-1}_X}^\perp,Q^i_Z)$ have product-expansion at least some $\rho>0$. Then the subsystem product code $Q=\bigotimes_{i\in[t]}Q^i$ has parameters
  \begin{equation*}
    [[N=n^t,\;K\geq k^t,\;D \geq (\rho n)^t]].
  \end{equation*}
\end{theorem}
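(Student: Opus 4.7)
The plan is to prove the dimension bound $K\geq k^t$ formally, and the distance bound $D\geq(\rho n)^t$ by induction on $t$ driven by the product-expansion hypotheses.

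For the dimension, I would introduce the coordinate-wise quotient $\pi:=\pi_1\otimes\cdots\otimes\pi_t$, where $\pi_i\colon\bF_q^n\to\bF_q^n/(Q^i_X)^\perp$. The standard tensor-product-of-quotients identification gives $\ker\pi=\sum_i((Q^i_X)^\perp)^{(i)}=Q_X^\perp$, so $\pi$ induces an isomorphism $Q_Z/(Q_Z\cap Q_X^\perp)\cong\bigotimes_i(Q^i_Z/(Q^i_X)^\perp)$, whose right-hand side has dimension $\prod_i(\dim Q^i_Z-\dim(Q^i_X)^\perp)\geq k^t$, yielding $K\geq k^t$ at once.

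For the distance, by the $X\leftrightarrow Z$ symmetry of the hypotheses it suffices to bound $|c|$ for $c\in(Q_Z+Q_X^\perp)\setminus Q_X^\perp$. The base case $t=1$ follows from the CSS orthogonality $(Q^1_X)^\perp\subseteq Q^1_Z$, which reduces the set to $Q^1_Z\setminus(Q^1_X)^\perp$ of minimum weight $\geq d(Q^1_Z)\geq\rho n$ (since product-expansion of a single code equals its relative distance). For the inductive step, set $\tilde Q:=\bigotimes_{i<t}Q^i$, whose product-expansion hypotheses are inherited from the first $t-1$ hypotheses for $Q$, so by induction every $\tilde c\in(\tilde Q_Z+\tilde Q_X^\perp)\setminus\tilde Q_X^\perp$ has $|\tilde c|\geq(\rho n)^{t-1}$. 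View $c$ via its direction-$t$ slices $c=\sum_{\alpha\in[n]}c_\alpha\otimes e_\alpha$ with $c_\alpha\in(\bF_q^n)^{\otimes(t-1)}$, so that $|c|=\sum_\alpha|c_\alpha|$. It then suffices to show that the weighted mass of slices $c_\alpha$ representing nontrivial classes in $(\tilde Q_Z+\tilde Q_X^\perp)/\tilde Q_X^\perp$ is at least $\rho n\cdot(\rho n)^{t-1}$. This is where the length-$t$ hypothesis enters: I would construct an auxiliary tensor $\hat c\in\sum_{j<t}((Q^j_Z)^\perp)^{(j)}+(Q^t_X)^{(t)}$ that encodes, in its direction-$t$ columns, $Q^t_X$-witnesses certifying nontriviality of $c$ modulo $Q_X^\perp$ at each slice position, and in its first $t-1$ directions, $(Q^j_Z)^\perp$-valued syndromes of the slices (legally placed in $(Q^j_Z)^\perp$ thanks to the CSS inclusion $(Q^j_X)^\perp\subseteq Q^j_Z$). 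Product-expansion of $\hat c$ then couples $|c|$ to the weight contribution of nontrivial slices, and combining with the inductive hypothesis yields $|c|\geq\rho n\cdot(\rho n)^{t-1}=(\rho n)^t$.

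The main technical obstacle will be the precise construction and analysis of $\hat c$: the product-expansion decomposition of $\hat c=\sum_j\hat c_j$ must faithfully translate into a slice-wise weight bound for $c$, rather than leaving room for gauge-freedom artifacts in the splitting $c=c_Z+\sum_ic_{\perp,i}$. The nested structure of the hypotheses is essential here—the length-$(t{-}1)$ collections underpin the inductive hypothesis on $\tilde Q$, while the length-$t$ collection glues the last direction to the first $t-1$ precisely through the $(Q^j_Z)^\perp$ positions, whose appearance in the hypothesis is dictated by the syndromes one needs in the first $t-1$ directions of $\hat c$.
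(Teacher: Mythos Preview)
Your dimension argument is fine and matches the paper's (which phrases it via bases rather than the quotient map, but it is the same content).

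The distance argument has a genuine gap. Your slicing $c=\sum_\alpha c_\alpha\otimes e_\alpha$ does \emph{not} produce slices in $\tilde Q_Z+\tilde Q_X^\perp$: writing $c=c_0+\sum_{i\in[t]}c_i$ with $c_0\in\bigotimes_i Q^i_Z$ and $c_i\in((Q^i_X)^\perp)^{(i)}$, the slice $(c_t)_\alpha$ of the last summand is an \emph{arbitrary} element of $(\bF_q^n)^{\otimes(t-1)}$ (the constraint on $c_t$ is only in direction~$t$), so the inductive hypothesis cannot be applied slice-wise. Your proposed remedy---an auxiliary tensor $\hat c\in\sum_{j<t}((Q^j_Z)^\perp)^{(j)}+(Q^t_X)^{(t)}$---is both underspecified and aimed at the wrong collection: product-expansion of $\hat c$ bounds $|\hat c|$, not $|c|$, and you never explain how the two are related; moreover, for $c\in(Q_Z+Q_X^\perp)\setminus Q_X^\perp$ the relevant hypothesis is on $((Q^1_X)^\perp,\dots,(Q^{t-1}_X)^\perp,Q^t_Z)$, not the collection you wrote.

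The missing idea (which the paper uses) is that no auxiliary tensor is needed: $c$ \emph{itself} lies in the dual tensor $((Q^1_X)^\perp)^{(1)}+\cdots+((Q^{t-1}_X)^\perp)^{(t-1)}+(Q^t_Z)^{(t)}$, because $c_0+c_t\in(Q^t_Z)^{(t)}$ (both terms have direction-$t$ columns in $Q^t_Z\supseteq(Q^t_X)^\perp$). One applies product-expansion directly to $c$, using the freedom to shift the decomposition by elements of the pairwise intersections (Lemma~\ref{lem:homvanexp}), to obtain $|c|\geq\rho n\cdot\bigl|c_0+c_t+\sum_{i<t}c_{i,t}\bigr|_t$ with $c_{i,t}\in((Q^i_X)^\perp)^{(i)}\cap(Q^t_Z)^{(t)}$. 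Then, rather than slicing at a coordinate, one \emph{contracts} against a single witness $x_t\in Q^t_X$ (chosen so that $(\bigotimes_i x_i)\cdot c\neq 0$, which exists since $c\notin Q_X^\perp$). This contraction kills $c_t$ (orthogonality), lands $c_0$ in $\tilde Q_Z$ and each $c_{i,t}$ in $((Q^i_X)^\perp)^{(i;t-1)}\subseteq\tilde Q_X^\perp$, and yields an element $a'\in(\tilde Q_Z+\tilde Q_X^\perp)\setminus\tilde Q_X^\perp$ with $|a'|\leq|\,\cdot\,|_t$; the inductive hypothesis then closes the argument.
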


The distance bound in Theorem~\ref{thm:prod2inf} follows by applying Theorem~\ref{thm:subpeinf} with the product-expansion bound for Reed-Solomon codes shown by \cite{polishchuk_nearly-linear_1994}. The distance bound in Theorem~\ref{thm:prod3inf} follows by applying Theorem~\ref{thm:subpeinf} with our new product-expansion result for randomly punctured tensor Reed-Solomon codes, which we describe in more detail in Section~\ref{sec:keypetRS} below. The distance and testability bounds in Theorem~\ref{thm:homprodinf} also follow from similar results as in Theorem~\ref{thm:subpeinf}, but with the subsystem product replaced by the closely related (non-subsystem) homological product; see Section~\ref{sec:peprod} for details.

\subsubsection{Transversal $CCZ$ Gates from Algebraic Structure}
\label{sec:subprodccz}
We now describe how we perform transversal $CCZ$ gates on subsystem product codes. We say that a code supports a transversal $CCZ$ gate on $\ell$ logical qudits if the following holds: an appropriate application of physical $CCZ$ gates to disjoint triples of physical qudits across three codes states induces logical $CCZ$ gates on $\ell$ disjoint triples of logical qudits. The disjointness of the physical gates ensures our physical circuit has depth $1$, so an error on a given qudit can propagate to at most two other qudits. Meanwhile, the disjointness of the induced logical $CCZ$ gates prevents unwanted entanglement among the logical qudits, which for instance allows us to perform magic state distillation with these codes to obtain true unentangled magic $CCZ\ket{+}^{\otimes 3}$ states.

We prove the general sufficient condition for when subsystem product codes support transversal $CCZ$ stated in Theorem~\ref{thm:transgeninf} below. Here, for vectors $a,b\in\bF_q^n$, we let $a*b=(a_ib_i)_{i\in[n]}\in\bF_q^n$ denote the component-wise product. We extend this notation to vector spaces by letting $A*B=\spn\{a*b:a\in A,\;b\in B\}$, and similarly to powers, e.g.~$A^{*3}=A*A*A$.

\begin{theorem}[Informal statement of Theorem~\ref{thm:transgen}]
  \label{thm:transgeninf}
  Let $(Q^i=(Q^i_X,Q^i_Z))_{i\in[t]}$ be a collection of (non-subsystem) CSS codes, and let $Q=(Q_X,Q_Z)=\bigotimes_{i\in[t]}Q^i$ denote the subsystem product. For each $i\in[t]$, fix a subspace $L_i\subseteq Q_Z^i$ with $L_i\cap{Q_X^i}^\perp=\{0\}$, and let
  \begin{equation*}
    L := \bigotimes_{i\in[t]}L_i, \hspace{2em} S := Q_Z\cap Q_X^\perp 
  \end{equation*}
  satisfy
  \begin{equation}
    \label{eq:multpropinf}
    L^{*3} \cap (S*(L+S)^{*2}) = \{0\}.
  \end{equation}
  Then Q supports a transversal $CCZ$ gate on on $\dim(L)$ logical qudits.
\end{theorem}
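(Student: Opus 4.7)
The plan is to realize the transversal $CCZ$ by the depth-one physical circuit $U = \bigotimes_{j \in [N]} CCZ^{a_j}_{j,j,j}$ acting on matching coordinates across three copies of $Q$, for suitable scalars $a_j \in \bF_q$, and then verify that on the code space $U$ induces $\dim(L)$ parallel logical $CCZ$ gates on disjoint triples of logical qudits. Condition~(\ref{eq:multpropinf}) will enter when separating the ``pure logical'' contribution to the phase from ``stabilizer/gauge-contaminated'' contributions.

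First I would set up the logical structure. Since $L_i \cap {Q_X^i}^\perp = \{0\}$ for every $i$, a direct tensor argument yields $L \cap S = \{0\}$, so $L$ injects into the logical-$Z$ quotient $(Q_Z + Q_X^\perp)/Q_X^\perp$ and labels $\dim(L)$ independent logical-$Z$ representatives of $Q$. Fix a tensor-product basis $\{\ell^{(r)}\}$ of $L$ built from bases of the $L_i$, and use it to index logical qudits; a computational logical state $\ket{\overline{c}}$ for $c \in \bF_q^{\dim L}$ is the standard CSS superposition over an $X$-stabilizer coset of $\ell_c + Q_X^\perp$ with $\ell_c = \sum_r c_r \ell^{(r)}$, allowing the $Z$-gauge degrees of freedom to vary freely in the subsystem setting.

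Second, I would compute $U \ket{y_1}\ket{y_2}\ket{y_3}$ for $y_i \in Q_Z$: by definition of $CCZ^{a_j}$, the circuit multiplies by the phase $\exp(2\pi i\, \tr_{\bF_q/\bF_p}(\sum_j a_j y_{1,j} y_{2,j} y_{3,j})/p)$. Writing each $y_i = \ell_i + s_i$ with $\ell_i \in L$ and $s_i \in S + Q_X^\perp$, expand the triple product trilinearly to isolate a main term $\ell_1 * \ell_2 * \ell_3 \in L^{*3}$ and seven cross terms, each containing at least one factor in $S + Q_X^\perp$. Contributions involving a factor in $Q_X^\perp$ are killed by the CSS $X$-stabilizer smearing in the logical superposition (they act as $X$-stabilizer shifts and sum to zero), leaving cross terms that lie in $S * (L+S)^{*2}$. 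Third, I would invoke~(\ref{eq:multpropinf}) to choose $a \in \bF_q^N$ so that the $\bF_p$-linear functional $v \mapsto \tr_{\bF_q/\bF_p}(\langle a, v\rangle)$ vanishes on all of $S * (L+S)^{*2}$ while realizing on $L^{*3}$ a prescribed trilinear form—specifically the form that, in the tensor-product basis, equals $\sum_r \tr_{\bF_q/\bF_p}(c_{1,r} c_{2,r} c_{3,r})$ on inputs $\ell_{c_1},\ell_{c_2},\ell_{c_3}$. Such an $a$ exists by linear algebra precisely because $L^{*3} \cap S * (L+S)^{*2} = \{0\}$. With this $a$, the net phase on $\ket{\overline{c_1}}\ket{\overline{c_2}}\ket{\overline{c_3}}$ is exactly that produced by $\dim(L)$ parallel logical $CCZ$ gates on disjoint triples of logical qudits, and the residual stabilizer-contaminated phases contribute only a gauge unitary.

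The main obstacle is the third step, namely choosing $a$ and the logical basis coherently to simultaneously diagonalize the trilinear form on $L^{*3}$ and vanish on $S*(L+S)^{*2}$. The existence of a valid $a$ follows from~(\ref{eq:multpropinf}) by linear algebra, but checking that it produces the clean disjoint-$CCZ$ logical action exploits the tensor structure $L = \bigotimes L_i$ and the $\bF_p$-linearity of the trace. A secondary subtlety is the subsystem bookkeeping: unlike in ordinary CSS codes, the stabilizer-involved phases need not vanish pointwise, and one must argue they factor through a legitimate gauge unitary rather than leaking into the logical subsystem. Condition~(\ref{eq:multpropinf}) is tailored precisely to prevent any such leakage by ensuring that no nonzero element of $L^{*3}$ can be confused with an element of $S*(L+S)^{*2}$.
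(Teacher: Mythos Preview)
Your strategy matches the paper's—define $a\in\bF_q^N$ annihilating $S*(L+S)^{*2}$ and realizing the diagonal form on $L^{*3}$—but two steps do not go through as written. First, the claim that ``contributions involving a factor in $Q_X^\perp$ are killed by the CSS $X$-stabilizer smearing'' is wrong: the $CCZ$ phase is cubic, so a term like $a\cdot(g*\ell_2*\ell_3)$ with $g\in Q_X^\perp\setminus S$ couples gauge to logical data and neither averages to zero over the superposition nor factors as a pure gauge unitary. The paper sidesteps this by gauge-fixing first (see the remark after Lemma~\ref{lem:transdef}), so that every coset representative already lies in $Q_Z$ and the only ambiguity is by $S$; after that the trilinear expansion produces only the $L^{*3}$ term and the $S*(L+S)^{*2}$ cross terms, exactly as you want.

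The more serious gap is your assertion that ``such an $a$ exists by linear algebra precisely because $L^{*3}\cap S*(L+S)^{*2}=\{0\}$.'' That condition lets you prescribe $a$ freely on $L^{*3}$ while vanishing on $S*(L+S)^{*2}$, but it does \emph{not} ensure that the trilinear form $(c_1,c_2,c_3)\mapsto\sum_r c_{1,r}c_{2,r}c_{3,r}$ descends to a linear functional on $L^{*3}$ in the first place. For a generic tensor-product basis this fails: if for instance $\ell^{(1)}*\ell^{(1)}*\ell^{(1)}=\ell^{(1)}*\ell^{(1)}*\ell^{(2)}$ as vectors in $\bF_q^n$ (which happens whenever $\ell^{(1)}$ and $\ell^{(2)}$ agree on $\supp(\ell^{(1)})$), the functional would have to return both $1$ and $0$ on the same element of $L^{*3}$. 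The paper resolves this not through ``the tensor structure and $\bF_p$-linearity of the trace'' but by choosing each basis via an \emph{information set}: pick $A_i\subseteq[n_i]$ with $L_i|_{A_i}=\bF_q^{A_i}$, encode so that $\Enc(z)|_A=z$ for $A=\prod_iA_i$, and take $a$ to be the functional $v\mapsto\1_A\cdot\eta(v)$ where $\eta$ projects $(L+S)^{*3}$ onto $L^{*3}$ along $S*(L+S)^{*2}$. Then $(\Enc(z^1)*\Enc(z^2)*\Enc(z^3))|_A=z^1*z^2*z^3$ holds coordinatewise, so the functional manifestly returns $\sum_{j\in A}z^1_jz^2_jz^3_j$ with no well-definedness issue.
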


While we have stated Theorem~\ref{thm:transgeninf} for the $CCZ$ gate, it generalizes directly to the $C^{r-1}Z$ gate for every integer $r\geq 2$ 
(see Definition~\ref{def:crz}), which in particular lies in higher levels of the Clifford hierarchy than $CCZ$ for $r\geq 4$.

At the level of quantum states, the transversal $CCZ$ gate in Theorem~\ref{thm:transgeninf} should be applied to a physical code state after gauge-fixing all of the extraneous degrees of freedom in the subsystem code to equal $\ket{0}$, which can for instance be performed by applying an appropriate Pauli correction after a round of $X$-error correction; see Section~\ref{sec:transgates} and Appendix~\ref{sec:ec} for details.

The condition in~(\ref{eq:multpropinf}) can be viewed as a sort of \emph{multiplication property} of the underlying codes $Q^i_X,Q^i_Z$, as it describes how codewords of appropriate products of these codes behave under pointwise multiplication. Intuitively, the space $L$ corresponds to logical operators of the product code $Q$, while $S$ corresponds to stabilizers, and component-wise multiplication corresponds to the transversal $CCZ$ gate. In these terms, the condition~(\ref{eq:multpropinf}) essentially says that the action of the transversal $CCZ$ gate on the logical operators of a given state (corresponding to $L^{*3}$) is not affected by applying stabilizers to the state (corresponding to terms in $S*(L+S)^{*2}$), and hence the transversal gate induces the desired logical $CCZ$ action on states that lie in the code $Q$.

Related multiplication properties were used in the recent works \cite{krishna_towards_2019,wills_constant-overhead_2024,golowich_asymptotically_2025,nguyen_good_2025,golowich_quantum_2024,nguyen_quantum_2024} in order to also construct codes capable of low-overhead magic state distillation. Yet as these prior works (with the exception of \cite{golowich_quantum_2024}) did not attempt to obtain sublinear locality, their multiplication properties were generally attainable in a simpler manner. For instance, \cite[Theorem~3.1]{golowich_asymptotically_2025} obtains quantum codes with transversal $CCZ$ gates from any classical code $C$ for which $C^{*3}$ has good distance.

Such multiplication properties have typically been achieved using codes with algebraic structure, such as Reed-Solomon or Reed-Muller codes. Indeed, to prove Theorem~\ref{thm:prod2inf} we show that~(\ref{eq:multpropinf}) is satisfied when $Q^i_X,Q^i_Z$ are appropriate Reed-Solomon codes, and to prove Theorem~\ref{thm:prod3inf} we show that~(\ref{eq:multpropinf}) is satisfied when $Q^i_X,Q^i_Z$ are appropriate (duals of) randomly punctured tensor Reed-Solomon codes.

However, in both of these applications (and especially for Theorem~\ref{thm:prod3inf}), showing that~(\ref{eq:multpropinf}) holds requires a rather delicate argument, as we are restricted to parameter regimes where the algebraic codes are known to be product-expanding. Specifically, pairs of Reed-Solomon codes are only product-expanding when the rates sum to $<1$ \cite{polishchuk_nearly-linear_1994,kalachev_two-sided_2023}. Meanwhile, while our randomly punctured tensor Reed-Solomon codes have no such rate restriction, it is more difficult to reason about component-wise products of their dual codes, which are also involved in the condition~(\ref{eq:multpropinf}). Nevertheless, we overcome these issues to prove Theorem~\ref{thm:prod2inf} and Theorem~\ref{thm:prod3inf} through careful choices of parameters and analysis of the resulting expressions in~(\ref{eq:multpropinf}); the details can be found in Section~\ref{sec:transRS} and Section~\ref{sec:tripleprod} respectively. The challenges described above also underly the difficulty in extending Theorem~\ref{thm:prod2inf} and Theorem~\ref{thm:prod3inf} to higher-order products, which could lead to codes with even smaller locality.

\subsection{Decoding Dual Tensor Products of Reed-Solomon Codes}
\label{sec:keydecoding}
In this section, we describe our key technical contribution behind the decoder in Theorem~\ref{thm:prod2inf}, which consists of an efficient decoding algorithm for dual tensor products of Reed-Solomon codes. We believe this algorithm to be of independent interest, as it can be viewed as a new generalization of the well-studied Prony's method for inferring a function from limited access to its Fourier transform (see e.g.~\cite{de_prony_essai_1795,candes_robust_2006,candes_super-resolution_2013,moitra_super-resolution_2015,price_robust_2015,kunis_multivariate_2016,chen_fourier-sparse_2016,sauer_pronys_2018,cuyt_multivariate_2018,diederichs_how_2023}).

\subsubsection{Problem Overview and Connections to Prior Work}
Recall that the codewords of a length-$n$ classical Reed-Solomon code are given by lists of evaluations of low-degree polynomials on $n$ distinct points over a finite field. Our decoding problem in Theorem~\ref{thm:prod2inf} ultimately boils down\footnote{The reduction from the quantum decoding to the classical dual tensor decoding can be found in Section~\ref{sec:qdec}.} to decoding the \emph{dual tensor product} of two Reed-Solomon codes, which is the length-$N=n^2$ code whose codewords are matrices consisting of sums of Reed-Solomon codes along rows and columns. This dual tensor code only has distance $O(n)=O(\sqrt{N})$, as it has codewords supported within a single row or column. Yet for our application to quantum decoding, we want to decode against errors $e\in\bF_q^{n\times n}$ of weight $\Theta(N)$. Of course, as the code distance is much lower, we cannot hope to uniquely determine such $e$. Instead, given the corrupted codeword $c\in\bF_q^{n\times n}$, we want to find some true codeword $c'$ whose distance from $c$ is at most $O(|e|)$. That is, we want to compute a constant-factor approximation for the minimum-distance decoding problem.

The problem of decoding such codes with low-weight codewords is a common challenge (often called ``degeneracy'') in quantum error correction. Yet our specific decoding problem described above also has a natural classical interpretation. That is, if we formulate our problem in the Fourier domain, it asks us to recover a sparse function (i.e.~the error) from the restriction of its Fourier transform to a contiguous rectangle of frequencies (i.e.~the parity-check syndrome). Here we use the fact that the parity-check matrix of a (dual tensor) Reed-Solomon code can be viewed as a (bivariate) Fourier transform over a finite field. As such, the side lengths of the rectangle of evaluation points equal the dimensions of the duals of our two initial Reed-Solomon codes.

The problem of recovering a function from restricted access to its Fourier transform dates back to Prony's work in 1795 \cite{de_prony_essai_1795}, which in fact can be viewed as a decoding algorithm for Reed-Solomon codes (hundreds of years before such such codes were implemented). While Prony considered univariate functions, our setting requires multivariate (specifically bivariate) functions, which have also been studied more recently (e.g.~\cite{kunis_multivariate_2016,sauer_pronys_2018,cuyt_multivariate_2018,diederichs_how_2023}). However, these works primarily focus on uniquely recovering a function from as few samples of the Fourier transform as possible, and allow for carefully crafted such sets of samples. In contrast, in our setting the samples must simply come from a contiguous rectangle, and as a result there are multiple valid decodings.

\subsubsection{Overview of Our Algorithm}
We now describe our solution to the problem described above, which ultimately requires a fairly involved algorithm, that draws on elements from the works above on Prony's method, as well as from the literature on decoding algebraic codes (see e.g.~\cite{guruswami_essential_2022}).

First, we may view the dual tensor product of two $k$-dimensional Reed-Solomon codes as the space of matrices in $\bF_q^{n\times n}$ given by evaluations of bivariate polynomials in $\bF_q[X_1,X_2]$ that have nonzero coefficients only for monomials of degree $<k$ in at least one of the two variables.

Assume we are a corrupted codeword $c=a+b\in\bF_q^{n\times n}$, where $a$ is a true dual tensor codeword, and $b$ is a corruption. Our goal is to identify a small set of points containing the support of $b$, which we can then erase from $c$ and perform erasure decoding via Gaussian elimination in polynomial time.

To begin, we compute a nonzero bivariate polynomial $e_0(X_1,X_2)$ of some small degree $s$ such that the product $e_0(X_1,X_2)c(X_1,X_2)$ (and therefore also $e_0(X_1,X_2)b(X_1,X_2)$) lies in the dual tensor product of two $(k+s)$-dimensional Reed-Solomon codes. We could hope that $e_0$ vanishes at every point in the corruption $\supp(b)$, in which case we would be done. However, there is another possibility: if the corruption $b$ agrees with a low-degree rational function (e.g.~$1/X_1$) within a given column (or row), and is $0$ elsewhere, then the rational function's denominator is valid choice of $e_0$ (e.g.~$e_0(X_1,X_2)=X_1$) that does not vanish within that column, and nevertheless ensures that $e_0(X_1,X_2)b(X_1,X_2)$ is a dual tensor codeword, as e.g.~$X_1\cdot 1/X_1=1$ is a low-degree polynomial.

To resolve this issue, we next compute a basis of \emph{all} possible low-degree polynomials $e(X_1,X_2)$ for which $e(X_1,X_2)c(X_1,X_2)$ agrees (at points in $\supp(e_0)$) with a dual tensor codeword of two $(k+s)$-dimensional Reed-Solomon codes. Within each row specificed by $X_1=x_1\in\bF_q$ (resp.~column specified by $X_2=x_2\in\bF_q$), we then compute the greatest common divisor (gcd) of the restriction $e(x_1,X_2)$ (resp.~$e(X_1,x_2)$) of every $e$ to that row (resp.~column). The key idea is that if the corruption $b$ agrees with a low-degree rational function such as $1/X_1$ within a given column (resp.~row), then all valid choices of $e$ should be divisible by the denominator of this rational function. Hence our gcd computation will identify these ``bad'' columns and rows where $b$ looks like a rational function. We can then hope to construct a superset of $\supp(b)$ by taking the union of these bad columns and rows with the set of vanishing points of $e_0$.

The procedure above contains the key idea behind our algorithm, though as described, it will may still fail to distinguish ``good'' columns and rows, inside which $b$ has small support, from ``bad'' ones. Our final algorithm addresses this issue by in fact only labeling some columns/rows as bad in a first pass, and then re-computing a basis of low-degree $e$ on the remaining good columns/rows, and identifying any additional bad columns/rows in this second pass.

The steps above together form the core subroutine of our decoding algorithm; for details, see Algorithm~\ref{alg:decoder} and the accompanying analysis in Lemma~\ref{lem:algtech}. However, another issue remains: this subroutine does not detect corruptions $b$ that agree with some polynomial of degree greater than $k$ but less than $k+s$ within a given column or row. As such, this subroutine only performs a weaker form of decoding in which the output may lie in a slightly larger code than the input. We address this issue with additional subroutines that find a nearby codeword of the original dual tensor code; see Algorithm~\ref{alg:decoderclose} and Algorithm~\ref{alg:decoderfinish}, which are analyzed in Lemma~\ref{lem:algdeg} and Lemma~\ref{lem:algfin} respectively.

The approach of computing a low-degree polynomial $e_0$ such that the product $e_0c$ lies within the desired code is commonly used to decode algebraic codes, such as in the classic Berlekamp-Welch decoder for Reed-Solomon codes (see e.g.~\cite{guruswami_essential_2022}). Yet the approach of computing a basis for all such low-degree polynomials $e$ seems more common in the literature on Prony's method (see e.g.~\cite{kunis_multivariate_2016}). However, \cite{kunis_multivariate_2016} simply uses this basis to compute its set of common vanishing points. Our method described above of computing gcd's along restrictions to subspaces is (to the best of our knowledge) novel.

In order to bound the number of bad columns and rows in the algorithm described above, our analysis makes frequent use the product-expansion property (see \cite{kalachev_two-sided_2023}) of Reed-Solomon codes shown by \cite{polishchuk_nearly-linear_1994}. Because this result only applies to Reed-Solomon codes whose rates sum to $<1$, our algorithm also only applies to such codes. \cite{kalachev_two-sided_2023} show that Reed-Solomon codes whose rates have sum $>1$ necessarily fail to be product-expanding. Hence it is an interesting question to determine if dual tensor products of such codes can nevertheless be efficiently decoded using alternative techniques.

\subsection{Product-Expansion of Randomly Punctured Tensor Reed-Solomon Codes}
\label{sec:keypetRS}
In this section, we describe our key technical contribution behind the distance proof of Theorem~\ref{thm:prod3inf}, for which we construct a family of algebraic codes exhibiting higher-order product-expansion. Specifically, as described in Section~\ref{sec:subproddisinf} and Section~\ref{sec:subprodccz}, to prove Theorem~\ref{thm:prod3inf}, we want a family of classical codes with sufficient algebraic structure to satisfy the multiplication property in~(\ref{eq:multpropinf}), such that the codes and their duals also satisfy the product-expansion conditions in Theorem~\ref{thm:subpeinf}.

\subsubsection{Result Statement}
Our main result constructing such codes is stated in Theorem~\ref{thm:peptRSinf} below. Below, we let $RS(q,k)$ denote the Reed-Solomon code over $\bF_q$ of length $q$ and dimension $k$, which consists of the evaluations of univariate polynomials of degree $<k$ on all $q$ elements of $\bF_q$. We construct our codes below by taking the $u$th tensor power of $RS(n,k)$ to obtain a length-$q^u$ code $RS(n,k)^{\otimes u}$, and then puncturing this code down to $n\ll q^u$ randomly chosen components, where $u$ is a large constant.

\begin{theorem}[Informal statement of Theorem~\ref{thm:peptRS}]
  \label{thm:peptRSinf}
  For every fixed $t\in\bN$ and $\epsilon>0$, there exists a sufficiently large $u\in\bN$ such that the following holds: For positive integers\footnote{For technical reasons, we in fact require $n$ to be of the form $n=m^u$ for $m\in\bN$; see Section~\ref{sec:perp} for details.} $n$ and $\epsilon n\leq k_1,\dots,k_t\leq(1-\epsilon)n$, let $\bF_q$ be a field of size $q\geq 2^{n^{t+1}}$, and sample $t$ uniformly random subsets $E_1,\dots,E_t\subseteq\bF_q^u$ each of size $|E_i|=n$. Then with high probability, the collection of $t$ length-$n$ codes
  \begin{equation}
    \label{eq:ptRSinf}
    \left(RS(q,k_1)^{\otimes u}|_{E_1},\dots,RS(q,k_{t-1})^{\otimes u}|_{E_{t-1}},\;(RS(q,k_t)^{\otimes u}|_{E_t})^\perp\right)
  \end{equation}
  has product-expansion at least $\Omega(1/n^\epsilon)$.
\end{theorem}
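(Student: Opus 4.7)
The plan is to derandomize the probabilistic argument of Kalachev--Panteleev~\cite{kalachev_maximally_2025}, which established product-expansion for uniformly random linear codes, by exploiting the MDS-like pseudorandomness of tensor Reed--Solomon codes under random puncturing. The intuition is that a random $n$-sized subset $E_i \subseteq \bF_q^u$ of evaluation points, over the huge alphabet $q \geq 2^{n^{t+1}}$, looks ``generic enough'' with respect to the $u$-variate polynomial structure underlying $RS(q,k_i)^{\otimes u}$ that a union bound on ``bad witnesses'' to failure of product-expansion still goes through. The slack $n^\epsilon$ in the final lower bound $\Omega(1/n^\epsilon)$, as opposed to a constant, reflects the fact that tensor Reed--Solomon codes are only pseudorandom, not truly random, and will be controlled by taking $u$ to be a sufficiently large constant depending on $\epsilon$ and $t$ (which is also what forces the divisibility assumption $n=m^u$).

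First I would reformulate product-expansion dually. By Definition~\ref{def:peinf}, failure of $\rho$-product-expansion amounts to the existence of a nonzero tuple $(c_1,\dots,c_t)$ with $c_i \in C^{(i)}$ such that the sum $c = \sum_i c_i$ is supported on fewer than $\rho\sum_i n|c_i|_i$ coordinates, and moreover no alternative decomposition does better. For fixed combinatorial data specifying (a) the multi-set of column supports of each $c_i$, (b) the support of $c$, and (c) the overlap structure among these supports, the question becomes whether a certain system of polynomial identities admits a nontrivial solution whose punctured restrictions to $E_1,\dots,E_t$ realize the prescribed supports.

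The core of the proof is then a union bound over all such combinatorial patterns. For each pattern of total weight $w$, I would bound the probability (over uniform $E_i$) that the random puncturing admits a realization using two ingredients: counting the space of $u$-variate polynomials of individual degree $<k_i$ whose evaluations are supported on a specified sparse set, then invoking a Schwartz--Zippel-type estimate to argue that the random evaluation points of $E_i$ avoid these degeneracies with overwhelming probability. The polynomial-counting step reduces to dimension estimates for iterated restrictions of Reed--Solomon codes, which follow from the product-expansion bound of Polishchuk--Spielman~\cite{polishchuk_nearly-linear_1994} propagated through $u$ tensor factors. The union bound closes so long as the slack introduced at each level of the recursion aggregates to at most $n^\epsilon$, and this aggregation is precisely what determines how large $u$ must be chosen.

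The main obstacle will be handling the dualization of the final slot, $(RS(q,k_t)^{\otimes u}|_{E_t})^\perp$. Duality interacts with puncturing through shortening, so the polynomial description of codewords in the dual punctured tensor code is more subtle than for the primal codes, and the set-up of the ``witnessing polynomials'' must be done carefully to preserve the symmetry that makes the union bound work cleanly across all $t$ slots. A secondary difficulty is that the rate range $\epsilon n \leq k_i \leq (1-\epsilon)n$ must be handled uniformly, so the same $n^\epsilon$ slack must be achieved whether $k_i$ is close to $\epsilon n$ or to $(1-\epsilon)n$; at either boundary the relevant polynomial spaces become small or large, and special care is needed to keep the union bound sharp in both regimes.
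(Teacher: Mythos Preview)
Your proposal has the right outermost shape (union bound plus Schwartz--Zippel over a huge alphabet), but it is missing the central structural idea, and the role you assign to Polishchuk--Spielman is misplaced.

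The paper does not union-bound directly over ``bad witness tuples'' $(c_1,\dots,c_t)$. Instead it uses the \emph{inner-generated / $\epsilon$-closed set} characterization of product-expansion from \cite{kalachev_maximally_2025}: a tuple is $\rho$-product-expanding iff every $\rho$-closed subset $A\subseteq[n]^t$ is inner-generated for the dual tensor code, and being inner-generated is equivalent to a rank equality on submatrices of the canonical generator matrix (Lemma~\ref{lem:igrank}). This replaces your ill-specified ``combinatorial patterns'' by a finite collection of at most $2^{n^t}$ subsets $A$, each contributing a determinant condition.

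To prove these determinant polynomials (in the evaluation points) are not identically zero, the paper exhibits a \emph{specific witness}: set $E_1'=\cdots=E_{t-1}'=\{z_1,\dots,z_m\}^u$, so that the first $t-1$ codes are \emph{unpunctured} tensor Reed--Solomon codes. By Viderman's result (Theorem~\ref{thm:tensorltc}) these are locally testable, and by Lemma~\ref{lem:peLTC} (which crucially requires only $t-1$ of the codes to be LTCs, the last merely to have good distance) the tuple is product-expanding. This witness certifies nonvanishing of every relevant determinant; Schwartz--Zippel plus the union bound over $2^{n^t}$ sets then transfers the conclusion to random $E_i$. The $n^\epsilon$ loss comes from the polynomial-in-$n$ soundness loss in Viderman's theorem, compounded through Lemma~\ref{lem:peLTC}; taking $u$ large makes this loss $n^{O(1)/u}$ sublinear.

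Your proposal lacks this witness entirely. Polishchuk--Spielman (Theorem~\ref{thm:pe2RS}) is a statement about \emph{pairs of univariate} Reed--Solomon codes with rate sum below $1$; it does not ``propagate through $u$ tensor factors'' and is not invoked anywhere in the proof of Theorem~\ref{thm:peptRS}. Without a concrete instance where product-expansion is already known, you have no way to argue the determinant conditions are generically satisfied, and a direct counting of ``polynomials with prescribed sparse support'' does not by itself control the cancellations that define product-expansion.

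Finally, the dual slot is handled by the observation above that Lemma~\ref{lem:peLTC} needs only $t-1$ LTCs: one samples $E_t$ first, conditions on the resulting code being MDS (Lemma~\ref{lem:randommds}), and then treats it as a fixed code of good distance while running the LTC-based argument on the first $t-1$ slots. This sidesteps the shortening/duality issues you anticipate.
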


Because product-expansion is preserved when passing to subcollections (Lemma~\ref{lem:pesubtuple}), we may restrict attention to the first $t-1$ codes in~(\ref{eq:ptRSinf}) to conclude that collections of arbitrarily many randomly punctured tensor Reed-Solomon codes possess product-expansion. Yet the statement in Theorem~\ref{thm:peptRSinf} is strictly stronger, as it implies that a collection of $t-1$ such codes, along with one dual of such a code, still possess product-expansion. This condition of having $t-1$ codes of one type along with one code of the dual type exactly matches the condition needed in Theorem~\ref{thm:subpeinf} to ensure good quantum code distance. Hence we are able to apply Theorem~\ref{thm:peptRSinf} with Theorem~\ref{thm:subpeinf} to prove the distance bound in Theorem~\ref{thm:prod3inf}. We leave it as an open question to prove an analogue of Theorem~\ref{thm:peptRSinf} with more than one dual code in the collection; such a result could help further reduce the locality in Theorem~\ref{thm:prod3inf} (see the discussions in Section~\ref{sec:subprodccz} and Section~\ref{sec:discussion}).

Indeed, all known applications of product-expanding codes in quantum error correction crucially rely on product-expansion for the codes along with their duals \cite{panteleev_asymptotically_2022,leverrier_quantum_2022-1,dinur_good_2023,dinur_expansion_2024}, in order to ensure good distance in both the $X$ and $Z$ bases.
The only previously known collections of $t>2$ such product-expanding codes were uniformly random codes over large alphabets, as shown in the recent work of \cite{kalachev_maximally_2025}.
Such higher-order (with $t=4$) product-expansion was a crucial ingredient in the recent construction of nearly good quantum locally testable codes \cite{dinur_expansion_2024}. Theorem~\ref{thm:peptRSinf} may therefore mark progress towards constructing quantum locally testable codes with more algebraic structure, which has been proposed as a useful ingredient towards making such codes asymptotically good \cite{panteleev_maximally_2024} or support transversal gates \cite{golowich_quantum_2024}. Though note that as described in Section~\ref{sec:discussion}, in order to be directly applicable to the construction of \cite{dinur_expansion_2024}, Theorem~\ref{thm:peptRSinf} would need to be extended to allow all $t$ codes to be of the dual type.

\subsubsection{Proof Techniques}
Our proof of Theorem~\ref{thm:peptRSinf} builds upon the framework of \emph{maximally extendable (ME)} codes introduced by \cite{kalachev_maximally_2025}, which in turn is closely related to the classical literature on \emph{maximally recoverable (MR)} codes (see e.g.~\cite{gopalan_explicit_2014}). This framework studies codes that have ``maximal,'' or optimal, properties among all codes in a given family.

For instance, given a family of classical codes, a MR code for the family is a code that can correct every pattern of erasures that is correctable by any code in the family. MR codes are typically characterized by having an appropriate set of minors of the generator or parity-check matrix be nonzero. Such codes can then often be constructed by sampling random codes from the family over sufficiently large fields, where the desired minors vanish with high probability by the Schwartz-Zippel lemma (Lemma~\ref{lem:sz}).

\cite{kalachev_maximally_2025} showed how to apply related ideas to prove higher-order product-expansion for uniformly random codes over large fields. Specifically, \cite{kalachev_maximally_2025} first proved that an arbitrary number $t$ of classical locally testable codes have product-expansion. Such codes necessarily have dual codes of low distance, which as mentioned above, precludes the desired quantum applications. However, \cite{kalachev_maximally_2025} instead defined the notion of maximally extendable (ME) codes, which can be thought of as a version of MR for product-expansion. More concretely, \cite{kalachev_maximally_2025} characterized the product-expansion of codes $C_1,\dots,C_t$ as being equivalent (up to constant factors) to having certain minors be nonzero for the generator matrix of the dual tensor code $C^{(1)}+\cdots+C^{(t)}$ (defined in Section~\ref{sec:subproddisinf}). Then because there exist (locally testable) product-expanding codes, these minors will generically be nonzero, meaning that they will be nonzero for uniformly random codes over sufficiently large alphabets by Schwartz-Zippel.

To prove Theorem~\ref{thm:peptRSinf}, our key idea is to replace the family of all classical codes used by \cite{kalachev_maximally_2025} with a family that is \ref{it:algstruct} algebraically structured, and yet is nevertheless large enough to \ref{it:viderman} contain locally testable codes and \ref{it:randommdsinf} permit good dual code distance for generic elements. We show that the family of punctured tensor products of Reed-Solomon codes possesses all three desired properties:
\begin{enumerate}[label=(\roman*)]
\item\label{it:algstruct} Codewords of punctured tensor Reed-Solomon codes consist of evaluations of multivariate low-degree polynomials. Hence component-wise products of codewords lie within such tensor codes of higher degree (i.e.~dimension). This structure ultimately allows us to prove the desired multiplication property~(\ref{eq:multpropinf}) for the transversal $CCZ$ gate in Theorem~\ref{thm:prod3inf}.
\item\label{it:viderman} \cite{viderman_combination_2015} showed that (unpunctured) tensor products of arbitrary codes of good distance, such as Reed-Solomon codes, are locally testable.
\item\label{it:randommdsinf} We prove that randomly punctured tensor Reed-Solomon codes over large enough alphabets are in fact maximum distance separable (MDS; see Lemma~\ref{lem:randommds}), meaning that these codes and their duals both have optimal distance.
\end{enumerate}

Hence we conclude that any number $t$ of randomly punctured tensor Reed-Solomon codes over a large enough alphbet will with high probability have the necessary nonvanishing generator matrix minors to yield product-expansion.

We allow the last of the $t$ codes in Theorem~\ref{thm:peptRSinf} to be dual to our family through an additional trick. Specifically, the proof of \cite{kalachev_maximally_2025} for product-expansion of $t$ locally testable codes in fact still goes through when one of the $t$ codes is an arbitrary code of good distance. Hence we sample this final code first, deducing good distance from property~\ref{it:randommdsinf} above. We then treat this last code as fixed while sampling the other $t-1$ codes and proving product-expansion. The full argument can be found in Section~\ref{sec:perp}.



\subsection{Discussion}
\label{sec:discussion}
In this paper, we provide the first known quantum codes of linear dimension and distance with transversal $CCZ$ gates that permit error correction with sublinear weight measurements. Our constructions are based on products of codes with specific algebraic structure. En route, we prove multiple results of independent interest, pertaining to high-dimensional product codes, efficient decoding algorithms, and high-order product-expansion.

Our constructions possess some conceptual aspects that are perhaps surprising in the context of prior works. For instance, to the best of our knowledge, all previous constructions of quantum codes with low-weight checks and transversal $CCZ$ gates can be viewed through the lens of the cup product operation from algebraic topology \cite{bombin_exact_2007,bombin_topological_2007,bombin_gauge_2015,zhu_non-clifford_2023,scruby_quantum_2024,lin_transversal_2024,golowich_quantum_2024,zhu_topological_2025,breuckmann_cups_2024}. Yet our constructions in this paper appear to obtain transversal $CCZ$ gates without a cup product\footnote{The key property of a cup product $\cup$, which acts on a cochain complex, is the Leibniz rule. Over fields of characteristic $2$, this rule takes the form $\delta(\alpha\cup\beta)=\delta(\alpha)\cup\beta+\alpha\cup\delta(\beta)$, where $\delta$ denotes the coboundary map of the cochain complex, and $\alpha,\beta$ are cochains; see e.g.~\cite{lin_transversal_2024,breuckmann_cups_2024} for a more general discussion of cup products on quantum codes. We are not aware of any way to construct such a cup product on cochain complexes associated to our codes.}, perhaps suggesting that such gates require less rigid structure than previously understood.

We are able to apply the product-expansion property in more relaxed parameter regimes than prior works to prove good quantum code distance. Specifically, prior quantum code constructions simultaneously required product-expansion for a collection of codes $(C_1,\dots,C_t)$ and for the dual codes $(C_1^\perp,\dots,C_t^\perp)$ \cite{panteleev_asymptotically_2022,leverrier_quantum_2022-1,dinur_good_2023,dinur_expansion_2024}. In contrast, our product-expansion condition in Theorem~\ref{thm:subpeinf} is less stringent, as it only requires product-expansion for collections of codes and \emph{subcodes} of their duals.

Our constructions crucially rely on this looser requirement. For instance, \cite{kalachev_two-sided_2023} show that if a pair of Reed-Solomon codes is product-expanding, then the pair consisting of their duals cannot also be product-expanding; hence the prior works listed above cannot use Reed-Solomon codes. Yet our construction in Theorem~\ref{thm:prod2inf} nevertheless is based on product-expanding Reed-Solomon codes. Similarly, our construction in Theorem~\ref{thm:prod3inf} is based on the product-expanding codes we construct in Theorem~\ref{thm:peptRSinf}, which are also only shown to possess product-expansion for subcodes of the dual codes.

Our results suggest that further investigating ways to obtain good quantum codes from weaker product-expansion conditions may be a promising direction. Yet it is also an interesting question to obtain stronger product-expansion results for codes with algebraic structure. Indeed, as discussed in Section~\ref{sec:subprodccz} and Section~\ref{sec:keypetRS}, much of the difficulty in proving Theorem~\ref{thm:prod3inf} (and to a lesser extent, Theorem~\ref{thm:prod2inf}) arises because we need to choose parameter regimes in which our codes simultaneously exhibit the necessary algebraic structure~(\ref{eq:multpropinf}) and product-expansion (Theorem~\ref{thm:subpeinf}). Stronger product-expansion results in wider parameter regimes would make this task easier, which could for instance lead to extensions of Theorem~\ref{thm:prod3inf} to even higher-order products with lower locality.

\section{Preliminaries}
\label{sec:prelim}
This section introduces necessary notation and preliminary notions.

\subsection{Notation}
\label{sec:notation}
For a positive integer $n$, we let $[n]=\{1,\dots,n\}$. For a prime power $q$, we let $\bF_q$ denote the finite field of order $q$, and we let $\bF_q^*=\bF_q\setminus\{0\}$. For a set $I\subseteq[n]$, we let $\1_I\in\bF_q^n$ denote the indicator vector of set $I$, where the vector space $\bF_q^n$ will be made clear from context. For a vector $v\in\bF_q^n$, we let $|v|=|\{i\in[n]:v_i\neq 0\}|$ denote the Hamming weight of $v$, meaning the number of nonzero coordinates in the standard (or otherwise specified) basis. Unless explicitly stated otherwise, we will assume all $\bF_q$-vector spaces that we consider are finite-dimensional, so that all Hamming weights are finite.

Sometimes it will be helpful to consider more general Hamming weights, in which we partition the coordinates into groups and count the number of distinct groups containing a nonzero coordinate; the ordinary Hamming weight is recovered by choosing the finest partition. For instance, for a matrix $c\in\bF_q^{n_1\times n_2}$, we let $|c|_1$ (resp.~$|c|_2$) denote the number of nonzero columns (resp.~rows) of $c$. However, we will always specify such notation before using it.

For $t\in\bN$, $n_1,\dots,n_t\in\bN$, and $i\in[t]$, a direction-$i$ column in $[n_1]\times\cdots\times[n_t]$ is a subset of $n_i$ elements of the form $(j_1,\dots,j_t)$ such that $j_k$ is fixed for all $k\neq i$, and $j_i$ takes all possible values in $[n_i]$. Thus there are $\prod_{k\neq i}n_k$ distinct direction-$i$ columns in $[n_1]\times\cdots\times[n_t]$.

Elements of $\bF_q^{n_1}\otimes\cdots\otimes\bF_q^{n_t}=\bF_q^{n_1\times\cdots\times n_t}$ are sometimes referred to as $t$-dimensional tensors. For $i\in[t]$, a direction-$i$ column in such a tensor is a vector in $\bF_q^{n_i}$ obtained by restricting the tensor to coordinates within some direction-$i$ column in $[n_1]\times\cdots\times[n_t]$.

For vectors $v,v'\in\bF^n$, we let $v\cdot v'=\sum_{i\in[n]}v_iv_i'$ denote the standard bilinear form, and we let $v*v'=(v_iv_i')_{i\in[n]}\in\bF^n$ denote the component-wise product. We extend this latter definition to subspaces $V,V'\subseteq\bF^n$ by letting $V*V'=\spn\{v*v':v\in V,v'\in V'\}\subseteq\bF^n$.

For a set $S\subseteq\bF_q^n$, we let $\ket{S}=(1/\sqrt{|S|})\sum_{s\in S}\ket{s}\in(\bC^q)^{\otimes n}$ denote the quantum state given by the uniformly weighted superposition over elements in $S$.

\subsection{Classical Codes}
This section describes basic notions in classical coding theory.

\begin{definition}
  For a finite field $\bF_q$, a \textbf{classical linear code of length $n$ and dimension $k$ over $\bF_q$} is a $k$-dimensional linear subspace $C\subseteq\bF_q^n$. The \textbf{rate} of $C$ is $R:=k/n$. The \textbf{distance} $d$ of $C$ is the minimum Hamming weight of a nonzero element of $C$, that is $d=\min_{c\in C\setminus\{0\}}|c|$. The \textbf{relative distance} of $C$ is $\delta:=d/n$. We summarize the above parameters by saying that $C$ is an $[n,k,d]_q$ code. The \textbf{dual code} $C^\perp\subseteq\bF_q^n$ of $C$ is defined by $C^\perp=\{x\in\bF_q^n:x\cdot y=0\;\forall y\in C\}$, where $x\cdot y=\sum_{i\in[n]}x_iy_i$ denotes the standard bilinear form.
\end{definition}

All classical codes we consider in this paper will be linear, so we will often simply say ``classical code'' or even (when clear from context) ``code'' to refer to a classical linear code.


\begin{definition}
  \label{def:MDS}
  A \textbf{generator matrix} of a $k$-dimensional classical code $C\subseteq\bF_q^n$ is a matrix $G$ for which $C=\im(G^\top)$, and a \textbf{parity-check matrix} of $C$ is a matrix $H$ for which $C=\ker(H)$.

  The code $C$ is \textbf{maximum distance separable (MDS)} if it has a generator matrix $G\in\bF_q^{k\times n}$ for which every $k\times k$ submatrix has full rank, or equivalently, if it has a parity-check matrix $H\in\bF_q^{(n-k)\times n}$ for which every $(n-k)\times(n-k)$ submatrix has full rank.

  The code $C$ is \textbf{low-density parity-check (LDPC) of locality $w$} if there exists a parity-check matrix $H\in\bF_q^{m\times n}$ for $C$ such that every row and column of $H$ has at most $w$ nonzero entries.

  The code $C$ is a \textbf{locally testable code (LTC) of soundness $\rho$} if the parity-check matrix $H$ furthermore satisfies the following: for every $s\in\im H$, there exists some $e\in\bF_q^n$ with $He=s$ such that
  \begin{equation*}
    \frac{|s|}{m} \geq \rho\cdot\frac{|e|}{n}.
  \end{equation*}
\end{definition}

We will frequently use the following notions of tensor and dual tensor codes:

\begin{definition}
  \label{def:classtensor}
  For classical codes $C_1\subseteq\bF_q^{n_1}$ and $C_2\subseteq\bF_q^{n_2}$, the \textbf{tensor code} $C_1\otimes C_2\subseteq\bF_q^{n_1}\otimes\bF_q^{n_2}=\bF_q^{n_1\times n_2}$ consists of all $n_1\times n_2$ matrices for which every column lies in $C_1$ and every row lies in $C_2$. For $c_1\in C_1,c_2\in C_2$, we let $c_1\otimes c_2\in C_1\otimes C_2$ be the tensor codeword given by $(c_1\otimes c_2)_{i,j}=(c_1)_i(c_2)_j$. We furthermore define the \textbf{dual tensor code} $C_1\boxplus C_2\subseteq\bF_q^{n_1}\otimes\bF_q^{n_2}$ by
  \begin{equation*}
    C_1\boxplus C_2 = (C_1^\perp\otimes C_2^\perp)^\perp = C_1\otimes\bF_q^{n_2}+\bF_q^{n_1}\otimes C_2.
  \end{equation*}
\end{definition}

Note that ``dual tensor'' is a slight abuse of terminology that we use for conciseness, as the phrase ``dual tensor of duals'' would be more accurate.

We will use various codes arising from evaluations of polynomials over finite fields, defined below.

\begin{definition}
  \label{def:evlnotation}
  For a finite field $\bF_q$, an integer $t\in\bN$, and a subset $S\subseteq\bF_q^t$, we define a linear map
  \begin{equation*}
    \evl_S:\bF_q[X_1,\dots,X_t]\rightarrow\bF_q^S
  \end{equation*}
  by
  \begin{equation*}
    \evl_S(f) = (f(x))_{x\in S}.
  \end{equation*}
  When $S=\bF_q^t$, we may omit $S$ from the notation and write $\evl=\evl_S$. For a bounded subset $A\subseteq\bR^t$, we also define
  \begin{equation*}
    \bF_q[X_1,\dots,X_t]^A = \left\{\sum_{a\in A\cap\bZ^t}f_aX^a:f_a\in\bF_q\;\forall a\in A\cap\bZ^t\right\},
  \end{equation*}
  where $X^a=X_1^{a_1}\cdots X_t^{a_t}$, to be the space of $t$-variate polynomials that have nonzero coefficients only for those monomials whose degree lies in $A$. 

  We will perform addition and scalar multiplication on subsets of $\bR^t$ in a pointwise manner, so that for instance $rA=\{ra:a\in A\}$ and $A+A'=\{a+a':a\in A,a'\in A'\}$.
\end{definition}

The notation in Definition~\ref{def:evlnotation} behaves well under addition, pointwise multiplication, and tensor products of codes, as demonstrated by the following lemma, whose proof is immediate from the definitions.

\begin{lemma}
  For every $t\in\bN$ and every $A,A'\subseteq\bR^t$, it holds that
  \begin{align*}
    \evl(\bF_q[X_1,\dots,X_t]^A)+\evl(\bF_q[X_1,\dots,X_t]^{A'}) &= \evl(\bF_q[X_1,\dots,X_t]^{A\cup A'}) \\
    \evl(\bF_q[X_1,\dots,X_t]^A)*\evl(\bF_q[X_1,\dots,X_t]^{A'}) &\subseteq \evl(\bF_q[X_1,\dots,X_t]^{A+A'}) \\
    \evl(\bF_q[X_1,\dots,X_t]^A)\otimes\evl(\bF_q[X_1,\dots,X_t]^{A'}) &= \evl(\bF_q[X_1,\dots,X_{2t}]^{A\times A'}).
  \end{align*}
\end{lemma}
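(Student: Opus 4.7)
The lemma asserts three identities (two equalities and one inclusion) stating that evaluation is compatible with sum, component-wise product, and tensor product of polynomial spaces supported on prescribed monomial sets. Since $\evl$ is $\bF_q$-linear, my plan is to first verify each identity at the level of polynomial rings and then push the statement through $\evl$. The paper itself flags the proof as immediate, so I do not expect a real obstacle; the only subtle point is the asymmetry in the middle identity, which requires explaining why $\subseteq$ rather than $=$ is the strongest that can hold in general.

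For the first identity, I would verify at the polynomial level that $\bF_q[X_1,\dots,X_t]^A + \bF_q[X_1,\dots,X_t]^{A'} = \bF_q[X_1,\dots,X_t]^{A\cup A'}$. The $\subseteq$ direction follows from $A\cap\bZ^t,A'\cap\bZ^t\subseteq(A\cup A')\cap\bZ^t$, together with the distributive identity $(A\cup A')\cap\bZ^t=(A\cap\bZ^t)\cup(A'\cap\bZ^t)$. For $\supseteq$, any polynomial with monomials supported in $(A\cup A')\cap\bZ^t$ can be split by grouping the monomials with exponent in $A\cap\bZ^t$ separately from those with exponent in $(A'\setminus A)\cap\bZ^t$. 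Applying the linear map $\evl$ to both sides yields the claimed equality of subspaces of $\bF_q^{\bF_q^t}$.

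For the second identity, I would fix $f\in\bF_q[X_1,\dots,X_t]^A$ and $g\in\bF_q[X_1,\dots,X_t]^{A'}$ and observe that the ordinary polynomial product $fg$ has monomials supported in $(A\cap\bZ^t)+(A'\cap\bZ^t)\subseteq(A+A')\cap\bZ^t$, so $fg\in\bF_q[X_1,\dots,X_t]^{A+A'}$. Since $\evl(f)*\evl(g)=\evl(fg)$ directly from the definitions of component-wise product and of the evaluation map, each generator of the left-hand side lies in the right-hand side. I would then remark that the inclusion can be strict in general, because $(A\cap\bZ^t)+(A'\cap\bZ^t)$ may be a proper subset of $(A+A')\cap\bZ^t$ (this is the closest thing to a real subtlety in the lemma, and explains why equality is not claimed).

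For the third identity, I would use that $\evl(f)\otimes\evl(g)$, viewed as an element of $\bF_q^{\bF_q^t\times\bF_q^t}=\bF_q^{\bF_q^{2t}}$, has entry $f(x)g(y)$ at $(x,y)$, which coincides with $\evl_{\bF_q^{2t}}$ applied to the polynomial $f(X_1,\dots,X_t)\cdot g(X_{t+1},\dots,X_{2t})\in\bF_q[X_1,\dots,X_{2t}]^{A\times A'}$; this gives the $\subseteq$ inclusion. For the reverse containment, the monomials $X_1^{a_1}\cdots X_t^{a_t}X_{t+1}^{a'_1}\cdots X_{2t}^{a'_t}$ with $a\in A\cap\bZ^t$ and $a'\in A'\cap\bZ^t$ span $\bF_q[X_1,\dots,X_{2t}]^{A\times A'}$, and each such monomial evaluates to the pure tensor $\evl(X^a)\otimes\evl(X^{a'})$, which lies in the left-hand side. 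Linearity of evaluation then completes the argument.
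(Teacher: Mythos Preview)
Your proposal is correct and is exactly the kind of direct verification the paper has in mind when it says the proof is ``immediate from the definitions''; the paper provides no further argument. Your remark on why the middle identity is only an inclusion is a nice clarification beyond what the paper states.
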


Reed-Solomon codes provide a particularly useful example of polynomial evaluation codes:

\begin{definition}
  \label{def:RS}
  Given a prime power $q$, a positive integer $k$, and a subset $S\subseteq\bF_q$, the \textbf{Reed-Solomon code} $\textrm{RS}_S(q,k)=\evl_S(\bF_q[X]^{[0,k)})$ is the classical code of length $q$ and dimension $k$ over $\bF_q$ consisting of the evaluations of all univariate polynomials of degree $<k$ on points in $S$. When $S=\bF_q$ we let $\textrm{RS}(q,k)=\textrm{RS}_S(q,k)$.
\end{definition}

The following lemma is well-known:

\begin{lemma}
  \label{lem:RSdual}
For all integers $k$, $0 \le k \le q$,  $\textrm{RS}(q,k)^\perp = \textrm{RS}(q,q-k)$.
\end{lemma}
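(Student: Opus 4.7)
The plan is to verify the identity by a dimension count plus a single inclusion via the classical ``sum of powers over $\bF_q$'' trick.

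First I would note that the evaluation map $\evl : \bF_q[X]^{[0,k)} \to \bF_q^q$ is injective when $k \le q$, because a nonzero polynomial of degree $<k$ has at most $k-1 < q$ roots in $\bF_q$. Hence $\dim \textrm{RS}(q,k) = k$, so $\dim \textrm{RS}(q,k)^\perp = q - k = \dim \textrm{RS}(q,q-k)$. Thus it suffices to establish one containment, say $\textrm{RS}(q,q-k) \subseteq \textrm{RS}(q,k)^\perp$.

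For this I would use the standard identity
\begin{equation*}
  \sum_{\alpha \in \bF_q} \alpha^j \;=\; \begin{cases} -1, & j \equiv 0 \pmod{q-1},\ j > 0, \\ 0, & \text{otherwise for } 0 \le j \le q-2, \end{cases}
\end{equation*}
which follows by picking a generator of $\bF_q^*$ and summing a geometric series (with the $j=0$ case using that $q \equiv 0$ in $\bF_q$). Consequently, any polynomial $h \in \bF_q[X]$ with $\deg h \le q-2$ satisfies $\sum_{\alpha \in \bF_q} h(\alpha) = 0$. Given $f \in \bF_q[X]^{[0,k)}$ and $g \in \bF_q[X]^{[0,q-k)}$, the product $fg$ has degree at most $(k-1) + (q-k-1) = q-2$, so
\begin{equation*}
  \evl(f) \cdot \evl(g) \;=\; \sum_{\alpha \in \bF_q} f(\alpha)g(\alpha) \;=\; 0.
\end{equation*}
This shows $\textrm{RS}(q,q-k) \subseteq \textrm{RS}(q,k)^\perp$, and combined with the dimension count yields equality. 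The degenerate cases $k=0$ and $k=q$ are immediate since $\textrm{RS}(q,0) = \{0\}$ and $\textrm{RS}(q,q) = \bF_q^q$.

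There is no real obstacle here; the only mild subtlety is ensuring the character-sum identity is applied correctly (in particular that $\deg(fg) \le q-2$ rather than $q-1$, so no nonvanishing ``top-degree'' term appears in the sum). Once that bound is in place, the argument is essentially two lines plus the dimension count.
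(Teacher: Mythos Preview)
Your proof is correct and is the standard argument. The paper does not actually supply a proof of this lemma; it simply states it as well-known, so there is nothing to compare against beyond noting that your dimension-count-plus-power-sum approach is exactly the textbook route one would expect.
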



\subsection{Quantum Codes}
This section describes basic notions in quantum coding theory. In this paper, we restrict attention to quantum CSS codes, described below.

\begin{definition}
  \label{def:CSScode}
  For a finite field $\bF_q$, a \textbf{quantum CSS code of length $n$ over $\bF_q$} is a pair $Q=(Q_X,Q_Z)$ of subspaces (i.e.~classical codes) $Q_X,Q_Z\subseteq\bF_q^n$ such that $Q_X^\perp\subseteq Q_Z$. The \textbf{dimension} of $Q$ is $k:=\dim(Q_Z)-\dim(Q_X^\perp)$, and the \textbf{rate} is $R:=k/n$. The \textbf{distance} of $Q$ is
  \begin{equation*}
    d := \min_{c\in(Q_X\setminus Q_Z^\perp)\cup(Q_Z\setminus Q_X^\perp)}|c|.
  \end{equation*}
  We summarize the above parameters by saying that $Q$ is an $[[n,k,d]]_q$ code.

  The quantum code $Q$ is \textbf{low-density parity-check (LDPC) of locality $w$} if $Q_X,Q_Z$ are classical LDPC codes of locality $w$. Similarly, $Q$ is a \textbf{locally testable code (LTC) of soundness $\rho$} if $Q_X,Q_Z$ are classical LTCs of soundness $\rho$.
  

\end{definition}

A quantum CSS code $Q=(Q_X,Q_Z)$ of length $n$ and dimension $k$ formally encodes a $k$-qudit message $\ket{\phi}\in(\bC^q)^{\otimes k}$ into an $n$-qudit code state $\ket{\psi}\in\cQ$, where
\begin{equation}
  \label{eq:qcodespace}
  \cQ = \spn\left\{\ket{z+Q_X^\perp}:z\in Q_Z\right\} \subseteq (\bC^q)^{\otimes n}.
\end{equation}
Here we recall from Section~\ref{sec:notation} that $\ket{z+Q_X^\perp}$ denotes the uniform superposition over all elements of the coset $z+Q_X^\perp$. Note that any isometry from $(\bC^q)^{\otimes k}$ to $\cQ$ forms a valid encoding map; we will need specific choices of such maps when constructing transversal gates on our codes.

In this paper, we often simply say ``quantum code'' to refer to a quantum CSS code. We are interested in constructing families of quantum LDPC and locally testable codes with locality $w$ significantly less than the block length $n$. Typically, the study of qLDPC codes focuses on constructing families of codes with constant locality $w$ as the block length $n$ grows. However, we remark that \cite{hastings_quantum_2023,wills_tradeoff_2024} gives a method for reducing the locality $w$ to a constant at the cost of decreasing the code's distance and soundness. Thus quantum LDPC and locally testable codes of constant locality can be obtained from codes of sufficiently low (but growing) locality.

\begin{definition}
  An infinite family of classical or quantum codes is \textbf{asymptotically good} if the codes' dimension $k=\Theta(n)$ and distance $d=\Theta(n)$ both grow linearly as the block length $n\rightarrow\infty$, and the alphabet size is a constant $q=O(1)$. The family is \textbf{LDPC} if the locality $w=O(1)$ remains bounded by a constant.
\end{definition}

We will use the following quantum analogue of Definition~\ref{def:RS}:

\begin{definition}
  \label{def:QRS}
  A \textbf{quantum Reed-Solomon code} is a quantum CSS code $Q=(Q_X,Q_Z)$ such that $Q_X,Q_X^\perp,Q_Z,Q_Z^\perp$ are all classical Reed-Solomon codes with the same set of evaluation points, in the sense of Definition~\ref{def:RS}.
\end{definition}

By Lemma~\ref{lem:RSdual}, Reed-Solomon codes with the entire field as the evaluation set have duals that are also Reed-Solomon codes on the entire field, and hence such codes yield quantum Reed-Solomon codes. Throughout this paper, it will be sufficient to restrict attention to quantum Reed-Solomon codes whose evaluation set is the entire field.

\subsection{Single-Sector Chain Complexes and Homological Products}
\label{sec:sscc}

In this paper, we will often use the langugae of chain complexes to describe the quantum CSS codes we consider. We follow \cite{bravyi_homological_2014} in restricting attention to ``single-sector'' chain complexes, defined below, which consist of just a single vector space and boundary map.


\begin{definition}
  \label{def:sscc}
  A \textbf{single-sector chain complex $\cC_*=(C,\partial^{\cC})$ over $\bF_q$} consists of a $\bF_q$-vector space $C$ and a linear \textbf{boundary map} $\partial^{\cC}:C\rightarrow C$ satisfying $(\partial^{\cC})^2=0$. When clear from context, we omit the superscript and write $\partial=\partial^{\cC}$.

  Assuming that $C$ has a fixed basis, then the \textbf{locality} $w^{\cC}$ of $\cC$ is the maximum number of nonzero entries in any row or column of $\partial$ in this fixed basis. We furthermore define the following (standard) vector spaces:
  \begin{align}
    \label{eq:cbhdefs}
    \begin{split}
      \text{the space of \bf cycles } Z_*(\cC) &:= \ker(\partial) \subseteq C \\
      \text{the space of \bf boundaries } B_*(\cC) &:= \im(\partial) \subseteq C \\
      \text{the \bf homology } H_*(\cC) &:= Z_*(\cC)/B_*(\cC)
    \end{split}
  \end{align}
  
  The \textbf{single-sector cochain complex $\cC^*$} associated to $\cC_*$ is the chain complex with vector space $C$ and boundary map given by the \textbf{coboundary map} $\delta=\partial^\top$. We analogously define the \textbf{cocycles $Z^*(\cC)=\ker(\delta)$, coboundaries $B^*(\cC)=\im(\delta)$, and cohomology $H^*(\cC)=Z^*(\cC)/B^*(\cC)$}.
\end{definition}

In this paper, we restrict attention to single-sector chain complexes $\cC$ in which $C$ is a finite-dimensional $\bF_q$-vector space, so the following standard lemma holds.

\begin{lemma}[Well known]
  \label{lem:hombasis}
  Let $\cC$ be a single-sector chain complex over $\bF_q$ with all $C$ finite-dimensional. Then
  \begin{equation*}
    \dim H_*(\cC)=\dim H^*(\cC).
  \end{equation*}
  Furthermore, the bilinear form $\langle\cdot,\cdot\rangle:H^*(\cC)\times H_*(\cC)\rightarrow\bF_q$ given by
  \begin{equation*}
    \langle c'+B^*(\cC),c+B_*(\cC)\rangle:=c'\cdot c
  \end{equation*}
  is a well-defined nondegenerate form, independent of the choice of coset representatives $c',c$. In particular, letting $k=\dim H_*(\cC)$, then there exists a basis $c^1+B^*(\cC),\dots,c^{k}+B^*(\cC)$ for $H^*(\cC)$ and a basis $c_1+B_*(\cC),\dots,c_{k}+B_*(\cC)$ for $H_*(\cC)$ such that
  \begin{equation}
    \label{eq:dualbases}
    \langle c^j+B^*(\cC),c_\ell+B_*(\cC)\rangle=c^j\cdot c_\ell=\1_{j=\ell} \hspace{1em} \forall \; j,\ell\in[k].
  \end{equation}
\end{lemma}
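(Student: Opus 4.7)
The plan is to reduce everything to elementary linear algebra facts about a single linear operator $\partial$ on a finite-dimensional space $C$ together with its transpose $\partial^\top$. Throughout, I will use the two facts that the rank of a matrix equals the rank of its transpose, and that for any linear map $A$ on a finite-dimensional space with a nondegenerate bilinear form, $\im(A) = (\ker(A^\top))^\perp$.

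First I would establish the dimension equality. Set $r := \dim\im(\partial)$. By rank-nullity, $\dim\ker(\partial) = \dim C - r$, so $\dim H_*(\cC) = \dim\ker(\partial) - \dim\im(\partial) = \dim C - 2r$. Since $\partial^\top$ has the same rank as $\partial$, the identical computation for $\delta = \partial^\top$ gives $\dim H^*(\cC) = \dim C - 2r$, matching $\dim H_*(\cC)$.

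Next I would verify that the bilinear form is well-defined. If $c \in Z_*(\cC)$ is replaced by $c + \partial b$ and $c' \in Z^*(\cC)$ is replaced by $c' + \delta b'$, then
\begin{equation*}
(c' + \delta b') \cdot (c + \partial b) = c' \cdot c + (\partial^\top c')\cdot b + {b'}^\top (\partial c) + (\partial^\top \delta b')\cdot b,
\end{equation*}
and all three correction terms vanish because $\partial^\top c' = \delta c' = 0$, $\partial c = 0$, and $\partial^\top \delta = (\partial^2)^\top = 0$. Hence $\langle\cdot,\cdot\rangle$ depends only on the cohomology/homology classes.

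For nondegeneracy, suppose $c \in Z_*(\cC) = \ker(\partial)$ satisfies $c'\cdot c = 0$ for every $c' \in Z^*(\cC) = \ker(\partial^\top)$. Then $c \in (\ker(\partial^\top))^\perp = \im(\partial) = B_*(\cC)$, so $c + B_*(\cC) = 0$ in $H_*(\cC)$. The symmetric argument (applied to $\partial^\top$) handles the other direction. Since the pairing is a nondegenerate bilinear form between two finite-dimensional spaces of equal dimension $k$, it induces an isomorphism $H^*(\cC) \xrightarrow{\sim} H_*(\cC)^*$. Picking any basis $c_1 + B_*(\cC),\dots,c_k + B_*(\cC)$ of $H_*(\cC)$ and pulling back its dual basis through this isomorphism yields classes $c^1 + B^*(\cC),\dots,c^k + B^*(\cC)$ of $H^*(\cC)$ with $c^j \cdot c_\ell = \langle c^j + B^*(\cC),\, c_\ell + B_*(\cC)\rangle = \1_{j=\ell}$, as required by~\eqref{eq:dualbases}.

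No step is genuinely an obstacle; the only mild subtlety is keeping track of which space $\delta$ and $\partial$ act on, which is trivialized in the single-sector setting because both act on the same vector space $C$, letting the standard orthogonality $\im(\partial) = (\ker(\partial^\top))^\perp$ be applied directly inside $C$.
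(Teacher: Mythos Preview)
Your proof is correct and follows exactly the ``basic linear-algebraic manipulations'' the paper alludes to (the paper does not spell out a proof, merely pointing to Section~10.5.7 of Nielsen--Chuang). Each step---rank-nullity for the dimension count, the coset-representative computation for well-definedness, the identity $\im(\partial)=(\ker(\partial^\top))^\perp$ for nondegeneracy, and pulling back a dual basis for \eqref{eq:dualbases}---is the standard route and goes through without issue.
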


Lemma~\ref{lem:hombasis} can be proven by basic linear-algebraic manipulations. For readers familiar with the language of quantum stabilizer codes, the lemma says that the space of logical operators of a CSS code can be decomposed into anticommuting $X$ and $Z$ operators; a proof can be found in Section~10.5.7 of~\cite{nielsen_quantum_2010}.

\begin{definition}
  Let $\cC$ be a single-sector chain complex with $k=\dim H_*(\cC)$. We say that vectors $c^1,\dots,c^{k}\in Z^*(\cC)$ and $c_1,\dots,c_{k}\in Z_*(\cC)$ form \textbf{dual bases for cohomology and homology} if they satisfy~(\ref{eq:dualbases}).
\end{definition}

Recall that by definition $\im(\partial)=\ker(\partial^\top)^\perp$ and that $\delta=\partial^\top$. Therefore the chain complex condition $\partial^2=0$ can be equivalently stated as saying that $\ker(\delta)^\perp\subseteq\ker(\partial)$, which is precisely the CSS condition in Definition~\ref{def:CSScode}. This observation motivates the following definition:

\begin{definition}
  \label{def:cctoqcode}
  For a single-sector chain complex $\cC$, the \textbf{quantum code associated to $\cC$} is the CSS code $Q=(Q_X,Q_Z)$ with $Q_X=\ker\delta$ and $Q_Z=\ker\partial$.
\end{definition}

Definition~\ref{def:cctoqcode} associates cohomology with $Q_X/Q_Z^\perp$ and associates homology with $Q_Z/Q_X^\perp$. We maintain this convention throughout the paper, though note that in some other works these associations are swapped.

The following lemma is immediate from the above definitions:

\begin{lemma}
  Let $Q$ be the quantum code associated to a single-sector chain complex $\cC$. Then the dimension of $Q$ is $k=\dim H_*(\cC)=\dim H^*(\cC)$, and the locality of $Q$ is at most the locality of $\cC$.
\end{lemma}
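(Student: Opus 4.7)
The plan is to unfold each definition and verify the two claims by direct linear algebra. First I would record the basic identifications: since $\delta = \partial^\top$, we have $Q_X = \ker \delta = \ker \partial^\top$, and hence $Q_X^\perp = (\ker \partial^\top)^\perp = \im \partial = B_*(\cC)$, using the standard fact that $(\ker M)^\perp = \im M^\top$. Meanwhile $Q_Z = \ker \partial = Z_*(\cC)$ by definition. The condition $\partial^2 = 0$ gives $B_*(\cC) \subseteq Z_*(\cC)$, which is consistent with the CSS orthogonality requirement $Q_X^\perp \subseteq Q_Z$ needed for $Q$ to be a valid code.

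For the dimension, I would then apply Definition~\ref{def:CSScode} and compute
\begin{equation*}
  \dim Q = \dim Q_Z - \dim Q_X^\perp = \dim Z_*(\cC) - \dim B_*(\cC) = \dim H_*(\cC),
\end{equation*}
where the last equality uses that $H_*(\cC) = Z_*(\cC)/B_*(\cC)$ is a quotient of finite-dimensional spaces. The further equality $\dim H_*(\cC) = \dim H^*(\cC)$ is precisely the first assertion of Lemma~\ref{lem:hombasis}, so the dimension claim is immediate.

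For the locality, I would observe that the matrix $\partial$ itself, expressed in the fixed basis of $C$, serves as a parity-check matrix for $Q_Z$, since by construction $Q_Z = \ker \partial$. By Definition~\ref{def:sscc}, every row and column of $\partial$ has at most $w^\cC$ nonzero entries, so $Q_Z$ is LDPC of locality at most $w^\cC$ in the sense of Definition~\ref{def:MDS}. Symmetrically, $\delta = \partial^\top$ is a parity-check matrix for $Q_X = \ker \delta$, and transposition permutes rows with columns without changing the maximum row/column weight, so $Q_X$ also has locality at most $w^\cC$. Combined with Definition~\ref{def:CSScode}, this shows $Q$ has locality at most $w^\cC$.

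No real obstacle is anticipated; the only point requiring mild care is to use the same fixed basis for $C$ in computing both parity-check matrices, so that the single locality bound of the chain complex transfers uniformly to both $Q_X$ and $Q_Z$. Everything else is a routine unpacking of definitions together with one invocation of Lemma~\ref{lem:hombasis}.
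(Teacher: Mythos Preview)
Your proposal is correct and essentially matches the paper's approach: the paper states the lemma as ``immediate from the above definitions'' without giving a proof, and your unpacking of the definitions (identifying $Q_X^\perp = \im\partial = B_*(\cC)$, $Q_Z = \ker\partial = Z_*(\cC)$, then invoking Lemma~\ref{lem:hombasis} and observing that $\partial$, $\partial^\top$ serve as parity-check matrices with the claimed locality) is exactly the intended verification.
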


We may similarly express distance and local testability in terms of chain complexes:

\begin{definition}
  \label{def:sysdisexp}
  For a single-sector chain complex $\cC$, the \textbf{systolic distance $d_*(\cC)$} and the \textbf{cosystolic distance $d^*(\cC)$} are defined as
  \begin{equation*}
    d_*(\cC) = \min_{c\in Z_*(\cC)\setminus B_*(\cC)}|c|, \hspace{2em} d^*(\cC) = \min_{c\in Z^*(\cC)\setminus B^*(\cC)}|c|.
  \end{equation*}
  The \textbf{filling constant $\mu_*(\cC)$} and the \textbf{cofilling constant $\mu^*(\cC)$} are defined as
  \begin{equation*}
    \mu_*(\cC) = \max_{b\in B_*(\cC)}\min_{c\in C:\partial(c)=b}\frac{|c|}{|b|}, \hspace{2em} \mu^*(\cC) = \max_{b\in B^*(\cC)}\min_{c\in C:\delta(c)=b}\frac{|c|}{|b|}.
  \end{equation*}
\end{definition}

If a chain complex has small filling and cofilling constants, then high-weight errors on the associated quantum code must have large syndromes. Equivalently, low-weight syndromes can only arise from low-weight errors. This phenomenon is equivalent to local testability, as described below.

By definition, the distance $d(Q)$ of the quantum code $Q$ associated to $\cC$ is equal to
\begin{equation*}
  d(Q) = \min\{d_i(\cC),d^i(\cC)\}.
\end{equation*}
Similarly, the soundness $\rho(Q)$ (i.e.~local testability, see Definition~\ref{def:CSScode}) of $Q$ is equal to
\begin{equation*}
  \rho(Q) = \min\left\{\frac{1}{\mu_*(\cC)},\;\frac{1}{\mu^*(\cC)}\right\}.
\end{equation*}
That is, the (co)filling constants describe the local testability of $Q$. The reciprocols of the (co)filling constants are sometimes called \textbf{(co)cycle expansion constants}. 


The above definitions motivate studying ways to construct larger chain complexes from smaller ones, as a way to construct larger quantum codes from smaller ones. The following definition describes one of the most natural such operations on chain complexes, namely a product. We carry over this single-sector version of a homological product from \cite{bravyi_homological_2014}.
We will restrict attention to fields of characteristic $2$ when considering single-sector products in order to improve simplicity in the presentation by avoiding signing issues.

\begin{definition}
  \label{def:sshomprod}
  For single-sector chain complexes $\cA$ and $\cB$ over a field $\bF_q$ of characteristic $2$, the \textbf{(single-sector) homological product} $\cC=\cA\otimes\cB$ is the single-sector chain complex over $\bF_q$ given by the vector space
  \begin{equation*}
    C := A \otimes B
  \end{equation*}
  and the boundary map
  \begin{equation*}
    \partial^{\cC} := \partial^{\cA}\otimes I+I\otimes\partial^{\cB}.
  \end{equation*}
\end{definition}

Given $t$ single-sector complexes $(\cC_i=(C_i,\partial_i))_{i\in[t]}$, we may take $t-1$ products as defined in Definition~\ref{def:sshomprod} to obtain a single-sector complex $\cA=\cC_1\otimes\cdots\otimes\cC_t$ given by the vector space
\begin{equation*}
  A = C_1\otimes\cdots\otimes C_t
\end{equation*}
with boundary map
\begin{equation*}
  \partial^{\cA} = \partial_1\otimes I^{\otimes t-1} + I\otimes\partial_2\otimes I^{\otimes t-2} + \cdots + I^{\otimes t-1}\otimes\partial_t.
\end{equation*}
Such higher-order products are considered in Theorem~\ref{thm:sspe} in Section~\ref{sec:peprod}. Note that if $w_i$ denotes the locality of $\cC_i$, then by definition $\cA$ has locality $\leq w_1+\cdots+w_t$.

The homological product in Definition~\ref{def:sshomprod}, along with its subsystem variant (see Section~\ref{sec:subsystem} below), is the principal operation studied in this paper. The following result shows how the homology groups behave under such products.


\begin{proposition}[K\"{u}nneth formula for single-sector complexes; see \cite{bravyi_homological_2014}]
  \label{prop:sskunneth}
  Let $\cA$ and $\cB$ be single-sector chain complexes over a field of characteristic $2$. Then
  \begin{equation*}
    H_*(\cA\otimes\cB) \cong H_*(\cA)\otimes H_*(\cB).
  \end{equation*}
  Furthermore, for $a\in Z_*(\cA)$ and $b\in Z_*(\cB)$, the isomorphism above maps
  \begin{equation*}
    a\otimes b+B_*(\cA\otimes\cB) \mapsfrom (a+B_*(\cA))\otimes(b+B_*(\cB)).
  \end{equation*}
\end{proposition}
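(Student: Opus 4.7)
\medskip

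\noindent\textbf{Proof Proposal.} The plan is to prove the standard K\"{u}nneth formula for this single-sector setting by directly analyzing $Z_*(\cA\otimes\cB)$ and $B_*(\cA\otimes\cB)$ via a convenient decomposition of $A$ and $B$. Since we work over a field, all short exact sequences of vector spaces split, so I will first choose direct sum decompositions
\begin{equation*}
A = A_H\oplus A_1\oplus A_2, \qquad B = B_H\oplus B_1\oplus B_2,
\end{equation*}
where $A_2=B_*(\cA)$, $A_H\oplus A_2=Z_*(\cA)$ (so that $A_H$ is a chosen set of representatives for $H_*(\cA)$), and $A_1$ is a complement of $Z_*(\cA)$ in $A$ on which $\partial^{\cA}|_{A_1}\colon A_1\to A_2$ is an isomorphism, and similarly for $B$. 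The candidate isomorphism $\phi\colon H_*(\cA)\otimes H_*(\cB)\to H_*(\cA\otimes\cB)$ sending $(a+B_*(\cA))\otimes(b+B_*(\cB))\mapsto a\otimes b+B_*(\cA\otimes\cB)$ is well-defined by the Leibniz-type identity $\partial^{\cC}(a\otimes b)=\partial^{\cA}a\otimes b+a\otimes\partial^{\cB}b$, which in characteristic $2$ makes products of cycles into cycles and absorbs the ambiguity of coset representatives into boundaries.

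Next I will expand $A\otimes B=\bigoplus_{i,j\in\{H,1,2\}}A_i\otimes B_j$ and tabulate how $\partial^{\cC}=\partial^{\cA}\otimes I+I\otimes\partial^{\cB}$ acts on each of the nine summands. A direct check shows that $A_H\otimes B_H$, $A_H\otimes B_2$, $A_2\otimes B_H$, and $A_2\otimes B_2$ all lie in $Z_*(\cC)$ with $\partial^{\cC}=0$, while $\partial^{\cC}$ is injective on $A_H\otimes B_1$, $A_1\otimes B_H$, $A_1\otimes B_2$, and $A_2\otimes B_1$ (with images in $A_H\otimes B_2$, $A_2\otimes B_H$, $A_2\otimes B_2$, and $A_2\otimes B_2$ respectively). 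Since $\partial^{\cA}|_{A_1}$ and $\partial^{\cB}|_{B_1}$ are isomorphisms, solving $\partial^{\cC}c=0$ componentwise forces the $A_H\otimes B_1$, $A_1\otimes B_H$, and $A_1\otimes B_1$ components of any cycle to vanish, and cuts out a subspace
\begin{equation*}
K:=\bigl\{(x_{12},x_{21})\in A_1\otimes B_2\oplus A_2\otimes B_1 : (\partial^{\cA}\otimes I)x_{12}+(I\otimes\partial^{\cB})x_{21}=0\bigr\}
\end{equation*}
inside $A_1\otimes B_2\oplus A_2\otimes B_1$, so that $Z_*(\cC)=(A_H\otimes B_H)\oplus(A_H\otimes B_2)\oplus(A_2\otimes B_H)\oplus(A_2\otimes B_2)\oplus K$.

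The main obstacle is to show that $B_*(\cC)$ contains everything in $Z_*(\cC)$ except $A_H\otimes B_H$; the nontrivial part is demonstrating $K\subseteq B_*(\cC)$. For this I will use that $\partial^{\cC}$ maps $A_1\otimes B_1$ into $A_1\otimes B_2\oplus A_2\otimes B_1$ via $u\mapsto((I\otimes\partial^{\cB})u,(\partial^{\cA}\otimes I)u)$, and that both coordinate maps are isomorphisms onto their codomains since $\partial^{\cA}|_{A_1}$ and $\partial^{\cB}|_{B_1}$ are. Given $(x_{12},x_{21})\in K$, I define $u\in A_1\otimes B_1$ to be the unique preimage of $x_{12}$ under $I\otimes\partial^{\cB}$; then the constraint defining $K$, combined with the injectivity of $I\otimes\partial^{\cB}$ on $A_2\otimes B_1$ and the characteristic-$2$ assumption (so that signs disappear), forces $(\partial^{\cA}\otimes I)u=x_{21}$, giving $\partial^{\cC}u=(x_{12},x_{21})$. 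A dimension count using $\dim(A_1)=\dim(A_2)$ and $\dim(B_1)=\dim(B_2)$ confirms the resulting map $A_1\otimes B_1\to K$ is bijective.

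Combining these computations yields $H_*(\cA\otimes\cB)=Z_*(\cC)/B_*(\cC)\cong A_H\otimes B_H\cong H_*(\cA)\otimes H_*(\cB)$, and $\phi$ realizes this isomorphism. To verify the final mapping formula, I decompose arbitrary cycles $a=a_H+a_2\in Z_*(\cA)$ and $b=b_H+b_2\in Z_*(\cB)$ according to $Z_*=A_H\oplus A_2$ (resp.\ $B_H\oplus B_2$), and observe that the mixed terms $a_H\otimes b_2$, $a_2\otimes b_H$, and $a_2\otimes b_2$ all lie in $B_*(\cC)$ by the preceding analysis, so $a\otimes b\equiv a_H\otimes b_H\pmod{B_*(\cC)}$, which corresponds to $(a+B_*(\cA))\otimes(b+B_*(\cB))$ under the identification $H_*(\cA)\cong A_H$, $H_*(\cB)\cong B_H$.
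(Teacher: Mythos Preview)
Your proof is correct. The paper does not actually prove this proposition; it simply cites \cite{bravyi_homological_2014} and states the result, so there is no paper proof to compare against. Your argument via the splitting $A=A_H\oplus A_1\oplus A_2$ (and similarly for $B$) is the standard hands-on verification of the K\"{u}nneth formula in the single-sector setting, and each step checks out: the nine-block analysis of $\partial^{\cC}$ correctly identifies $Z_*(\cC)$, the key computation that $\partial^{\cC}(A_1\otimes B_1)=K$ (using injectivity of $I\otimes\partial^{\cB}$ on $A_2\otimes B_1$ and the characteristic-$2$ sign cancellation) is right, and the resulting direct-sum description of $B_*(\cC)$ makes both the isomorphism and the explicit mapping formula immediate. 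One small point you leave implicit but which follows from your decomposition: since $B_*(\cC)$ equals exactly the four summands other than $A_H\otimes B_H$, it has trivial intersection with $A_H\otimes B_H$, giving injectivity of $\phi$.
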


Below we present some useful corollaries of the K\"{u}nneth formula.

\begin{corollary}
  \label{cor:ssprodbases}
  If $\{a^1,\dots,a^{k^{\cA}}\},\{a_1,\dots,a_{k^{\cA}}\}$ and $\{b^1,\dots,b^{k^{\cB}}\},\{b_1,\dots,b_{k^{\cB}}\}$ are dual cohomology/homology bases for single-sector chain complexes $\cA$ and $\cB$ respectively, then
  \begin{equation*}
    \{a^i\otimes b^j:i\in[k^{\cA}],j\in[[k^{\cB}]\},\{a_i\otimes b_j:i\in[k^{\cA}],j\in[[k^{\cB}]\}
  \end{equation*}
  are dual cohomology/homology bases for $\cA\otimes\cB$.
\end{corollary}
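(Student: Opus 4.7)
The plan is to deduce the corollary directly from the single-sector K\"{u}nneth formula (Proposition~\ref{prop:sskunneth}), applied separately to homology and to cohomology, and then to verify the duality pairing~(\ref{eq:dualbases}) by a straightforward tensor computation.

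First I would handle homology. Proposition~\ref{prop:sskunneth} immediately gives that $H_*(\cA\otimes\cB)\cong H_*(\cA)\otimes H_*(\cB)$ via the map sending $(a+B_*(\cA))\otimes(b+B_*(\cB))$ to $a\otimes b+B_*(\cA\otimes\cB)$. Since tensor products of bases are bases, the classes $\{a_i\otimes b_j+B_*(\cA\otimes\cB):i\in[k^{\cA}],\,j\in[k^{\cB}]\}$ form a basis of $H_*(\cA\otimes\cB)$, and in particular each $a_i\otimes b_j$ lies in $Z_*(\cA\otimes\cB)$ (which can also be checked directly: $\partial^{\cC}(a_i\otimes b_j)=\partial^{\cA}a_i\otimes b_j+a_i\otimes\partial^{\cB}b_j=0$).

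Next I would handle cohomology by reducing it to the previous case. The key observation is that the cochain complex $(\cA\otimes\cB)^*$ coincides with the homological product $\cA^*\otimes\cB^*$: indeed, its coboundary map is
\begin{equation*}
  \delta^{\cA\otimes\cB}=(\partial^{\cA\otimes\cB})^\top=(\partial^{\cA}\otimes I+I\otimes\partial^{\cB})^\top=\delta^{\cA}\otimes I+I\otimes\delta^{\cB},
\end{equation*}
where we used that transposition distributes over tensor products and sums (sign issues are absent thanks to characteristic $2$). Applying Proposition~\ref{prop:sskunneth} to $\cA^*\otimes\cB^*$ therefore yields $H^*(\cA\otimes\cB)\cong H^*(\cA)\otimes H^*(\cB)$, with the isomorphism sending $(a^i+B^*(\cA))\otimes(b^j+B^*(\cB))$ to $a^i\otimes b^j+B^*(\cA\otimes\cB)$, so $\{a^i\otimes b^j+B^*(\cA\otimes\cB)\}$ is a basis of $H^*(\cA\otimes\cB)$.

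Finally I would verify the duality pairing. The bilinear form on $C\otimes C$ used in Lemma~\ref{lem:hombasis} is the standard one, which factors through tensor products: for pure tensors one has $(a^i\otimes b^j)\cdot(a_k\otimes b_\ell)=(a^i\cdot a_k)(b^j\cdot b_\ell)$. Using the dual-basis hypotheses $a^i\cdot a_k=\1_{i=k}$ and $b^j\cdot b_\ell=\1_{j=\ell}$, this evaluates to $\1_{(i,j)=(k,\ell)}$, which is exactly~(\ref{eq:dualbases}) for $\cA\otimes\cB$. Combined with the two basis statements above, this completes the proof. There is no real obstacle here; the only point requiring any care is the identification of the cochain complex of the product with the product of the cochain complexes, which is why working in characteristic $2$ is convenient.
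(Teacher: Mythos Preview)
Your proof is correct and follows the natural approach implied by the paper, which states the corollary as an immediate consequence of the K\"{u}nneth formula without giving an explicit argument. The only detail you make explicit that the paper leaves tacit is the identification $(\cA\otimes\cB)^*=\cA^*\otimes\cB^*$ via $\delta^{\cA\otimes\cB}=\delta^{\cA}\otimes I+I\otimes\delta^{\cB}$, which is indeed the right way to obtain the cohomology basis from Proposition~\ref{prop:sskunneth}.
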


\begin{corollary}
  \label{cor:ssprodcycles}
  For single-sector chain complexes $\cA_1,\dots,\cA_t$,
  \begin{equation*}
    Z_*(\cA_1\otimes\cdots\otimes\cA_t) = Z_*(\cA_1)\otimes\cdots\otimes Z_*(\cA_t) + B_*(\cA_1\otimes\cdots\otimes\cA_t).
  \end{equation*}
\end{corollary}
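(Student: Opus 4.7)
The plan is to induct on $t$, using Proposition~\ref{prop:sskunneth} as the engine. The inclusion ``$\supseteq$'' is essentially definitional: for cycles $z_i \in Z_*(\cA_i)$, the Leibniz-style formula $\partial^{\cA_1\otimes\cdots\otimes\cA_t}(z_1\otimes\cdots\otimes z_t)=\sum_{i}z_1\otimes\cdots\otimes\partial z_i\otimes\cdots\otimes z_t$ vanishes term by term, so $Z_*(\cA_1)\otimes\cdots\otimes Z_*(\cA_t)\subseteq Z_*(\cA_1\otimes\cdots\otimes\cA_t)$, and boundaries are always cycles since $\partial^2=0$.

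For the nontrivial inclusion ``$\subseteq$'', the base case $t=1$ is vacuous. For the inductive step, write $\cA' := \cA_1\otimes\cdots\otimes\cA_{t-1}$, so that $\cA_1\otimes\cdots\otimes\cA_t = \cA'\otimes\cA_t$ by associativity of the tensor product of single-sector complexes. Let $z\in Z_*(\cA'\otimes\cA_t)$ be an arbitrary cycle. By the surjectivity of the Künneth isomorphism in Proposition~\ref{prop:sskunneth} applied to $\cA'$ and $\cA_t$, the homology class $z + B_*(\cA'\otimes\cA_t)$ is the image of some element of $H_*(\cA')\otimes H_*(\cA_t)$, and hence can be written as a finite sum of classes of the form $(a+B_*(\cA'))\otimes(b+B_*(\cA_t))$, each of which maps to $a\otimes b + B_*(\cA'\otimes\cA_t)$ with $a\in Z_*(\cA')$ and $b\in Z_*(\cA_t)$. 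Therefore
\begin{equation*}
  z \in Z_*(\cA')\otimes Z_*(\cA_t) + B_*(\cA'\otimes\cA_t).
\end{equation*}

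Next I apply the inductive hypothesis to rewrite $Z_*(\cA') = Z_*(\cA_1)\otimes\cdots\otimes Z_*(\cA_{t-1}) + B_*(\cA')$. Substituting and distributing the tensor product with $Z_*(\cA_t)$ yields
\begin{equation*}
  Z_*(\cA')\otimes Z_*(\cA_t) \subseteq Z_*(\cA_1)\otimes\cdots\otimes Z_*(\cA_t) + B_*(\cA')\otimes Z_*(\cA_t),
\end{equation*}
so it remains only to show $B_*(\cA')\otimes Z_*(\cA_t)\subseteq B_*(\cA'\otimes\cA_t)$. This is the one place a short calculation is needed: for $a\in A'$ and $z_t\in Z_*(\cA_t)$, the definition of the product boundary map gives
\begin{equation*}
  \partial^{\cA'\otimes\cA_t}(a\otimes z_t) = \partial^{\cA'}(a)\otimes z_t + a\otimes\partial^{\cA_t}(z_t) = \partial^{\cA'}(a)\otimes z_t,
\end{equation*}
since $\partial^{\cA_t}(z_t)=0$. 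Hence every elementary tensor in $B_*(\cA')\otimes Z_*(\cA_t)$ is a boundary in $\cA'\otimes\cA_t$, and by linearity the same holds for arbitrary sums. Combining everything gives the claimed equality.

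I do not expect any serious obstacle: the only conceptual point is that we need the Künneth isomorphism to be surjective in order to write each cycle class as a sum of products of cycle classes, and associativity of the product lets us reduce the $t$-fold case to the binary Künneth formula already stated. The one small calculation — that $B_*(\cA')\otimes Z_*(\cA_t)$ lands inside $B_*(\cA'\otimes\cA_t)$ — is what lets us absorb the ``junk term'' produced by the inductive hypothesis into the boundary piece, and would be the most natural place for a sign subtlety were we not working in characteristic $2$ (as Definition~\ref{def:sshomprod} already assumes).
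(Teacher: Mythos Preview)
Your proof is correct and is exactly the kind of argument the paper has in mind: the paper states this result as an unproved corollary of the K\"{u}nneth formula (Proposition~\ref{prop:sskunneth}), and your induction on $t$ --- using surjectivity of the K\"{u}nneth map for the binary product and absorbing $B_*(\cA')\otimes Z_*(\cA_t)$ into $B_*(\cA'\otimes\cA_t)$ via $\partial(a\otimes z_t)=\partial a\otimes z_t$ --- is the natural way to fill in the details.
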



Thus the dimension of quantum codes behaves nicely under homological products. The goal of our work is to understand how the distance, local testability, and transversal gates of quantum codes behave under such products.


\subsection{Subsystem Codes and Products}
\label{sec:subsystem}
Sometimes it will be helpful to consider quantum codes in which we only encode logical qudits into certain degrees of freedom in the code. Such codes are called subsystem codes. Below we present a definition using the CSS framework that will be sufficient for our purposes.

\begin{definition}
  \label{def:subsystem}
  A \textbf{quantum subsystem CSS code of length $n$ over $\bF_q$} is a pair $Q=(Q_X,Q_Z)$ of subspaces (i.e.~classical codes) $Q_X,Q_Z\subseteq\bF_q^n$. The \textbf{dimension} of $Q$ is
  \begin{equation*}
    k:=\dim(Q_Z+Q_X^\perp)-\dim(Q_X^\perp)=\dim(Q_Z)-\dim(Q_Z\cap Q_X^\perp),
  \end{equation*}
  and the \textbf{distance} is
  \begin{equation*}
    d := \min_{c\in((Q_X+Q_Z^\perp)\setminus Q_Z^\perp)\cup((Q_Z+Q_X^\perp)\setminus Q_X^\perp)}|c|.
  \end{equation*}
  We summarize the above parameters by saying that $Q$ is an $[[n,k,d]]_q$ subsystem code.
  
  The subsystem code $Q$ has \textbf{locality} $w$ if there exist parity-check matrices $H_X\in\bF_q^{m_X\times n},\;H_Z\in\bF_q^{m_Z\times n}$ for $Q_X,Q_Z$ respectively, meaning that $Q_X=\ker H_X$ and $Q_Z=\ker H_Z$, such that every row and column of $H_X$ and $H_Z$ has at most $w$ nonzero entries.
\end{definition}

Definition~\ref{def:subsystem} essentially presents subsystem codes $Q=(Q_X,Q_Z)$ as CSS codes where we remove the orthogonality condition $Q_X^\perp\subseteq Q_Z$. To recover the orthogonality condition, we may define a CSS code $Q'=(Q_X',Q_Z')$ with $Q_X'=Q_X+Q_Z^\perp$ and $Q_Z'=Q_Z+Q_X^\perp$. Codewords (or more precisely in the language of stabilizer codes, logical $Z$ operators) of the CSS code $Q'$ correspond to cosets in $Q_Z'/Q_X'^\perp$. We may then view $Q$ as a ``$k$-dimensional subsystem code of $Q'$'', where we only encode the length-$k$ message into the subspace, or ``subsystem,'' of codewords in $Q_Z/{Q_X'}^\perp\subseteq Q_Z'/Q_X'^\perp$. Therefore subsystem codes can be viewed as CSS codes where only a subspace of the entire code space is used to encode the message. Note that we still good distance for all codewords in $Q_X'\setminus Q_Z^\perp$ and $Q_Z'\setminus Q_X^\perp$, so that the information in our length-$k$ message is protected regardless of the values of the additional (unprotected) degrees of freedom. Also note that we allow $Q_Z^\perp\setminus(Q_X\cap Q_Z^\perp)$ and $Q_X^\perp\setminus(Q_Z\cap Q_X^\perp)$ to contain low-weight codewords, which corresond to low-weight errors that only corrupt the unprotected degrees of freedom in our subsystem code.

For readers familiar with stabilizer codes, the spaces $Q_Z^\perp$ and $Q_X^\perp$ correspond to gauge operators, the spaces $Q_X\cap Q_Z^\perp$ and $Q_Z\cap Q_X^\perp$ correspond to stabilizers, and the spaces $(Q_X+Q_Z^\perp)/Q_Z^\perp\cong Q_X/(Q_X\cap Q_Z^\perp)$ and $(Q_Z+Q_X^\perp)/Q_X^\perp\cong Q_Z/(Q_Z\cap Q_X^\perp)$ correspond to logical operators.

At the level of quantum states, if the subsystem code $Q$ has length $n$ and dimension $k$, then it encodes message states $\ket{\phi}\in(\bC^q)^{\otimes k}$ into code states $\ket{\psi}\in\cQ'$, where
\begin{equation}
  \label{eq:qsubcodespace}
  \cQ' = \spn\left\{\ket{z+(Q_Z\cap Q_X^\perp)}:z\in Q_Z+Q_X^\perp\right\} \subseteq (\bC^q)^{\otimes n}
\end{equation}
is the code space (as defined in~(\ref{eq:qcodespace})) associated to the (non-subsystem) CSS code $Q'=(Q_X',Q_Z')$ defined above. Note that the dimension $k'$ of $Q'$ can be larger than the dimension $k$ of $Q$, and hence each message state $\ket{\phi}$ has multiple valid encodings $\ket{\psi}$. In other words, $Q$ actually encodes a total of $k'$ qudits, but $k'-k$ of these qudits are not used to encode the true message, and can take on arbitrary values; these extra $k'-k$ qudits are often called \textit{gauge qudits}.

There are multiple reasons to introduce such gauge qudits. For instance, if $Q$ is such that there are low-weight errors that can corrupt these gauge qudits, then by separating them from those qudits used to encode the message, the subsystem code $Q$ can have greater distance than the CSS code $Q'$. Furthermore, the locality of $Q$ can also be better (i.e.~lower) than the locality of $Q'$, if $Q_X,Q_Z$ permit sparser parity-check matrices than $Q_X',Q_Z'$. At the level of quantum states, this phenomenon corresponds to the code $Q'$ having high-weight stabilizers, but permitting lower-weight checks (which together generate the stabilizers) at the cost of corrupting the gauge qudits. In this paper, we primarily leverage this second advantage of subsystem codes, in order to construct codes with transversal non-Clifford gates while still permitting low-weight checks.

Some subsystem codes do have the disadvantage of not being decodable in the presence of measurement errors, as the measurements corresponding to their parity-checks in general do not commute and, and hence repeated measurements will not give the same result (whereas for non-subsystem codes, the measurements do commute and hence can be repeated). However, in Appendix~\ref{sec:ec}, we show that subsystem product codes we consider (see Definition~\ref{def:subhomprod} below) nevertheless are decodable in the presence of measurement errors.


We now present a sort of subsystem product, which takes as input a set of non-subsystem CSS codes, and outputs a subsystem product code. Such subsystem products were previously studied by \cite{zeng_minimal_2020}.

\begin{definition}
  \label{def:subhomprod}
  For some $t\in\bN$, let $(Q^i=(Q^i_X,Q^i_Z))_{i\in[t]}$ be (non-subsystem) CSS codes over some field $\bF_q$. We define the \textbf{subsystem product} $Q=(Q_X,Q_Z)=\bigotimes_{i\in[t]}Q^i$ by
  \begin{align*}
    Q_X &= \bigotimes_{i\in[t]}Q^i_X \\
    Q_Z &= \bigotimes_{i\in[t]}Q^i_Z.
  \end{align*}
\end{definition}

In a slight abuse of notation, we use the symbol $\otimes$ to denote both homological products of single-sector chain complexes (Definition~\ref{def:sshomprod}) and subsystem products of CSS codes (Definition~\ref{def:subhomprod}); the type of the factors (single-sector complexes vs.~CSS codes) disambiguates the type of product.

Similarly as for single-sector homological products, locality transforms additively under subsystem products. That is, if $Q^1,\dots,Q^t$ are CSS codes of respective localities $w_1,\dots,w_t$, then by definition the subsystem product $\bigotimes_{i\in[t]}Q^i$ has locality $\leq w_1+\cdots+w_t$.

An analogue of the K\"{u}nneth formula also holds for subsystem products:

\begin{proposition}
  For some $t\in\bN$, let $(Q^i=(Q^i_X,Q^i_Z))_{i\in[t]}$ be (non-subsystem) CSS codes over some field $\bF_q$, and let $Q=(Q_X,Q_Z)=\bigotimes_{i\in[t]}Q^i$ be the subsystem product. If each $Q^i$ has dimension $k_i$, then $Q$ is a subsystem code of dimension $k=\prod_{i\in[t]}k_i$, and there is an isomorphism
  \begin{equation*}
    (Q_Z+Q_X^\perp)/Q_X^\perp \cong \bigotimes_{i\in[t]}(Q^i_Z/{Q^i_X}^\perp)
  \end{equation*}
  Furthermore, for $(a_i\in Q^i_Z)_{i\in[t]}$, the isomorphism above maps
  \begin{equation*}
    \bigotimes_{i\in[t]}a_i+Q_X^\perp \mapsfrom \bigotimes_{i\in[t]}(a_i+{Q^i_X}^\perp).
  \end{equation*}
\end{proposition}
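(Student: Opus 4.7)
The plan is to identify both sides of the claimed isomorphism with quotients of the common space $Q_Z = \bigotimes_{i \in [t]} Q^i_Z$, and then show the two kernels coincide. The second isomorphism theorem gives $(Q_Z + Q_X^\perp)/Q_X^\perp \cong Q_Z/(Q_Z \cap Q_X^\perp)$, while the standard multilinear quotient identity
\[ \bigotimes_i V_i \,\Big/\, \sum_{i} V_1 \otimes \cdots \otimes U_i \otimes \cdots \otimes V_t \;\cong\; \bigotimes_i (V_i/U_i), \]
applied with $V_i = Q^i_Z$ and $U_i = {Q^i_X}^\perp$, realizes $\bigotimes_i (Q^i_Z/{Q^i_X}^\perp)$ as the quotient of $Q_Z$ by
\[ K \;:=\; \sum_{i \in [t]} Q^1_Z \otimes \cdots \otimes {Q^i_X}^\perp \otimes \cdots \otimes Q^t_Z. \]
Both quotient maps send $\bigotimes_i a_i$ to the respective claimed image, so the proposition reduces to proving the intersection identity $Q_Z \cap Q_X^\perp = K$.

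The first step is to unpack $Q_X^\perp$. Iterating the dual tensor identity $(V_1 \otimes V_2)^\perp = V_1^\perp \otimes W_2 + W_1 \otimes V_2^\perp$ from Definition~\ref{def:classtensor} yields
\[ Q_X^\perp \;=\; \sum_{i \in [t]} \bF_q^{n_1} \otimes \cdots \otimes {Q^i_X}^\perp \otimes \cdots \otimes \bF_q^{n_t}. \]
The inclusion $K \subseteq Q_Z \cap Q_X^\perp$ is then immediate: each summand of $K$ sits inside $Q_Z$ by construction (using the CSS condition ${Q^i_X}^\perp \subseteq Q^i_Z$) and inside the corresponding summand of $Q_X^\perp$.

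The reverse inclusion $Q_Z \cap Q_X^\perp \subseteq K$ is the main technical obstacle. The plan is to choose complementary decompositions $Q^i_Z = {Q^i_X}^\perp \oplus L_i$ and $\bF_q^{n_i} = Q^i_Z \oplus M_i$, so that $\bF_q^{n_i} = {Q^i_X}^\perp \oplus L_i \oplus M_i$. Expanding the tensor product gives a direct sum decomposition
\[ \bigotimes_i \bF_q^{n_i} \;=\; \bigoplus_{(U^i)_i \,\in\, \prod_i \{{Q^i_X}^\perp,\, L_i,\, M_i\}} \bigotimes_i U^i \]
into $3^t$ pure tensor summands. The subspace $Q_Z$ is the sum of those summands with $U^i \in \{{Q^i_X}^\perp, L_i\}$ for all $i$, while $Q_X^\perp$ is the sum of summands in which $U^i = {Q^i_X}^\perp$ for at least one $i$. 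Since both subspaces are compatible with the ambient direct-sum decomposition, their intersection is the sum of summands satisfying both conditions simultaneously, and this is precisely $K$.

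Granting the intersection identity, chaining the two quotient descriptions yields the desired isomorphism with the prescribed action on simple tensors, and the dimension count $\dim \bigotimes_i (Q^i_Z/{Q^i_X}^\perp) = \prod_i k_i$ verifies that $Q$ has the claimed subsystem dimension $k$.
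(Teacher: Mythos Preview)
Your proof is correct and takes essentially the same approach as the paper: choose complements $L_i$ so that $Q^i_Z = {Q^i_X}^\perp \oplus L_i$, and then argue via the induced tensor decomposition. The paper phrases this by picking an adapted basis $\{b^i_1,\dots,b^i_{k_i'}\}$ of $Q^i_Z$ with the first $\dim({Q^i_X}^\perp)$ vectors spanning ${Q^i_X}^\perp$, and directly asserts that the pure tensors $\bigotimes_i b^i_{j_i}$ with each $j_i$ in the ``high'' range form a basis of $(Q_Z+Q_X^\perp)/Q_X^\perp$; your version reaches the same conclusion via the second isomorphism theorem together with the explicit intersection computation $Q_Z\cap Q_X^\perp = K$ using the $3^t$-summand decomposition (with the extra complements $M_i$), which supplies the justification the paper leaves implicit.
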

\begin{proof}
  For each $i\in[t]$, let $k_i=\dim({Q^i_X}^\perp)$, $k_i'=\dim(Q^i_Z)$, and fix a basis $\{b^i_1,\dots,b^i_{k_i'}\}$ for $Q^i_Z$ such that $\{b^i_1,\dots,b^i_{k_i}\}$ is a basis for ${Q^i_X}^\perp$. Then by definition $\{b^i_{k_i+1}+{Q^i_X}^\perp,\dots,b^i_{k_i'}+{Q^i_X}^\perp\}$ is a basis for $Q^i_Z/{Q^i_X}^\perp$, and $\{\bigotimes_{i\in[t]}b^i_{j_i}+Q_X^\perp:j_i\in\{k_i+1,\dots,k_i'\}\;\forall i\in[t]\}$ is a basis for $(Q_Z+Q_X^\perp)/Q_X^\perp$, so the result follows.
\end{proof}

\subsection{Product-Expansion}
\label{sec:pe}
In this section, we review the notion of product-expansion; we generally follow the exposition in \cite{kalachev_two-sided_2023}, which introduced the general high-dimensional formulation of this notion, and in \cite{kalachev_maximally_2025}, which showed that random codes over large alphabets satisfy this high-dimensional notion. The two-dimensional version of product-expansion was used to construct asymptotically good qLDPC codes in \cite{panteleev_asymptotically_2022,leverrier_quantum_2022-1,dinur_good_2023}; the high-dimensional version was used to construct nearly good qLTCs in \cite{dinur_expansion_2024}.

Product-expansion is a property of a finite set of classical codes, which generalizes ordinary code distance to such larger collections of codes. At a high level, it describes the extent to which elements of the dual tensor (see Definition~\ref{def:classtensor}) of these codes have natural low-weight decompositions.

\subsubsection{Definition}

In this section, we define product-expansion. We will first need the following notation:

\begin{definition}
  \label{def:dtnot}
  Fix some $t\in\bN$ and some classical codes $C_i\subseteq\bF_q^{n_i}$ for $i\in[t]$. Then for every $i\in[t]$, define $C^{(i)}\subseteq\bigotimes_{j\in[t]}\bF_q^{n_j}=\bF_q^{\prod_{j\in[t]}n_j}$ by
  \begin{equation*}
    C^{(i)}=\left(\bigotimes_{j=1}^{i-1}\bF_q^{n_j}\right)\otimes C_i\otimes\left(\bigotimes_{j=i+1}^{t}\bF_q^{n_j}\right),
  \end{equation*}
  and for every $i,j\in[t]$, define
  \begin{equation*}
    C^{(i,j)} := C^{(i)}\cap C^{(j)}.
  \end{equation*}
  When $t$ is not clear from context, we write $C^{(i;t)}=C^{(i)}$ and $C^{(i,j;t)}=C^{(i,j)}$.
\end{definition}

In words, $C^{(i)}$ is the space of $t$-dimensional tensors for which every direction-$i$ column (in which the $i$th coordinate varies while the other $t-1$ coordinates are fixed) lies in $C_i$. Similarly, for $i\neq j$, then $C^{(i,j)}$ is the space of $t$-dimensional tensors for which every direction-$(i,j)$ plane (in which the $i$th and $j$th coordinates vary while the other $t-2$ coordinates are fixed) lies in $C_i\otimes C_j$.

Definition~\ref{def:classtensor} and Definition~\ref{def:dtnot} immediately imply that
\begin{equation}
  \label{eq:boxplus}
  C_1\boxplus\cdots\boxplus C_t = C^{(1)}+\cdots+C^{(t)}.
\end{equation}
Product-expansion measures the extent to which the sum on the right hand side of~(\ref{eq:boxplus}) can cancel to yield a low-weight element of $C_1\boxplus\cdots\boxplus C_t$ from a sum of high-weight elements of $C^{(1)},\dots,C^{(t)}$. Of course, some cancellations can occur simply because $C^{(i,j)}=C^{(i)}\cap C^{(j)}$ is nonzero for all $i,j$, assuming that $C_i,C_j$ are nonzero. Product-expansion is therefore defined to measure the extent to which ``non-trivial'' cancellations can occur in~(\ref{eq:boxplus}), which are not captured by the spaces $C^{(i,j)}$, as formalized below.


We will need the following modified notion of Hamming weight for elements of $C^{(i)}$.

\begin{definition}
  \label{def:dirham}
  Defining $C^{(i)}$ as in Definition~\ref{def:dtnot}, then for $c\in C^{(i)}$, we let $|c|_i$ denote the number of nonzero direction-$i$ columns in $c$. Formally,
  \begin{equation*}
    |c|_i = \Biggl|\biggl\{(k_1,\dots,k_{i-1},k_{i+1},\dots,k_t)\in\prod_{j\in[t]\setminus\{i\}}[n_j]:\exists k_i\in[n_i]\text{ with }c_{(k_1,\dots,k_t)}\neq 0\biggr\}\Biggr|.
  \end{equation*}
\end{definition}

We are now ready to define product-expansion.

\begin{definition}[\cite{kalachev_two-sided_2023}]
  \label{def:pe}
  Fix some $t\in\bN$ and some prime power $q$. The \textbf{product-expansion $\rho$} of a collection $(C_i\subseteq\bF_q^{n_i})_{i\in[t]}$ of classical codes is the largest real number $\rho\geq 0$ such that for every $c\in C_1\boxplus\cdots\boxplus C_t$, there exists a decomposition
  \begin{equation*}
    c=c_1+\cdots+c_t
  \end{equation*}
  with each $c_i\in C^{(i)}$ such that
  \begin{equation*}
    |c| \geq \rho\sum_{i\in[t]}n_i|c_i|_i.
  \end{equation*}
\end{definition}

For some basic intuition, observe that when $t=1$, a single code $C\subseteq\bF_q^n$ of distance $d$ has product-expansion $\rho=d/n$ equal to the relative distance of $C$.

\subsubsection{Known Examples}
We now present known classes of product-expanding codes. We begin with pairs of random or Reed-Solomon codes:

\begin{theorem}[\cite{kalachev_two-sided_2023,dinur_good_2023}]
  \label{thm:pe2rand}
  For every $\epsilon>0$, there exists a real number $\rho=\rho(\epsilon)>0$ such that the following holds: For every $n\in\bN$ and every $k_1,k_2\leq(1-\epsilon)n$, a uniformly random pair of codes $C_1,C_2\subseteq\bF_q^n$ of respective dimensions $k_1,k_2$ has product-expansion $\geq\rho$ with probability approaching $1$ as $n\rightarrow\infty$.
\end{theorem}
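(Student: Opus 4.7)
The plan is to use a probabilistic union-bound argument over \emph{minimal decompositions}. Concretely, given any $c \in C^{(1)} + C^{(2)}$, fix a decomposition $c = c_1 + c_2$ with $c_i \in C^{(i)}$ minimizing $s_1 + s_2$, where $s_1 = |c_1|_1$ is the number of nonzero columns of $c_1$ and $s_2 = |c_2|_2$ the number of nonzero rows of $c_2$. Let $J \subseteq [n]$ (with $|J|=s_1$) and $I \subseteq [n]$ (with $|I|=s_2$) denote these supports. I want to show: for a random pair $(C_1, C_2)$ with $\dim C_i \leq (1-\epsilon) n$, with probability approaching $1$ no such minimal configuration has $|c| < \rho n(s_1 + s_2)$. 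Product-expansion with the same $\rho$ then follows by applying this to the minimal decomposition of any given $c$.

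The first step is a structural lemma: minimality under the shifts $(c_1, c_2) \mapsto (c_1 + c', c_2 - c')$ with $c' \in C_1 \otimes C_2$ forces the $s_1$ nonzero columns of $c_1$, after being restricted to the $n-s_2$ coordinates in $[n] \setminus I$, to be $\bF_q$-linearly independent, and symmetrically for the rows of $c_2$. The proof is by contrapositive: a nontrivial dependence would yield a codeword $w \in C_1$ supported inside $I$; pairing $w$ with a suitably chosen codeword of $C_2$ produces an element of $C_1 \otimes C_2$ whose addition to $(c_1, c_2)$ cancels one of the columns of $c_1$ while keeping $c_2$'s row support inside $I$, strictly decreasing $s_1 + s_2$.

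The second step is the probability estimate. For fixed $s_1, s_2$ and fixed subsets $J, I$, on the ``pure $c_1$'' region $([n] \setminus I) \times J$ the entries of $c$ coincide with those of $c_1$, so the $s_1$ restricted columns form $s_1$ linearly independent vectors in $\bF_q^{n-s_2}$ of total weight at most $\rho n (s_1+s_2)$. The number of such tuples is bounded above by a Hamming-ball volume in $\bF_q^{n-s_2}$ raised to the $s_1$-th power, which by standard entropy estimates is at most $\bigl(q^{H_q(\rho')(n-s_2)}\bigr)^{s_1}$ for some $\rho' = O(\rho/(1-s_2/n))$. On the other hand, for a uniformly random $C_1$ of dimension $k_1 \leq (1-\epsilon)n$, the probability that $C_1$ contains $s_1$ codewords whose restrictions to a prescribed $(n-s_2)$-subset are a prescribed linearly independent list is at most $q^{-s_1(n - s_2 - k_1)} \leq q^{-s_1 \epsilon n / 2}$ provided $s_2 \leq \epsilon n / 2$. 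A symmetric bound applies to the ``pure $c_2$'' region. Multiplying these, summing over $J, I$ via the factor $\binom{n}{s_1}\binom{n}{s_2} \leq 2^{2n}$, and summing over $s_1, s_2$, the total failure probability is $o(1)$ once $\rho = \rho(\epsilon)$ is chosen small enough that the entropy term is dominated by the $q^{-s_1 \epsilon n / 2}$ decay.

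The main obstacle is the regime where $s_1$ or $s_2$ approaches $n$, since then restricting to the complement loses its probabilistic leverage (the factor $n - s_2 - k_1$ can become non-positive). I would handle this by a separate argument using only the ordinary distance of $C_1$ and $C_2$: for random codes at rate $\leq 1-\epsilon$, with high probability every nonzero codeword has weight $\geq \delta n$ for some $\delta = \delta(\epsilon) > 0$ (Gilbert--Varshamov), and hence $|c| \geq |c_1|_{([n]\setminus I)\times J} \geq (\delta n - s_2) \cdot s_1$. When $s_1 + s_2$ is large but $s_2 \leq \delta n / 2$, this already gives $|c| \geq (\delta/2) n s_1 \geq (\delta/4) n (s_1 + s_2)$ directly, and the symmetric bound covers the case $s_1 \leq \delta n / 2$. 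The remaining case, where both $s_1, s_2 \geq \delta n / 2$, is the technical heart: here one must simultaneously exploit the distance of both codes together with the minimality constraint on the rectangle $I \times J$, which is exactly where the careful threshold analysis of \cite{kalachev_two-sided_2023, dinur_good_2023} comes in.
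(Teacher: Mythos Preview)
The paper does not prove this theorem; it is quoted as a known result from \cite{kalachev_two-sided_2023,dinur_good_2023} and no argument is given. So there is no ``paper's proof'' to compare against here.

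On your sketch: the overall architecture---minimal decompositions plus a union bound over configurations---is in the right spirit, but your structural lemma is not correctly argued. From a dependence $\sum_{j\in J}\alpha_j\,(c_1)_{\cdot,j}\big|_{[n]\setminus I}=0$ you extract $w=\sum_j\alpha_j(c_1)_{\cdot,j}\in C_1$ supported in $I$, and then claim that $w\otimes v$ for some $v\in C_2$ cancels a column of $c_1$ while keeping the row support of $c_2-w\otimes v$ inside $I$. But column $j_0$ of $c_1+w\otimes v$ is $(c_1)_{\cdot,j_0}+v_{j_0}w$, so cancellation would require $(c_1)_{\cdot,j_0}$ to be a scalar multiple of $w$, which the dependence does not provide. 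More generally, any shift $c'\in C_1\otimes C_2$ with row support contained in $I$ has every column lying in $\{u\in C_1:\supp(u)\subseteq I\}$; hence canceling column $j_0$ while keeping $c_2'$'s row support in $I$ forces $(c_1)_{\cdot,j_0}$ itself to be a $C_1$-codeword supported in $I$---a much stronger condition than a single linear relation among the \emph{restricted} columns. Without the linear-independence conclusion, the bookkeeping in your second step (bounding the number of configurations by a Hamming-ball volume raised to the $s_1$-th power, played against a probability $\le q^{-s_1(n-s_2-k_1)}$) loses its anchor.

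You also explicitly defer the regime $s_1,s_2\ge\delta n/2$ to the cited references; that is exactly where those proofs do the substantive work (via robustness/agreement-style arguments rather than a plain union bound over column supports), so even modulo the structural lemma your sketch is not self-contained.
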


\begin{theorem}[\cite{polishchuk_nearly-linear_1994}]
  \label{thm:pe2RS}
  For every $\epsilon>0$, there exists $\rho=\rho(\epsilon)>0$ such that the following holds: For every prime power $q$, every choice of integers\footnote{\cite{polishchuk_nearly-linear_1994} only presented their proof for the case where $k_1=k_2$. However, the proof goes through flawlessly when $k_1\neq k_2$, as was for instance observed in \cite{kalachev_two-sided_2023}.} $k_1,k_2,n\leq q$ with $k_1+k_2\leq(1-\epsilon)n$, every choice of subsets $E_1,E_2\subseteq\bF_q$ of size $|E_1|=|E_2|=n$, the codes $(\evl_{E_1}(\bF_q[X]^{[0,k_1)}),\evl_{E_2}(\bF_q[X]^{[0,k_2)}))$ have product-expansion $\geq\rho$.
\end{theorem}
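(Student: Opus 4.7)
The plan is to show that for every $c \in C_1 \boxplus C_2$ of Hamming weight $w$, there exists a decomposition $c = c_1 + c_2$ with $c_i \in C^{(i)}$ and $n(|c_1|_1 + |c_2|_2) \leq O(w/\epsilon)$, yielding $\rho = \Omega(\epsilon)$. The key is the polynomial viewpoint: the matrix $c \in \bF_q^{E_1 \times E_2}$ corresponds uniquely to a bivariate polynomial $p(X,Y) \in \bF_q[X]^{[0,n)} \otimes \bF_q[Y]^{[0,n)}$ via evaluation on $E_1 \times E_2$, and the hypothesis $c \in C_1 \boxplus C_2$ forces $p$ to have \emph{staircase} monomial support: no term $X^a Y^b$ has both $a \geq k_1$ and $b \geq k_2$. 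Any admissible decomposition $c = c_1 + c_2$ lifts to a polynomial splitting $p = p_1 + p_2$ with $\deg_X p_1 < k_1$ and $\deg_Y p_2 < k_2$, and the freedom in this splitting is parametrized exactly by $C_1 \otimes C_2 = \bF_q[X]^{[0,k_1)} \otimes \bF_q[Y]^{[0,k_2)}$ evaluated on the grid.

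With the polynomial setup, I would invoke the classical Polishchuk--Spielman bivariate rigidity argument. The goal is to construct, for an optimally chosen decomposition, univariate polynomials $u(Y), v(X)$ of controlled degree whose zero sets contain the "bad columns" $J = \{y \in E_2 : p_1(X,y) \not\equiv 0\}$ and "bad rows" $I = \{x \in E_1 : p_2(x,Y) \not\equiv 0\}$ respectively; these counts are exactly $|c_1|_1$ and $|c_2|_2$. Quantitatively, each bad column contributes a nonzero codeword of $C_1$ to $c_1$, which has Hamming weight at least $n - k_1 + 1 \geq \epsilon n$ by the MDS property of Reed--Solomon codes, and symmetrically for rows. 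After accounting for possible cancellations between bad columns and bad rows (which can only occur in the $|I| \cdot |J|$ entries of $E_1 \times E_2$ lying in both a bad row and a bad column), a careful double-counting yields $w \geq \Omega(\epsilon n) \cdot (|I| + |J|)$, giving the desired bound.

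The main obstacle is the algebraic step that forces the bad columns/rows to lie in the zero sets of low-degree polynomials $u, v$, and that allows the decomposition to be adjusted so that $p_1$ is concentrated on columns where $u$ vanishes and $p_2$ on rows where $v$ vanishes. The slack condition $k_1 + k_2 \leq (1-\epsilon)n$ is essential here: it ensures that the staircase polynomial space is strictly smaller than $\bF_q[X]^{[0,n)} \otimes \bF_q[Y]^{[0,n)}$ by a fraction $\geq \epsilon$, giving enough degrees of freedom to perform the adjustment by adding/subtracting elements of $C_1 \otimes C_2$. The Polishchuk--Spielman proof carries this out via an induction that produces $u, v$ of combined degree $O(w/(\epsilon n))$, using the fact that each bad column/row imposes a linear constraint on a polynomial of degree $<n$ supported on $E_i$, together with a B\'{e}zout-style argument bounding the number of common zeros. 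The slight generalization to $k_1 \neq k_2$ and to arbitrary evaluation sets $E_1, E_2 \subseteq \bF_q$ of size $n$ presents no essential difficulty, as the argument only uses the MDS distance and the staircase structure of polynomial degrees.
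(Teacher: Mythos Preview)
The paper does not give its own proof of this theorem; it is quoted as a known result from \cite{polishchuk_nearly-linear_1994} (with the extension to $k_1\neq k_2$ attributed in a footnote to \cite{kalachev_two-sided_2023}). So there is nothing in the paper to compare against beyond the citation, and your proposal correctly identifies the Polishchuk--Spielman bivariate argument as the intended source.

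One remark on the sketch itself: the ``careful double-counting'' in your first paragraph (each bad column contributes $\geq\epsilon n$ weight, minus an $|I|\cdot|J|$ overlap) only yields the bound once you already possess a decomposition with $|I|,|J|\ll\epsilon n$; when $|I|$ or $|J|$ is comparable to $\epsilon n$ the inequality degenerates. Producing such a decomposition is precisely the content of product-expansion, so the counting is a verification step rather than a proof. You correctly flag this in the second paragraph by deferring the construction of the decomposition (equivalently, of the low-degree error locators $u,v$) to the Polishchuk--Spielman machinery, but the sketch should not suggest that the elementary counting is an independent route---the algebraic interpolation/factorization step is the entire argument, and your outline treats it as a black box.
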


Higher-order product-expansion (of an arbitrary constant number of codes) also holds with high probability for random codes over a sufficiently large alphabet:

\begin{theorem}[\cite{kalachev_maximally_2025}]
  \label{thm:petrand}
  For every $\epsilon>0$ and every $t\in\bN$, there exists a real number $\rho=\rho(\epsilon,t)>0$ such that the following holds: For every $n\in\bN$ and every $k_1,\dots,k_t\leq(1-\epsilon)n$, letting $q=2^{(n+3)^t}$, then a uniformly random tuple of codes $C_1,\dots,C_t\subseteq\bF_q^n$ of respective dimensions $k_1,\dots,k_t$ has product-expansion $\geq\rho$ with probability approaching $1$ as $n\rightarrow\infty$.
\end{theorem}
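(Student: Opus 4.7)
The plan is to follow the ``maximally extendable'' (ME) framework developed in \cite{kalachev_maximally_2025}: reduce product-expansion to a finite collection of nonvanishing minor conditions on a generator matrix of the dual tensor code, exhibit one code tuple for which these minors are nonzero, and then apply the Schwartz--Zippel lemma over the very large alphabet $q = 2^{(n+3)^t}$ to conclude that a uniformly random tuple satisfies all of the conditions with probability approaching $1$ as $n\to\infty$.

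The first step is the algebraic reduction. Given codes $C_i\subseteq\bF_q^n$ of dimension $k_i$ with generator matrices $G_i\in\bF_q^{k_i\times n}$, the dual tensor code $C^{(1)}+\cdots+C^{(t)}$ admits a natural block generator matrix $G$ whose blocks are tensor products involving the $G_i$. I would define a finite family $\cP$ of square submatrices of $G$, indexed by combinatorial data encoding possible low-weight supports and direction-$i$ decompositions (as in Definition~\ref{def:pe}), and prove that if every $M\in\cP$ has full rank, then $(C_1,\dots,C_t)$ has product-expansion at least $\rho_0/\poly(n,t)$, where $\rho_0$ is the product-expansion of any other tuple of the same dimensions. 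In other words, ME tuples (those with all of the correct minors nonzero) inherit the product-expansion of any witness in the family, up to bounded multiplicative loss.

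Next I would produce such a witness. For each dimension profile $(k_1,\dots,k_t)$ with $k_i\leq(1-\epsilon)n$, one can construct a tuple of codes with product-expansion at least some $\rho_0=\rho_0(\epsilon,t)>0$ by taking, for example, Reed--Solomon codes of appropriate dimensions and iterating the two-dimensional bound of Theorem~\ref{thm:pe2RS} with a cycle-cutting argument. Alternatively, one can invoke the result of \cite{viderman_combination_2015} that tensor products of a constant number of codes of constant relative distance are locally testable with constant soundness, and translate local testability into product-expansion as is done in \cite{kalachev_maximally_2025}. Either way, the existence of such a witness shows that the polynomial conditions defining $\cP$ are not identically zero in the entries of $G_1,\dots,G_t$.

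Finally I would bound the degrees and apply Schwartz--Zippel. Each minor in $\cP$ is a polynomial of degree at most $O((n+3)^t)$ in the entries of the $G_i$, and one can bound $|\cP|$ by $2^{O(tn)}$. Over $\bF_q$ with $q=2^{(n+3)^t}$, a union bound gives that all minors are simultaneously nonzero with probability $1-o(1)$, so a uniformly random tuple is ME with high probability and thus has product-expansion $\geq \rho(\epsilon,t):=\rho_0/\poly(n,t)>0$. The main obstacle I expect is the first step: formulating the family $\cP$ so that its simultaneous nonvanishing both (i) implies the combinatorial low-weight decomposition of Definition~\ref{def:pe} with only a $\poly(n,t)$ loss in the expansion constant, and (ii) consists of ``few'' polynomials of ``low'' degree relative to the alphabet size. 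Balancing these two requirements is precisely the technical heart of the ME framework.
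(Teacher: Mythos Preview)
This theorem is not proved in the paper; it is quoted from \cite{kalachev_maximally_2025} as a black box. However, the paper's proof of the closely related Theorem~\ref{thm:peptRS} in Section~\ref{sec:perp} exposes exactly the ME framework you invoke (Lemmas~\ref{lem:petoig}--\ref{lem:igrank}), so your high-level outline is the right one. Two quantitative points would need repair before it goes through.

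First, your bound $|\cP|\leq 2^{O(tn)}$ is too optimistic. In the actual argument the minors are indexed by subsets $A\subseteq[n]^t$ (one pair of rank conditions per $\epsilon$-closed set, via Lemma~\ref{lem:igrank}), and the union bound is over all $2^{n^t}$ subsets. This is precisely why the alphabet must be as large as $q=2^{(n+3)^t}$: each determinant has degree at most $n^{t+1}$ in the generator entries, and $2^{n^t}\cdot n^{t+1}/q\to 0$. With only $2^{O(tn)}$ conditions a much smaller field would suffice, which should have been a warning sign.

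Second, your conclusion $\rho(\epsilon,t):=\rho_0/\poly(n,t)$ depends on $n$, contradicting the theorem statement. The correct conversion (Lemmas~\ref{lem:petoig} and~\ref{lem:igtope}) says that if every $\rho'$-closed set is inner-generated then the product-expansion is at least $(\rho')^t/(t(2^t+1)^t)$, so you need a witness tuple whose product-expansion $\rho'$ is itself \emph{independent of $n$}. Neither of your proposed witnesses obviously delivers this: Theorem~\ref{thm:pe2RS} only applies when the rates sum to less than $1$, which fails for $t$ codes each of rate close to $1-\epsilon$, and the tensor-LTC soundness in Theorem~\ref{thm:tensorltc} carries a $1/(tn)^{O(1)}$ factor. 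What \cite{kalachev_maximally_2025} does is feed Lemma~\ref{lem:peLTC} a tuple of LTCs with $n$-independent locality, soundness, and check density; producing such a witness (and verifying that all parameters in the recursion of Lemma~\ref{lem:peLTC} stay bounded) is the step you have not yet supplied.
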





\subsubsection{Technical Lemmas}
\label{sec:petech}
We now present some technical lemmas pertaining to product-expansion that we will use throughout the paper.

Sometimes we have a given decomposition $c=c_1+\cdots+c_t$ with each $c_i\in C^{(i)}$, and we wish to transform it into a low-weight decomposition $c'=c_1'+\cdots+c_t'$ guaranteed to exist by product-expansion. The following lemma shows that any two decompositions of $c$ differ by sums of elements of the $C^{(i,j)}$'s.

\begin{lemma}[Follows from \cite{kalachev_two-sided_2023}]
  \label{lem:homvanexp}
  Let $c\in C_1\boxplus\cdots\boxplus C_t$. For any two decompositions
  \begin{equation*}
    c=c_1+\cdots+c_t=c_1'+\cdots+c_t'
  \end{equation*}
  with $c_i,c_i'\in C^{(i)}$ for all $i\in[t]$, then there exist choices of $c_{i,j}\in C^{(i,j)}$ for $1\leq i<j\leq t$ such that
  \begin{equation}
    \label{eq:homvanexp}
    c_i-c_i' = \sum_{j=1}^{i-1}c_{j,i}-\sum_{j=i+1}^tc_{i,j}
  \end{equation}
  for all $i\in[t]$.
\end{lemma}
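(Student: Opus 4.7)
The plan is to set $d_i := c_i - c_i' \in C^{(i)}$, so that $d_1 + \cdots + d_t = 0$, and to reduce the lemma to the following statement, which I will prove by induction on $t$:

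\medskip
\noindent\emph{Claim.} If $d_i \in C^{(i)}$ for $i \in [t]$ satisfy $d_1 + \cdots + d_t = 0$, then there exist $c_{i,j} \in C^{(i,j)}$ for $1 \le i < j \le t$ such that $d_i = \sum_{j<i} c_{j,i} - \sum_{j>i} c_{i,j}$ for every $i \in [t]$.
\medskip

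The base cases are immediate: for $t=1$, only $d_1 = 0$ is allowed and there is nothing to show; for $t=2$, $d_1 = -d_2 \in C^{(1)} \cap C^{(2)} = C^{(1,2)}$ so we may take $c_{1,2} := -d_1$.

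For the inductive step, the key idea is to split off the $t$th tensor factor. Choose a linear complement $W_t \subseteq \bF_q^{n_t}$ of $C_t$, so that $\bF_q^{n_t} = C_t \oplus W_t$. Tensoring with $\bigotimes_{j \ne t} \bF_q^{n_j}$ yields a direct sum decomposition of the ambient space in which the $C_t$-summand is exactly $C^{(t)}$. Project each $d_i$ ($i < t$) onto the two summands, writing $d_i = d_i^C + d_i^W$ with $d_i^C \in C^{(i)} \cap C^{(t)} = C^{(i,t)}$ and $d_i^W \in C^{(i)}$ lying in the $W_t$-summand. Because the $W_t$-component of $-d_t \in C^{(t)}$ vanishes, summing the $d_i$'s for $i < t$ gives $\sum_{i<t} d_i^W = 0$ and $\sum_{i<t} d_i^C = -d_t$.

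I would then apply the inductive hypothesis to the $W_t$-relation: the vectors $d_i^W$ for $i<t$ live in the $(t-1)$-fold dual tensor of $C_1,\dots,C_{t-1}$ (with the role of the ``ambient'' $t$-th factor $\bF_q^{n_t}$ replaced by the pure vector space $W_t$, which does not affect the inductive setup), so induction yields elements $c_{i,j} \in C^{(i,j)}$ for $1 \le i < j \le t-1$ realizing the relation for the $W$-part. It remains only to define $c_{i,t} := -d_i^C \in C^{(i,t)}$ for $i<t$; adding the $C$- and $W$-contributions then gives the required expression for $d_i$ when $i<t$, and the identity $d_t = -\sum_{i<t} d_i^C = \sum_{i<t} c_{i,t}$ gives it for $i=t$.

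The only subtle point, and the step I expect to require the most care when writing out formally, is verifying that the direct sum decomposition $\bF_q^{n_t} = C_t \oplus W_t$ is preserved by each $C^{(i)}$ (so that the projections $d_i^C, d_i^W$ indeed remain in $C^{(i)}$), and that the inductive hypothesis can be cleanly applied with an extra ``dummy'' tensor factor $W_t$; both are routine because the conditions defining $C^{(i)}$ involve only the $i$-th tensor slot, but they are what makes the argument go through.
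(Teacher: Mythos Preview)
Your proof is correct and takes a genuinely different route from the paper's. The paper casts the statement in the language of chain complexes: it builds the tensor product cochain complex $\cA=\cC_1\otimes\cdots\otimes\cC_t$ of the $2$-term complexes $\cC_i=(\bF_q^{k_i}\xrightarrow{G_i}\bF_q^{n_i})$, invokes the K\"{u}nneth formula to conclude $H^{t-1}(\cA)=0$, interprets the tuple $(c_i-c_i')_i$ (after pulling back through the generator matrices) as a cocycle in $A^{t-1}$, and reads off the $c_{i,j}$ from a filling $b\in A^{t-2}$. Your argument instead proceeds by an explicit induction: split $\bF_q^{n_t}=C_t\oplus W_t$, project the relation $\sum d_i=0$ onto each summand, set $c_{i,t}:=-d_i^C$, and recurse on the $W_t$-part. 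In effect you are unpacking the inductive proof of the K\"{u}nneth vanishing for this particular complex, which buys you a fully self-contained and elementary argument with no homological machinery; the paper's approach is shorter once one accepts the chain-complex setup and fits naturally into the framework of \cite{kalachev_two-sided_2023}. The two ``subtle points'' you flag are exactly the right ones, and both are indeed routine: the splitting $\bF_q^{n_t}=C_t\oplus W_t$ tensors cleanly against the constraint in slot $i\neq t$, and the dummy factor $W_t$ can be handled either by slicing along a basis of $W_t$ and applying the $(t-1)$-code hypothesis coordinatewise, or by noting that surjectivity of the relevant linear map is preserved under $-\otimes W_t$.
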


Lemma~\ref{lem:homvanexp} essentially follows from the discussion in \cite[Appendix~B]{kalachev_two-sided_2023}, though for completeness we provide a proof in Appendix~\ref{sec:peproofs}.

The following lemma shows that passing to subcodes preserves product-expansion, up to some loss.

\begin{lemma}[\cite{kalachev_maximally_2025}]
  \label{lem:pesub}
  For $t,n\in\bN$, if a tuple of codes $(C_i\subsetneq\bF_q^n)_{i\in[t]}$ has product-expansion at least $\rho>0$, then every tuple of subcodes $(C_i'\subseteq C_i)_{i\in[t]}$ has product-expansion at least $\rho^{2^t}/2^t$.
\end{lemma}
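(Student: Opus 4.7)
My plan is to induct on $t$. The base case $t=1$ is immediate: product-expansion of a single code equals its relative distance, and since every subcode $C_1' \subseteq C_1$ has distance at least $d(C_1) \geq \rho n$, its relative distance (hence PE) is at least $\rho \geq \rho^2/2$, as required.

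For the inductive step, let $c \in C_1' \boxplus \cdots \boxplus C_t'$. First I would invoke product-expansion of the parent tuple $(C_i)$ to obtain a decomposition $c = c_1 + \cdots + c_t$ with $c_i \in C^{(i)}$ and $|c| \geq \rho \sum_i n_i |c_i|_i$. Each $c_i$ splits naturally as $c_i = g_i + b_i$, where $g_i \in {C_i'}^{(i)}$ collects the direction-$i$ columns of $c_i$ that already lie in the subcode $C_i'$, and $b_i \in C^{(i)}$ collects the remaining ``bad'' columns lying in $C_i \setminus C_i'$. Since $c \in C_1' \boxplus \cdots \boxplus C_t'$ and each $g_i \in {C_i'}^{(i)}$, the residual $b := c - \sum_i g_i = \sum_i b_i$ still lies in $C_1' \boxplus \cdots \boxplus C_t'$, and the column counts satisfy $|b_i|_i \leq |c_i|_i$.

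The task then reduces to re-expressing the residual $b$ as a decomposition $b = \sum_i c_i''$ with $c_i'' \in {C_i'}^{(i)}$ of small total weight. Here I would use Lemma~\ref{lem:homvanexp} to shift mass between the $b_i$'s via elements of the spaces $C^{(i,j)} = C^{(i)} \cap C^{(j)}$, cleaning each bad column by absorbing it into a neighboring direction. The inductive step is to fix one direction (say $t$), restrict $b$ to slices, and view each slice as an element of the $(t-1)$-fold dual tensor of $C_1, \ldots, C_{t-1}$ to which the inductive hypothesis applies at cost $\rho^{2^{t-1}}/2^{t-1}$; a second application of the ambient PE then redistributes the remaining boundary into the $t$th direction, producing the final decomposition in ${C_i'}^{(i)}$.

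The main obstacle is the weight bookkeeping: moving a bad column in direction $i$ via an element of $C^{(i,j)}$ typically inflates the support in direction $j$, so one must track how column counts propagate across the $t$ cleanup directions. The squaring of $\rho$ at each induction level arises from composing the $(t-1)$-fold inductive bound with a fresh application of the ambient PE to handle cross-direction interactions, while the factor of $2$ per level reflects the binary split between ``good'' and ``bad'' columns in the cleanup dichotomy; combining these gives the claimed $\rho^{2^t}/2^t$ loss.
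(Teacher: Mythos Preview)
The paper does not prove this lemma; it is simply cited from \cite{kalachev_maximally_2025} with no argument given. So there is no ``paper's own proof'' to compare against.

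As for your sketch on its own merits: the base case and the initial move (apply parent product-expansion, then split each $c_i$ into good columns $g_i\in{C_i'}^{(i)}$ and bad columns $b_i$) are sound, and it is correct that the residual $b=\sum_i b_i$ remains in $C_1'\boxplus\cdots\boxplus C_t'$. The gap is in the inductive step that follows. You propose to ``fix direction $t$, restrict $b$ to slices, and apply the inductive hypothesis,'' but a direction-$t$ slice of $b$ need not lie in $C_1'\boxplus\cdots\boxplus C_{t-1}'$: the bad component $b_t$ contributes arbitrary values to each such slice, since its direction-$t$ columns lie in $C_t\setminus C_t'$ and nothing constrains their restrictions to a fixed $t$-coordinate. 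So the object you want to feed into the $(t{-}1)$-dimensional hypothesis is not in the right space. Your appeal to Lemma~\ref{lem:homvanexp} also does not obviously help here: that lemma relates two decompositions \emph{both already living in the parent codes} $C^{(i)}$, and the $c_{i,j}\in C^{(i,j)}$ it produces carry no weight bound, so invoking it does not by itself yield the column-count control you need in the subcodes. Finally, the accounting for why the loss is exactly $\rho^{2^t}/2^t$ (``squaring at each level, factor of $2$ from the good/bad split'') is asserted rather than derived; without a precise recursion it is not clear the constants close. The overall strategy is plausible, but as written the key redistribution step is not a proof.
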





\subsection{Transversal $C^{r-1}Z$ Gates}
\label{sec:transgates}
In this section, we describe transversal $C^{r-1}Z$ gates on quantum codes. Specifically, we give a definition for subsystem codes (see Section~\ref{sec:subsystem}) that will be sufficient for our purposes, though more general definitions are also possible. Our presentation in this section is similar to that of \cite{krishna_towards_2019,golowich_asymptotically_2025,golowich_quantum_2024}.

Below, we define the $C^{r-1}Z$ gate over $q$-dits, as well as a related single-qudit gate we call $U^r$.

\begin{definition}
  \label{def:crz}
  For a finite field $\bF_q$ of characteristic $p$, an element $a\in\bF_q$, and an integer $r\geq 2$, the $r$-qudit gate (i.e.~unitary operator) $C^{r-1}Z_q^a:(\bC^{\bF_q})^{\otimes r}\rightarrow(\bC^{\bF_q})^{\otimes r}$ acts as\footnote{See Footnote~\ref{footnote:trdef} for the definition of the field trace map $\tr_{\bF_q/\bF_p}:\bF_q\rightarrow\bF_p$.}
  \begin{equation*}
    C^{r-1}Z_q^a\ket{z_1,\dots,z_r} = e^{2\pi i\tr_{\bF_q/\bF_p}(a\cdot z_1\cdots z_r)/p}\ket{z_1,\dots,z_r}
  \end{equation*}
  for $(z_1,\dots,z_r)\in\bF_q^r$, and the single-qudit gate $U_q^{r,a}:\bC^{\bF_q}\rightarrow\bC^{\bF_q}$ acts as
  \begin{equation*}
    U_q^{r,a}\ket{z} = e^{2\pi i\tr_{\bF_q/\bF_p}(a\cdot z^r)/p}\ket{z}
  \end{equation*}
  for $z\in\bF_q$. When the field (i.e.~local qudit dimension) is clear from context, we often omit the $q$ subscript. When $a=1$, we write $C^{r-1}Z^1=C^{r-1}Z$ and $U^{r,a}=U^r$.
\end{definition}

We now present our main definition of transversal $C^{r-1}Z$ and $U^r$ gates on subsystem codes; we will subsequently show that this definition corresponds to transversal gates in the usual sense.

\begin{definition}
  \label{def:transversal}
  For integers $r\geq 2$, $n\geq 1$, $\ell\geq 1$, let $(Q^h=(Q^h_X,Q^h_Z))_{h\in[r]}$ be an $r$-tuple of length-$n$ quantum subsystem codes over some field $\bF_q$, with specified injective encoding maps $(\Enc^h:\bF_q^\ell\rightarrow Q^h_Z/(Q^h_Z\cap{Q^h_X}^\perp))_{h\in[r]}$. We say that \textbf{$(Q^h,\Enc^h)_{h\in[r]}$ supports a transversal $C^{r-1}Z$ gate (on $\ell$ logical qudits)} if there exists a \textbf{coefficients vector $a\in\bF_q^n$} such that the following holds: for every $(z^h\in\bF_q^\ell)_{h\in[r]}$ and every $({z^h}'\in\Enc^h(z^h))_{h\in[r]}$, we have
  \begin{equation}
    \label{eq:transdef}
    \sum_{j\in[\ell]}z^1_j\cdots z^r_j = \sum_{j\in[n]}a_j\cdot{z^1_j}'\cdots{z^r_j}'.
  \end{equation}
  If $(Q^1,\Enc^1)=\cdots=(Q^r,\Enc^r)$, we say that \textbf{$(Q^1,\Enc^1)$ supports both a transversal $C^{r-1}Z$ gate and a transversal $U^r$ gate (on $\ell$ logical qudits)} with \textbf{coefficients vector $a$}.
\end{definition}

In the language of subsystem stabilizer codes, Definition~\ref{def:transversal} implies that if we fix all of our gauge qudits as well as all but $\ell$ of our logical qudits\footnote{These extra logical qudits are protected by the code's distance, so we may assume they are always kept in the $\ket{0}$ state. Meanwhile, the gauge qudits can be measured and set to $\ket{0}$ using low-weight measurements, assuming the code has small locality.} to be $\ket{0}$ in the $Z$ basis, then our code supports a transversal $C^{r-1}Z$ gate on the remaining $\ell$ logical qudits in the standard sense; Lemma~\ref{lem:transdef} below formalizes this statement. This result is essentially the same as in \cite[Lemma~2.11]{golowich_asymptotically_2025}, but we provide the proof for completeness in Appendix~\ref{sec:peproofs}, as \cite{golowich_asymptotically_2025} did not phrase their result for subsystem codes.

Below, we recall from Section~\ref{sec:notation} that for a set $S\subseteq\bF_q^n$, we let $\ket{S}\in(\bC^q)^{\otimes n}$ denote the uniform superposition over elements of $S$.


\begin{lemma}[Lemma~2.11 of \cite{golowich_asymptotically_2025}]
  \label{lem:transdef}
  For integers $r\geq 2$, $\ell\geq 1$, let $(Q^h,\Enc^h)_{h\in[r]}$ be an $r$-tuple of length-$n$ quantum subsystem codes $Q^h=(Q^h_X,Q^h_Z)$ over $\bF_q$ with associated injective encoding maps $(\Enc^h:\bF_q^\ell\rightarrow Q^h_Z/(Q^h_Z\cap{Q^h_X}^\perp))_{h\in[r]}$, which supports a transversal $C^{r-1}Z$ gate on $\ell$ logical qudits with coefficients vector $a\in\bF_q^n$ in the sense of Definition~\ref{def:transversal}. For $h\in[r]$, define an isometry $\Enc_{\bC}^h:(\bC^q)^{\otimes\ell}\rightarrow(\bC^q)^{\otimes n}$ by $\Enc_{\bC}^h\ket{z}=\ket{\Enc^h(z)}$. Then for every $\ket{\phi^1},\dots,\ket{\phi^r}\in(\bC^q)^{\otimes\ell}$, we have
  \begin{align*}
    \bigotimes_{j\in[n]}C^{r-1}Z_q^{a_j}(\Enc_{\bC}^1\otimes\cdots\Enc_{\bC}^r)\ket{\phi^1,\dots,\phi^r}
    &= (\Enc_{\bC}^1\otimes\cdots\Enc_{\bC}^r)\bigotimes_{j\in[\ell]}C^{r-1}Z_q\ket{\phi^1,\dots,\phi^r},
  \end{align*}
  where the $j$th $C^{r-1}Z_q$ gate in the tensor products above acts on the $j$th qudit of the $r$ length-$\ell$ or length-$n$ states given as the input.

  Similarly, if $(Q,\Enc)=(Q^1,\Enc^1)$ supports a transversal $U_q^r$ gate with coefficients vector $a$, then letting $\Enc_{\bC}=\Enc_{\bC}^1$, for every $\ket{\phi}\in(\bC^q)^{\otimes\ell}$ we have
  \begin{align*}
    \left(\bigotimes_{j\in[n]}U_q^{a_j}\right)\Enc_{\bC}\ket{\phi}
    &= \Enc_{\bC}\left(\bigotimes_{j\in[\ell]}U_q\right)\ket{\phi}
  \end{align*}
\end{lemma}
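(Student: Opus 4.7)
The plan is to reduce the statement to an equality of phases on computational basis states and then invoke linearity. I first unpack $(\Enc_{\bC}^1\otimes\cdots\otimes\Enc_{\bC}^r)\ket{z^1,\dots,z^r}$ for computational basis inputs $z^h\in\bF_q^\ell$: by definition this equals the tensor product of uniform superpositions
\[
\frac{1}{\prod_{h\in[r]}\sqrt{|\Enc^h(z^h)|}}\sum_{{z^h}'\in\Enc^h(z^h),\,h\in[r]}\ket{{z^1}',\dots,{z^r}'}.
\]
The transversal operator $\bigotimes_{j\in[n]}C^{r-1}Z_q^{a_j}$ is diagonal in the computational basis and multiplies $\ket{{z^1}',\dots,{z^r}'}$ by the global phase $e^{2\pi i\,\tr_{\bF_q/\bF_p}(\sum_{j\in[n]}a_j\,{z^1_j}'\cdots{z^r_j}')/p}$.

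The crucial step is to observe that this phase is the \emph{same} for every choice of coset representatives $({z^h}'\in\Enc^h(z^h))_{h\in[r]}$. This is exactly what the defining identity (\ref{eq:transdef}) of Definition~\ref{def:transversal} guarantees: for every such tuple of representatives,
\[
\sum_{j\in[n]}a_j\,{z^1_j}'\cdots{z^r_j}' \;=\; \sum_{j\in[\ell]}z^1_j\cdots z^r_j,
\]
so the phase depends only on $(z^1,\dots,z^r)$ and not on the representatives. Therefore the phase factors out of the superposition, giving
\[
\bigotimes_{j\in[n]}C^{r-1}Z_q^{a_j}\,(\Enc_{\bC}^1\otimes\cdots\otimes\Enc_{\bC}^r)\ket{z^1,\dots,z^r}
=e^{2\pi i\,\tr_{\bF_q/\bF_p}(\sum_{j\in[\ell]}z^1_j\cdots z^r_j)/p}\,(\Enc_{\bC}^1\otimes\cdots\otimes\Enc_{\bC}^r)\ket{z^1,\dots,z^r}.
\]
The right-hand side is precisely $(\Enc_{\bC}^1\otimes\cdots\otimes\Enc_{\bC}^r)\bigotimes_{j\in[\ell]}C^{r-1}Z_q\ket{z^1,\dots,z^r}$ by the definition of $C^{r-1}Z_q$ on the logical side.

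The lemma then follows for arbitrary $\ket{\phi^h}\in(\bC^q)^{\otimes\ell}$ by expanding each $\ket{\phi^h}$ in the computational basis and invoking linearity of all maps involved. The $U^r$ claim is proved by the same calculation with a single code state: taking $Q^1=\cdots=Q^r=Q$, $\Enc^1=\cdots=\Enc^r=\Enc$, and specializing the computational basis input to $\ket{z,\dots,z}$ (so that every copy of $\Enc_{\bC}$ chooses a representative from the same coset), the defining identity gives $\sum_{j\in[n]}a_j({z_j}')^r=\sum_{j\in[\ell]}z_j^r$, which matches the phase of $\bigotimes_{j\in[\ell]}U_q^r$. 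I expect no real obstacle here; the only subtlety is the representative-independence of the phase, which is precisely the content of Definition~\ref{def:transversal}, so the proof is essentially a direct unpacking of definitions.
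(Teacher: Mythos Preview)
Your proposal is correct and follows essentially the same approach as the paper's proof: expand on computational basis inputs, use the defining identity~(\ref{eq:transdef}) to see that the phase is independent of the chosen coset representatives, factor it out, and extend by linearity. One small imprecision: in the $U^r$ case there is a single code state (not $r$ tensored copies), so the argument should directly apply $\bigotimes_{j\in[n]}U_q^{r,a_j}$ to $\ket{z'}$ for $z'\in\Enc(z)$ and invoke~(\ref{eq:transdef}) with $z^1=\cdots=z^r=z$ and ${z^1}'=\cdots={z^r}'=z'$; your identity $\sum_{j\in[n]}a_j(z_j')^r=\sum_{j\in[\ell]}z_j^r$ is exactly right, just the framing as ``$r$ copies of $\Enc_{\bC}$'' is slightly off.
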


Formally, letting
\begin{equation*}
  {\cQ^h}' = \spn\left\{\ket{z+(Q_Z^h\cap {Q_X^h}^\perp)}:z\in Q_Z^h+{Q_X^h}^\perp\right\} \subseteq (\bC^q)^{\otimes n}
\end{equation*}
be the quantum code space for $Q'$ (see~(\ref{eq:qsubcodespace})), then given code states $(\ket{\psi^i}\in{\cQ^i}')_{i\in[r]}$, we can first ``gauge fix'' out code states by measuring the $Z$-gauge operators $Z^c$ for a generating set\footnote{Such a generating set is given by the set of (scalar multiples of) rows of the parity-check matrix $H_Z^h$ for $Q_Z^h$.} of $c\in{Q_Z^h}^\perp$ and performing corrections to map each $\ket{\psi^i}$ to a state inside
\begin{equation*}
  \spn\left\{\ket{z+(Q_Z^h\cap {Q_X^h}^\perp)}:z\in Q_Z^h\right\} \subseteq {\cQ^h}',
\end{equation*}
while preserving the encoded logical information. Lemma~\ref{lem:transdef} implies that we can then perform physical transversal $C^{r-1}Z^{a_j}$ gates on these gauge-fixed code states to induce logical transversal $C^{r-1}Z$ gates on $\ell$ logical (i.e.~encoded message) qudits, assuming that the remaining logical qudits are set to $\ket{0}$.

In this paper, we use algebraic codes over growing-sized fields $\bF_q$ to construct transversal $C^{r-1}Z$ gates. However, \cite{golowich_asymptotically_2025,nguyen_good_2025,golowich_quantum_2024} show how to reduce the alphabet size at a small cost in code parameters. Below we state such an alphabet-reduction result in our context of $C^{r-1}Z$ gates on subsystem codes; the proof is similar to that in \cite{nguyen_good_2025} (and also in \cite{golowich_quantum_2024}), so we simply sketch the main idea in Appendix~\ref{sec:peproofs}. We remark that the loss in parameters incurred by alphabet reduction could likely be slightly reduced by more carefully applying the concatenation techniques of \cite{golowich_asymptotically_2025,nguyen_good_2025}, but for simplicity we do not pursue this direction.

\begin{lemma}
  \label{lem:alphred}
  For $h\in[r]$, let $Q^h$ be an $[[n,k_h,d_h]]_q$ subsystem code of locality $w_h$ such that $(Q^h)_{h\in[r]}$ supports a transversal $C^{r-1}Z$ gate on $\ell$ logical $q$-dits. For every subfield $\bF_{q'}\subseteq\bF_q$, if $q={q'}^e$, then for $h\in[r]$ there exists a $[[n'=e^r\cdot n,\;k_h'=e\cdot k_h,\;d_h'\geq d_h]]_{q'}$ subsystem code ${Q'}^h$ of locality $w_h'\leq e^r\cdot w_h\cdot$ such that $({Q'}^h)_{h\in[r]}$ supports a transversal $C^{r-1}Z$ gate on $\ell$ logical $q'$-dits.
\end{lemma}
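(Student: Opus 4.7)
The plan is to apply the standard concatenation-based alphabet reduction used in \cite{golowich_asymptotically_2025,nguyen_good_2025,golowich_quantum_2024}, adapted to the subsystem setting. Fix an $\bF_{q'}$-basis $\{\alpha_1,\dots,\alpha_e\}$ of $\bF_q$ together with its trace-dual basis $\{\beta_1,\dots,\beta_e\}$ (so $\tr_{\bF_q/\bF_{q'}}(\alpha_i\beta_j)=\delta_{ij}$), rescaling $\alpha_1$ so that $\tr_{\bF_q/\bF_{q'}}(\alpha_1^r)=1$; this is possible since the trace map is surjective. For each $h\in[r]$, define ${Q'}^h_Z\subseteq\bF_{q'}^{e^r n}$ by mapping each codeword $z\in Q^h_Z$ to the vector obtained by expanding each coordinate $z_j=\sum_i z_j^{(i)}\alpha_i$ into the block $(z_j^{(1)},\dots,z_j^{(e)})$ and then replacing each $\bF_{q'}$-entry with $e^{r-1}$ identical copies. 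Define ${Q'}^h_X$ analogously using the $\beta$-basis, so that the trace-duality ensures $({Q'}^h_X)^\perp$ equals the $\alpha$-expansion-plus-repetition of $(Q^h_X)^\perp$ (modulo the obvious weight-$2$ relations between different repetition copies).

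Parameter verification is routine. The block length $e^r n$ is immediate. For the dimension, the basis expansion multiplies each $\bF_{q'}$-dimension by $e$ and the repetition preserves dimension, so $\dim_{\bF_{q'}}({Q'}^h_Z)=e\dim_{\bF_q}(Q^h_Z)$ and similarly for ${Q'}^h_X$ and the intersections appearing in the subsystem dimension formula, yielding $k_h'=e k_h$. Any $\bF_q$-coordinate contributing to the Hamming weight of an element of $(Q^h_X+{Q^h_Z}^\perp)\setminus{Q^h_Z}^\perp$ or $(Q^h_Z+{Q^h_X}^\perp)\setminus{Q^h_X}^\perp$ produces at least one (in fact $e^{r-1}$) nonzero $\bF_{q'}$-coordinates in the corresponding new element, so $d_h'\geq d_h$. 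Each $\bF_q$-parity-check of weight $\leq w_h$ becomes $e$ parity-checks over $\bF_{q'}$ of weight $\leq ew_h$, and adding the weight-$2$ repetition-enforcement checks preserves locality at most $ew_h+2\leq e^r w_h$ (using $r\geq 2$).

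For the transversal gate, apply $\tr_{\bF_q/\bF_p}=\tr_{\bF_{q'}/\bF_p}\circ\tr_{\bF_q/\bF_{q'}}$ to the phase of each physical $C^{r-1}Z^{a_j}$ gate and expand each $\bF_q$-symbol in the $\alpha$-basis to obtain
\[
\tr_{\bF_q/\bF_{q'}}(a_j z_1\cdots z_r)=\sum_{(i_1,\dots,i_r)\in[e]^r}z_1^{(i_1)}\cdots z_r^{(i_r)}\cdot a_{j,i_1,\dots,i_r}',
\]
where $a_{j,i_1,\dots,i_r}':=\tr_{\bF_q/\bF_{q'}}(a_j\alpha_{i_1}\cdots\alpha_{i_r})\in\bF_{q'}$. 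Each of the $e^r$ summands on the right corresponds to a physical $C^{r-1}Z^{a'}$ gate over $\bF_{q'}$ at position $j$ acting on one $q'$-dit from each of the $r$ codes. To ensure these $e^r$ physical gates are disjoint, I would label the $e^{r-1}$ repetition copies of the $i_h$-th expansion coordinate at position $j$ in code $h$ by tuples $(i_1,\dots,\hat{i}_h,\dots,i_r)\in[e]^{r-1}$, so that each combination $(i_1,\dots,i_r)$ consumes a distinct copy in each code. Define the new encoding map $\Enc'{}^h$ by first applying the scaling $\tilde z\mapsto\alpha_1\tilde z$ to embed the $\ell$ message $q'$-dits into the original message space of $\Enc^h$, and then composing with $\Enc^h$ followed by the concatenation. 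The identity~(\ref{eq:transdef}) for the new codes then follows from the displayed decomposition above together with the normalization $\tr_{\bF_q/\bF_{q'}}(\alpha_1^r)=1$.

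The main obstacle is the bookkeeping around the subsystem orthogonality: one must verify that $({Q'}^h_X)^\perp$ coincides (under the basis identification) with the $\alpha$-expansion-plus-repetition of $(Q^h_X)^\perp$, so the subsystem dimension count yields exactly $ek_h$. The trace-dual basis pair is chosen precisely to make this transparent, because it causes the $\bF_{q'}$-bilinear form on the expanded codes to match the $\bF_q$-bilinear form on the originals, and the self-dual structure of the repetition code is cleanly absorbed into the parity-check presentation. Once this identification is established, all parameter bounds and the transversal-gate identity follow mechanically from the trace decomposition and the disjointness-via-labeling argument above.
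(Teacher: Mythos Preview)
Your approach is the same as the paper's—basis-expand $\bF_q$ over $\bF_{q'}$, concatenate with a length-$e^{r-1}$ repetition code, and use the $\bF_{q'}$-multilinearity of $\bF_q$-multiplication to split each physical $C^{r-1}Z_q$ into $e^r$ disjoint $C^{r-1}Z_{q'}$ gates routed through the repetition copies. The explicit trace-dual basis and the labeling of copies by $(i_1,\dots,\hat i_h,\dots,i_r)$ are exactly the bookkeeping the paper leaves implicit.

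There are, however, two concrete places where a step fails as written. First, the normalization ``rescale $\alpha_1$ so that $\tr_{\bF_q/\bF_{q'}}(\alpha_1^r)=1$'' is not always possible: for $q=4$, $q'=2$, $r=3$ one has $x^3\in\{0,1\}$ for all $x\in\bF_4$ and $\tr_{\bF_4/\bF_2}(0)=\tr_{\bF_4/\bF_2}(1)=0$, so $\tr(x^3)\equiv 0$. The fix is to embed the $\bF_{q'}$-messages via the natural inclusion $\bF_{q'}\subseteq\bF_q$ (as the paper does), so that the left side of~(\ref{eq:transdef}) already lies in $\bF_{q'}$, and then apply any $\bF_{q'}$-linear projection $\phi:\bF_q\to\bF_{q'}$ with $\phi|_{\bF_{q'}}=\mathrm{id}$ (i.e.\ $\phi=\tr_{\bF_q/\bF_{q'}}(\gamma\,\cdot)$ for some $\gamma$ with $\tr(\gamma)=1$, which always exists) to produce $\bF_{q'}$-valued coefficients $a'_{j,i_1,\dots,i_r}=\phi(a_j\alpha_{i_1}\cdots\alpha_{i_r})$.

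Second, defining ${Q'}^h_X$ ``analogously'' to ${Q'}^h_Z$ as $R(\tilde Q^h_X)$ (repetition of the $\beta$-expansion) breaks when $p\mid e^{r-1}$: for $z\in\tilde Q^h_Z$ and $x\in\tilde Q^h_X$ one gets $\langle R(z),R(x)\rangle=e^{r-1}\langle z,x\rangle=0$, so $Q'_Z\subseteq(Q'_X)^\perp$ and the subsystem dimension collapses to $0$. The correct concatenation (which the paper invokes) takes ${Q'}^h_X=(R^\top)^{-1}(\tilde Q^h_X)$, equivalently $({Q'}^h_X)^\perp=R((\tilde Q^h_X)^\perp)$ without the weight-$2$ relations; this avoids the characteristic issue and yields row-weight $e\cdot w_h\cdot e^{r-1}=e^r w_h$ for $H'_X$, which is precisely the locality bound in the lemma (and why your tighter $ew_h+2$ is not available in general).
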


We will specifically apply Lemma~\ref{lem:alphred} to subsystem codes supporting transversal $CCZ$ that arise from Reed-Solomon codes, which have alphabet size $q=\poly(n)$. In this case, Lemma~\ref{lem:alphred} allows us to reduce the alphabet size to a constant (e.g.~$q=2$) with only a $\poly(\log n)$ loss in rate and relative distance.

\begin{remark}
  The work \cite{he_quantum_2025} shows how to perform \emph{addressable} transversal $C^{r-1}Z$ gates on quantum codes arising from algebraic classical codes, such as Reed-Solomon codes. Here addressability refers to the ability to only perform logical $C^{r-1}Z$ gates on selected subsets of logical qudits, by permuting the physical qudits and changing the coefficient vector $a\in\bF_q^n$. We believe that our codes also support such addressable transversal gates, but for simplicity in this paper we do not pursue this direction.
\end{remark}





\section{Products of Product-Expanding Codes}
\label{sec:peprod}
In this section, we prove general bounds on the distance of quantum homological product and subsystem product codes based on product-expanding classical codes. We then apply these bounds to multiple different instantiations. Specifically, we generalize and strengthen the results of \cite{bravyi_homological_2014} on homological products of single-sector chain complexes based on random codes. We also apply our techniques to homological products of single-sector chain complexes from Reed-Solomon codes, as well as to subsystem products of quantum Reed-Solomon codes. We will show how to perform transversal non-Clifford gates on this latter construction in Section~\ref{sec:transversal}.

Our main result on homological products is stated below.

\begin{theorem}
  \label{thm:sspe}
  Fix some $t\in\bN$ and some field $\bF_q$ of characteristic $2$. For each $i\in[t]$ let $\cC_i=(C_i,\partial_i)$ be a single-sector chain complex over $\bF_q$, and let $n_i=\dim C_i$. For each $i\in[t]$, let $\rho_i$ denote the product-expansion of the collection of codes $(B_*(\cC_1),\dots,B_*(\cC_{i-1}),Z_*(\cC_i))$. Then the homological product $\cA=(A,\partial^{\cA}):=\bigotimes_{i\in[t]}\cC_i$ has systolic distance
  \begin{equation}
    \label{eq:sspedis}
    d_*(\cA) \geq \prod_{i\in[t]}(\rho_in_i).
  \end{equation}
  
  Furthermore, if $t=2$, letting $\rho'$ denote the product-expansion of $(B_*(\cC_1),B_*(\cC_2))$ and letting $\Delta_i=d(Z_*(\cC_i))/n_i$ denote the relative distance of the classical code $Z_*(\cC_i)$, then the homological product $\cA$ has filling constant
  \begin{equation}
    \label{eq:sspeexp}
    \mu_*(\cA) \leq \frac{1}{\rho'\cdot\min\{\rho',\Delta_1,\Delta_2\}}.
  \end{equation}
\end{theorem}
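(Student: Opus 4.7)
The plan leverages a key structural observation: cycles and boundaries of $\cA$ automatically lie in specific dual tensor products, opening them up to product-expansion estimates. Writing $D_i$ for $\partial_i$ applied in the $i$-th tensor factor (so $\partial^{\cA}=\sum_iD_i$), every cycle $c\in Z_*(\cA)$ satisfies $c\in B_*(\cC_1)\boxplus\cdots\boxplus B_*(\cC_{t-1})\boxplus Z_*(\cC_t)$. I would verify this by orthogonality against the dual subspace $Z^*(\cC_1)\otimes\cdots\otimes Z^*(\cC_{t-1})\otimes B^*(\cC_t)$: for $u_j\in Z^*(\cC_j)$ with $j<t$ and $u_t=\delta_tv_t$, the pairing $\langle c,u_1\otimes\cdots\otimes u_t\rangle$ equals $\langle D_tc,u_1\otimes\cdots\otimes u_{t-1}\otimes v_t\rangle$; substituting $D_tc=\sum_{j<t}D_jc$ from the cycle condition (and using characteristic~$2$) converts each term into $\langle c,u_1\otimes\cdots\otimes\delta_ju_j\otimes\cdots\otimes v_t\rangle=0$ since each $u_j$ is a cocycle. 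An entirely analogous computation shows that any boundary $b=D_1g+D_2g\in B_*(\cA)$ at $t=2$ lies directly in the stronger dual tensor $B_*(\cC_1)\boxplus B_*(\cC_2)$.

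For the systolic distance~(\ref{eq:sspedis}) I would proceed by induction on $t$, with $t=1$ immediate from $d_*(\cC_1)\geq d(Z_*(\cC_1))=\rho_1n_1$. In the inductive step, applying the product-expansion $\rho_t$ to the dual-tensor membership produces a decomposition $c=c^{(1)}+\cdots+c^{(t)}$ with $|c|\geq\rho_tn_t\cdot|c^{(t)}|_t$, where $c^{(t)}\in(\bigotimes_{i<t}C_i)\otimes Z_*(\cC_t)$. Writing $c^{(t)}=\sum_k\tilde c_k\otimes z_k$ with $z_k\in Z_*(\cC_t)$ and using the Künneth isomorphism of Proposition~\ref{prop:sskunneth} together with the nondegenerate cohomology/homology pairing of Lemma~\ref{lem:hombasis}, I would select a cocycle $u\in Z^*(\cC_t)$ so that the pairing $\tilde c_u:=\sum_k\langle u,z_k\rangle\tilde c_k$ represents a nontrivial homology class in $\cA':=\cC_1\otimes\cdots\otimes\cC_{t-1}$. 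Since $\langle u,z_k\rangle=0$ whenever $z_k=0$, we have $|\tilde c_u|\leq|c^{(t)}|_t$, and the inductive hypothesis applied to $\cA'$ then gives $|\tilde c_u|\geq\prod_{i<t}\rho_in_i$, yielding the desired lower bound on $|c^{(t)}|_t$. The main obstacle is guaranteeing that $\tilde c_u$ is a cycle of $\cA'$ and \emph{nontrivial} in $H_*(\cA')$; this requires careful use of Lemma~\ref{lem:homvanexp} to first adjust the decomposition so that the contributions of $c^{(1)},\dots,c^{(t-1)}$ to the Künneth class of $c$ can be absorbed into boundaries, leaving only the $c^{(t)}$ summand to carry the homology class.

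For the filling constant~(\ref{eq:sspeexp}) at $t=2$, given $b\in B_*(\cA)$ I would apply $\rho'$ to the dual-tensor membership $b\in B_*(\cC_1)\boxplus B_*(\cC_2)$ to obtain $b=b_1+b_2$ with $|b|\geq\rho'(n_1|b_1|_1+n_2|b_2|_2)$. Since $D_1b_1=D_2b_2=0$ (as $B_*(\cC_i)\subseteq\ker\partial_i$) and $(D_1+D_2)b=0$, the compatibility relation $D_2b_1=D_1b_2$ holds automatically. I would then construct the filling $f$ by selecting, for each nonzero direction-$1$ column $\alpha$ of $b_1$, some pre-image $\beta\in C_1$ with $\partial_1\beta=\alpha$, and symmetrically for each nonzero direction-$2$ column of $b_2$; the compatibility condition ensures these column/row pre-images can be combined into a single $f\in A$ with $\partial^{\cA}f=b$. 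The crude bound $|f|\leq n_1|b_1|_1+n_2|b_2|_2\leq|b|/\rho'$ gives one factor of $\rho'$; the additional factor of $\min(\rho',\Delta_1,\Delta_2)$ in the stated bound arises because each pre-image $\beta$ is only determined modulo $Z_*(\cC_1)$, and reconciling the independent choices made for the $b_1$-pre-images and the $b_2$-pre-images on their overlapping positions may force using a nonzero element of $Z_*(\cC_1)$ or $Z_*(\cC_2)$ as a correction, whose minimum weight is controlled by $\Delta_1n_1$ or $\Delta_2n_2$ respectively, while $\rho'$ reappears in controlling the cross-interaction. The principal obstacle is executing this stitching step quantitatively so that the final weight satisfies the stated bound.
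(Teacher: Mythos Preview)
Your approach to~(\ref{eq:sspedis}) shares the paper's core mechanism---apply the product-expansion $\rho_t$ to the dual-tensor membership of $c$, then contract the last factor against a cocycle $u\in Z^*(\cC_t)$ to reduce to $t-1$---but your inductive hypothesis is too weak. The contracted element $\tilde c_u=(I^{\otimes t-1}\otimes u^\top)c^{(t)}$ is \emph{not} a cycle of $\cA'=\cC_1\otimes\cdots\otimes\cC_{t-1}$ once $t\geq 3$: a direct computation gives $\partial^{\cA'}\tilde c_u=(I\otimes u^\top)\bigl(\sum_{j<t}D_j c^{(j)}\bigr)$ after using $u^\top\partial_t=0$, and the cross terms $D_jc^{(i)}$ for $i\neq j<t$ do not vanish. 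Your proposed fix via Lemma~\ref{lem:homvanexp} cannot force $\tilde c_u$ into $Z_*(\cA')$ without destroying the weight bound. What it \emph{does} yield (and this is how the paper proceeds) is that $\tilde c_u\in\bigotimes_{i<t}Z_*(\cC_i)+\sum_{i<t}B_*(\cC_i)^{(i;t-1)}$: after relating the product-expansion decomposition to the K\"unneth decomposition $c=c_0+\text{(boundaries)}$ with $c_0\in\bigotimes_iZ_*(\cC_i)$, one sees $\tilde c_u=(I\otimes u^\top)c_0+\sum_{i<t}(I\otimes u^\top)a_{i,t}$ with the first summand in $\bigotimes_{i<t}Z_*(\cC_i)$ and the rest in $B_*(\cC_i)^{(i;t-1)}$. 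So the correct induction is on the stronger statement: every $a\in\bigl(\bigotimes_iZ_*(\cC_i)+\sum_iB_*(\cC_i)^{(i)}\bigr)\setminus\sum_iB_*(\cC_i)^{(i)}$ has $|a|\geq\prod_i\rho_in_i$. This is the paper's Lemma~\ref{lem:sspedis}, and with it the induction closes exactly along your outline.

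Your approach to~(\ref{eq:sspeexp}) is genuinely different from the paper's and has a real gap. The compatibility $D_2b_1=D_1b_2$ you identify is correct, but it does \emph{not} let you stitch column-preimages $f_1$ (with $D_1f_1=b_1$) and row-preimages $f_2$ (with $D_2f_2=b_2$) into a single filling: for $\partial^{\cA}(f_1+f_2)=b$ you need $D_2f_1+D_1f_2=0$, and the freedom in $f_1,f_2$ modulo $Z_*(\cC_i)$ does not suffice to enforce this within the desired support. The paper takes a quite different route: start with an \emph{arbitrary} filling $a$ of $b$, apply $\rho'$ (via Lemma~\ref{lem:homvanexp}) to the automatic decomposition $b=\partial_1^{(1)}a+\partial_2^{(2)}a$ to produce a modified filling $a'$ with $n_1|\partial_1^{(1)}a'|_1+n_2|\partial_2^{(2)}a'|_2\leq|b|/\rho'$. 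If $|b|/\rho'n_1n_2\geq\min(\Delta_1,\Delta_2)$ the trivial bound $|a'|\leq n_1n_2$ already suffices; otherwise the set $S_i$ of direction-$i$ columns of $a'$ not in $Z_*(\cC_i)$ has $|S_1|<d(Z_*(\cC_2))$ and $|S_2|<d(Z_*(\cC_1))$, so one can find $f\in Z_*(\cC_1)\otimes Z_*(\cC_2)$ agreeing with $a'$ off $S_2\times S_1$, and $a''=a'+f$ is a filling supported on the rectangle $S_2\times S_1$ with $|a''|\leq|S_1|\cdot|S_2|\leq|b|^2/(\rho'^2n_1n_2)\leq|b|/\rho'^2$.
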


Theorem~\ref{thm:sspe} may be applied to arbitrary CSS codes $(Q_X,Q_Z)$ with $\dim Q_X=\dim Q_Z$ by first transforming such codes into single-sector complexes using the following transformation; a similar method was used in \cite{bravyi_homological_2014}.

\begin{lemma}
  \label{lem:codetocomplex}
  Let $Q=(Q_X,Q_Z)$ be a CSS code of length $n$ over $\bF_q$ of characteristic $2$ such that $\dim Q_X=\dim Q_Z$. Fix some full-rank parity check matrices $H_X,H_Z$, so that $Q_X=\ker H_X$, $Q_Z=\ker H_Z$. Define a single-sector chain complex $\cC$ with vector space $C=\bF_q^n$ and $\partial=H_X^\top H_Z$. Then $Q$ equals the quantum code associated to $\cC$.
\end{lemma}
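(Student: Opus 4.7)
The plan is to verify three things about the associated single-sector complex $\cC = (\bF_q^n, H_X^\top H_Z)$: (i) that $\partial$ is a well-defined square endomorphism of $\bF_q^n$ satisfying $\partial^2 = 0$, (ii) that $\ker \partial = Q_Z$, and (iii) that $\ker \delta = Q_X$, where $\delta = \partial^\top = H_Z^\top H_X$. Once these are checked, Definition~\ref{def:cctoqcode} immediately gives that the quantum code associated to $\cC$ is $(\ker \delta, \ker \partial) = (Q_X, Q_Z) = Q$.

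For (i), the matrices have shapes $H_X \in \bF_q^{m_X \times n}$ and $H_Z \in \bF_q^{m_Z \times n}$, where $m_X = n - \dim Q_X$ and $m_Z = n - \dim Q_Z$ because both are full-rank. The hypothesis $\dim Q_X = \dim Q_Z$ then gives $m_X = m_Z$, so $\partial = H_X^\top H_Z$ is a legitimate $n \times n$ matrix. The chain complex condition $\partial^2 = H_X^\top (H_Z H_X^\top) H_Z = 0$ reduces to showing $H_Z H_X^\top = 0$. But $H_Z H_X^\top = 0$ is just the coordinate expression of the CSS orthogonality $Q_X^\perp \subseteq Q_Z$: the rows of $H_X$ span $Q_X^\perp$, so the CSS condition says every row of $H_X$ lies in $\ker H_Z$, i.e., $H_Z H_X^\top = 0$.

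For (ii), clearly $Q_Z = \ker H_Z \subseteq \ker(H_X^\top H_Z) = \ker \partial$. Conversely, since $H_X$ has full row rank (being a full-rank parity-check matrix), $H_X^\top$ has full column rank and is therefore injective as a linear map $\bF_q^{m_X} \to \bF_q^n$. Hence $\partial v = H_X^\top (H_Z v) = 0$ forces $H_Z v = 0$, i.e., $v \in Q_Z$. Item (iii) is entirely symmetric: $\delta = H_Z^\top H_X$ and the same argument with $X \leftrightarrow Z$ (using full row rank of $H_Z$) yields $\ker \delta = \ker H_X = Q_X$.

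There is no real obstacle: the whole statement reduces to rewriting the CSS orthogonality condition as the matrix equation $H_Z H_X^\top = 0$ and then using injectivity of $H_X^\top$ and $H_Z^\top$. The only subtlety worth flagging in the write-up is that the hypothesis $\dim Q_X = \dim Q_Z$ is used exclusively to make $\partial$ square, i.e., to make the product $H_X^\top H_Z$ well-defined in the single-sector setting; everything else follows from the CSS condition and from $H_X, H_Z$ being full-rank.
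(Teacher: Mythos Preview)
Your proof is correct and follows essentially the same approach as the paper: verify $\partial^2=0$ via $H_Z H_X^\top=0$ from CSS orthogonality, then use injectivity of $H_X^\top$ and $H_Z^\top$ (from full rank) to identify $\ker\partial=Q_Z$ and $\ker\delta=Q_X$. Your write-up is in fact more detailed than the paper's, which leaves the injectivity step implicit, and you helpfully isolate where the hypothesis $\dim Q_X=\dim Q_Z$ enters.
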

\begin{proof}
  Since $H_ZH_X^\top=0$ by the CSS orthogonality condition, we have $\partial^2=H_X^\top H_ZH_X^\top H_Z=0$. Therefore $\cC$ is a well-defined single-sector chain complex. Now because $H_X,H_Z$ are full-rank, $\ker\delta=\ker H_X=Q_X$ and $\ker\partial=\ker H_Z=Q_Z$, as desired.
\end{proof}

Our main result on subsystem products is stated below.

\begin{theorem}
  \label{thm:subpe}
  Fix some $t\in\bN$ and some field $\bF_q$. For each $i\in[t]$, let $Q^i=(Q^i_X,Q^i_Z)$ be a (non-subsystem) CSS code over $\bF_q$ of length $n_i$. For each $i\in[t]$, let $\rho^i_X$ (resp.~$\rho^i_Z$) denote the product-expansion of the collection of codes $({Q^1_Z}^\perp,\dots,{Q^{i-1}_Z}^\perp,Q^i_X)$ (resp.~$({Q^1_X}^\perp,\dots,{Q^{i-1}_X}^\perp,Q^i_Z)$). Then the subsystem product $Q=\bigotimes_{i\in[t]}Q^i$ has distance
  \begin{equation*}
    d \geq \min\left\{\prod_{i\in[t]}(\rho^i_Xn_i),\; \prod_{i\in[t]}(\rho^i_Zn_i)\right\}.
  \end{equation*}
\end{theorem}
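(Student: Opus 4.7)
The plan is to prove both inequalities by induction on $t$, so by symmetry I only describe the argument that $|c|\geq\prod_{i\in[t]}\rho^i_Zn_i$ for every $c\in(Q_Z+Q_X^\perp)\setminus Q_X^\perp$. The base case $t=1$ is immediate: the CSS condition ${Q^1_X}^\perp\subseteq Q^1_Z$ gives $Q^1_Z+{Q^1_X}^\perp=Q^1_Z$, so any such $c$ is a nonzero codeword of $Q^1_Z$ of weight at least $d(Q^1_Z)=\rho^1_Zn_1$, since the product-expansion of a single code equals its relative distance.

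For the inductive step, I first place $c$ in the dual tensor space matching the hypothesis on $\rho^t_Z$. Using $Q_Z\subseteq(Q^t_Z)^{(t)}$ and $({Q^t_X}^\perp)^{(t)}\subseteq(Q^t_Z)^{(t)}$ (by the CSS condition for $Q^t$),
\begin{equation*}
c\in\sum_{i<t}({Q^i_X}^\perp)^{(i)}+(Q^t_Z)^{(t)}={Q^1_X}^\perp\boxplus\cdots\boxplus{Q^{t-1}_X}^\perp\boxplus Q^t_Z.
\end{equation*}
Product-expansion of this collection then yields a decomposition $c=b_1+\cdots+b_{t-1}+a$ with $b_i\in({Q^i_X}^\perp)^{(i)}$, $a\in(Q^t_Z)^{(t)}$, and in particular $|c|\geq\rho^t_Zn_t|a|_t$ (keeping only the direction-$t$ term of the product-expansion bound). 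Since each $b_i$ lies in $Q_X^\perp=\sum_j({Q^j_X}^\perp)^{(j)}$, the hypothesis $c\notin Q_X^\perp$ forces $a\notin Q_X^\perp$.

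The heart of the argument is a contraction that reduces to the $(t-1)$-factor statement. For $\xi\in Q^t_X$ and any tensor $v\in\bF_q^{n_1\times\cdots\times n_t}$, define $v^\xi\in\bF_q^{n_1\times\cdots\times n_{t-1}}$ by pairing $\xi$ against the direction-$t$ columns of $v$. A direct check using $\xi\cdot y=0$ for $y\in{Q^t_X}^\perp$ shows that this contraction maps $Q_Z$ into $\bigotimes_{i<t}Q^i_Z$, sends $({Q^t_X}^\perp)^{(t)}$ to zero, and sends $({Q^i_X}^\perp)^{(i)}$ for $i<t$ into $({Q^i_X}^\perp)^{(i;t-1)}\subseteq(\bigotimes_{i<t}Q^i_X)^\perp$. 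Consequently $c^\xi\in\bigotimes_{i<t}Q^i_Z+(\bigotimes_{i<t}Q^i_X)^\perp$ for every $\xi\in Q^t_X$, and the same is true of $a^\xi$ since $b_i^\xi\in(\bigotimes_{i<t}Q^i_X)^\perp$ for $i<t$. Moreover the identity $(\eta\otimes\xi)\cdot a=\eta\cdot a^\xi$, together with the fact that pure tensors $\eta\otimes\xi$ with $\eta\in\bigotimes_{i<t}Q^i_X$ and $\xi\in Q^t_X$ span $Q_X=\bigotimes_iQ^i_X$, shows that if $a^\xi\in(\bigotimes_{i<t}Q^i_X)^\perp$ for every $\xi\in Q^t_X$ then $a\in Q_X^\perp$, contradicting the previous paragraph. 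So some $\xi^*\in Q^t_X$ yields $a^{\xi^*}\in(\bigotimes_{i<t}Q^i_Z+(\bigotimes_{i<t}Q^i_X)^\perp)\setminus(\bigotimes_{i<t}Q^i_X)^\perp$.

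The first $t-1$ product-expansion hypotheses are precisely those needed to apply the inductive hypothesis to the subsystem product $\bigotimes_{i<t}Q^i$, yielding $|a^{\xi^*}|\geq\prod_{i<t}\rho^i_Zn_i$. Since each coordinate of $a^{\xi^*}$ vanishes whenever the corresponding direction-$t$ column of $a$ is zero, we have $|a|_t\geq|a^{\xi^*}|$, and combining with $|c|\geq\rho^t_Zn_t|a|_t$ closes the induction. The symmetric argument on the $X$-side, with $\xi\in Q^t_Z$ and product-expansion of $({Q^1_Z}^\perp,\ldots,{Q^{t-1}_Z}^\perp,Q^t_X)$, gives the $\prod_i\rho^i_Xn_i$ bound. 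The main subtlety I expect to handle carefully is the contraction lemma itself: one must verify that the summand $({Q^t_X}^\perp)^{(t)}$ of $Q_X^\perp$ is genuinely killed (not merely absorbed) by contraction against $\xi\in Q^t_X=({Q^t_X}^\perp)^\perp$, so that the induction really decreases the number of tensor factors by one while preserving the orthogonality structure that the remaining product-expansion hypotheses rely on.
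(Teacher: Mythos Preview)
Your proposal is correct and follows essentially the same approach as the paper: induction on $t$, applying product-expansion of $({Q^1_X}^\perp,\dots,{Q^{t-1}_X}^\perp,Q^t_Z)$ to isolate a direction-$t$ component, then contracting against an element of $Q^t_X$ to reduce to the $(t-1)$-factor case. The only minor difference is that the paper first fixes an arbitrary decomposition $a=a_0+a_1+\cdots+a_t$ with $a_0\in\bigotimes_iQ^i_Z$ and then invokes Lemma~\ref{lem:homvanexp} to convert it into the low-weight one, whereas you apply the product-expansion definition directly to $c$ and recover the needed membership $a^{\xi}\in\bigotimes_{i<t}Q^i_Z+(\bigotimes_{i<t}Q^i_X)^\perp$ indirectly via $c^{\xi}$; this lets you bypass Lemma~\ref{lem:homvanexp} entirely, which is a small streamlining but not a different idea.
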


We will first describe some applications of Theorem~\ref{thm:sspe} and Theorem~\ref{thm:subpe} in Section~\ref{sec:prodtwo} and Section~\ref{sec:prodmany} below, before proving the theorems in Section~\ref{sec:sspeproof}.

\subsection{Product of Two Codes}
\label{sec:prodtwo}
\cite{bravyi_homological_2014} showed that the homological product of $t=2$ two (appropriately sampled) random single-sector chain complexes yields asymptotically good $[[N=n^2,\Theta(N),\Theta(N)]]$ quantum codes of locality $w=\Theta(\sqrt{N})$. Theorem~\ref{thm:sspe} strengthens this result in a couple of ways. First, the following corollary shows that such random product codes are also locally testable:

\begin{corollary}
  \label{cor:ssperandom}
  For every $\epsilon>0$, there exist $\Delta=\Delta(\epsilon)>0$ and $\rho=\rho(\epsilon)>0$ such that the following holds: For $i\in[2]$, let $Q^i=(Q_X^i,Q_Z^i)$ be a uniformly random length-$n$, rate-$R_i$ CSS code over any $\bF_q$ of characteristic $2$ such that $\dim(Q_X^i)=\dim(Q_Z^i)$, and such that the chosen rates satisfy $R_1,R_2\leq 1-\epsilon$. Let $\cC_i$ be the single-sector complex obtained from $Q^i$ via Lemma~\ref{lem:codetocomplex}. Then with probability approaching $1$ as $n\rightarrow\infty$, the quantum code associated to the homological product $\cA=\cC_1\otimes\cC_2$ is a $[[n^2,\;R_1R_2\cdot n^2,\;\Delta\cdot n^2]]$ code that is locally testable with locality $\leq 2n$ and soundness $\geq\rho$.
\end{corollary}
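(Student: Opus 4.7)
The strategy is to apply Theorem~\ref{thm:sspe} (and its cochain analogue) to the single-sector complexes $\cC_1, \cC_2$ obtained from $Q^1, Q^2$ via Lemma~\ref{lem:codetocomplex}. Unpacking the conversion, if $H_X^i, H_Z^i$ are the full-rank parity-check matrices used to build $\cC_i$, then $\partial^{\cC_i}=(H_X^i)^\top H_Z^i$, so $Z_*(\cC_i)=Q_Z^i$ and $B_*(\cC_i)=(Q_X^i)^\perp$, while dually $Z^*(\cC_i)=Q_X^i$ and $B^*(\cC_i)=(Q_Z^i)^\perp$. The K\"unneth formula (Proposition~\ref{prop:sskunneth}) then gives $\dim H_*(\cA)=R_1R_2\cdot n^2$ for $\cA=\cC_1\otimes\cC_2$. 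Since $\partial^{\cA}=\partial^{\cC_1}\otimes I+I\otimes\partial^{\cC_2}$ has each row and column supported on at most $n+n=2n$ coordinates (and analogously for $\delta^{\cA}$), the locality bound $\leq 2n$ is immediate.

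For the distance and soundness, I would apply Theorem~\ref{thm:sspe} both to $\cA$ and to $\cA^*=\cC_1^*\otimes\cC_2^*$, which yields the cosystolic distance $d^*(\cA)=d_*(\cA^*)$ and the cofilling constant $\mu^*(\cA)=\mu_*(\cA^*)$. Together these applications require lower bounds on the product-expansion of the six collections
\begin{equation*}
(Q_Z^1),\ ((Q_X^1)^\perp,Q_Z^2),\ (Q_X^1),\ ((Q_Z^1)^\perp,Q_X^2),\ ((Q_X^1)^\perp,(Q_X^2)^\perp),\ ((Q_Z^1)^\perp,(Q_Z^2)^\perp),
\end{equation*}
together with the relative distances of $Q_Z^i$ and $Q_X^i$. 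Under the hypothesis $\dim Q_X^i=\dim Q_Z^i$, the codes $Q_X^i,Q_Z^i$ have rate $(1+R_i)/2\leq 1-\epsilon/2$ and their duals have rate $(1-R_i)/2\leq 1/2$, so every code appearing above has rate bounded away from $1$ by $\epsilon/2$.

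The key observation making this work is that each of $Q_X^i,Q_Z^i,(Q_X^i)^\perp,(Q_Z^i)^\perp$ is marginally uniform of its prescribed dimension: this follows from the $GL_n(\bF_q)$-symmetry of the uniform CSS distribution under the inclusion constraint $Q_X^\perp\subseteq Q_Z$. Since $Q^1$ and $Q^2$ are sampled independently, every two-code pair listed above consists of two independent uniformly random codes, so Theorem~\ref{thm:pe2rand} gives product-expansion $\geq\rho(\epsilon/2)>0$ with high probability, and a standard random-code argument gives relative distance $\geq\Delta_0(\epsilon)>0$ with high probability. A union bound over the ten failure events then yields a single event of probability $1-o(1)$ on which all required lower bounds hold simultaneously. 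Substituting into Theorem~\ref{thm:sspe} gives $d_*(\cA),d^*(\cA)\geq\Delta(\epsilon)n^2$ and constant upper bounds on $\mu_*(\cA),\mu^*(\cA)$; since the boundary and coboundary of $\cA$ double as parity-check matrices for the associated CSS code, these filling bounds translate to soundness $\geq\rho(\epsilon)>0$ in the sense of Definition~\ref{def:MDS}.

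The main obstacle I expect is the marginal-uniformity claim for the random CSS distribution --- i.e.\ verifying that after imposing $Q_X^\perp\subseteq Q_Z$, each of the four associated subspaces is still a uniformly random Grassmannian element. Once this is checked (by symmetry/exchangeability under the general linear group), every other ingredient is a direct invocation of Theorem~\ref{thm:sspe} and Theorem~\ref{thm:pe2rand}, with the cochain version of Theorem~\ref{thm:sspe} following for free from the identification $\cA^*=\cC_1^*\otimes\cC_2^*$.
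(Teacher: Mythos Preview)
Your proposal is correct and follows essentially the same approach as the paper: invoke Theorem~\ref{thm:sspe} (and its cochain dual via $\cA^*=\cC_1^*\otimes\cC_2^*$) together with Lemma~\ref{lem:codetocomplex} and Theorem~\ref{thm:pe2rand}, using the fact that in a uniformly random CSS code all four of $Q_X^i,Q_Z^i,(Q_X^i)^\perp,(Q_Z^i)^\perp$ are marginally uniform of their prescribed dimensions. The paper's proof simply asserts this marginal-uniformity fact and cites the three results, whereas you have (correctly) spelled out which six collections need product-expansion bounds and why the rate hypotheses suffice.
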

\begin{proof}
  The corollary follows immediately from Theorem~\ref{thm:sspe}, Lemma~\ref{lem:codetocomplex}, and Theorem~\ref{thm:pe2rand}, as in a uniformly random CSS code $Q^i$, all the codes $Q_X^i,Q_Z^i,{Q_X^i}^\perp,{Q_Z^i}^\perp$ are uniformly random classical codes (of the specified dimensions).
\end{proof}

Below, we also show that for appropriate choices of the rates of $Q^1,Q^2$, the random CSS codes above can be replaced by quantum Reed-Solomon codes, thereby yielding an explicit construction of length-$N$ quantum codes with dimension and distance $\Theta(N)$, locality $O(\sqrt{N})$, and constant soundness. The key idea is to apply the product-expansion of Reed-Solomon codes shown by \cite{polishchuk_nearly-linear_1994} (see Theorem~\ref{thm:pe2RS}).

\begin{corollary}
  \label{cor:sspeRS}
  For every $\epsilon>0$, there exist $\Delta=\Delta(\epsilon)>0$ and $\rho=\rho(\epsilon)>0$ such that the following holds: For $i\in[2]$, let $Q^i=(Q_X^i,Q_Z^i)$ be a length-$n$, rate-$R_i$ quantum Reed-Solomon code over some $\bF_q$ of characteristic $2$ such that $\dim(Q_X^i)=\dim(Q_Z^i)$, and such that the chosen rates satisfy $R_1,R_2\leq 1-\epsilon$ and $|R_1-R_2|\geq\epsilon$. Let $\cC_i$ be the single-sector complex obtained from $Q^i$ via Lemma~\ref{lem:codetocomplex}. Then the quantum code associated to the homological product $\cA=\cC_1\otimes\cC_2$ is a $[[n^2,\;R_1R_2\cdot n^2,\;\Delta\cdot n^2]]$ code that is locally testable with locality $\leq 2n$ and soundness $\geq\rho$.
\end{corollary}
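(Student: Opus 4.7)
The plan is to mirror the proof of Corollary~\ref{cor:ssperandom}, but substitute Theorem~\ref{thm:pe2RS} (product-expansion of a pair of Reed-Solomon codes) for Theorem~\ref{thm:pe2rand} (random codes). First I would pass from each quantum Reed-Solomon code $Q^i$ to a single-sector complex $\cC_i=(\bF_q^n,\partial_i)$ via Lemma~\ref{lem:codetocomplex}, choosing full-rank parity-check matrices $H_X^i,H_Z^i$ (which exist since $\dim Q_X^i=\dim Q_Z^i=:k_i$). Because $H_Z^i$ is surjective and $H_X^{i\top}$ injective, $B_*(\cC_i)=\im H_X^{i\top}=(Q_X^i)^\perp$; symmetrically $B^*(\cC_i)=(Q_Z^i)^\perp$. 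By Definition~\ref{def:QRS} and Lemma~\ref{lem:RSdual}, each of $Z_*(\cC_i),Z^*(\cC_i),B_*(\cC_i),B^*(\cC_i)$ is a Reed-Solomon code of length $n=q$ with dimensions $k_i,k_i,n-k_i,n-k_i$ respectively. Since the quantum rate is $R_i=(2k_i-n)/n$, the hypotheses $R_i\leq 1-\epsilon$ and $|R_1-R_2|\geq\epsilon$ translate to $k_i\leq(1-\epsilon/2)n$ and $|k_1-k_2|\geq\epsilon n/2$. Relabel so that $k_1\geq k_2$, so that $k_1-k_2\geq\epsilon n/2$.

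To bound the distance I would apply Theorem~\ref{thm:sspe} to $\cA=\cC_1\otimes\cC_2$. The $i=1$ hypothesis is just the relative distance of $\operatorname{RS}(n,k_1)$, which is $\geq\epsilon/2$. The $i=2$ hypothesis is product-expansion of $(B_*(\cC_1),Z_*(\cC_2))=(\operatorname{RS}(n,n-k_1),\operatorname{RS}(n,k_2))$; its dimensions sum to $n-(k_1-k_2)\leq(1-\epsilon/2)n$, so Theorem~\ref{thm:pe2RS} supplies a product-expansion bounded below by $\rho(\epsilon/2)>0$. Hence $d_*(\cA)\geq\Delta n^2$ for some $\Delta=\Delta(\epsilon)>0$. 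Since the cochain complex satisfies $\cA^*=\cC_1^*\otimes\cC_2^*$, the identical analysis applied to $\cA^*$ uses the same pair of Reed-Solomon codes (swapping $Q_X$ and $Q_Z$ does not change the relevant dimensions) and gives $d^*(\cA)\geq\Delta n^2$, so the quantum distance is $\Theta(n^2)$.

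For soundness I would invoke the $t=2$ filling-constant bound in Theorem~\ref{thm:sspe}: $\mu_*(\cA)\leq1/(\rho'\cdot\min(\rho',\Delta_1,\Delta_2))$, where $\rho'$ is the product-expansion of $(B_*(\cC_1),B_*(\cC_2))=(\operatorname{RS}(n,n-k_1),\operatorname{RS}(n,n-k_2))$ and $\Delta_i$ is the relative distance of $Z_*(\cC_i)$. The dimensions sum to $2n-k_1-k_2$. The key step is the observation that $R_2\geq 0$ (a quantum rate is nonnegative) together with $R_1-R_2\geq\epsilon$ forces $R_1\geq\epsilon$, hence $k_1+k_2\geq(1+\epsilon/2)n$, so the sum of dimensions is $\leq(1-\epsilon/2)n$ and Theorem~\ref{thm:pe2RS} yields $\rho'\geq\rho'(\epsilon)>0$. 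Combined with $\Delta_i\geq\epsilon/2$, this gives $\mu_*(\cA)=O_\epsilon(1)$; the symmetric argument on $\cA^*$ bounds $\mu^*(\cA)$, yielding soundness $\rho=\rho(\epsilon)>0$.

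The remaining parameters are routine: the dimension equals $\dim H_*(\cA)=\dim H_*(\cC_1)\cdot\dim H_*(\cC_2)$ by the K\"unneth formula (Proposition~\ref{prop:sskunneth}), and each factor is $2k_i-n=R_in$, giving $R_1R_2n^2$; locality $\leq 2n$ is immediate since $\partial^{\cA}=\partial_1\otimes I+I\otimes\partial_2$ with each $\partial_i$ an $n\times n$ matrix. The only non-trivial step is the bookkeeping that ensures all four invocations of Theorem~\ref{thm:pe2RS} (for $d_*,d^*,\mu_*,\mu^*$) land in its rate-sum-below-$1-\Omega(\epsilon)$ regime using only the corollary's hypotheses: the two ``distance'' invocations exploit the gap $k_1-k_2\geq\Omega(\epsilon n)$, while the two ``soundness'' invocations exploit $k_1+k_2\geq(1+\Omega(\epsilon))n$, which is precisely what the gap hypothesis $|R_1-R_2|\geq\epsilon$ combined with the nonnegativity of quantum rates delivers.
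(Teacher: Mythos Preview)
Your proposal is correct and follows essentially the same approach as the paper's proof: invoke Theorem~\ref{thm:sspe} together with Lemma~\ref{lem:codetocomplex} and Theorem~\ref{thm:pe2RS}, checking that the relevant pairs of Reed-Solomon codes have rates summing to at most $1-\Omega(\epsilon)$. Your bookkeeping is in fact more explicit than the paper's---you correctly isolate the nonnegativity of the quantum rate (equivalently $k_i\geq n/2$, forced by $Q_X^{i\perp}\subseteq Q_Z^i$) as the reason why $(B_*(\cC_1),B_*(\cC_2))$ also lands in the admissible regime, a point the paper absorbs into the word ``similarly.''
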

\begin{proof}
  The corollary follows immediately from Theorem~\ref{thm:sspe}, Lemma~\ref{lem:codetocomplex}, and Theorem~\ref{thm:pe2RS}. Specifically, assuming without loss of generality that $R_1-R_2\geq\epsilon$, then $B_*(\cC_1),\;Z_*(\cC_2)$ are classical Reed-Solomon codes of rates $(1-R_1)/2,\;(1+R_2)/2$ respectively, so Theorem~\ref{thm:pe2RS} ensures that this pair of codes has product-expansion at least some $\rho=\rho(\epsilon/2)>0$. We may similarly conclude that $(B_*(\cC_1),B_*(\cC_2))$ has product-expansion at least $\rho$, so we may apply Theorem~\ref{thm:sspe} to conclude the desired result.
\end{proof}

Below, we present a subsystem analogue of Corollary~\ref{cor:sspeRS}, by applying Theorem~\ref{thm:subpe} instead of Theorem~\ref{thm:sspe}. In Section~\ref{sec:transRS} below, we will show how to perform transversal non-Clifford gates on these subsystem codes. Note that Theorem~\ref{thm:subpe} also implies an analogous subsystem version of Corollary~\ref{cor:ssperandom}, which we do not state to avoid redundancy.

\begin{corollary}
  \label{cor:subpeRS}
  For every $\epsilon>0$, there exists $\Delta=\Delta(\epsilon)>0$ such that the following holds: for $i\in[2]$, let $Q^i=(Q^i_X,Q^i_Z)$ be a length-$n$, dimension-$k_i$ quantum Reed-Solomon code over some field $\bF_q$ such that
  \begin{align*}
    \dim(Q^1_X) &\leq (1-\epsilon)n \\
    \dim(Q^1_Z) &\leq (1-\epsilon)n \\
    \dim({Q^1_Z}^\perp)+\dim(Q^2_X) &\leq (1-\epsilon)n \\
    \dim({Q^1_X}^\perp)+\dim(Q^2_Z) &\leq (1-\epsilon)n.
  \end{align*}
  Then the subsystem product code $Q:=Q^1\otimes Q^2$ is a $[[n^2,\;k_1k_2,\;\Delta\cdot n^2]]_q$ subsystem code with locality $\leq 2n$.
\end{corollary}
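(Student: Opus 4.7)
The plan is to directly invoke Theorem~\ref{thm:subpe} (the general distance bound for subsystem products in terms of product-expansion) and to verify each of the four product-expansion quantities $\rho^1_X,\rho^2_X,\rho^1_Z,\rho^2_Z$ using the Reed-Solomon product-expansion result of Polishchuk-Spielman (Theorem~\ref{thm:pe2RS}). The dimension statement $K = k_1 k_2$ is immediate from the K\"unneth-type isomorphism for subsystem products established just after Definition~\ref{def:subhomprod}, and the locality bound $\leq 2n$ is immediate from the additivity of locality under subsystem products together with the fact that each classical Reed-Solomon parity check has weight at most $n$. So the only nontrivial step is the distance bound.

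For the distance, I would unpack the definitions of the four product-expansion parameters in the $t=2$ case of Theorem~\ref{thm:subpe}. The quantities $\rho^1_X$ and $\rho^1_Z$ are product-expansions of \emph{single} codes, namely $Q^1_X$ and $Q^1_Z$, and hence equal their respective relative distances. Since $Q^1_X, Q^1_Z$ are Reed-Solomon codes of dimension at most $(1-\epsilon)n$, each has relative distance at least $\epsilon$, so $\rho^1_X,\rho^1_Z \geq \epsilon$. The quantities $\rho^2_X$ and $\rho^2_Z$ are product-expansions of the \emph{pairs} $({Q^1_Z}^\perp, Q^2_X)$ and $({Q^1_X}^\perp, Q^2_Z)$ respectively. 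By Definition~\ref{def:QRS} (and the assumption that evaluation sets are the full field, so that duals of RS codes are RS codes via Lemma~\ref{lem:RSdual}), these are pairs of Reed-Solomon codes on the same evaluation set, and the hypotheses $\dim({Q^1_Z}^\perp) + \dim(Q^2_X) \leq (1-\epsilon)n$ and $\dim({Q^1_X}^\perp) + \dim(Q^2_Z) \leq (1-\epsilon)n$ are exactly the rate-sum condition of Theorem~\ref{thm:pe2RS}. That theorem then yields $\rho^2_X, \rho^2_Z \geq \rho(\epsilon)$ for some absolute constant $\rho(\epsilon) > 0$.

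Plugging these bounds into Theorem~\ref{thm:subpe} gives
\begin{equation*}
d \;\geq\; \min\bigl\{\rho^1_X n \cdot \rho^2_X n,\; \rho^1_Z n \cdot \rho^2_Z n\bigr\} \;\geq\; \epsilon \, \rho(\epsilon) \cdot n^2,
\end{equation*}
so setting $\Delta := \epsilon \, \rho(\epsilon)$ completes the proof. No step is really a genuine obstacle here, since the corollary is essentially a direct specialization of the two previously stated results; the only thing one has to be careful about is matching the asymmetric roles of the two factors in Theorem~\ref{thm:subpe} (the first code contributes via its own distance, the second via its pairwise product-expansion with the dual of the first) to the asymmetric hypotheses stated in the corollary, and checking that the four pairs that appear really are pairs of Reed-Solomon codes with a rate-sum bound.
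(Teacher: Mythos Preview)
Your proposal is correct and matches the paper's proof essentially line for line: the paper also invokes Theorem~\ref{thm:subpe}, observes that $Q^1_X,Q^1_Z$ have relative distance $\geq\epsilon$ (giving $\rho^1_X,\rho^1_Z\geq\epsilon$), applies Theorem~\ref{thm:pe2RS} to the two pairs to get $\rho^2_X,\rho^2_Z\geq\rho(\epsilon)$, and sets $\Delta=\epsilon\rho(\epsilon)$. One tiny imprecision: you say the pairs are Reed-Solomon codes ``on the same evaluation set,'' but $Q^1$ and $Q^2$ may use different evaluation sets; this is harmless since Theorem~\ref{thm:pe2RS} allows distinct $E_1,E_2$.
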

\begin{proof}
  The corollary follows immediately from Theorem~\ref{thm:subpe} and Theorem~\ref{thm:pe2RS}. Specifically, by definition $Q^1_X,Q^1_Z$ are Reed-Solomon codes of dimension $\leq(1-\epsilon)n$ and hence have distance $\geq\epsilon n$. Furthermore, Theorem~\ref{thm:pe2RS} implies that both pairs $({Q^1_Z}^\perp,Q^2_X)$ and $({Q^1_X}^\perp,Q^2_Z)$ have product-expansion at least some $\rho=\rho(\epsilon)>0$. Therefore Theorem~\ref{thm:subpe} implies that $Q$ has distance $d\geq\Delta\cdot n^2$ for $\Delta=\Delta(\epsilon)=\epsilon\rho>0$, as desired.
\end{proof}

While product-expanding random codes have been used for the ``local codes'' in constructions of asymptotically good quantum LDPC codes \cite{panteleev_asymptotically_2022,leverrier_quantum_2022-1,dinur_good_2023}, the product-expansion of Reed-Solomon codes seems to be too weak for this purpose, as observed in \cite{kalachev_two-sided_2023}. Specifically, this application requires classical codes $C_1,C_2$ such that both $(C_1,C_2)$ and $(C_1^\perp,C_2^\perp)$ are product-expanding, which cannot be obtained using Theorem~\ref{thm:pe2RS} due to the requirement that $R_1+R_2<1$. \cite{kalachev_two-sided_2023} explains how this issue seems to stem from the fact that the dual of a Reed-Solomon code is another Reed-Solomon code, but $(C_1,C_2)$ will have poor product-expansion if $C_1^\perp\subseteq C_2$. It is therefore perhaps surprising that we are able to use the product-expansion of quantum Reed-Solomon codes in Corollary~\ref{cor:sspeRS} and Corollary~\ref{cor:subpeRS} to obtain quantum codes of linear dimension and distance with nontrivial locality.

\subsection{Higher Order Products}
\label{sec:prodmany}
Theorem~\ref{thm:sspe} also provides a way of obtaining higher-dimensional homological products with good distance, if we are given CSS codes satisfying appropriate product-expansion properties; \cite{bravyi_homological_2014} conjectured that such higher products with good distance would exist.




The following corollary resolves this conjecture of \cite{bravyi_homological_2014} in the affirmative over sufficiently large alphabets. Specifically, applying the product-expansion result of \cite{kalachev_maximally_2025} described in Theorem~\ref{thm:petrand} with Theorem~\ref{thm:sspe} immediately yields the following:

\begin{corollary}
  \label{cor:sspemanyrandom}
  For every $\epsilon>0$ and $t\in\bN$, there exists $\Delta=\Delta(\epsilon,t)>0$ such that the following holds: For $i\in[t]$, let $Q^i=(Q_X^i,Q_Z^i)$ be a uniformly random length-$n$, rate-$R_i$ CSS code over $\bF_q$, $q=2^{(n+3)^t}$, such that $\dim(Q_X^i)=\dim(Q_Z^i)$, and such that the chosen rates satisfy $R_i\leq 1-\epsilon$. Let $\cC_i$ be the single-sector complex obtained from $Q^i$ via Lemma~\ref{lem:codetocomplex}. Then with probability approaching $1$ as $n\rightarrow\infty$, the quantum code associated to the homological product $\cA=\cC_1\otimes\cdots\otimes\cC_t$ is a $[[n^t,\; (\prod_{i\in[t]}R_i)\cdot n^t,\; \Delta\cdot n^t]]_q$ code with locality $\leq tn$.
\end{corollary}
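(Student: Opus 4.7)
The plan is to combine Lemma~\ref{lem:codetocomplex}, the K\"unneth formula, Theorem~\ref{thm:sspe}, and Theorem~\ref{thm:petrand}. First I apply Lemma~\ref{lem:codetocomplex} to each random CSS code $Q^i=(Q^i_X,Q^i_Z)$, with a choice of full-rank parity-check matrices $H_X^i,H_Z^i$, to produce a single-sector chain complex $\cC_i=(\bF_q^n,{H_X^i}^\top H_Z^i)$. Because ${H_X^i}^\top$ is injective (as $H_X^i$ has full row rank) and $H_Z^i$ is surjective, one reads off $Z_*(\cC_i)=\ker H_Z^i=Q^i_Z$ and $B_*(\cC_i)=\im{H_X^i}^\top=(Q^i_X)^\perp$; transposing gives $Z^*(\cC_i)=Q^i_X$ and $B^*(\cC_i)=(Q^i_Z)^\perp$. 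The K\"unneth formula (Proposition~\ref{prop:sskunneth}) then yields $\dim H_*(\cA)=\prod_{i\in[t]}(\dim Q^i_Z-\dim(Q^i_X)^\perp)=(\prod_{i\in[t]}R_i)\cdot n^t$, matching the claimed dimension. Each $\partial_i$ is an $n\times n$ matrix, hence of locality $\le n$, so the boundary map of $\cA$, which is a sum of $t$ tensor-padded copies, has locality $\le tn$.

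For the distance, writing $Q$ for the quantum code associated to $\cA$, the definitions give $d(Q)=\min\{d_*(\cA),d^*(\cA)\}$. Applying Theorem~\ref{thm:sspe} to $\cA=\cC_1\otimes\cdots\otimes\cC_t$ yields $d_*(\cA)\ge\prod_{i\in[t]}(\rho_i n)$, where $\rho_i$ is the product-expansion of the collection $\cF_i:=((Q^1_X)^\perp,\dots,(Q^{i-1}_X)^\perp,Q^i_Z)$. Observing that $\cA^*=\cC_1^*\otimes\cdots\otimes\cC_t^*$ is itself a homological product of single-sector complexes, I apply the theorem symmetrically to obtain $d^*(\cA)\ge\prod_{i\in[t]}(\rho_i^* n)$, where $\rho_i^*$ is the product-expansion of $\cF_i^*:=((Q^1_Z)^\perp,\dots,(Q^{i-1}_Z)^\perp,Q^i_X)$. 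The task is then to lower bound all $2t$ of the quantities $\rho_i,\rho_i^*$ uniformly with high probability.

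For this final step, each $\cF_i$ (and each $\cF_i^*$) contains at most one subspace drawn from each CSS code $Q^j$, and the $Q^j$ are independent across $j$. The uniform distribution on CSS codes is invariant under $\GL_n(\bF_q)$, which acts transitively on subspaces of a given dimension; hence each marginal $(Q^j_X)^\perp$ or $Q^j_Z$ (and likewise $(Q^j_Z)^\perp$ or $Q^j_X$) is uniform over codes of its prescribed dimension. The rate constraint $R_j\le 1-\epsilon$ together with $\dim Q^j_X=\dim Q^j_Z=(1+R_j)n/2$ forces every such dimension to be at most $(1-\epsilon/2)n$. Thus each $\cF_i$ and $\cF_i^*$ is a tuple of at most $t$ independent uniformly random classical codes over $\bF_q$ of dimensions $\le(1-\epsilon/2)n$, and since $q=2^{(n+3)^t}\ge 2^{(n+3)^i}$ for all $i\le t$, the hypotheses of Theorem~\ref{thm:petrand} are met (with $\epsilon/2$ in place of $\epsilon$). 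That theorem yields $\rho_i,\rho_i^*\ge\rho:=\rho(\epsilon/2,t)>0$ each with probability $1-o(1)$, and a union bound over the $2t$ collections preserves this. Setting $\Delta=\rho^t$ gives $d(Q)\ge\Delta\cdot n^t$. The main subtle point I anticipate is the marginal-uniformity claim for random CSS codes, which I would verify via a standard two-step sampling procedure (sample $Q^i_X$ uniformly, then $Q^i_Z$ uniformly among codes containing $(Q^i_X)^\perp$) together with the $\GL_n$-invariance of the uniform measure on valid CSS pairs.
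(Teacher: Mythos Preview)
Your proposal is correct and follows the same overall architecture as the paper's proof: Lemma~\ref{lem:codetocomplex} to build the $\cC_i$, K\"unneth for the dimension, Theorem~\ref{thm:sspe} (applied to $\cA$ and $\cA^*$) for the distance, and Theorem~\ref{thm:petrand} for the product-expansion input. The one tactical difference is in how you feed Theorem~\ref{thm:petrand}. You apply it directly to each mixed collection $\cF_i=((Q^1_X)^\perp,\dots,(Q^{i-1}_X)^\perp,Q^i_Z)$, arguing that these are tuples of independent uniformly random codes because each entry comes from a distinct $Q^j$. The paper instead applies Theorem~\ref{thm:petrand} once to the full $t$-tuple $(Q^i_Z)_{i\in[t]}$ (and once to $(Q^i_X)_{i\in[t]}$), and then invokes Lemma~\ref{lem:pesub}---using the CSS containment $(Q^j_X)^\perp\subseteq Q^j_Z$---to pass to the required sub-tuples of subcodes. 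Your route is slightly more direct and avoids the $\rho\mapsto\rho^{2^t}/2^t$ loss from Lemma~\ref{lem:pesub}; the paper's route has the minor advantage that it applies Theorem~\ref{thm:petrand} at exactly the alphabet size $q=2^{(n+3)^t}$ for which it is stated, whereas you need it for $i$-tuples over this same (larger-than-stated) $q$, which is true but not literally what Theorem~\ref{thm:petrand} asserts as written.
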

\begin{proof}
  By the definition of a random CSS code, $(Q_X^i)_{i\in[t]}$ and $(Q_Z^i)_{i\in[t]}$ are both collections of random linear codes in $\bF_q^n$, subject to the $i$'th code in each collection having rate $(1+R_i)/2$; by definition the codes within each collection are independently drawn, while there are dependencies between the collections. Therefore by Theorem~\ref{thm:petrand}, there exists some $\rho=\rho(\epsilon,t)>0$ such that both $(Q_X^i)_{i\in[t]}$ and $(Q_Z^i)_{i\in[t]}$ are $\rho$-product-expanding with probability $\rightarrow 1$ as $n\rightarrow\infty$. Then Theorem~\ref{thm:sspe} along with Lemma~\ref{lem:pesub} immediately implies the desired result.
\end{proof}

\begin{remark}
  \label{remark:subpemanyrandom}
  Note that by applying Theorem~\ref{thm:subpe} instead of Theorem~\ref{thm:sspe} above, we could also obtain a subsystem analogue of Corollary~\ref{cor:sspemanyrandom}; we omit the details to avoid redundancy.
\end{remark}

While Corollary~\ref{cor:sspemanyrandom} provides $[[N=n^t,\Theta(N),\Theta(N)]]_q$ codes, the alphabet size $q=2^{(n+3)^t}$ grows exponentially in the block length, and the locality $tn=tN^{1/t}$ grows as a small polynomial in the block length for large constant $t$. Reducing the alphabet size would require proving product-expansion of collections of $t$ codes over some smaller alphabet than achieved in Theorem~\ref{thm:petrand}, which remains an interesting open question. The locality could potentially be reduced using a locality-reduction procedure like that of \cite{hastings_quantum_2023}. However, \cite{hastings_quantum_2023} only considers binary alphabets, so their techniques would need to be generalized to larger alphabets in order to apply to the codes in Corollary~\ref{cor:sspemanyrandom}. 

\subsection{Proof of Distance Bounds}
\label{sec:sspeproof}
In this section, we prove Theorem~\ref{thm:sspe} and Theorem~\ref{thm:subpe}. The first bound~(\ref{eq:sspedis}) in Theorem~\ref{thm:sspe}, as well as the bound in Theorem~\ref{thm:subpe}, will follow directly from the following lemma; we will then prove the bound~(\ref{eq:sspeexp}) in Theorem~\ref{thm:sspe} separately.


Below, similarly as in Section~\ref{sec:pe}, given $t\in\bN$ and $n_1,\dots,n_t\in\bN$, for a subspace $V\subseteq\bF_q^{n_i}$ we let $V^{(i)}\subseteq\bigotimes_{i\in[t]}\bF_q^{n_i}$ denote the space of $t$-dimensional tensors in which every direction-$i$ column lies in $V$. For a linear map $f$ acting on $\bF_q^{n_i}$, we similarly let $f^{(i)}=I^{\otimes i-1}\otimes f\otimes I^{t-i}$ denote the map that simply applies $f$ to each direction-$i$ column. When $t$ is not clear from context, we write $V^{(i;t)}=V^{(i)}$ and $f^{(i;t)}=f^{(i)}$.

\begin{lemma}
  \label{lem:sspedis}
  Define all variables as in Theorem~\ref{thm:subpe}. Then for every
  \begin{equation*}
    a \in (Q_Z+Q_X^\perp)\setminus Q_X^\perp, 
  \end{equation*}
  it holds that
  \begin{equation}
    \label{eq:pedisbound}
    |a| \geq \prod_{i\in[t]}(\rho^i_Zn_i).
  \end{equation}
\end{lemma}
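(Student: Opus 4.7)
The plan is to induct on $t$. In the base case $t=1$, the CSS condition ${Q^1_X}^\perp \subseteq Q^1_Z$ forces $Q^1_Z + {Q^1_X}^\perp = Q^1_Z$, so any $a$ in the specified set is a nonzero codeword of $Q^1_Z$ and therefore has weight at least $d(Q^1_Z) = \rho^1_Z n_1$, using that the product-expansion of a single code equals its relative distance.

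For $t \geq 2$, the strategy is to apply product-expansion in the last tensor factor, then contract to a $(t-1)$-fold problem and invoke the inductive hypothesis. Writing $a = z + \sum_{i\in[t]} w_i$ with $z \in Q_Z$ and $w_i \in ({Q^i_X}^\perp)^{(i)}$, the CSS inclusion ${Q^t_X}^\perp \subseteq Q^t_Z$ gives $z + w_t \in (Q^t_Z)^{(t)}$, so $a$ lies in the dual tensor $\sum_{i<t}({Q^i_X}^\perp)^{(i)} + (Q^t_Z)^{(t)}$ of the collection whose product-expansion is $\rho^t_Z$. Applying product-expansion yields a (possibly different) decomposition $a = \sum_{i<t} a_i + a_t$ with $a_i \in ({Q^i_X}^\perp)^{(i)}$, $a_t \in (Q^t_Z)^{(t)}$, and $|a| \geq \rho^t_Z \cdot n_t \cdot |a_t|_t$ after dropping the nonnegative contributions for $i < t$. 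It therefore suffices to show $|a_t|_t \geq M := \prod_{i<t}\rho^i_Z n_i$.

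Given $\mu \in Q^t_X$, let $\phi_\mu(x) \in \bigotimes_{i<t} \bF_q^{n_i}$ denote the contraction of $x \in \bigotimes_{i\in[t]} \bF_q^{n_i}$ with $\mu$ in the last tensor factor. A direct computation from the original decomposition shows that $\phi_\mu(z) \in \tilde Q_Z := \bigotimes_{i<t} Q^i_Z$, that $\phi_\mu(w_i) \in \tilde Q_X^\perp := (\bigotimes_{i<t} Q^i_X)^\perp$ for $i < t$ (the direction-$i$ column structure of $w_i$ is preserved under contraction in a different direction), and that $\phi_\mu(w_t) = 0$ since $\mu \in Q^t_X$ annihilates each direction-$t$ column of $w_t$, which lies in ${Q^t_X}^\perp$. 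The same reasoning applied to the product-expansion decomposition gives $\phi_\mu(a_i) \in \tilde Q_X^\perp$ for $i<t$, so $\phi_\mu(a) \equiv \phi_\mu(a_t) \pmod{\tilde Q_X^\perp}$ and both lie in $\tilde Q_Z + \tilde Q_X^\perp$. Since $a \notin Q_X^\perp$ and pure tensors span $Q_X = \bigotimes_i Q^i_X$, there exists $b^1 \otimes \cdots \otimes b^t \in Q_X$ with $(b^1 \otimes \cdots \otimes b^t) \cdot a \neq 0$; choosing $\mu := b^t$ forces $\phi_\mu(a) \notin \tilde Q_X^\perp$, and hence also $\phi_\mu(a_t) \notin \tilde Q_X^\perp$. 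Applying the inductive hypothesis to $\phi_\mu(a_t)$ in the $(t-1)$-fold subsystem product $\bigotimes_{i<t} Q^i$, whose relevant product-expansions are precisely $\rho^1_Z,\dots,\rho^{t-1}_Z$, yields $|\phi_\mu(a_t)| \geq M$. Since each nonzero coordinate of $\phi_\mu(a_t)$ forces the corresponding direction-$t$ column of $a_t$ to be nonzero, one obtains $|a_t|_t \geq |\phi_\mu(a_t)| \geq M$, and the desired weight bound $|a| \geq \prod_{i\in[t]} \rho^i_Z n_i$ follows.

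The hard part will be reconciling two different decompositions of $a$: the original $z + \sum_i w_i$ is needed to verify the structural properties of $\phi_\mu$, while the product-expansion decomposition $\sum_{i<t} a_i + a_t$ is needed to extract the quantitative weight bound. These can be combined because, for $\mu \in Q^t_X$, the discrepancy between the two decompositions is absorbed into $\tilde Q_X^\perp$ under contraction, allowing the inductive hypothesis to be applied to $\phi_\mu(a_t)$ rather than only to $\phi_\mu(a)$.
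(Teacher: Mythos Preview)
Your proof is correct and follows essentially the same inductive route as the paper: decompose $a$ in the dual-tensor $({Q^1_X}^\perp,\dots,{Q^{t-1}_X}^\perp,Q^t_Z)$, apply $\rho^t_Z$-product-expansion to control the number of nonzero direction-$t$ columns of the last summand, contract against a suitable $\mu\in Q^t_X$ coming from a pure-tensor witness of $a\notin Q_X^\perp$, and invoke the inductive hypothesis on the $(t-1)$-fold product.

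The one notable difference is bookkeeping. The paper invokes Lemma~\ref{lem:homvanexp} to write the product-expansion's $t$-th summand explicitly as $a_0+a_t+\sum_{i<t}a_{i,t}$ with $a_{i,t}\in({Q^i_X}^\perp)^{(i)}\cap(Q^t_Z)^{(t)}$, and then contracts that expression. You instead keep the original decomposition $z+\sum_i w_i$ and the product-expansion decomposition $\sum_{i<t}a_i+a_t$ side by side, and observe that after contracting with $\mu\in Q^t_X$ every discrepancy between them lands in $\tilde Q_X^\perp$. This sidesteps the explicit use of Lemma~\ref{lem:homvanexp} and is arguably a bit more direct; the paper's version makes the structure of the correction terms visible but is otherwise equivalent.
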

\begin{proof}
  We prove the lemma by induction on $t$. For the base case, if $t=1$, then $a\in Q^1_Z\setminus{Q^1_X}^\perp\subseteq Q^1_Z\setminus\{0\}$, so $|a|$ is at least the distance of the classical code $Q^1_Z$, which by definition equals $\rho^1_Zn_1$, and~(\ref{eq:pedisbound}) holds, as desired.

  For the inductive step, let $t\geq 2$, and assume that the lemma holds for $t-1$. Choose any $a_0\in\bigotimes_{i\in[t]}Q^i_Z$ and $a_i\in ({Q^i_X}^\perp)^{(i)}$ for $i\in[t]$ such that $a=a_0+a_1+\cdots+a_t$. By definition $a_0$ and $a_t$ both lie in ${Q^t_Z}^{(t)}$, so
  \begin{equation*}
    a_1+\cdots+a_{t-1}+(a_0+a_t) \in ({Q^1_X}^\perp)^{(1)}+\cdots+({Q^{t-1}_X}^\perp)^{(t-1)}+{Q^t_Z}^{(t)}.
  \end{equation*}
  Recalling that $\rho^t_Z$ denotes the product-expansion of $({Q^1_X}^\perp,\dots,{Q^{t-1}_X}^\perp,Q^t_Z)$, Definition~\ref{def:pe} with Lemma~\ref{lem:homvanexp} implies that there exist choices of $a_{i,t}\in ({Q^i_X}^\perp)^{(i)}\cap {Q^t_Z}^{(t)}$ for all $i\in[t-1]$ such that
  \begin{equation}
    \label{eq:applype}
    |a| \geq \rho^t_Zn_t\biggl|a_0+a_t+\sum_{i\in[t-1]}a_{i,t}\biggr|_t,
  \end{equation}
  where as in Definition~\ref{def:dirham}, $|\cdot|_t$ denotes the number of nonzero direction-$t$ columns in a given tensor.

  By the assumption that $a\notin Q_X^\perp=(\bigotimes_{i\in[t]}Q^i_X)^\perp$, there exist $x_i\in Q^i_X$ for $i\in[t]$ such that
  \begin{equation*}
    \left(\bigotimes_{i\in[t]}x_i\right)\cdot a \neq 0.
  \end{equation*}
  Define $a'\in\bigotimes_{i\in[t-1]}\bF_q^{n_i}$ by
  \begin{equation*}
    a' := (I^{\otimes t-1}\otimes {x_t}^\top)\biggl(a_0+a_t+\sum_{i\in[t-1]}a_{i,t}\biggr) = (I^{\otimes t-1}\otimes {x_t}^\top)\biggl(a_0+\sum_{i\in[t-1]}a_{i,t}\biggr),
  \end{equation*}
  where the second equality above holds because every direction-$t$ column of $a_t$ by definition lies in ${Q^t_X}^\perp$ and is therefore orthogonal to $x_t\in Q^t_X$. Thus we find that $a'=a_0'+\sum_{i\in[t-1]}a_i'$ for vectors
  \begin{align*}
    a_0' &:= (I^{\otimes t-1}\otimes{x_t}^\top)a_0 \in \bigotimes_{i\in[t-1]}Q^i_Z \\
    a_i' &:= (I^{\otimes t-1}\otimes{x_t}^\top)a_{i,t} \in ({Q^i_X}^\perp)^{(i;t-1)} \hspace{1em} \forall i\in[t-1],
  \end{align*}
  so in particular $a'\in\bigotimes_{i\in[t-1]}Q^i_Z+(\bigotimes_{i\in[t-1]}Q^i_X)^\perp$.
  Furthermore,
  \begin{equation*}
    \left(\bigotimes_{i\in[t-1]}x_i\right)\cdot a' = \left(\bigotimes_{i\in[t]}{x_i}^\top\right)a_0 = \left(\bigotimes_{i\in[t]}x_i\right)\cdot a \neq 0,
  \end{equation*}
  as for $i\geq 1$, every direction-$i$ column of $a_i$ and $a_i'$ lies in ${Q^i_X}^\perp$ and therefore is orthogonal to $x_i\in Q^i_X$. Note that here we use the fact that dotting a tensor with $\bigotimes_ix_i$ is equivalent to applying the functional $\bigotimes_i{x_i}^\top$ to the tensor. It follows that $a'\notin(\bigotimes_{i\in[t-1]}Q^i_X)^\perp$.

  Thus we have shown that $a'$ satisfies the conditions of the lemma statement with $t$ replaced by $t-1$, so the inductive hypothesis implies that
  \begin{equation}
    \label{eq:peapplyind}
    |a'| \geq \prod_{i\in[t-1]}(\rho^i_Zn_i).
  \end{equation}
  By definition, a given component of $a'$ can only be nonzero if the respective direction-$t$ column of $a_0+a_t+\sum_{i\in[t-1]}a_{i,t}$ is nonzero. Thus~(\ref{eq:applype}) and~(\ref{eq:peapplyind}) imply that
  \begin{equation*}
    |a| \geq \prod_{i\in[t]}(\rho^i_Zn_i),
  \end{equation*}
  completing the inductive step.
\end{proof}

Lemma~\ref{lem:sspedis} immediately implies Theorem~\ref{thm:subpe}:

\begin{proof}[Proof of Theorem~\ref{thm:subpe}]
  Lemma~\ref{lem:sspedis} implies that every $a\in(Q_Z+Q_X^\perp)\setminus Q_X^\perp$ has $|a|\geq\prod_{i\in[t]}(\rho^i_Zn_i)$. Swapping the roles of $Q_X$ and $Q_Z$, Lemma~\ref{lem:sspedis} similarly implies that every $a\in(Q_X+Q_Z^\perp)\setminus Q_Z^\perp$ has $|a|\geq\prod_{i\in[t]}(\rho^i_Xn_i)$. These two inequalities give the desired result.
\end{proof}

Below, we similarly apply Lemma~\ref{lem:sspedis} to prove the bound~(\ref{eq:sspedis}) in Theorem~\ref{thm:sspe}.

\begin{proof}[Proof of~(\ref{eq:sspedis}) in Theorem~\ref{thm:sspe}]
  For any $a\in Z_*(\cA)\setminus B_*(\cA)$, the K\"{u}nneth formula (Proposition~\ref{prop:sskunneth}, Corollary~\ref{cor:ssprodbases}, and Corollary~\ref{cor:ssprodcycles}) guarantees that
  \begin{equation*}
    a \in \bigotimes_{i\in[t]}Z_*(\cC_i)+\im(\partial^{\cA}) \subseteq \bigotimes_{i\in[t]}Z_*(\cC_i)+\sum_{i\in[t]}B_*(\cC_i)^{(i)},
  \end{equation*}
  and that
  \begin{equation*}
    a \notin \sum_{i\in[t]}B_*(\cC_i)^{(i)} = \left(\bigotimes_{i\in[t]}Z^*(\cC_i)\right)^\perp.
  \end{equation*}
  Specifically, if instead $a\in(\bigotimes_{i\in[t]}Z^*(\cC_i))^\perp$, then because Corollary~\ref{cor:ssprodbases} implies that $\bigotimes_{i\in[t]}Z^*(\cC_i)$ spans the cohomology $H^*(\cA)$, the nondegeneracy of the natural cohomology/homology bilinear form would imply that $a\in B_*(\cA)$, contradicting the assumption that $a\in Z_*(\cA)\setminus B_*(\cA)$.

  Thus applying Lemma~\ref{lem:sspedis} with $Q_X^i=Z^*(\cC_i)$ and $Q_Z^i=Z_*(\cC_i)$ implies that $|a|\geq\prod_{i\in[t]}(\rho_in_i)$, as desired.
\end{proof}

We now prove the second bound (Equation~(\ref{eq:sspeexp})) in Theorem~\ref{thm:sspe}.

\begin{proof}[Proof of~(\ref{eq:sspeexp}) in Theorem~\ref{thm:sspe}]
  Consider any boundary $b\in B_*(\cA)$. Letting $\mu$ denote the right-hand side of~(\ref{eq:sspeexp}), then our goal is to construct a filling of $b$ of weight $\leq\mu|b|$. To begin, choose an arbitrary filling $a\in A$ of $b$, so that
  \begin{equation*}
    \partial^{\cA}a = \partial_1^{(1)}a+\partial_2^{(2)}a = b.
  \end{equation*}
  Applying the $\rho'$-product expansion of $(B_*(\cC_1),B_*(\cC_2))$ to $b$, we conclude by Lemma~\ref{lem:homvanexp} that there exists some $e\in B_*(\cC_1)\otimes B_*(\cC_2)$ such that
  \begin{equation*}
    |b| \geq \rho' \cdot \bigl( n_1\cdot |e+\partial_1^{(1)}a|_1+n_2 \cdot |e+\partial_2^{(2)}a|_2 \bigr) \ .
  \end{equation*}
  Because $e\in\im(\partial_1)\otimes\im(\partial_2)$, there exist preimages $e_1,e_2\in A$ of $e$ under $\partial_1^{(1)},\partial_2^{(2)}$ respectively such that $e_1\in\bF_q^{n_1}\otimes\im(\partial_2)$ and $e_2\in\im(\partial_1)\otimes\bF_q^{n_2}$. Define $a':=a+e_1+e_2$. Then $\partial^{\cA}a'=\partial^{\cA}a+e+e=b$, and
  \begin{equation}
    \label{eq:addes}
    n_1|\partial_1^{(1)}a'|_1+n_2|\partial_2^{(2)}a'|_2 = n_1\cdot |e+\partial_1^{(1)}a|_1+n_2 \cdot |e+\partial_2^{(2)}a|_2 \leq \frac{|b|}{\rho'}.
  \end{equation}

  If $|b|/n_1n_2\geq\rho'\cdot\min\{\Delta_1,\Delta_2\}$, then $|a|\leq n_1n_2\leq\mu|b|$, so $a$ is the desired filling of $b$, and we are done. Therefore assume that $|b|/n_1n_2<\rho'\cdot\min\{\Delta_1,\Delta_2\}$. It follows that
  \begin{align*}
    \frac{|b|}{\rho'n_1} &<\Delta_2n_2 = d(Z_*(\cC_2)) \\
    \frac{|b|}{\rho'n_2} &<\Delta_1n_1 = d(Z_*(\cC_1)).
  \end{align*}
  By~(\ref{eq:addes}), $|\partial_1^{(1)}a'|_1\leq|b|/\rho'n_1$ and $|\partial_2^{(2)}a'|_2\leq|b|/\rho'n_2$. Now for $i=1,2$, let $S_i\subseteq[n_{3-i}]$ denote the set of direction-$i$ vectors in $a'$ that lie outside of $Z_*(\cC_i)$, so that $|S_i|=|\partial_i^{(i)}a'|_i$. Then we have shown that $a'\in A=\bF_q^{n_1\times n_2}$ is a matrix in which $|S_1|<d(Z_*(\cC_2))$ columns lie outside of $Z_*(\cC_1)$, and $|S_2|<d(Z_*(\cC_1))$ rows lie outside of $Z_*(\cC_2)$.

  Therefore there exists some $f\in Z_*(\cC_1)\otimes Z_*(\cC_2)$ that agrees with $a'$ in all columns in $[n_2]\setminus S_1$, and that agrees with $a'$ in all rows in $[n_1]\setminus S_2$. Specifically, we may obtain $f$ by first writing $a'|_{([n_1]\setminus S_2)\times([n_2]\setminus S_1)}\in C_1|_{[n_1]\setminus S_2}\otimes C_2|_{[n_2]\setminus S_1}$ as a sum of pure tensors $\sum_jc_{1,j}\otimes c_{2,j}$, and then letting $f=\sum_j\hat{c}_{1,j}\otimes\hat{c}_{2,j}$, where $\hat{c}_{i,j}$ denotes the unique element of $C_i$ whose restriction to components in $[n_i]\setminus S_{3-i}$ equals $c_{i,j}$.

  Now define $a'':=a'+f$. Then $\partial^{\cA}a''=\partial^{\cA}a'=b$ because $f\in\ker\partial^{\cA}$ by definition. Furthermore,
  \begin{align*}
    |a''|
    &\leq |S_1|\cdot|S_2| = |\partial_1^{(1)}a'|_1\cdot|\partial_2^{(2)}a'|_2 \leq \frac{|b|^2}{{\rho'}^2n_1n_2} \leq \frac{|b|}{{\rho'}^2} \leq \mu|b|,
  \end{align*}
  as desired, where the second inequality above holds by~(\ref{eq:addes}).
\end{proof}


\section{Decoding Dual Tensor Products of Reed-Solomon Codes}
\label{sec:dualtensordec}
In this section, we provide an efficient decoding algorithm for dual tensor products of classical Reed-Solomon codes (see Definition~\ref{def:classtensor} and Definition~\ref{def:RS}). This algorithm is the core component of our efficient decoders for quantum codes constructed as products of Reed-Solomon codes, as described in Section~\ref{sec:qdec} below.

We define the decoding task below. We present the definition for products of two codes, which will be sufficient for our purposes. Below, for an element $c\in\bF_q^n$ and a code $C\subseteq\bF_q^n$, we let $d(c,C)=\min_{c'\in C}|c-c'|$ denote the minimum Hamming distance from $c$ to an element of $C$.

\begin{definition}
  For classical codes $C_1\subseteq\bF_q^{n_1},C_2\subseteq\bF_q^{n_2}$, an \textbf{$\alpha$-decoder for $C_1\boxplus C_2$} is an algorithm that takes as input an element $c\in\bF_q^{n_1\times n_2}$ (along with a description of the codes $C_1,C_2$), and outputs some $\tilde{c}\in C_1\boxplus C_2$ satisfying $|\tilde{c}-c|\leq\alpha\cdot d(c,C_1\boxplus C_2)$.
\end{definition}

We now state our main result on decoding.

\begin{theorem}
  \label{thm:dualtensordec}
  For every $\epsilon>0$, there exists $\alpha=\alpha(\epsilon)>0$ such that there is a polynomial-time $\alpha$-decoder for every pair of Reed-Solomon codes of the same length whose rates sum to $\leq 1-\epsilon$.

  Formally, the $\alpha$-decoder takes as input $n\in\bN$, a prime power $q\geq n$, integers $k_1,k_2\in\bN$ with $k_1+k_2\leq(1-\epsilon)n$, subsets $E_1,E_2\subseteq\bF_q$ with $|E_1|=|E_2|=n$, and an element $c\in\bF_q^{E_1\times E_2}\cong\bF_q^{n\times n}$, and in time $\poly(n,q)$ (not depending on $\epsilon$) outputs some $\tilde{c}\in C_1\boxplus C_2$ satisfying $|\tilde{c}-c|\leq\alpha\cdot d(c,C_1\boxplus C_2)$, where $C_i=\evl_{E_i}(\bF_q[X_i]^{[0,k_i)})$.
\end{theorem}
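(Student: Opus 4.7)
The plan is to implement the outline sketched in Section~\ref{sec:keydecoding}. Write the unknown input as $c=a+b$, where $a\in C_1\boxplus C_2$ is the true codeword and $b$ is a corruption of weight $w=d(c,C_1\boxplus C_2)$. Our goal is to produce a set $T\subseteq E_1\times E_2$ with $\supp(b)\subseteq T$ and $|T|\le\alpha\cdot w$ for some $\alpha=\alpha(\epsilon)$; once such a $T$ is in hand we can erase the corresponding entries of $c$ and complete decoding by Gaussian elimination, since a dual tensor Reed-Solomon code has a parity-check matrix whose kernel is easy to compute. Thus the entire task reduces to locating a small superset of $\supp(b)$.

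To find $T$, I will identify $C_1\boxplus C_2$ with the evaluations on $E_1\times E_2$ of bivariate polynomials in $\bF_q[X_1,X_2]^A$ where $A=[0,k_1)\times[0,q)\,\cup\,[0,q)\times[0,k_2)$, and then execute the core subroutine of the overview. First, by solving a linear system, compute a nonzero $e_0\in\bF_q[X_1,X_2]$ of small total degree $s=O(\sqrt{w})$ in each variable such that $e_0\cdot c$ lies in the extended dual tensor code $C_1'\boxplus C_2'$ with $C_i'=\evl_{E_i}(\bF_q[X_i]^{[0,k_i+s)})$; such an $e_0$ exists whenever $s$ is chosen so that the homogeneous system has more unknowns than constraints, which the bound on $w$ ensures. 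Next, compute a basis for the entire $\bF_q$-subspace $V$ of low-degree polynomials $e$ with $(e\cdot c)|_{\supp(e_0)}$ agreeing with an element of $C_1'\boxplus C_2'$ on $\supp(e_0)$. For each fixed $x_1\in E_1$, compute $g_1(x_1,X_2):=\gcd\{e(x_1,X_2):e\in V\}$, and analogously $g_2(X_1,x_2)$ for each $x_2\in E_2$. A row indexed by $x_1$ is flagged as \emph{bad} when $\deg g_1(x_1,\cdot)$ exceeds a suitable threshold, and similarly for columns; on the remaining good rows and columns one re-solves for $V$ and flags any additional bad slices in a second pass. Finally, set $T$ to be the union of all bad rows and columns together with the zero set of $e_0$ (within the good rows and columns).

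To analyze correctness, observe that if $b$ restricted to a column agrees with a low-degree rational function $p/q$ of appropriate degrees, then every valid $e$ must be divisible by $q$ on that column, so our gcd flags it; conversely, the columns we do not flag carry only a small amount of error, which is captured by the zero set of $e_0$. Bounding the number of bad columns and rows is where product-expansion enters: applying Theorem~\ref{thm:pe2RS} (the Polishchuk--Spielman theorem) to the pair $(C_1',C_2')$, which still has rate sum $\le 1-\epsilon/2$ for $s$ small enough, guarantees that any element of $C_1'\boxplus C_2'$ of weight at most $O(w)$ decomposes into row and column contributions whose total support size is $O(w/\epsilon)$. This gives $|T|=O(w/\epsilon)$ and justifies the threshold choices in the gcd step. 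The result is Algorithm~\ref{alg:decoder}, analyzed in Lemma~\ref{lem:algtech}.

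The main obstacle, as flagged in the overview, is that the core subroutine only certifies that the output lies in the enlarged code $C_1'\boxplus C_2'$, not in $C_1\boxplus C_2$: a corruption that within some row agrees with a polynomial of degree between $k_2$ and $k_2+s$ is invisible to the gcd test. To handle this I will run two further subroutines. Algorithm~\ref{alg:decoderclose}, analyzed in Lemma~\ref{lem:algdeg}, takes the approximate codeword $a'\in C_1'\boxplus C_2'$ produced above and, using the known true codeword structure together with a second product-expansion argument on the difference $a'-a$, identifies rows and columns where $a'-a$ carries leftover high-degree content; these locations are added to $T$. Algorithm~\ref{alg:decoderfinish}, analyzed in Lemma~\ref{lem:algfin}, then re-erases and solves to produce the final $\tilde c\in C_1\boxplus C_2$. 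Because each subroutine inflates the support by a factor depending only on $\epsilon$, composing them yields the final constant $\alpha(\epsilon)$. All steps are polynomial in $n$ and $q$: setting up and solving the underlying linear systems, computing kernels for $V$, doing univariate gcds on each row and column, and performing erasure decoding are each standard polynomial-time operations. The delicate part of the write-up will be the quantitative bookkeeping in the two-pass gcd analysis and in Lemma~\ref{lem:algdeg}, where the need to simultaneously use product-expansion for the original and enlarged codes fixes the admissible values of $s$ and controls the final decoding radius $\alpha$.
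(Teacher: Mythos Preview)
Your outline captures the three-subroutine architecture and the role of the gcd computation in the first pass, but there is a genuine quantitative gap in what you claim the first subroutine achieves, and a corresponding misunderstanding of what the third subroutine is for.

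You assert that after Algorithm~\ref{alg:decoder} the erasure set $T$ (bad rows, bad columns, and the zero set of $e_0$) has size $O(w/\epsilon)$. This is false. The zero set of $e_0$ alone already has size $\Theta(sn)$: even if $e_0(x_1,X_2)$ is nonzero for a given $x_1$, it can have up to $s$ roots, so across all rows you pick up $\Theta(sn)$ points. With $s=\Theta(\sqrt{w})$ this is $\Theta(\sqrt{w}\cdot n)$, which exceeds $w$ whenever $w=o(n^2)$. In the paper $s$ is in fact fixed to $\lceil\rho\epsilon n/1000\rceil$ independently of $w$, and Lemma~\ref{lem:algtech} only guarantees $|c'-c|\le \rho\epsilon n^2/50$, a small constant fraction of $n^2$ but \emph{not} proportional to $w$. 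The product-expansion argument you cite bounds the number of bad rows and columns, but it does nothing for the vanishing locus of $e_0$.

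Consequently, Algorithm~\ref{alg:decoderfinish} is not a cleanup step that ``re-erases and solves''; it is the step that actually achieves the $O(w)$ approximation. After Algorithms~\ref{alg:decoder} and~\ref{alg:decoderclose} you have $c''\in C_1\boxplus C_2$ with the structural property that $c''-a$ is supported on $O(\epsilon n)$ full rows and columns (Lemma~\ref{lem:algdeg}); its Hamming weight is still $\Theta(n^2)$. Algorithm~\ref{alg:decoderfinish} then takes $y=c''-c$ and \emph{iteratively} peels off single-row and single-column codewords that are $(\epsilon n/2)$-close to the current residual, using univariate Reed-Solomon decoding at each step. The analysis in Lemma~\ref{lem:algfin} shows this greedy process can only subtract codewords living in the few contaminated rows and columns, and must terminate with $|y'|\le 8|b|/\epsilon$. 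Your description of Algorithm~\ref{alg:decoderclose} is also off: it does not flag rows and columns, but rather writes $c'=\evl_E(f)$, and for each overflow degree $j_1\in\{k_1,\dots,k_1+s-1\}$ runs a univariate Reed-Solomon decoder on the coefficient vector $(\sum_{j_2}f_{(j_1,j_2)}x_2^{j_2})_{x_2}$ to find the sparse correction $r_1^{j_1}$ (and symmetrically in the other variable).
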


Our decoding algorithm to prove Theorem~\ref{thm:dualtensordec} consists of three subroutines, given in Algorithm~\ref{alg:decoder}, Algorithm~\ref{alg:decoderclose}, and Algorithm~\ref{alg:decoderfinish}, which we analyze respectively in Lemma~\ref{lem:algtech}, Lemma~\ref{lem:algdeg}, and Lemma~\ref{lem:algfin}. In Section~\ref{sec:algdesc} below, we state these lemmas and show how to combine them to prove Theorem~\ref{thm:dualtensordec}. We then prove the lemmas in Section~\ref{sec:alganal}.

In the remainder of this section, we let $\epsilon,n,k_1,k_2,E_1,E_2,c,C_1,C_2$ be defined as in Theorem~\ref{thm:dualtensordec}. We also let $C_1',C_2',s,\rho$ be defined as in Algorithm~\ref{alg:decoder}, and we use the notation $X=(X_1,X_2)$, $x=(x_1,x_2)$, $E=E_1\times E_2$, and $E'=E_1'\times E_2'$ described in Algorithm~\ref{alg:decoder}.

\subsection{Algorithm Description}
\label{sec:algdesc}
In this section, we describe the three subroutines used by our $\alpha$-decoder, and prove Theorem~\ref{thm:dualtensordec} conditional on some lemmas analyzing these subroutines.

\begin{algorithm}[!p]
  \caption{\label{alg:decoder}
    The first of three subroutines making up the decoder in Theorem~\ref{thm:dualtensordec} for the dual tensor product of two Reed-Solomon codes $(C_i=\evl_{E_i}(\bF_q[X]^{[0,k_i)}))_{i\in[2]}$ satisfying $k_1+k_2\leq(1-\epsilon)n$.
    As in Lemma~\ref{lem:algtech}, we let $\rho=\rho(\epsilon/2)$ be the quantity from Theorem~\ref{thm:pe2RS}.
    As a shorthand, we denote $X=(X_1,X_2)$, $x=(x_1,x_2)$, $E=E_1\times E_2$, and $E'=E_1'\times E_2'$.
   We also let $s=\lceil\rho\epsilon n/1000\rceil$, and for $i\in[2]$ we let $C_i'=\evl_E(\bF_q[X_i]^{[0,k_i+s)})$.
  }
  
  \SetKwInOut{Input}{Input}
  \SetKwInOut{Output}{Output}

  \SetKwFunction{FnDecInit}{DecInit}
  \SetKwProg{Fn}{Function}{:}{}

  \Input{$c\in\bF_q^E$ satisfying $d(c,C_1\boxplus C_2)\leq d_0$ for $d_0$ defined as in Lemma~\ref{lem:algtech}}
  \Output{$c'\in C_1'\boxplus C_2'$ for which Lemma~\ref{lem:algtech} shows $c'-c$ is low-weight}

  \Fn{\FnDecInit{$c$}}{
    Compute some nonzero $e_0(X)\in\bF_q[X]^{[0,s]^2}$ such that
    \begin{equation}
      \label{eq:e0req}
      (e_0(x)c(x))_{x\in E} \in C_1'\boxplus C_2'.
    \end{equation} \\ \label{li:e0}
    Let
    \begin{align*}
      E_1' &\gets \{x_1\in E_1:e_0(x_1,X_2)\neq 0\in\bF_q[X_2]\} \\
      E_2' &\gets \{x_2\in E_2:e_0(X_1,x_2)\neq 0\in\bF_q[X_1]\}.
    \end{align*} \\ \label{li:Ep}
    Compute a basis $\cE$ of all $e(X)\in\bF_q[X]^{[0,s]^2}$ for which
    \begin{equation}
      \label{eq:basiscond}
      (e(x)c(x))_{x\in\supp(e_0)\cap E'} \in (C_1'\boxplus C_2')|_{\supp(e_0)\cap E'}.
    \end{equation} \\ \label{li:basis}
    Compute
    \begin{align*}
      g_1^{x_1}(X_2) &\gets \gcd\{e(x_1,X_2)\in\bF_q[X_2]:e(X)\in\cE\} \hspace{1em}\forall x_1\in E_1' \\
      g_2^{x_2}(X_1) &\gets \gcd\{e(X_1,x_2)\in\bF_q[X_1]:e(X)\in\cE\} \hspace{1em}\forall x_2\in E_2'.
    \end{align*} \\ \label{li:gcd}
    \While{$\exists(x_1,x_2)\in E_1'\times E_2'$ such that $g_1^{x_1}(x_2)=0$ or $g_2^{x_2}(x_1)=0$}{ \label{li:while}
      Remove $x_1$ from $E_1'$ and remove $x_2$ from $E_2'$. \\ \label{li:remove}
    }
    $($Re$)$compute a basis $\cE'$ of all $e(X)\in\bF_q[X]^{[0,s]^2}$ for which
    \begin{equation}
      \label{eq:basiscond2}
      (e(x)c(x))_{x\in\supp(e_0)\cap E'} \in (C_1'\boxplus C_2')|_{\supp(e_0)\cap E'}.
    \end{equation} \\ \label{li:basis2}
    $($Re$)$compute
    \begin{align*}
      {g_1'}^{x_1}(X_2) &\gets \gcd\{e(x_1,X_2)\in\bF_q[X_2]:e(X)\in\cE'\} \hspace{1em}\forall x_1\in E_1' \\
      {g_2'}^{x_2}(X_1) &\gets \gcd\{e(X_1,x_2)\in\bF_q[X_1]:e(X)\in\cE'\} \hspace{1em}\forall x_2\in E_2'.
    \end{align*} \\ \label{li:gcd2}
    For $i\in[2]$, for every $x_i\in E_i'$ with ${g_i'}^{x_i}\neq 1$, remove $x_i$ from $E_i'$. \\ \label{li:remove2}
    \KwRet{any $c'\in C_1'\boxplus C_2'$ with $c'|_{\supp(e_0)\cap E'}=c|_{\supp(e_0)\cap E'}$.} \label{li:return}
  }
\end{algorithm}

\begin{algorithm}[!t]
  \caption{\label{alg:decoderclose}
    The second of three subroutines making up the decoder in Theorem~\ref{thm:dualtensordec}.
    Here we carry over the definitions of $C_1',C_2',s,\rho$ and the notation $X=(X_1,X_2)$, $x=(x_1,x_2)$, $E=E_1\times E_2$, and $E'=E_1'\times E_2'$ from Algorithm~\ref{alg:decoder}.
  }
  
  \SetKwInOut{Input}{Input}
  \SetKwInOut{Output}{Output}

  \SetKwFunction{FnDecClose}{DecClose}
  \SetKwProg{Fn}{Function}{:}{}
  \Input{$c'\in C_1'\boxplus C_2'$ satisfying the conditions described in Lemma~\ref{lem:algdeg}}
  \Output{$c''\in C_1\boxplus C_2$ for which Lemma~\ref{lem:algdeg} shows $c''-c'$ is low-weight}

  \Fn{\FnDecClose{$c'$}}{
    Let $f(X)=\sum_{j=(j_1,j_2)}f_jX^j\in\bF_q[X]^{[0,n)^2}$ be the unique polynomial with $c'=\evl_E(f)$ \\ \label{li:f}
    \For{$j_1\in\{k_1,\dots,k_1+s-1\}$}{ \label{li:for1}
      Compute some $r_1^{j_1}\in\bF_q^{E_2}$ with $|r_1^{j_1}|\leq\epsilon n/25$ such that
      \begin{equation*}
        \evl_{E_2}\left(\sum_{j_2\in[n]}f_{(j_1,j_2)}X_2^{j_2}\right)-r_1^{j_1} \in C_2'.
      \end{equation*} \label{li:r1}
    }
    \For{$j_2\in\{k_2,\dots,k_2+s-1\}$}{ \label{li:for2}
      Compute some $r_2^{j_2}\in\bF_q^{E_1}$ with $|r_2^{j_2}|\leq\epsilon n/25$ such that
      \begin{equation*}
        \evl_{E_1}\left(\sum_{j_1\in[n]}f_{(j_1,j_2)}X_1^{j_1}\right)-r_2^{j_2} \in C_1'.
      \end{equation*} \label{li:r2}
    }
    \KwRet{
      \begin{equation*}
        c'' = c' - \sum_{j_1=k_1}^{k_1+s-1}\evl_{E_1}(X_1^{j_1})\otimes r_1^{j_1} - \sum_{j_2=k_2}^{k_2+s-1}r_2^{j_2}\otimes\evl_{E_2}(X_2^{j_2})
      \end{equation*}
    } \label{li:cpp}
  }
\end{algorithm}

\begin{algorithm}[!t]
  \caption{\label{alg:decoderfinish}
    The third of three subroutines making up the decoder in Theorem~\ref{thm:dualtensordec}.
    Here we carry over the definitions of $C_1',C_2',s,\rho$ and the notation $X=(X_1,X_2)$, $x=(x_1,x_2)$, $E=E_1\times E_2$, and $E'=E_1'\times E_2'$ from Algorithm~\ref{alg:decoder}.
  }
  
  \SetKwInOut{Input}{Input}
  \SetKwInOut{Output}{Output}

  \SetKwFunction{FnDecFinish}{DecFinish}
  \SetKwProg{Fn}{Function}{:}{}
  \Input{$y\in\bF_q^E$ satisfying the conditions described in Lemma~\ref{lem:algfin}.}
  \Output{$y'\in y+C_1\boxplus C_2$ for which Lemma~\ref{lem:algfin} shows $|y'|=O(d(y,C_1\boxplus C_2))$}

  \Fn{\FnDecClose{$y$}}{
    \While{True}{
      \If{$\exists c_1\in C_1\setminus\{0\},\; x_2\in E_2$ such that $|c_1-y|_{E_1\times\{x_2\}}|<\epsilon n/2$}{ \label{li:ifc1}
        $y\gets y-c_1\otimes\1_{x_2}$ \label{li:addc1}
      }
      \If{$\exists c_2\in C_2\setminus\{0\},\; x_1\in E_1$ such that $|c_2-y|_{\{x_1\}\times E_2}|<\epsilon n/2$}{ \label{li:ifc2}
        $y\gets y-\1_{x_1}\otimes c_2$ \label{li:addc2}
      }
      \If{Both {\bf if} statements above were not satisfied}{
        \KwRet{$y$}
      }
    }
  }
\end{algorithm}

Lemma~\ref{lem:algtech} below analyzes the first subroutine, namely Algorithm~\ref{alg:decoder}. Specifically, we show that if it is given an input $c\in\bF_q^E\cong\bF_q^{n\times n}$ that is not too far (i.e.~at most distance $d_0=O(n^2)$) from $C_1\boxplus C_2$, then it will output some $c'\in C_1'\boxplus C_2'$ for some Reed-Solomon codes $C_1',C_2'$ of slightly higher degree than $C_1,C_2$ respectively. That is, Algorithm~\ref{alg:decoder} solves a ``relaxed'' decoding problem where it is allowed to output an element of a larger code than the true codeword.

\begin{lemma}
  \label{lem:algtech}
  Define $\rho=\rho(\epsilon/2)>0$ to be the quantity from Theorem~\ref{thm:pe2RS}, so that every pair of Reed-Solomon codes of the same length (on arbitrary evaluation points) whose rates sum to $\leq1-\epsilon/2$ has product-expansion $\geq\rho$. Define
  \begin{align*}
    d_0
    &= \left(\frac{\rho\epsilon n}{1000}\right)^2.
  \end{align*}
  If $d_0\geq 1$ and $d(c,C_1\boxplus C_2)\leq d_0$, then Algorithm~\ref{alg:decoder} outputs $c'\in C_1'\boxplus C_2'$ satisfying
  \begin{equation*}
    |c'-c|\leq \frac{\rho\epsilon}{50}\cdot n^2,
  \end{equation*}
  where $C_1',C_2'$ are as defined in Algorithm~\ref{alg:decoder}.
\end{lemma}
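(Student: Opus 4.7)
The plan is to set $b := c - a$ for some $a \in C_1 \boxplus C_2$ with $|b| = d(c, C_1 \boxplus C_2) \leq d_0$, and to track the error $b$ through each step. The central observation is that the standard interpolating polynomial of any $a \in C_1 \boxplus C_2$ is supported on bidegrees in $[0,k_1)\times[0,n) \cup [0,n)\times[0,k_2)$, so for every $e \in \bF_q[X]^{[0,s]^2}$ we have $e \cdot a \in C_1'\boxplus C_2'$ automatically. Consequently, the conditions~(\ref{eq:e0req}),~(\ref{eq:basiscond}),~(\ref{eq:basiscond2}) on $e_0$ and on elements of $\cE, \cE'$ are equivalent to the same conditions with $c$ replaced by $b$, and the whole analysis reduces to understanding how the algorithm interacts with the sparse error $b$.

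To show that line~\ref{li:e0} succeeds, I would take $e_0 \in \bF_q[X]^{[0,s]^2}$ to vanish on $\supp(b)$; a nonzero such $e_0$ exists because vanishing on $|b| \leq s^2$ points imposes at most $s^2 < (s+1)^2 = \dim \bF_q[X]^{[0,s]^2}$ linear constraints, so such an $e_0$ satisfies $e_0\cdot b=0$ and trivially passes the check. Because $\deg_{X_i} e_0 \le s$, the initial sets $E_i \setminus E_i'$ each have size at most $s$. All linear-algebra and univariate gcd steps run in time $\poly(n,q)$. Since line~\ref{li:return} returns $c' \in C_1' \boxplus C_2'$ matching $c$ on $\supp(e_0) \cap E'$,
\begin{equation*}
  |c' - c| \leq |E \setminus (\supp(e_0) \cap E')| \leq n \cdot (|E_1 \setminus E_1'| + |E_2 \setminus E_2'|) + sn,
\end{equation*}
where the $sn$ term bounds the zeros of $e_0(x_1, X_2)$ across the surviving rows $x_1 \in E_1'$. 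Since $sn \leq \rho\epsilon n^2/500$, it suffices to bound the final $|E_1 \setminus E_1'| + |E_2 \setminus E_2'|$ by roughly $\rho\epsilon n / 100$.

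My interpretation of the gcd steps is that ${g_i'}^{x_i}$ captures the forced algebraic structure of $b$ along the direction-$i$ slice at $x_i$ that is visible from $\cE'$: a nontrivial gcd means every admissible multiplier must carry this common factor along that slice, which in turn forces $b$ along the slice to behave like the evaluation of a low-degree rational function whose denominator divides ${g_i'}^{x_i}$. The while loop in line~\ref{li:while} peels off slices whose forced factors are inconsistent (a root of $g_1^{x_1}$ occurs at some surviving $x_2$, or vice versa), and the second pass (lines~\ref{li:basis2}--\ref{li:remove2}) re-bases $\cE'$ on the cleaned subgrid and peels off the remaining slices still carrying nontrivial forced structure. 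The two-pass design is needed because the initial basis $\cE$ is polluted by elements whose apparent forced factors are really artifacts of \emph{other} slices' bad structure; these are removed only after the first cleaning.

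The main obstacle is the quantitative product-expansion argument bounding the number of slices removed across both passes. My plan is, assuming for contradiction that more than $\rho\epsilon n/200$ direction-$1$ slices are removed, to extract from each removed row an explicit witness element of $C_1'\boxplus C_2'$ recording the forced rational-function behavior of $b$ on that row, and to assemble these witnesses into a single element whose $(C_1')^{(1)} + (C_2')^{(2)}$ decomposition has direction-wise weight $\Omega(\rho\epsilon n)$ in one direction while its total Hamming weight is at most $|b| \leq d_0$. Because $(k_1+s) + (k_2+s) \leq (1-\epsilon)n + 2s \leq (1-\epsilon/2)n$ for our choice $s = \lceil \rho\epsilon n/1000 \rceil$, Theorem~\ref{thm:pe2RS} yields product-expansion at least $\rho$ for the pair $(C_1',C_2')$, forcing the Hamming weight of the witness to be $\geq \rho n \cdot \Omega(\rho\epsilon n) = \Omega(\rho^2\epsilon^2 n^2)$, contradicting $d_0 = (\rho\epsilon n/1000)^2$ for sufficiently small constants. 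The delicate technical step, which I expect to be the hardest part of the proof, is converting the forced-factor interpretation of the gcds into a genuine low-weight element of $C_1'\boxplus C_2'$ with the required direction-wise weight; I anticipate this requires using the distances of $C_1$ and $C_2$ themselves (via the rate sum $k_1+k_2 \leq (1-\epsilon)n$) to rule out spurious low-degree structure along individual slices, and leveraging the two-pass structure so that the slices flagged in line~\ref{li:remove2} correspond to \emph{truly} forced factors rather than cross-contamination from previously removed slices.
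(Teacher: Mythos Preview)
Your setup matches the paper---reducing to $b = c - a$, choosing $e_0$ to vanish on $\supp(b)$, bounding the initial $|E_i \setminus E_i'| \leq s$, and expressing $|c'-c|$ in terms of the final $|E_i\setminus E_i'|$ plus $O(sn)$---and your observation that $(C_1',C_2')$ has product-expansion $\geq\rho$ is correct and used. The gap is in your contradiction argument, and it is structural rather than just the ``delicate technical step'' you flag. You want a witness $w \in C_1' \boxplus C_2'$ with $|w| \leq d_0$ and ``direction-wise weight $\Omega(\rho\epsilon n)$,'' and then conclude $|w| \geq \rho n \cdot \Omega(\rho\epsilon n)$. But product-expansion only asserts that \emph{some} decomposition is light; exhibiting one heavy decomposition gives no lower bound on $|w|$. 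You would need \emph{every} decomposition of $w$ to be heavy, and you neither define $w$ nor argue this---and there is no evident candidate assembled from removed-slice data that simultaneously has weight $\leq |b|$ and forces every decomposition to be heavy.

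The paper sidesteps this by running product-expansion \emph{forward}: it applies it once to $\evl_E(e_0 b) \in C_1' \boxplus C_2'$ (weight $\leq d_0$), obtaining $\evl_E(e_0 b) = \evl_E(f_1) + \evl_E(f_2)$ with $f_i$ supported on $|F_i| \leq d_0/(\rho n) \leq s$ direction-$i$ slices. These concrete sets $F_1, F_2$ drive the whole analysis. The explicit polynomial $e_1(X) = \prod_{x_1\in F_1}(X_1-x_1)\prod_{x_2\in F_2}(X_2-x_2)$ lies in $\spn(\cE)$, which forces $g_1^{x_1}(X_2) \mid \prod_{x_2\in F_2}(X_2-x_2)$ for all $x_1\notin F_1$ (and symmetrically) and caps the while loop at $|F_1|+|F_2|$ removals per side. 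For the second pass the paper shows (i)~${g_i'}^{x_i}=1$ for $x_i\in E_i^1\setminus F_i$, so line~\ref{li:remove2} removes only points of $F_i$; and (ii)~via a \emph{second} product-expansion, applied to an auxiliary element $v$ of weight $O(ns)$, that every surviving $x_i$ with $e_0(x_i,\cdot)\nmid f_i(x_i,\cdot)$ has ${g_i'}^{x_i}\neq 1$ and is therefore removed---precisely what is needed to build $c'$ from the rational functions $f_i/e_0$ on the surviving grid.
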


Lemma~\ref{lem:algdeg} below analyzes the second subroutine, namely Algorithm~\ref{alg:decoderclose}. This subroutine takes as input an element $c'\in C_1'\boxplus C_2'$ that is close to an element of $C_1\boxplus C_2$ (where we recall that $C_1',C_2',s,\rho$ are defined as in Algorithm~\ref{alg:decoder}). Algorithm~\ref{alg:decoderclose} then outputs some $c''\in C_1\boxplus C_2$ that is close to $c'$. That is, Algorithm~\ref{alg:decoderclose} converts our ``relaxed'' decoding $c'\in C_1'\boxplus C_2'$ that was the output of Algorithm~\ref{alg:decoder} into a ``true'' decoding $c''\in C_1\boxplus C_2$.

\begin{lemma}
  \label{lem:algdeg}
  If the input $c'\in C_1'\boxplus C_2'$ to Algorithm~\ref{alg:decoderclose} satisfies $|c'-c|\leq \rho\epsilon n^2/50$, where $c=a+b$, $a\in C_1\boxplus C_2$, $|b|=d(c,C_1\boxplus C_2)\leq d_0$ as defined in Lemma~\ref{lem:algtech}, then for $i\in[2]$ there exists $H_i\subseteq E_i$ with $|H_i|\leq \epsilon n/25$ such that the output $c''$ of Algorithm~\ref{alg:decoderclose} lies in $C_1\boxplus C_2$ and satisfies
  \begin{equation*}
    c''|_{(E_1\setminus H_1)\times(E_2\setminus H_2)}=a|_{(E_1\setminus H_1)\times(E_2\setminus H_2)}.
  \end{equation*}
\end{lemma}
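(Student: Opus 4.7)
The plan is to identify a canonical product-expansion decomposition of the residual $e:=c'-a$, and then to argue via uniqueness of Reed--Solomon decoding that the algorithm's $r_i^{j_i}$'s must coincide with the corresponding pieces of that decomposition. By the triangle inequality, $|e|\leq|c'-c|+|b|\leq \rho\epsilon n^2/50+(\rho\epsilon n/1000)^2\leq\rho\epsilon n^2/25$. Since both $a$ and $c'$ lie in $C_1'\boxplus C_2'$, so does $e$; the underlying length-$n$ Reed--Solomon codes (of dimensions $k_i+s$) have dimensions summing to $(k_1+k_2)+2s\leq(1-\epsilon/2)n$, so Theorem~\ref{thm:pe2RS} yields product-expansion at least $\rho$. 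Thus there exist polynomials $\phi_1\in\bF_q[X]^{[0,k_1+s)\times[0,n)}$ and $\phi_2\in\bF_q[X]^{[0,n)\times[0,k_2+s)}$ with $e=\evl_E(\phi_1)+\evl_E(\phi_2)$ and with $|e_1|_1+|e_2|_2\leq|e|/(\rho n)\leq\epsilon n/25$, where $e_i:=\evl_E(\phi_i)$. Set $H_2:=\{x_2\in E_2:e_1|_{E_1\times\{x_2\}}\neq 0\}$ and $H_1:=\{x_1\in E_1:e_2|_{\{x_1\}\times E_2}\neq 0\}$, so that $|H_1|,|H_2|\leq\epsilon n/25$.

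The key step is to show that for each $j_1\in[k_1,k_1+s)$ the algorithm's unique valid output is $r_1^{j_1}=\evl_{E_2}(\phi_{1,(j_1,\cdot)})$, where $\phi_{1,(j_1,\cdot)}(X_2):=\sum_{j_2}(\phi_1)_{(j_1,j_2)}X_2^{j_2}$, and analogously $r_2^{j_2}=\evl_{E_1}(\phi_{2,(\cdot,j_2)})$. Writing $a=\evl_E(g)$ with $g\in\bF_q[X]^{[0,k_1)\times[0,n)}+\bF_q[X]^{[0,n)\times[0,k_2)}$, we get $f=g+\phi_1+\phi_2$, so the row polynomial $\sum_{j_2}f_{(j_1,j_2)}X_2^{j_2}$ equals $\sum_{j_2}g_{(j_1,j_2)}X_2^{j_2}+\phi_{1,(j_1,\cdot)}+\phi_{2,(j_1,\cdot)}$. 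For $j_1\geq k_1$ the first term has degree $<k_2$ in $X_2$ (since $g_{(j_1,j_2)}=0$ whenever $j_1\geq k_1$ and $j_2\geq k_2$) and the third term has degree $<k_2+s$, so subtracting $\phi_{1,(j_1,\cdot)}$ leaves a polynomial of degree $<k_2+s$, verifying the algorithm's degree constraint. The weight bound $|\evl_{E_2}(\phi_{1,(j_1,\cdot)})|\leq|H_2|\leq\epsilon n/25$ holds because if $x_2\notin H_2$, then $\phi_1(X_1,x_2)\in\bF_q[X_1]^{[0,k_1+s)}$ vanishes on all $n>k_1+s$ points of $E_1$, hence vanishes identically and forces each $\phi_{1,(j_1,\cdot)}(x_2)=0$. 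Uniqueness follows because two valid $r_1^{j_1}$'s would differ by a codeword of the length-$n$, dimension-$(k_2+s)$ Reed--Solomon code of weight $\leq 2\epsilon n/25$, which is strictly below its minimum distance $n-k_2-s+1\geq\epsilon n/2$; Berlekamp--Welch finds this unique codeword in polynomial time.

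Feeding these values into Line~\ref{li:cpp}, the output's polynomial is $f''=f-\sum_{j_1\in[k_1,k_1+s)}X_1^{j_1}\phi_{1,(j_1,\cdot)}-\sum_{j_2\in[k_2,k_2+s)}\phi_{2,(\cdot,j_2)}X_2^{j_2}=g+\phi_1^L+\phi_2^L$, where $\phi_i^L$ denotes the restriction of $\phi_i$ to monomials whose $i$th index is $<k_i$. Since $g,\phi_1^L\in\bF_q[X]^{[0,k_1)\times[0,n)}$ and $\phi_2^L\in\bF_q[X]^{[0,n)\times[0,k_2)}$, this gives $c''=\evl_E(f'')\in C_1\boxplus C_2$. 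The subtracted correction is supported in $(E_1\times H_2)\cup(H_1\times E_2)$ because each $r_i^{j_i}$ is supported in $H_{3-i}$, while $e$ vanishes on the complement, so $c''=c'=a$ on $(E_1\setminus H_1)\times(E_2\setminus H_2)$. The main technical hurdle is producing the polynomials $\phi_1,\phi_2$ with the stated degree supports from the abstract product-expansion decomposition; this is handled by polynomial interpolation, since each direction-$i$ column of a factor in the dual tensor code lies in a Reed--Solomon code of dimension $k_i+s<n$ and therefore interpolates to a polynomial in $X_i$ of degree $<k_i+s$, and these fit together into a bivariate polynomial of the claimed form.
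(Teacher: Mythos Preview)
Your proof is correct and follows essentially the same approach as the paper's: both apply the product-expansion of $(C_1',C_2')$ to the residual $c'-a$ to obtain a low-weight decomposition, then use uniqueness of Reed--Solomon decoding to pin down each $r_i^{j_i}$ as the corresponding piece of that decomposition, and finally verify membership in $C_1\boxplus C_2$ and agreement with $a$ outside the small sets $H_1,H_2$. The only superficial difference is an index swap (your $\phi_1,\phi_2$ correspond to the paper's $h_2,h_1$ respectively).
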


Lemma~\ref{lem:algfin} below analyzes the third subroutine, namely Algorithm~\ref{alg:decoderfinish}. This subroutine takes as input an elemnt $y\in\bF_q^E$ that equals the sum of a low-weight dual tensor codeword $a'\in C_1\boxplus C_2$ and a low-weight element $b\in\bF_q^E$, and outputs a coset element $y'\in y+C_1\boxplus C_2$ of weight proportional to $|b|$. Algorithm~\ref{alg:decoderfinish} will allow us to convert a decoding $c''\in C_1\boxplus C_2$ that was output by Algorithm~\ref{alg:decoderclose}, which was ``somewhat'' close to the original corrupted codeword $c\in\bF_q^E$, into a decoding $c'''\in C_1\boxplus C_2$ that is (up to constant factors) optimally close to $c$. We will specifically feed $y=c''-c$ into Algorithm~\ref{alg:decoderfinish}, and then if the algorithm outputs $y'$, we let $c'''=c+y'$.

\begin{lemma}
  \label{lem:algfin}
  Assume that the input $y\in\bF_q^E$ to Algorithm~\ref{alg:decoderfinish} satisfies $y=a'+b$ for some $a'\in C_1\boxplus C_2$ and $b\in\bF_q^E$ such that
  \begin{align*}
    a'|_{(E_1\setminus H_1)\times(E_2\setminus H_2)} &= 0 \\
    |b| &\leq d_0,
  \end{align*}
  where $H_i\subseteq E_i$ is some subset of size $|H_i|\leq\epsilon n/25$ for $i\in[2]$. Then Algorithm~\ref{alg:decoderfinish} terminates after at most $n^2$ iterations of the while loop, and its output $y'\in y+C_1\boxplus C_2$ satisfies
  \begin{equation*}
    |y'| \leq \frac{8|b|}{\epsilon}.
  \end{equation*}
\end{lemma}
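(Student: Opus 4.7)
The plan is to establish the two claims of the lemma separately.

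\textbf{Termination in $n^2$ iterations.} Each non-returning iteration performs at least one column or row subtraction. For a column subtraction with $c_1\in C_1\setminus\{0\}$ satisfying $|c_1-y|_{E_1\times\{x_2\}}|<\epsilon n/2$, the reverse triangle inequality combined with $|c_1|\geq d(C_1)\geq\epsilon n+1$ forces $|y|_{E_1\times\{x_2\}}|>\epsilon n/2$; after the subtraction the column weight becomes $<\epsilon n/2$, so the column's weight (and hence $|y|$) strictly decreases. The row case is analogous. Since $|y|$ is a non-negative integer bounded by $n^2$, the algorithm halts within $n^2$ iterations.

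\textbf{Weight bound.} Write $y'=y^{(T)}=a''+b$ with $a''\in C_1\boxplus C_2$; this holds because $y^{(T)}-y$ is a sum of column and row codewords (so lies in $C_1\boxplus C_2$) and $y-b=a'\in C_1\boxplus C_2$ by hypothesis. Apply Theorem~\ref{thm:pe2RS} at parameter $\epsilon/2$ to decompose $a''=a_1''+a_2''$ with $a_1''\in C_1\otimes\bF_q^{E_2}$, $a_2''\in\bF_q^{E_1}\otimes C_2$, and $n(|T_1|+|T_2|)\leq |a''|/\rho$, where $T_1\subseteq E_2$ (resp.~$T_2\subseteq E_1$) indexes the nonzero direction-$1$ columns of $a_1''$ (resp.~direction-$2$ rows of $a_2''$). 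In particular $\supp(a_1'')\subseteq E_1\times T_1$ and $\supp(a_2'')\subseteq T_2\times E_2$, so $|a_2''|_{E_1\times\{x_2\}}|\leq|T_2|$ for every $x_2$. For each $x_2\in T_1$, the termination condition applied to the nonzero $c_1:=a_1''|_{E_1\times\{x_2\}}$ yields
\begin{equation*}
\epsilon n/2 \leq |a_1''|_{E_1\times\{x_2\}}-y^{(T)}|_{E_1\times\{x_2\}}| = |a_2''|_{E_1\times\{x_2\}}+b|_{E_1\times\{x_2\}}| \leq |T_2|+|b|_{E_1\times\{x_2\}}|.
\end{equation*}
Summing over $x_2\in T_1$ and then applying the symmetric row argument, and setting $S:=|T_1|+|T_2|$, yields
\begin{equation*}
S\epsilon n/2-2|T_1||T_2|\leq 2|b|,\quad\text{i.e.,}\quad |T_1||T_2|\geq S\epsilon n/4-|b|.
\end{equation*}

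Combining with AM--GM $|T_1||T_2|\leq S^2/4$ produces the quadratic inequality $S^2-S\epsilon n+4|b|\geq 0$, whose smaller root satisfies $S\leq(\epsilon n-\sqrt{\epsilon^2 n^2-16|b|})/2\leq 8|b|/(\epsilon n)$ by rationalization. The larger root is at least $\epsilon n/2$, which I exclude using monotonicity of $|y|$: since $|y^{(T)}|\leq|y^{(0)}|\leq|a'|+|b|\leq n(|H_1|+|H_2|)+d_0\leq 2\epsilon n^2/25+d_0$, product-expansion gives $S\leq|a''|/(\rho n)=O(\epsilon n/\rho+d_0/n)$, which the quantitative choices $|H_i|\leq\epsilon n/25$ and $d_0=(\rho\epsilon n/1000)^2$ from Lemma~\ref{lem:algtech} and Lemma~\ref{lem:algdeg} ensure lies below the larger root. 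Finally, $\supp(a'')\subseteq(E_1\times T_1)\cup(T_2\times E_2)$ has size $nS-|T_1||T_2|$, so $|a''|\leq nS-|T_1||T_2|\leq nS(1-\epsilon/4)+|b|$ by the derived lower bound on $|T_1||T_2|$. Substituting $S\leq 8|b|/(\epsilon n)$ gives $|a''|\leq 8|b|/\epsilon-2|b|+|b|=8|b|/\epsilon-|b|$, and therefore $|y'|\leq|a''|+|b|\leq 8|b|/\epsilon$.

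\textbf{Main obstacle.} The technical heart is combining three ingredients that individually lead to circular bounds on $|T_1|$ and $|T_2|$: product-expansion (upper-bounding $S$ via $|a''|$), the termination condition (giving a lower bound on $|T_1||T_2|$ in terms of $S$ and $|b|$), and monotonicity of $|y|$ under the algorithm (linking $|a''|$ back to the small initial support of $a'$). Reducing the system to a quadratic in $S$ and ruling out its larger root is the key step that breaks the circularity; the precise constant $8/\epsilon$ in the bound emerges from the refined support estimate $|a''|\leq nS-|T_1||T_2|$ combined with the derived quadratic bound on $|T_1||T_2|$.
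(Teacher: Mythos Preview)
Your termination argument is fine and matches the paper's Claim~\ref{claim:ydecrease}. The weight-bound argument, however, has a genuine gap in the step where you exclude the larger root of the quadratic.

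You bound $S\le |a''|/(\rho n)$ via product-expansion, and then $|a''|\le |y'|+|b|\le |y^{(0)}|+|b|\le |a'|+2|b|\le 2\epsilon n^2/25+2d_0$. This yields
\[
S \;\le\; \frac{2\epsilon n}{25\rho}+\frac{2d_0}{\rho n}.
\]
The second term is negligible, but the first term exceeds the larger root (which is $\ge \epsilon n/2$ and in fact $\approx \epsilon n$) whenever $\rho<2/25$. Nothing in the setup guarantees $\rho$ is this large: $\rho=\rho(\epsilon/2)$ is the Polishchuk--Spielman constant from Theorem~\ref{thm:pe2RS}, which is merely some positive function of $\epsilon$ and may well be polynomially small in $\epsilon$. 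So your exclusion of the larger root does not go through, and the quadratic by itself leaves open the possibility $S\approx \epsilon n$, which would only give $|y'|=O(\epsilon n^2)$ rather than $O(|b|/\epsilon)$.

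The paper avoids this by proving an \emph{invariant} (Claim~\ref{claim:ystruct}): throughout the algorithm, $y-b$ remains supported in $(H_1'\times E_2)\cup(E_1\times H_2')$ for fixed sets $H_i'\supseteq H_i$ with $|H_i'|\le \epsilon n/20$. The key point is that any subtraction $c_1\otimes \1_{x_2}$ with $x_2\notin H_2'$ would force $|c_1|<\epsilon n$, contradicting $c_1\in C_1\setminus\{0\}$. This invariant gives, at termination, a decomposition of $a''=y'-b$ with $|T_1|+|T_2|\le \epsilon n/10$ directly, with no appeal to product-expansion and hence no $1/\rho$ loss. From there the paper finishes with a simpler argument than your quadratic (Claim~\ref{claim:ysmall}), though your quadratic route would also succeed once this $\rho$-free bound on $S$ is in hand. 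The missing ingredient is precisely this structural invariant; monotonicity of $|y|$ alone is too weak to control the support of $a''$.
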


Lemma~\ref{lem:runtime} below analyzes the runtime of Algorithm~\ref{alg:decoder}, Algorithm~\ref{alg:decoderclose}, and Algorithm~\ref{alg:decoderfinish}.

\begin{lemma}
  \label{lem:runtime}
  Algorithm~\ref{alg:decoder}, Algorithm~\ref{alg:decoderclose}, and Algorithm~\ref{alg:decoderfinish} can be implemented in time $\poly(n,q)$ (independent of $k_1,k_2,\epsilon$).
\end{lemma}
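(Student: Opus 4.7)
The plan is to walk through each of the three algorithms and argue that every elementary step reduces to one of three standard primitives over $\bF_q$: (i) solving/kerneling a linear system of size $\poly(n)$, (ii) taking a univariate polynomial GCD via the Euclidean algorithm, or (iii) unique Reed-Solomon decoding via Berlekamp-Welch. Each of these runs in $\poly(n,q)$ time. Note that the \emph{existence} of the objects being computed (a nonzero $e_0$, a short $r_i^{j_i}$, an appropriate $c_i$, etc.) is guaranteed by the companion lemmas Lemma~\ref{lem:algtech}, Lemma~\ref{lem:algdeg}, and Lemma~\ref{lem:algfin}; the runtime analysis only needs to verify that we can \emph{find} them efficiently.

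For Algorithm~\ref{alg:decoder}, Line~\ref{li:e0} is a linear system with $(s+1)^2=O(n^2)$ unknowns (the coefficients of $e_0$) subject to $O(n^2)$ constraints (that $e_0(x)c(x)$ be annihilated by a parity-check matrix of $C_1'\boxplus C_2'$); a nonzero solution is extracted by Gaussian elimination. The basis computations on Lines~\ref{li:basis} and~\ref{li:basis2} are kernels of analogous linear systems. The GCDs on Lines~\ref{li:gcd} and~\ref{li:gcd2} consist of $O(n)$ univariate GCDs of polynomials of degree $\le s$. The while loop on Line~\ref{li:while} strictly shrinks $E_1'\cup E_2'$ at every iteration and therefore runs at most $2n$ times; each iteration scans $E_1'\times E_2'$ in $O(n^2)$ field operations. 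Finally, the codeword returned on Line~\ref{li:return} is obtained by erasure decoding $C_1'\boxplus C_2'$ with prescribed values on $\supp(e_0)\cap E'$, which is yet another Gaussian elimination.

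For Algorithm~\ref{alg:decoderclose}, Line~\ref{li:f} inverts the bivariate evaluation map on $E$, a size-$n^2$ linear system. Each iteration of the \textbf{for} loops on Lines~\ref{li:for1} and~\ref{li:for2} seeks a weight-$\le\epsilon n/25$ vector whose complement lies in a Reed-Solomon code of length $n$ and dimension $k_{3-i}+s \le (1-\epsilon)n+\rho\epsilon n/1000$. The resulting unique decoding radius is at least $\lfloor\epsilon n/4\rfloor$, which comfortably exceeds $\epsilon n/25$, so Berlekamp-Welch returns $r_i^{j_i}$ in $\poly(n)$ time. The closing arithmetic on Line~\ref{li:cpp} is negligible.

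Algorithm~\ref{alg:decoderfinish} is handled similarly: by Lemma~\ref{lem:algfin} the while loop terminates after at most $n^2$ iterations, and each iteration checks the conditions on Lines~\ref{li:ifc1} and~\ref{li:ifc2} by running Berlekamp-Welch on each of the $n$ columns (resp.\ rows) against $C_1$ (resp.\ $C_2$). Since $k_i\le(1-\epsilon)n$, the unique decoding radius of $C_i$ is at least $\lfloor\epsilon n/2\rfloor$, so decoding up to $\epsilon n/2-1$ errors is within range and the desired $c_i$ (or a certificate of its non-existence) is produced in $\poly(n)$ time. The only point that requires any care is tracking these rate bounds so that all RS decoding calls fall in the Berlekamp-Welch regime, but this is immediate from $k_1+k_2\le(1-\epsilon)n$ and $s=\lceil\rho\epsilon n/1000\rceil$. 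Summing the costs across all three algorithms yields the claimed $\poly(n,q)$ runtime.
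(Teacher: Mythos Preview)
Your proof is correct and follows essentially the same approach as the paper: both reduce each step to Gaussian elimination on $\poly(n)$-sized linear systems, univariate GCDs, or Berlekamp--Welch decoding, and both invoke Lemma~\ref{lem:algfin} for the termination bound on Algorithm~\ref{alg:decoderfinish}. Your write-up is slightly more detailed in verifying that the decoding radii suffice, and your $2n$ bound on the while loop is a bit looser than the paper's $n$ (each iteration removes an element from \emph{both} $E_1'$ and $E_2'$), but the argument is otherwise the same.
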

\begin{proof}
  In Algorithm~\ref{alg:decoder}, lines~\ref{li:e0}, \ref{li:basis}, \ref{li:basis2}, \ref{li:return} simply require solving a linear system of $\poly(n)$ equations over $\bF_q$, lines~\ref{li:gcd} and \ref{li:gcd2} require computing the greatest common divisor of $\poly(n)$ univariate polynomials over $\bF_q$, and the remaining lines simply perform basic $\poly(n)$-time manipulation of variables. Thus each line executes in time $\poly(n,q)$, and the while loop by definition executes for at most $n$ iterations, so the entire algorithm runs in time $\poly(n,q)$.

  In Algorithm~\ref{alg:decoderclose}, line~\ref{li:f} simply requires solving a linear system of $\poly(n)$ equations over $\bF_q$. Lines~\ref{li:r1} and~\ref{li:r2} can be implemented by running a $\poly(n,q)$-time decoder such as the Welch-Berlekamp decoder (see e.g.~\cite{guruswami_essential_2022}) for the Reed-Solomon codes $C_2'$ and $C_1'$ respectively. Line~\ref{li:cpp} by definition runs in time $\poly(n,q)$. Thus the entire algorithm runs in time $\poly(n,q)$.

  In Algorithm~\ref{alg:decoderfinish}, line~\ref{li:ifc1} can be implemented by running the Welch-Berlekamp decoder for $C_1$ on $y|_{E_1\times\{x_2\}}$ for every $x_2\in E_2$, and line~\ref{li:ifc2} can be implemented by running the Welch-Berlekamp decoder for $C_2$ on $y|_{\{x_1\}\times E_2}$ for every $x_1\in E_1$. Thus every iteration of the while runs in time $\poly(n,q)$, and Lemma~\ref{lem:algfin} implies that the algorithm terminates after $\leq n^2$ iterations, so the entire runtime is $\poly(n,q)$.
\end{proof}

We now combine the above lemmas to prove Theorem~\ref{thm:dualtensordec}.

\begin{proof}[Proof of Theorem~\ref{thm:dualtensordec}]
  Let
  \begin{align*}
    \alpha = \alpha(\epsilon) = \left(\frac{1000}{\rho\epsilon}\right)^2.
  \end{align*}
  
  First observe that if $d_0<1$ (for $d_0$ defined as in Lemma~\ref{lem:algtech}), then $\alpha>n^2$, so the $\alpha$-decoding proprty simply requires that we output $c$ if $c\in C_1\boxplus C_2$, and otherwise we may output any element of $C_1\boxplus C_2$. As we can check if $c\in C_1\boxplus C_2$ in time $\poly(n,q)$, it only remains to consider the case where $d_0\geq 1$.

  Assume that $d_0\geq 1$. Because $\alpha\geq n^2/d_0$, if $d(c,C_1\boxplus C_2)\leq d_0$ then we can again satisfy $\alpha$-decoding property by outputing $c$ if $c\in C_1\boxplus C_2$, and otherwise outputting any element of $C_1\boxplus C_2$. Thus if we have a decoder that succeeds on every input $c\in\bF_q^E$ with $d(c,C_1\boxplus C_2)\leq d_0$, then we can run this decoder, and if it fails (meaning that it either does not terminate in time $\poly(n,q)$, or does not output an element of $C_1\boxplus C_2$) then output any element of $C_1\boxplus C_2$.

  Thus it remains to consider the case where $d_0\geq 1$ and $d(c,C_1\boxplus C_2)\leq d_0$. For this purpose, we first run Algorithm~\ref{alg:decoder} on input $c$. By Lemma~\ref{lem:algtech}, the resulting output $c'\in C_1'\boxplus C_2'$ satisfies the condition $|c'-c|\leq\rho\epsilon n^2/50$ in Lemma~\ref{lem:algdeg}, so we may feed $c'$ into Algorithm~\ref{alg:decoderclose}. By Lemma~\ref{lem:algdeg}, the resulting output $c''\in C_1\boxplus C_2$ is such that $y:=c''-c$ satisfies the conditions in Lemma~\ref{lem:algfin} with $|b|=d(c,C_1\boxplus C_2)$, so we may feed $y$ into Algorithm~\ref{alg:decoderfinish}. Lemma~\ref{lem:algfin} then implies that the resulting output $y'\in y+C_1\boxplus C_2$ satisfies $|y'|\leq 8|b|/\epsilon=8d(c,C_1\boxplus C_2)/\epsilon$, so $c''':=c+y'$ satisfies $c'''\in C_1\boxplus C_2$ and $|c'''-c|\leq 8d(c,C_1\boxplus C_2)/\epsilon\leq \alpha d(c,C_1\boxplus C_2)$. Thus we may output $c'''$ to satisfy the desired $\alpha$-decoding property.

  Thus to summarize, we have shown that if $d_0\geq 1$ and $d(c,C_1\boxplus C_2)\leq d_0$, we can successively run Algorithm~\ref{alg:decoder}, Algorithm~\ref{alg:decoderclose}, and Algorithm~\ref{alg:decoderfinish} to output $c'''\in C_1\boxplus C_2$ satisfying the desired property $|c'''-c|\leq\alpha\cdot d(c,C_1\boxplus C_2)$. Lemma~\ref{lem:runtime} then implies that the combined runtime of these three subroutines is $\poly(n,q)$. Meanwhile, if instead $d_0<1$, we simply check if $c\in C_1\boxplus C_2$, and output $c$ if so, or any element of $C_1\boxplus C_2$ otherwise. If $d_0\geq 1$ but $d(c,C_1\boxplus C_1)\leq d_0$, we still run the three subroutines as described above (as we do not know the value of $d(c,C_1\boxplus C_1)$), but they may fail arbitrarily or not terminate in time $\poly(n,q)$, as our analysis of the subroutines' success and runtime was conditional on the input satisfying $d(c,C_1\boxplus C_2)$; in this case, we abort and can conclude that $d(c,C_1\boxplus C_1)>d_0$, so we can output any element of $C_1\boxplus C_2$. Note that if $d(c,C_1\boxplus C_2)>d_0$, the subroutines may still successfully output some arbitrary element of $C_1\boxplus C_2$ in time $\poly(n,q)$, in which case we will not detect a failure; but when $d(c,C_1\boxplus C_2)>d_0$ our $\alpha$-decoder can output any element of $C_1\boxplus C_2$, so our output will still be valid. Thus we have constructed an $\alpha$-decoder for $C_1\boxplus C_2$ whose overall running time is $\poly(n,q)$, as desired.
\end{proof}

\subsection{Analysis of Subroutines}
\label{sec:alganal}
In this section, we prove Lemma~\ref{lem:algtech}, Lemma~\ref{lem:algdeg}, and Lemma~\ref{lem:algfin}, which analyze Algorithm~\ref{alg:decoder}, Algorithm~\ref{alg:decoderclose}, and Algorithm~\ref{alg:decoderfinish} respectively, thereby completing the proof of Theorem~\ref{thm:dualtensordec}.

\begin{proof}[Proof of Lemma~\ref{lem:algtech}]
  Let $c=a+b$, where $a\in C_1\boxplus C_2$ and $|b|=d(c,C_1\boxplus C_2)\leq d_0$. We proceed through the Algorithm~\ref{alg:decoder} line-by-line, showing that each step successfully computes the desired quantities; we define all variables as in Algorithm~\ref{alg:decoder}.

  To begin, the computation of $e_0$ in line~\ref{li:e0} succeeds because $e_0(X)$ has $(s+1)^2>(\rho\epsilon n/1000)^2\geq d_0\geq|b|$ coefficients that can take on arbitrary values, and to satisfy~(\ref{eq:e0req}) it is sufficient to have $e_0(x)=0$ for every $x\in\supp(b)$. These equations $e_0(x)=0$ impose $|b|<(s+1)^2$ linear homogenous constraints on the $(s+1)^2$ coefficients of $e_0$, so there is some nonzero choice of $e_0$ satisfying all the equations.

  \begin{claim}
    \label{claim:E0size}
    For $i\in[2]$, let $E_i^0$ equal the set $E_i'$ following the execution of line~\ref{li:Ep}. Then $|E_i'|\geq n-s$.
  \end{claim}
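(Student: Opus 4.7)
The plan is to exploit the fact that $e_0(X_1,X_2)\in\bF_q[X_1,X_2]^{[0,s]^2}$ is nonzero and has degree at most $s$ in each variable separately. Expand $e_0$ as a polynomial in $X_2$ with coefficients in $\bF_q[X_1]$:
\begin{equation*}
  e_0(X_1,X_2) \;=\; \sum_{j=0}^{s} f_j(X_1)\,X_2^{j}, \qquad f_j(X_1)\in\bF_q[X_1]^{[0,s]}.
\end{equation*}
Since $e_0\neq 0$, at least one of the $f_j$ is a nonzero univariate polynomial of degree at most $s$.

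Next, I would observe that $e_0(x_1,X_2)=0\in\bF_q[X_2]$ (i.e., $x_1$ is excluded from $E_1^0$) exactly when $f_j(x_1)=0$ for every $j\in\{0,\dots,s\}$. In particular, any such $x_1$ must be a root of the fixed nonzero coefficient polynomial $f_{j^*}$ for some $j^*$ chosen with $f_{j^*}\neq 0$. A nonzero univariate polynomial of degree at most $s$ has at most $s$ roots, so
\begin{equation*}
  |E_1\setminus E_1^0| \;\leq\; |\{x_1\in E_1 : f_{j^*}(x_1)=0\}| \;\leq\; s,
\end{equation*}
which gives $|E_1^0|\geq n-s$. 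The bound for $E_2^0$ follows by the symmetric argument, swapping the roles of $X_1$ and $X_2$ and expanding $e_0$ as a polynomial in $X_1$ with coefficients in $\bF_q[X_2]$.

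There is essentially no obstacle here; this is a direct application of the fact that a univariate polynomial has at most as many roots as its degree. The only thing to be careful about is to expand $e_0$ along the correct variable for each of the two claims and to pick a \emph{nonzero} coefficient polynomial, which is guaranteed by the hypothesis $e_0\neq 0$ together with the bidegree bound $[0,s]^2$.
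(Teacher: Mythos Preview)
Your proof is correct and takes essentially the same approach as the paper's: both argue that since $e_0\neq 0$ has degree $\leq s$ in each variable, the restriction $e_0(x_1,X_2)$ can vanish identically for at most $s$ values of $x_1$ (and symmetrically for $x_2$). You simply spell out the mechanism by expanding $e_0$ along one variable and picking a nonzero coefficient polynomial, whereas the paper states this in one line.
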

  \begin{proof}
    Because $e_0(X)\neq 0$ has degree $\leq s$ in each variable, $e_0(x_1,X_2)$ and $e_0(X_1,x_2)$ can only vanish for $\leq s$ values of $x_1$ and $x_2$ respectively, so $|E_i^0|\geq n-s$.
  \end{proof}

  Now by definition $(e_0(x)c(x))_{x\in E}$ and $(e_0(x)a(x))_{x\in E}$ both lie in $\evl_E(\bF_q[X]^{[0,s]^2})*(C_1\boxplus C_2) = C_1'\boxplus C_2'$, so $(e_0(x)b(x))_{x\in E}$ must also lie in this code because $b=c-a$. Writing in a slight abuse of notation $\evl_E(e_0b):=(e_0(x)b(x))_{x\in E}$, then by definition $|\evl_E(e_0b)|\leq|b|\leq d_0$. Therefore by the definition of $\rho$ in the lemma statement, we may express
  \begin{equation}
    \label{eq:e0bdecomp}
    \evl_E(e_0b) = \evl_E(f_1)+\evl_E(f_2)
  \end{equation}
  for some $f_1(X)\in\bF_q[X]^{[0,n)\times[0,k_2+s)}$ and $f_2(X)\in\bF_q[X]^{[0,k_1+s)\times[0,n)}$ such that the sets
  \begin{align*}
    F_1 &= \{x_1\in E_1:f_1(x_1,X_2)\neq 0\in\bF_q[X_2]\} \\
    F_2 &= \{x_2\in E_2:f_2(X_1,x_2)\neq 0\in\bF_q[X_1]\}
  \end{align*}
  have size $|F_1|,|F_2|\leq d_0/\rho n$.

  \begin{claim}
    For $i\in[2]$, we have $F_i\subseteq E_i^0$.
  \end{claim}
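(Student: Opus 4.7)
My plan is to first observe that $\evl_E(e_0 b)$ vanishes outside $E^0 := E_1^0 \times E_2^0$. If $x_1 \notin E_1^0$, then by definition $e_0(x_1, X_2)$ is the zero polynomial in $X_2$, so $e_0(x_1, x_2) = 0$ for every $x_2 \in E_2$, forcing $(e_0 b)(x_1, x_2) = 0$; the argument for $x_2 \notin E_2^0$ is symmetric.

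Applied to the decomposition $\evl_E(e_0 b) = \evl_E(f_1) + \evl_E(f_2)$, for any $x_1 \notin E_1^0$ the vanishing of row $x_1$ combined with the bijectivity of $\evl_{E_2} : \bF_q[X_2]^{[0,n)} \to \bF_q^{E_2}$ yields the polynomial identity $f_2(x_1, X_2) = -f_1(x_1, X_2)$; in particular $f_2(x_1, X_2) \in \bF_q[X_2]^{[0, k_2+s)}$. Symmetrically, $f_1(X_1, x_2) \in \bF_q[X_1]^{[0, k_1+s)}$ for every $x_2 \notin E_2^0$.

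Next I would invoke Lemma~\ref{lem:homvanexp}: any alternative valid decomposition has the form $\tilde{f}_1 = f_1 + h$, $\tilde{f}_2 = f_2 - h$ with $h \in \bF_q[X]^{[0, k_1+s) \times [0, k_2+s)}$ (since the intersection of the two ambient polynomial spaces is exactly this bi-degree-bounded space). The plan is to pick $h$ enforcing $h(x_1, X_2) = -f_1(x_1, X_2)$ for every $x_1 \in E_1 \setminus E_1^0$ and $h(X_1, x_2) = f_2(X_1, x_2)$ for every $x_2 \in E_2 \setminus E_2^0$, so the modified decomposition has $\tilde{f}_1(x_1, X_2) = 0$ for $x_1 \notin E_1^0$ and $\tilde{f}_2(X_1, x_2) = 0$ for $x_2 \notin E_2^0$, i.e.\ $\tilde{F}_i \subseteq E_i^0$. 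These requirements are mutually consistent on the overlap $(E_1\setminus E_1^0)\times(E_2\setminus E_2^0)$ because $f_1 + f_2$ vanishes there by the first paragraph, and a suitable $h$ exists by a 2D polynomial-interpolation dimension count: at most $s(k_1+k_2+2s)$ linear conditions on the $(k_1+s)(k_2+s) = s(k_1+k_2+2s) + (k_1 k_2 - s^2)$ coefficients of $h$.

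The main obstacle is preserving the quantitative size bounds $|F_i| \le d_0/\rho n$ after the modification, since an arbitrary such $h$ could in principle enlarge $\tilde{F}_i$. The cleanest remedy is to apply product-expansion at the outset not to $(C_1', C_2')$ on $E$, but rather to their restrictions $(C_1'|_{E_1^0}, C_2'|_{E_2^0})$ on $E^0$, whose rates still sum to $\le 1 - \epsilon/2$ for the chosen $s$ so that Theorem~\ref{thm:pe2RS} continues to apply; this directly yields a decomposition of $\evl_{E^0}(e_0 b)$ with the desired size bound on $E^0$, and the modification step above can then be used to extend it to a global decomposition of $\evl_E(e_0 b)$ with $\tilde{F}_i \subseteq E_i^0$ and the same size bound. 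Verifying that the restriction-then-extension procedure produces a single consistent decomposition obeying all the polynomial degree and weight constraints simultaneously is the principal technical point of the argument.
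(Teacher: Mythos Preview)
Your proposal does not prove the claim as stated and misses a much simpler direct argument. The claim concerns the \emph{specific} decomposition $(f_1,f_2)$ produced by product-expansion of $(C_1',C_2')$ on $E$; your paragraphs~3--4 instead propose to \emph{modify} or \emph{replace} this decomposition, which at best establishes a different statement (existence of some decomposition with $\tilde F_i\subseteq E_i^0$) and, as you yourself concede, leaves the weight bounds $|\tilde F_i|\le d_0/\rho n$ unverified.

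The paper's proof is a three-line contradiction that you nearly set up in your first paragraph but then abandoned. Suppose $x_1\in F_1\setminus E_1^0$. Since $e_0(x_1,X_2)=0$, the row $\evl_{\{x_1\}\times E_2}(e_0b)$ vanishes. But this row equals $\evl_{\{x_1\}\times E_2}(f_1)+\evl_{\{x_1\}\times E_2}(f_2)$. The first summand is the evaluation of the nonzero polynomial $f_1(x_1,X_2)$ of degree $<k_2+s$ on the $n$ points of $E_2$, so by the Reed--Solomon distance bound it has Hamming weight at least $n-(k_2+s)>\epsilon n/2$. The second summand is supported inside the columns indexed by $F_2$ (since $f_2(X_1,x_2)=0$ for $x_2\notin F_2$), hence has weight at most $|F_2|\le d_0/\rho n\le\epsilon n/2$. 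The sum therefore has positive weight, a contradiction.

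The step you overlooked is that the column-support structure of $f_2$ bounds $|\evl_{\{x_1\}\times E_2}(f_2)|$ directly by $|F_2|$; combined with the distance lower bound on $|\evl_{\{x_1\}\times E_2}(f_1)|$, the two cannot cancel. No auxiliary polynomial $h$, no interpolation, and no restriction-then-extension to $E^0$ is needed.
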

  \begin{proof}
    If there were some $x_1\in F_1\setminus E_1^0$, then we would have $e_0(x_1,X_2)=0$, so $\evl_{\{x_1\}\times E_2}(e_0b)=0$. Yet we also must have that $\evl_{\{x_1\}\times E_2}(e_0b)=\evl_{\{x_1\}\times E_2}(f_1)+\evl_{\{x_1\}\times E_2}(f_2)$, where by definition $|\evl_{\{x_1\}\times E_2}(f_1)|\geq n-(k_2+s)>\epsilon n/2$ and $|\evl_{\{x_1\}\times E_2}(f_2)|\leq|F_2|\leq d_0/\rho n\leq\epsilon n/2$. Hence $|\evl_{\{x_1\}\times E_2}(e_0b)|>0$, contradicting the fact that $\evl_{\{x_1\}\times E_2}(e_0b)=0$, so we must indeed have $F_1\subseteq E_1^0$, as desired. The proof that $F_2\subseteq E_2^0$ is analogous.
  \end{proof}

  \begin{claim}
    \label{claim:gcddivprod}
    Following the execution of line~\ref{li:gcd}, it holds that
    \begin{align*}
      g_1^{x_1}(X_2) \mid \prod_{x_2\in F_2}(X_2-x_2) &\hspace{1em}\forall x_1\in E_1'\setminus F_1 \\
      g_2^{x_2}(X_1) \mid \prod_{x_1\in F_1}(X_1-x_1) &\hspace{1em}\forall x_2\in E_2'\setminus F_2.
    \end{align*}
  \end{claim}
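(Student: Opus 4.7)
The plan is to exhibit a single explicit polynomial $\bar e(X) \in \cE$ whose specialization $\bar e(x_1, X_2)$ is a nonzero scalar multiple of $\prod_{y_2 \in F_2}(X_2 - y_2)$ for every $x_1 \in E_1' \setminus F_1$. Since $g_1^{x_1}$ by definition divides $e(x_1, X_2)$ for every $e \in \cE$, this single witness already forces $g_1^{x_1}(X_2) \mid \prod_{y_2 \in F_2}(X_2 - y_2)$, and the divisibility statement for $g_2^{x_2}$ follows from the analogous specialization in the other variable.

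The candidate I propose is
\begin{equation*}
  \bar e(X) := \prod_{y_1 \in F_1}(X_1 - y_1) \cdot \prod_{y_2 \in F_2}(X_2 - y_2).
\end{equation*}
Three facts need to be checked. First, $\bar e \in \bF_q[X]^{[0,s]^2}$: the bounds $|F_1|, |F_2| \leq d_0/(\rho n)$ established just before the claim combine with $d_0 = (\rho\epsilon n/1000)^2$ and $s \geq \rho\epsilon n/1000$ to give $|F_i| \leq \epsilon s/1000 \leq s$. Second, $\bar e \cdot a \in C_1' \boxplus C_2'$ globally: decomposing $a = a_1 + a_2$ with $a_i \in C_i^{(i)}$, and noting that pointwise multiplication by a polynomial of $X_i$-degree $\leq s$ sends $C_i^{(i)}$ into $C_i'^{(i)}$, the same routine calculation that handled $e_0$ yields the claim.

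The heart of the argument is the third fact: $(\bar e \cdot b)(x) = 0$ for every $x \in \supp(e_0) \cap E'$, and this is where the definitions of $F_1$ and $F_2$ earn their keep. Fix such $x = (x_1, x_2)$. If $x_1 \in F_1$ or $x_2 \in F_2$ then $\bar e(x) = 0$ and there is nothing to check. Otherwise $x_1 \notin F_1$ forces $f_1(x_1, X_2) \equiv 0$ as a polynomial in $X_2$, and $x_2 \notin F_2$ forces $f_2(X_1, x_2) \equiv 0$ as a polynomial in $X_1$; evaluating at $x$ gives $f_1(x) = f_2(x) = 0$. The decomposition~(\ref{eq:e0bdecomp}) then reads $e_0(x) b(x) = 0$, and $x \in \supp(e_0)$ forces $b(x) = 0$. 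Combining the three facts, $(\bar e \cdot c)|_{\supp(e_0) \cap E'} = (\bar e \cdot a)|_{\supp(e_0) \cap E'}$ is the restriction of a $C_1' \boxplus C_2'$ codeword, so $\bar e \in \cE$. For any $x_1 \in E_1' \setminus F_1$, the scalar $\prod_{y_1 \in F_1}(x_1 - y_1)$ is nonzero, so $\bar e(x_1, X_2)$ is a unit multiple of $\prod_{y_2 \in F_2}(X_2 - y_2)$, giving the desired divisibility. I do not anticipate a serious obstacle; the only point worth flagging is the elementary numerical check that $|F_i| \leq s$, which is forced by the choice of $s$ made at the top of Algorithm~\ref{alg:decoder}.
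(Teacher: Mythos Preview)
Your proposal is correct and follows essentially the same approach as the paper: the paper constructs exactly the same witness polynomial (calling it $e_1$), verifies the degree bound via $|F_i|\leq d_0/\rho n\leq s$, and shows that $e_1\cdot b$ vanishes on $\supp(e_0)\cap E'$ by the same case analysis you give. One minor imprecision: $\cE$ is a \emph{basis}, so you should say $\bar e\in\spn(\cE)$ rather than $\bar e\in\cE$; the conclusion is unaffected since the gcd of the specializations of basis elements divides the specialization of every element in the span.
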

  \begin{proof}
    It suffices to show that~(\ref{eq:basiscond}) is satisfied by letting $e=e_1$ for
    \begin{equation*}
      e_1(X) = \prod_{x_1\in F_1}(X_1-x_1)\cdot\prod_{x_2\in F_2}(X_2-x_2),
    \end{equation*}
    as then for every $x_1\in E_1'\setminus F_1$ and $x_2\in E_2'\setminus F_2$ we obtain the desired divisibility conditions
    \begin{align*}
      g_1^{x_1}(X_2)&\mid e_1(x_1,X_2)\propto\prod_{x_2\in F_2}(X_2-x_2) \\
      g_2^{x_2}(X_1)&\mid e_1(X_1,x_2)\propto\prod_{x_1\in F_1}(X_1-x_1).
    \end{align*}
    Note that here $e_1\in\bF_q[X]^{[0,s]^2}$ because $|F_1|,|F_2|\leq d_0/\rho n\leq s$.

    By~(\ref{eq:e0bdecomp}), $\evl_E(e_0b)$ is supported inside inside $F_1\times E_2\cup E_1\times F_2$, so $\evl_{\supp(e_0)\cap E'}(b)$ is supported inside $F_1\times E_2'\cup E_1'\times F_2$, and hence $\evl_{\supp(e_0)\cap E'}(e_1b)=0$. Therefore
    \begin{equation*}
      \evl_{\supp(e_0)\cap E'}(e_1c) = \evl_{\supp(e_0)\cap E'}(e_1a) \in \evl_{\supp(e_0)\cap E'}(\bF_q[X]^{([0,k_1+|F_1|)\times[0,n))\cup([0,n)\times[0,k_2+|F_2|))}),
    \end{equation*}
    so because $|F_1|,|F_2|\leq d_0/\rho n\leq s$, Equation~(\ref{eq:basiscond}) holds for $e=e_1$, as desired.
  \end{proof}

  \begin{claim}
    \label{claim:removebound}
    For $i\in[2]$, the while loop in lines~\ref{li:while}-\ref{li:remove} removes at most $|F_1|+|F_2|$ values from $E_i'$.
  \end{claim}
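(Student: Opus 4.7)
The plan is to bound the total number of iterations of the while loop, since each iteration removes exactly one element from $E_1'$ and one from $E_2'$, making both removal counts equal to the iteration count. My strategy will be to show that in any iteration, the witness pair $(x_1,x_2)$ satisfying the loop condition must have $x_1 \in F_1$ or $x_2 \in F_2$. A charging argument will then close the proof: each $x_1 \in F_1$ can be the removed element from $E_1'$ at most once (after that it is no longer in $E_1'$), and likewise for each $x_2 \in F_2$, so the iterations split into at most $|F_1|$ of the ``$x_1 \in F_1$'' type and at most $|F_2|$ of the ``$x_2 \in F_2$'' type, giving a total of at most $|F_1|+|F_2|$.

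The key step will be to invoke Claim~\ref{claim:gcddivprod}. Throughout the loop, $E_1' \subseteq E_1^0$ and $E_2' \subseteq E_2^0$ (where $E_i^0$ denotes the value of $E_i'$ at the end of line~\ref{li:Ep}), so for any $x_1 \in E_1' \setminus F_1$ the polynomial $g_1^{x_1}(X_2)$ divides $\prod_{x_2' \in F_2}(X_2-x_2')$, and symmetrically for $g_2^{x_2}(X_1)$. If some iteration picked a witness $(x_1,x_2)$ with $x_1 \notin F_1$ and $x_2 \notin F_2$, then the condition $g_1^{x_1}(x_2)=0$ would force $x_2$ to be a root of $\prod_{x_2' \in F_2}(X_2-x_2')$, contradicting $x_2 \notin F_2$; symmetrically $g_2^{x_2}(x_1)=0$ would force $x_1 \in F_1$. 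Either subcase contradicts the assumption, so the desired dichotomy holds.

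The main obstacle I anticipate is ensuring that $g_1^{x_1}$ and $g_2^{x_2}$ are \emph{nonzero} polynomials, since the ``roots lie in $F_i$'' reasoning silently needs this. I will handle it by noting that $e_0 \in \spn(\cE)$: indeed $e_0$ satisfies the stronger condition~(\ref{eq:e0req}) from line~\ref{li:e0}, and restricting~(\ref{eq:e0req}) to $\supp(e_0) \cap E'$ shows that $e_0$ satisfies~(\ref{eq:basiscond}). Since the gcd of a finite set of univariate polynomials over $\bF_q$ divides every $\bF_q$-linear combination of them, $g_1^{x_1}(X_2)$ divides $e_0(x_1,X_2)$, and the latter is nonzero for $x_1 \in E_1^0$ by the very definition of $E_1^0$ in line~\ref{li:Ep}; the symmetric argument handles $g_2^{x_2}$. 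Once this nonzero-gcd check is in place, combining it with Claim~\ref{claim:gcddivprod} and the charging argument from the first paragraph gives the bound $|F_1|+|F_2|$ simultaneously for both $E_1'$ and $E_2'$, and I expect the remainder of the argument to be routine.
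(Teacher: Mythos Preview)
Your proposal is correct and follows essentially the same approach as the paper's proof: both use Claim~\ref{claim:gcddivprod} to show that any witness pair $(x_1,x_2)$ triggering the loop must have $x_1\in F_1$ or $x_2\in F_2$, and then charge each iteration to the corresponding element of $F_1$ or $F_2$. Your explicit verification that the gcds $g_i^{x_i}$ are nonzero (via $e_0\in\spn(\cE)$) is a detail the paper leaves implicit, but otherwise the arguments are the same.
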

  \begin{proof}
    By Claim~\ref{claim:gcddivprod}, for $x=(x_1,x_2)\in(E_1'\setminus F_1)\times(E_2'\setminus F_2)$ (for the values of $E_1',E_2'$ at any point during the while loop), the roots of $g_1^{x_1}(X_2)$ and of $g_2^{x_2}(X_1)$ lie inside $F_2$ and $F_1$ respectively, so $g_1^{x_1}(x_2)\neq 0$ and $g_2^{x_2}(x_1)\neq 0$. Hence every time line~\ref{li:remove} executes to remove some $x_1,x_2$ from $E_1',E_2'$ respectively, we must have that either $x_1\in F_1$ or $x_2\in F_2$. It follows that line~\ref{li:remove} can execute at most $|F_1|+|F_2|$ times, at which point $E_1',E_2'$ would contain no more points in $F_1,F_2$ respectively, so the loop would have to terminate.
  \end{proof}

  \begin{claim}
    \label{claim:gcd1}
    For $i\in[2]$, let $E_i^1$ equal the set $E_i'$ following the termination of the while loop in lines~\ref{li:while}-\ref{li:remove}, but prior to the execution of line~\ref{li:remove2}. Then for every $x_i\in E_i^1\setminus F_i$, we have ${g_i'}^{x_i}=1$.
  \end{claim}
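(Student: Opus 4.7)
The plan is to derive a contradiction from assuming ${g_1'}^{x_1^*}(X_2)\neq 1$ for some $x_1^*\in E_1^1\setminus F_1$; the $i=2$ case will be symmetric. I would first note that $\cE\subseteq\cE'$, since imposing~(\ref{eq:basiscond2}) on the smaller set $\supp(e_0)\cap E^1\subseteq\supp(e_0)\cap E^0$ is a strictly weaker constraint than~(\ref{eq:basiscond}). This has two consequences. On the one hand, the witness polynomial $e_1(X)=\prod_{x_1\in F_1}(X_1-x_1)\prod_{x_2\in F_2}(X_2-x_2)$ from the proof of Claim~\ref{claim:gcddivprod} lies in $\cE\subseteq\cE'$, so ${g_1'}^{x_1^*}(X_2)$ divides $e_1(x_1^*,X_2)$, which is a nonzero scalar multiple of $\prod_{x_2\in F_2}(X_2-x_2)$ because $x_1^*\notin F_1$; hence every root of ${g_1'}^{x_1^*}$ lies in $F_2\subseteq\bF_q$. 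On the other hand, taking a gcd over a larger set yields a more divisible polynomial, so ${g_1'}^{x_1^*}\mid g_1^{x_1^*}$.

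Fix any root $x_2^*\in F_2$ of ${g_1'}^{x_1^*}$ (which exists by our assumption). If $x_2^*\in E_2^1$, then the divisibility ${g_1'}^{x_1^*}\mid g_1^{x_1^*}$ forces $g_1^{x_1^*}(x_2^*)=0$, contradicting the termination condition of the while loop in lines~\ref{li:while}--\ref{li:remove}. The main obstacle is the remaining case $x_2^*\in F_2\setminus E_2^1$, where the while loop provides no direct leverage.

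I would handle this case by exploiting the extra slack in $\cE'$ created by the removal of $x_2^*$ from the evaluation set, exhibiting the polynomial
\begin{equation*}
  \tilde e(X)=\prod_{x_1\in F_1}(X_1-x_1)\prod_{x_2\in F_2\setminus\{x_2^*\}}(X_2-x_2)\in\bF_q[X]^{[0,s]^2},
\end{equation*}
whose degrees in $X_1,X_2$ are at most $|F_1|,|F_2|-1\leq s$. Since $\tilde e\cdot a\in C_1'\boxplus C_2'$ (after reducing high-degree evaluations on $E_i$ of size $n$), to show $\tilde e\in\cE'$ it suffices to verify that $\tilde e\cdot b$ vanishes on $\supp(e_0)\cap E^1$. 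The identity $e_0 b=f_1+f_2$ on $E$ confines $\supp(b)\cap\supp(e_0)$ to $F_1\times E_2\cup E_1\times F_2$, while $\tilde e$ vanishes on $F_1\times E_2\cup E_1\times(F_2\setminus\{x_2^*\})$; so the only possible surviving support of $\tilde e\cdot b$ on $\supp(e_0)$ lies in rows with $X_2=x_2^*$, and these rows are absent from $E^1=E_1^1\times E_2^1$ because $x_2^*\notin E_2^1$. Finally, $\tilde e(x_1^*,x_2^*)=\prod_{x_1\in F_1}(x_1^*-x_1)\prod_{x_2\in F_2\setminus\{x_2^*\}}(x_2^*-x_2)\neq 0$ since $x_1^*\notin F_1$, contradicting ${g_1'}^{x_1^*}(x_2^*)=0$ and completing the argument.
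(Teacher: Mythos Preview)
Your proof is correct and follows essentially the same approach as the paper: both arguments use that $\spn\cE\subseteq\spn\cE'$, that the while loop forces every root of $g_1^{x_1^*}$ to lie outside $E_2^1$, and that one may drop from $e_1$ any factor $(X_2-x_2)$ with $x_2\notin E_2^1$ while remaining in $\spn\cE'$. The only difference is organizational: you argue root by root via a case split and drop a single factor $(X_2-x_2^*)$, whereas the paper drops all factors with $x_2\in F_2\setminus E_2^1$ at once to obtain ${g_1'}^{x_1^*}\mid\prod_{x_2\in F_2\cap E_2^1}(X_2-x_2)$ and combines this with ${g_1'}^{x_1^*}\mid g_1^{x_1^*}$ to conclude directly.
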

  \begin{proof}
    We show the result for $i=1$; the proof for $i=2$ is analogous. As the sets $E_1',E_2'$ only lose elements throughout the execution of the algorithm, every $e(X)\in\bF_q[X]^{[0,s]^2}$ that satisfies~(\ref{eq:basiscond}) will also satisfy~(\ref{eq:basiscond2}), and hence $\spn(\cE)\subseteq\spn(\cE')$. Therefore for every $x_1\in E_1^1$, we have ${g_1'}^{x_1}\mid g_1^{x_1}$.

    By Claim~\ref{claim:gcddivprod}, there exists some subset $F_2^{x_1}\subseteq F_2$ such that
    \begin{equation*}
      g_1^{x_1}(X_2) = \prod_{x_2\in F_2^{x_1}}(X_2-x_2).
    \end{equation*}
    Meanwhile, reapplying the exact same argument as in Claim~\ref{claim:gcddivprod}, but now to ${g_1'}^{x_1}$ instead of $g_i^{x_1}$, gives that
    \begin{equation*}
      {g_1'}^{x_1}(X_2) \mid \prod_{x_2\in F_2\cap E_2^1}(X_2-x_2),
    \end{equation*}
    and therefore
    \begin{equation*}
      {g_1'}^{x_1}(X_2) \mid \prod_{x_2\in F_2^{x_1}\cap E_2^1}(X_2-x_2),
    \end{equation*}
    But for every $x_2\in F_2^{x_1}$, then $g_1^{x_1}(x_2)=0$, and hence the while loop in lines~\ref{li:while}-\ref{li:remove} must at some point remove either $x_1$ from $E_1'$, or remove $x_2$ from $E_2'$. Because by definition $x_1\in E_1^1$, it follows that $x_2$ must have been removed from $E_2'$, meaning that $x_2\notin E_2^1$. Therefore we have shown that $F_2^{x_1}\cap E_2^1=\emptyset$, so ${g_1'}^{x_1}=1$, as desired.
  \end{proof}

  \begin{claim}
    \label{claim:F1p}
    Define
    \begin{align*}
      F_1' &= \{x_1\in F_1\cap E_1^1:e_0(x_1,X_2)\nmid f_1(x_1,X_2)\} \\
      F_2' &= \{x_2\in F_2\cap E_2^1:e_0(X_1,x_2)\nmid f_2(X_1,x_2)\}.
    \end{align*}
    Then for $i\in[2]$, it holds for every $x_i\in F_i'$ that ${g_i'}^{x_i}\neq 1$.
  \end{claim}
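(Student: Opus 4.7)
I give the plan for $i=1$; the case $i=2$ is symmetric. Fix $x_1 \in F_1'$ and set $r(X_2) := \gcd(e_0(x_1, X_2),\, f_1(x_1, X_2))$ and $q(X_2) := e_0(x_1, X_2)/r(X_2)$ in $\bF_q[X_2]$. The defining property $e_0(x_1, X_2) \nmid f_1(x_1, X_2)$ of $F_1'$ gives $\deg q \geq 1$, and by construction $\gcd(q,\, f_1(x_1, X_2)/r) = 1$. The goal is to show that $q(X_2) \mid e(x_1, X_2)$ for every $e \in \cE'$; since ${g_1'}^{x_1}(X_2)$ is by definition the gcd of all these restrictions, this forces ${g_1'}^{x_1}$ to be divisible by the non-constant polynomial $q$, yielding ${g_1'}^{x_1} \neq 1$.

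Fix $e \in \cE'$. Because $e$ satisfies~(\ref{eq:basiscond2}), there exist polynomials $V \in \bF_q[X]^{[0, k_1+s) \times [0, n)}$ and $W \in \bF_q[X]^{[0, n) \times [0, k_2+s)}$, the unique representatives of a codeword of $C_1' \otimes \bF_q^{E_2} + \bF_q^{E_1} \otimes C_2'$, such that $\evl_E(V + W)$ agrees with $eb$ on $\supp(e_0) \cap E'$. Multiplying the global identity $\evl_E(e_0)\cdot b = \evl_E(f_1) + \evl_E(f_2)$ pointwise by $\evl_E(e)$ yields the scalar equation $e(x)(f_1(x) + f_2(x)) = e_0(x)(V + W)(x)$ at every $x \in \supp(e_0)\cap E'$. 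Restricting to the row $\{x_1\}\times T$ with $T := (E_2' \setminus F_2) \cap \supp(e_0(x_1, \cdot))$, on which $f_2(x_1, \cdot) = 0$, produces the single-variable identity
\begin{equation*}
  e(x_1, x_2)\,f_1(x_1, x_2) \;=\; e_0(x_1, x_2)\,(V + W)(x_1, x_2) \qquad \text{for all } x_2 \in T,
\end{equation*}
where the parameter bounds from Lemma~\ref{lem:algtech} give $|T| \geq n - 2s - |F_1| - 2|F_2|$.

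The heart of the argument is to upgrade this pointwise identity into a polynomial identity in $\bF_q[X_2]$; the obstruction is that $(V + W)(x_1, X_2)$ has a priori degree up to $n - 1$, so $|T|$ is a factor of $s$ short of forcing equality. The plan is to use constraints from the ``nice'' rows $x_1' \in E_1^1 \setminus F_1$ to tighten the effective degree of $(V + W)(x_1, X_2)$. At each such $x_1'$, the same row-restriction, now with $f_1(x_1', \cdot) = 0$, forces $(V + W)(x_1', X_2)$ to vanish on $T_{x_1'} := (E_2' \setminus F_2) \cap \supp(e_0(x_1', \cdot))$, a set also of size $\geq n - 2s - |F_1| - 2|F_2|$; the polynomial $(V + W)(x_1', X_2) \in \bF_q[X_2]^{[0,n)}$ is therefore the product of $\prod_{x_2 \in T_{x_1'}}(X_2 - x_2)$ with a residual factor of degree only $O(s + |F_1| + |F_2|)$. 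Because $V$ has columns in $\bF_q[X_1]^{[0, k_1+s)}$ and $W$ has rows in $\bF_q[X_2]^{[0, k_2+s)}$, and because the count of nice rows $|E_1^1 \setminus F_1|$ exceeds $k_1 + s$ by a wide margin under the parameter calibrations in Algorithm~\ref{alg:decoder}, columnwise polynomial interpolation across the nice rows (combined symmetrically with the row-structure of $W$) pins down $V$ and $W$ on row $x_1$ modulo only a low-degree correction. Substituting back, $(V+W)(x_1, X_2)$ inherits a high-degree common factor with the right-hand side of the row-$x_1$ identity, bringing the degree of its ``unforced'' part down to at most $k_2 + 2s + O(|F_1| + |F_2|)$, which by the parameter choices is strictly below $|T|$.

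With this degree bound, equality on $|T|$ points promotes the identity to an equality of polynomials in $\bF_q[X_2]$: $e(x_1, X_2)\,f_1(x_1, X_2) = e_0(x_1, X_2)\,(V + W)(x_1, X_2)$. Since $r(X_2) \mid e_0(x_1, X_2)$ and $r$ is non-vanishing on $T$, we may cancel $r$ to obtain $e(x_1, X_2)\cdot (f_1(x_1, X_2)/r) = q(X_2)\,(V + W)(x_1, X_2)$, from which $\gcd(q,\, f_1(x_1, X_2)/r) = 1$ yields $q(X_2) \mid e(x_1, X_2)$, completing the proof. The main obstacle is the degree-matching step of the previous paragraph: carefully tracking how the column rigidity of $V$ and the row rigidity of $W$, pinned down by the nice-row vanishing, propagate back to row $x_1$ to give a sufficiently tight effective degree bound on $(V+W)(x_1, X_2)$. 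The parameter choices $s = \lceil \rho \epsilon n/1000 \rceil$ and $|F_i| \leq d_0/\rho n$ from Lemma~\ref{lem:algtech} are precisely those that leave enough slack between this bound and $|T|$ for the polynomial lift to succeed.
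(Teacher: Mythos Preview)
Your overall architecture is right and matches the paper's: set $q(X_2)=e_0(x_1,X_2)/\gcd(e_0(x_1,X_2),f_1(x_1,X_2))$ (this is the paper's $h_1^{x_1}$), and show $q\mid e(x_1,X_2)$ for every $e\in\cE'$. The row-restriction identity $e(x_1,x_2)f_1(x_1,x_2)=e_0(x_1,x_2)(V+W)(x_1,x_2)$ on $T$ is also correct, as is your identification of the crux: you must bound the effective $X_2$-degree of $(V+W)(x_1,X_2)$ so that equality on $T$ lifts to a polynomial identity.

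The gap is precisely at that crux. Your ``columnwise interpolation across nice rows'' does not deliver the needed bound. Knowing that $(V+W)(x_1',X_2)$ vanishes on $T_{x_1'}$ only says it factors as $\prod_{x_2\in T_{x_1'}}(X_2-x_2)\cdot r_{x_1'}(X_2)$ with $\deg r_{x_1'}\leq O(s)$; the high-$X_2$-degree part of $V$ at row $x_1'$ is then determined by $r_{x_1'}$ together with the (row-dependent) root set $T_{x_1'}$, neither of which you control. So the column constraint $V(\cdot,x_2)\in\bF_q[X_1]^{[0,k_1+s)}$ cannot be propagated back to row $x_1$ without first pinning down these residuals, and the ``symmetric'' constraint on $W$ does not close the loop either (on a fixed column $x_2$, the polynomial $(V+W)(X_1,x_2)$ has $X_1$-degree up to $n-1$ from $W$, so vanishing on $\approx n-O(s)$ nice rows is again insufficient). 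In short, the nice-row vanishing tells you $v=V+W$ is a \emph{low-weight} element of $C_1'\boxplus C_2'$, but converting low weight into a low-degree row restriction is exactly the content of product-expansion, which your argument does not invoke.

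The paper fills this gap directly: it first bounds $|v|\leq 9ns$ (since $eb$ is supported in few rows/columns on $\supp(e_0)\cap E^1$, plus the small complement), then applies Theorem~\ref{thm:pe2RS} to $v\in C_1'\boxplus C_2'$ to obtain a decomposition $v=\evl_E(u_1)+\evl_E(u_2)$ with $|U_2|\leq 9s/\rho$. This immediately gives $(u_1+u_2)(x_1,x_2)=u_1(x_1,x_2)$ for $x_2\notin U_2$, and $u_1(x_1,X_2)$ has degree $<k_2+s$. Restricting to $S=\supp(e_0(x_1,\cdot))\cap(E_2^1\setminus(U_2\cup F_2))$ then yields a genuine low-degree identity, from which $q\mid e(x_1,X_2)$ follows by the coprimality argument you sketched. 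So the missing idea is exactly the application of product-expansion (Polishchuk--Spielman) to the extended codeword $v$; your interpolation scheme is trying to reprove this without saying so, and as written does not succeed.
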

  \begin{proof}
    We show the result for $i=1$; the proof for $i=2$ is analogous. Fix some $x_1\in F_1'$. Then by definition $e_0(x_1,X_2)\nmid f_1(x_1,X_2)$, or equivalently,
    \begin{equation}
      \label{eq:h1def}
      h_1^{x_1}(X_2) := \frac{e_0(x_1,X_2)}{\gcd(e_0(x_1,X_2),f_1(x_1,X_2))} \neq 1.
    \end{equation}
    We now argue that $h_1^{x_1}|{g_1'}^{x_1}$, which in turn implies that ${g_1'}^{x_1}\neq 1$, as desired. Assume for a contradiction that $h_1^{x_1}\nmid {g_1'}^{x_1}$. Then there exists some $e(X)\in\cE'$ such that
    \begin{equation}
      \label{eq:hndiv}
      h_1^{x_1}(X_2)\nmid e(x_1,X_2).
    \end{equation}
    By definition $e(X)\in\bF_q[X]^{[0,s]^2}$ satisfies~(\ref{eq:basiscond2}), meaning that $\evl_{\supp(e_0)\cap E^1}(ec)\in(C_1'\boxplus C_2')|_{\supp(e_0)\cap E^1}$.
    But by definition $\evl_{\supp(e_0)\cap E^1}(ea) \in (C_1'\boxplus C_2')|_{\supp(e_0)\cap E^1}$, so as $b=c-a$, we also have $\evl_{\supp(e_0)\cap E^1}(eb) \in (C_1'\boxplus C_2')|_{\supp(e_0)\cap E^1}$.

    Therefore there exists some $v \in C_1'\boxplus C_2'$ such that
    \begin{equation}
      \label{eq:ebv}
      \evl_{\supp(e_0)\cap E^1}(eb)=v|_{\supp(e_0)\cap E^1}.
    \end{equation}
    By definition
    \begin{align*}
      |v|
      &\leq |\evl_{\supp(e_0)\cap E^1}(eb)|+|E|-|\supp(e_0)\cap E^1| \\
      &\leq |\evl_E(f_1)|+|\evl_E(f_2)| + |E\setminus E^1| + |bE^1\setminus\supp(e_0)| \\
      &\leq n|F_1|+n|F_2| + 2n(s+|F_1|+|F_2|) + ns \\
      &= 3n(|F_1|+|F_2|+s) \\
      &\leq 9ns.
    \end{align*}
    Note that the third inequality above holds because $|E_i\setminus E_i^1|\leq s+|F_1|+|F_2|$ by Claim~\ref{claim:E0size} and Claim~\ref{claim:removebound}, and because $E^1\setminus\supp(e_0)|\leq ns$ as $e_0(x_1',X_2)$ by definition equals a nonzero polynomial of degree $\leq s$ for each $x_1'\in E_1^1$. The fourth inequality above holds because $|F_1|,|F_2|\leq d_0/\rho n\leq s$. Now it follows by Theorem~\ref{thm:pe2RS} that we can express
    \begin{equation*}
      v = \evl_E(u_1)+\evl_E(u_2)
    \end{equation*}
    for some $u_1(X)\in\bF_q[X]^{[0,n)\times[0,k_2+s)}$ and $u_2(X)\in\bF_q[X]^{[0,k_1+s)\times[0,n)}$ such that the sets
    \begin{align*}
      U_1 &= \{x_1\in E_1:u_1(x_1,X_2)\neq 0\in\bF_q[X_2]\} \\
      U_2 &= \{x_2\in E_2:u_2(X_1,x_2)\neq 0\in\bF_q[X_1]\}
    \end{align*}
    have size $|U_1|,|U_2|\leq 9ns/\rho n=9s/\rho$.

    Thus by~(\ref{eq:ebv})
    \begin{align*}
      \evl_{\supp(e_0)\cap(\{x_1\}\times(E_2^1\setminus U_2))}(eb)
      &= \evl_{\supp(e_0)\cap(\{x_1\}\times(E_2^1\setminus U_2))}(u_1).
    \end{align*}
    Meanwhile, (\ref{eq:e0bdecomp}) implies that $b$ agrees with $f_1/e_0$ at all points $(x_1,x_2)\in E\cap\supp(e_0)$ with $x_2\notin F_2$, which implies that
    \begin{align*}
      \evl_{\supp(e_0)\cap(\{x_1\}\times(E_2^1\setminus F_2))}(eb)
      &= \evl_{\supp(e_0)\cap(\{x_1\}\times(E_2^1\setminus F_2))}(ef_1/e_0)
    \end{align*}
    (where here we extend our $\evl$ notation to allow for rational functions that do not vanish within the evaluation set), and therefore
    \begin{align*}
      \evl_{\supp(e_0)\cap(\{x_1\}\times(E_2^1\setminus(U_2\cup F_2)))}(e_0u_1)
      &= \evl_{\supp(e_0)\cap(\{x_1\}\times(E_2^1\setminus(U_2\cup F_2)))}(ef_1).
    \end{align*}
    Dividing both $e_0(x_1,X_2)$ and $f_1(x_1,X_2)$ by $g_0(X_2):=\gcd(e_0(x_1,X_2),f_1(x_1,X_2))$, and recalling the definition of $h_1^{x_1}$ from~(\ref{eq:h1def}), then letting $S=\supp(e_0)|_{\{x_1\}\times E_2^1}\cap(E_2^1\setminus(U_2\cup F_2))$, it follows that
    \begin{align}
      \label{eq:polysdiff}
      \evl_S(h_1^{x_1}(X_2)u_1(x_1,X_2))
      &= \evl_S\left(e(x_1,X_2)\cdot\frac{f_1(x_1,X_2)}{g_0(X_2)}\right),
    \end{align}
    where $h_1^{x_1}(X_2)$ and $f_1(x_1,X_2)/g_0(X_2)$ by definition share no common factors. But by~(\ref{eq:h1def}) and~(\ref{eq:hndiv}), $h_1^{x_1}(X_2)\neq 1$ and $h_1^{x_1}(X_2)\nmid e(x_1,X_2)$, so it follows that $h_1^{x_1}(X_2)$ does not divide the polynomial on the RHS of~(\ref{eq:polysdiff}), and hence the polynomials on the LHS and RHS of~(\ref{eq:polysdiff}) are not equal. Therefore for~(\ref{eq:polysdiff}) to hold, we must have
    \begin{align}
      \label{eq:degvssize}
      |S| &\leq \deg\left(h_1^{x_1}(X_2)u_1(x_1,X_2) - e(x_1,X_2)\cdot\frac{f_1(x_1,X_2)}{g_0(X_2)}\right).
    \end{align}
    But by definition
    \begin{align*}
      |S|
      &\geq |E_2^1|-\deg(e_0(x_1,X_2))-|U_2|-|F_2| \\
      &\geq (n-s) - (|F_1|+|F_2|) - s - 9s/\rho - |F_2| \\
      &\geq n-5s-9s/\rho \\
      &> (1-\epsilon/2)n,
    \end{align*}
    where the third inequality holds because $|F_1|,|F_2|\leq s$ and $|U_2|\leq 9s/\rho$, as shown previously, and the fourth inequality holds because $s=\lceil\rho\epsilon n/1000\rceil\leq\rho\epsilon n/500$ as $d_0\geq 1$. Meanwhile,
    \begin{align*}
      \hspace{1em}&\hspace{-1em}\deg\left(h_1^{x_1}(X_2)u_1(x_1,X_2) - e(x_1,X_2)\cdot\frac{f_1(x_1,X_2)}{g_0(X_2)}\right) \\
                  &\leq \max\left\{\deg(e_0(x_1,X_2))+\deg(u_1(x_1,X_2)),\; \deg(e(x_1,X_2))+\deg(f_1(x_1,X_2))\right\} \\
                  &\leq k_2+2s \\
                  &< (1-\epsilon/2)n,
    \end{align*}
    where the final inequality again applies the fact that $s\leq\rho\epsilon n/500$. But the above two inequalities contradict~(\ref{eq:degvssize}). Thus the original assumption that $h_1^{x_1}\nmid{g_1'}^{x_1}$ was false, as desired.
  \end{proof}

  We now show that there exists a valid choice of $c'$ in line~\ref{li:return}. For $i\in[2]$, let $E_i^2$ equal the set $E_i'$ after the execution of line~\ref{li:remove2}. By Claim~\ref{claim:F1p}, $F_i'\cap E_i^2=\emptyset$, so there exist $f_1'(X)\in\bF_q[X]^{[0,n)\times[0,k_2+s)}$ and $f_2'(X)\in\bF_q[X]^{[0,k_1+s)\times[0,n)}$ such that
  \begin{align*}
    f_1'(x_1,X_2) &= \frac{f_1(x_1,X_2)}{e_0(x_1,X_2)} \hspace{1em}\forall x_1\in E_1^2 \\
    f_2'(X_1,x_2) &= \frac{f_2(X_1,x_2)}{e_0(X_1,x_2)} \hspace{1em}\forall x_2\in E_2^2,
  \end{align*}
  where here we recall that $f_1(x_1,X_2)$ and $f_2(X_1,x_2)$ vanish for every $x_1\in E_1\setminus F_1$ and $x_2\in E_2\setminus F_2$, respectively. Then define
  \begin{equation*}
    c_0' = a+\evl_{E^2}(f_1'+f_2') \in C_1'\boxplus C_2'.
  \end{equation*}

  \begin{claim}
    We have $c_0'|_{\supp(e_0)\cap E^2}=c|_{\supp(e_0)\cap E^2}$; that is, $c_0'$ is a valid choice of $c'$ in line~\ref{li:return}.
  \end{claim}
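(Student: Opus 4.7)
The plan is to reduce the statement to showing that $(f_1' + f_2')(x_1, x_2) = b(x_1, x_2)$ for every $(x_1, x_2) \in \supp(e_0) \cap E^2$, using that $c = a + b$ and $c_0' = a + \evl_{E^2}(f_1' + f_2')$. Equation~(\ref{eq:e0bdecomp}) already hands us the pointwise identity $e_0(x) b(x) = f_1(x) + f_2(x)$ at every $x \in E$, so the task boils down to unwinding the definitions of $f_1'$ and $f_2'$ to verify that $e_0(x_1, x_2) f_i'(x_1, x_2) = f_i(x_1, x_2)$ at every relevant point.

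For $i = 1$, the plan is to argue at the polynomial level: the defining property $f_1'(x_1, X_2) = f_1(x_1, X_2)/e_0(x_1, X_2)$ for $x_1 \in E_1^2$ gives, after clearing the denominator, the polynomial identity $e_0(x_1, X_2) f_1'(x_1, X_2) = f_1(x_1, X_2)$ in $\bF_q[X_2]$, which specializes to $e_0(x_1, x_2) f_1'(x_1, x_2) = f_1(x_1, x_2)$ at every $x_2 \in E_2$. The analogous argument will handle $f_2'$ for $x_2 \in E_2^2$. The existence of the quotient polynomials $f_1', f_2'$ is exactly what the preceding paragraph of the proof established: because Claim~\ref{claim:F1p} combined with line~\ref{li:remove2} yields $F_i' \cap E_i^2 = \emptyset$, the divisibility $e_0(x_1, X_2) \mid f_1(x_1, X_2)$ holds for every $x_1 \in E_1^2$ (trivially if $x_1 \notin F_1$, and by definition of $F_1'$ otherwise), and symmetrically for $f_2$.

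Combining these, at any $(x_1, x_2) \in \supp(e_0) \cap E^2$ we have $e_0(x_1, x_2) \neq 0$ and
\begin{equation*}
e_0(x_1, x_2) \bigl( f_1'(x_1, x_2) + f_2'(x_1, x_2) \bigr) = f_1(x_1, x_2) + f_2(x_1, x_2) = e_0(x_1, x_2) b(x_1, x_2),
\end{equation*}
so dividing through by $e_0(x_1, x_2)$ yields $(f_1' + f_2')(x_1, x_2) = b(x_1, x_2)$, which is the required equality.

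The only subtle point — and what I expect to be the main obstacle — is recognizing that the cancellation must flow through the \emph{polynomial-level} identity $e_0(x_1, X_2) f_1'(x_1, X_2) = f_1(x_1, X_2)$ in $\bF_q[X_2]$, rather than any attempt to cancel $e_0(x_1, x_2)$ pointwise at an arbitrary evaluation point (which would be illegitimate at zeros of $e_0$ inside a row). Once the argument is phrased at the polynomial level on each row of $E_1^2$ and column of $E_2^2$, the restriction to $\supp(e_0) \cap E^2$ only enters at the very last step, where it furnishes the nonzero $e_0(x_1, x_2)$ needed to divide.
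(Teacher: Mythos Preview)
Your proposal is correct and follows essentially the same approach as the paper's proof. The paper's proof is a terse one-liner that writes $b(x)=(f_1(x)+f_2(x))/e_0(x)=f_1'(x)+f_2'(x)$ at each $x\in\supp(e_0)\cap E^2$ and then concludes $c(x)=a(x)+b(x)=c_0'(x)$; your version spells out the polynomial-level identity $e_0(x_1,X_2)f_1'(x_1,X_2)=f_1(x_1,X_2)$ (and its analogue for $f_2'$) that justifies the second equality, which is exactly the point the paper leaves implicit.
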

  \begin{proof}
    By~(\ref{eq:e0bdecomp}) along with the definition of $f_1',f_2'$, at every point $x\in\supp(e_0)\cap E^2$ we must have $b(x)=(f_1(x)+f_2(x))/e_0(x)=f_1'(x)+f_2'(x)$, and hence $c(x)=a(x)+b(x)=a(x)+f_1'(x)+f_2'(x)=c_0'(x)$, as desired.
  \end{proof}

  The above claims together imply that if $d_0\geq 1$ and $d(c,C_1\boxplus C_2)\leq d_0$, then Algorithm~\ref{alg:decoder} successfully outputs some $c'\in C_1'\boxplus C_2'$, which satisfies $c'|_{\supp(e_0)\cap E^2}=c|_{\supp(e_0)\cap E^2}$. By Claim~\ref{claim:gcd1}, line~\ref{li:remove2} can only remove points in $F_i$ from $E_i'$, so
  \begin{equation*}
    |E_i^2| \geq |E_i^1|-|F_i| \geq |E|-(s+|F_1|+|F_2|+|F_i|) \geq n-4s.
  \end{equation*}
  where the first inequality above holds by Claim~\ref{claim:E0size} and Claim~\ref{claim:removebound}, and the second inequality holds because $|F_1|,|F_2|\leq d_0/\rho n\leq s$. Therefore because for every $x_1\in E_1^2$, the polynomial $e_0(x_1,X_2)$ is nonzero and has degree $\leq s$, and hence has $\leq s$ roots, we have
  \begin{equation*}
    |\supp(e_0)\cap E^2| \geq (n-4s)^2-(n-4s)s \geq n-9sn \geq \left(1-\frac{\rho\epsilon}{50}\right)\cdot n^2.
  \end{equation*}
  Thus
  \begin{equation*}
    |c'-c| \leq n^2-|\supp(e_0)\cap E^2| \leq \frac{\rho\epsilon}{50}\cdot n^2,
  \end{equation*}
  as desired.
\end{proof}

\begin{proof}[Proof of Lemma~\ref{lem:algdeg}]
  Let $b'=c'-a\in C_1'\boxplus C_2'$, so that $|b'|\leq|c'-c|+|c-a|\leq\rho\epsilon n^2/25$. Then by Theorem~\ref{thm:pe2RS}, we may express
  \begin{equation*}
    b'=\evl_E(h_1(X)+h_2(X))
  \end{equation*}
  for some $h_1(X)\in\bF_q[X]^{[0,n)\times[0,k_2+s)}$ and $h_2(X)\in\bF_q[X]^{[0,k_1+s)\times[0,n)}$ such that the sets
  \begin{align*}
    H_1 &= \{x_1\in E_1:h_1(x_1,X_2)\neq 0\in\bF_q[X_2]\} \\
    H_2 &= \{x_2\in E_2:h_2(X_1,x_2)\neq 0\in\bF_q[X_1]\}
  \end{align*}
  have size $|H_1|,|H_2|\leq\epsilon n/25$.
  
  Define $f(X)=\sum_jf_jX^j$ as in line~\ref{li:f} of Algorithm~\ref{alg:decoderclose}. By definition $f(X)\in\bF_q[X]^{([0,k_1+s)\times[0,n))\cup([0,n)\times[0,k_2+s))}$, and furthermore $\evl_E(f-h_1-h_2)=c'-b'=a\in C_1\boxplus C_2$, so $f-h_1-h_2\in\bF_q[X]^{([0,k_1)\times[0,n))\cup([0,n)\times[0,k_2))}$. Therefore letting $h_i(X)=\sum_{j=(j_1,j_2)}h_{i,j}X^j$, then for every $j_1\in\{k_1,\dots,k_1+s-1\}$ we have
  \begin{align*}
    \left(\sum_{j_2\in[n]}f_{(j_1,j_2)}X_2^{j_2}\right) - \left(\sum_{j_2\in[n]}h_{2,(j_1,j_2)}X_2^{j_2}\right)
    &\in \bF_q[X_2]^{[0,k_2+s)}.
  \end{align*}
  Hence
  \begin{align*}
    \evl_{E_2}\left(\sum_{j_2\in[n]}f_{(j_1,j_2)}X_2^{j_2}\right) - \evl_{E_2}\left(\sum_{j_2\in[n]}h_{2,(j_1,j_2)}X_2^{j_2}\right)
    &\in \evl_{E_2}(\bF_q[X_2]^{[0,k_2+s)}) = C_2'.
  \end{align*}
  For $x_2\in E_2$, by definition $\sum_{j_2\in[n]}h_{2,(j_1,j_2)}x_2^{j_2}$ is precisely the degree-$j_1$ coefficient of the polynomial $h_2(X_1,x_2)$. But by definition $h_2(X_1,x_2)=0$ for every $x_2\in E_2\setminus H_2$, and hence $\evl_{E_2}(\sum_{j_2\in[n]}h_{2,(j_1,j_2)}X_2^{j_2})$ is supported inside $H_2$, and therefore has weight at most $|H_2|\leq\epsilon n/25$. Therefore $\evl_{E_2}(\sum_{j_2\in[n]}h_{2,(j_1,j_2)}X_2^{j_2})$ is a valid choice of $r_1^{j_1}$ in line~\ref{li:r1}. Furthermore, because $C_2'$ is a Reed-Solomon code of distance $\geq\epsilon n/2$, this choice of $r_1^{j_1}$ is unique.

  Thus we have shown that for every $j_1\in\{k_1,\dots,k_1+s-1\}$, line~\ref{li:r1} computes
  \begin{align*}
    r_1^{j_1} &= \evl_{E_2}\left(\sum_{j_2\in[n]}h_{2,(j_1,j_2)}X_2^{j_2}\right),
  \end{align*}
  which is supported inside $H_2$. By analagous reasoning, for every $j_2\in\{k_2,\dots,k_2+s-1\}$, line~\ref{li:r2} computes
  \begin{align*}
    r_2^{j_2} &= \evl_{E_1}\left(\sum_{j_1\in[n]}h_{1,(j_1,j_2)}X_1^{j_1}\right),
  \end{align*}
  which is supported inside $H_2$. Thus the value $c''$ returned in line~\ref{li:cpp} by definition agrees with $c'$, and therefore also with $a=c'-b'$, inside $(E_1\setminus H_1)\times(E_2\setminus H_2)$. Furthermore, by definition 
  \begin{align*}
    c''
    &= \evl_E\left(f(X)-h_1(X)-h_2(X) + \sum_{j_1=0}^{n-1}\sum_{j_2=0}^{k_2-1}h_{1,j=(j_1,j_2)}X^j + \sum_{j_1=0}^{k_1-1}\sum_{j_2=0}^{n-1}h_{2,j=(j_1,j_2)}X^j\right) \\
    &= a + \evl_E\left(\sum_{j_1=0}^{n-1}\sum_{j_2=0}^{k_2-1}h_{1,j=(j_1,j_2)}X^j + \sum_{j_1=0}^{k_1-1}\sum_{j_2=0}^{n-1}h_{2,j=(j_1,j_2)}X^j\right) \\
    &\in C_1\boxplus C_2,
  \end{align*}
  as desired.
\end{proof}

\begin{proof}[Proof of Lemma~\ref{lem:algfin}]
  First note that by definition, Algorithm~\ref{alg:decoderfinish} only adds elements of $C_1\boxplus C_2$ to $y$, so by definition its output $y'$ lies in $y+C_1\boxplus C_2$. The desired bound on $|y'|$ will follow from the following three claims.

  \begin{claim}
    \label{claim:ystruct}
    Define
    \begin{align*}
      H_1' &= H_1\cup\left\{x_1\in E_1:|b|_{\{x_1\}\times E_2}|\geq\frac{\epsilon n}{100}\right\} \\
      H_2' &= H_2\cup\left\{x_2\in E_2:|b|_{E_1\times\{x_2\}}|\geq\frac{\epsilon n}{100}\right\}.
    \end{align*}
    Then $|H_1'|,|H_2'|\leq \epsilon n/20$. Furthermore, at every point during Algorithm~\ref{alg:decoderfinish}, $y-b\in C_1\boxplus C_2$ satisfies $(y-b)|_{(E_1\setminus H_1')\times(E_2\setminus H_2')}=0$.
  \end{claim}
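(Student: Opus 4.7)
The plan is to prove the bound on $|H_1'|$ and $|H_2'|$ by a straightforward counting argument, and then establish the structural invariant on $y-b$ by induction on the number of iterations of the while loop.

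For the size bound, by definition $|H_1'| \leq |H_1| + |\{x_1 : |b|_{\{x_1\}\times E_2}| \geq \epsilon n/100\}|$. The first term is at most $\epsilon n/25$ by hypothesis, and the second is at most $|b|/(\epsilon n/100) = 100|b|/(\epsilon n) \leq 100 d_0/(\epsilon n)$. Plugging in $d_0 = (\rho\epsilon n/1000)^2$ and using $\rho \leq 1$, this is at most $\epsilon n/10^4$, so $|H_1'| \leq \epsilon n/25 + \epsilon n/10^4 \leq \epsilon n/20$. The bound for $|H_2'|$ is identical.

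For the invariant, note first that initially $y - b = a'$, which lies in $C_1\boxplus C_2$ and vanishes outside $H_1\times E_2 \cup E_1\times H_2 \subseteq H_1'\times E_2 \cup E_1\times H_2'$. Also, throughout the algorithm $b$ never changes, and $y$ is only updated by subtracting elements of $C_1\boxplus C_2$ (since $c_1\otimes\1_{x_2} \in C_1\boxplus C_2$ and $\1_{x_1}\otimes c_2 \in C_1\boxplus C_2$), so $y-b$ remains in $C_1\boxplus C_2$. The only nontrivial content of the invariant is therefore that each update $c_1\otimes\1_{x_2}$ (resp.\ $\1_{x_1}\otimes c_2$) has support contained in $E_1\times H_2' \cup H_1'\times E_2$; equivalently, that whenever line~\ref{li:addc1} triggers we have $x_2 \in H_2'$, and whenever line~\ref{li:addc2} triggers we have $x_1 \in H_1'$.

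The main (and only) nontrivial step is this last assertion, which I would prove by contradiction. Suppose line~\ref{li:addc1} triggers with some $x_2 \notin H_2'$. Then on the one hand $|b|_{E_1\times\{x_2\}}| < \epsilon n/100$ by the definition of $H_2'$, and on the other hand $(y-b)|_{E_1\times\{x_2\}}$ is supported inside $H_1'$ by the invariant (since $x_2 \notin H_2'$), so it has weight at most $|H_1'| \leq \epsilon n/20$. By the triangle inequality $|y|_{E_1\times\{x_2\}}| \leq \epsilon n/20 + \epsilon n/100 < \epsilon n/2$. But since $k_1 + k_2 \leq (1-\epsilon)n$ forces $k_1 \leq (1-\epsilon)n$, the MDS Reed-Solomon code $C_1$ has distance $\geq \epsilon n$, so any $c_1 \in C_1 \setminus \{0\}$ satisfies $|c_1| \geq \epsilon n$, giving
\begin{equation*}
  |c_1 - y|_{E_1\times\{x_2\}}| \;\geq\; |c_1| - |y|_{E_1\times\{x_2\}}| \;>\; \epsilon n - \frac{\epsilon n}{20} - \frac{\epsilon n}{100} \;>\; \frac{\epsilon n}{2},
\end{equation*}
contradicting the hypothesis of line~\ref{li:ifc1}. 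The argument for $x_1 \in H_1'$ when line~\ref{li:addc2} triggers is completely symmetric, using the distance bound for $C_2$. This closes the induction and proves the claim.
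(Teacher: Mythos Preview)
The proposal is correct and follows essentially the same approach as the paper's proof: a counting argument for the size bound, and an inductive/first-failure contradiction argument showing that any update triggering with $x_2\notin H_2'$ (resp.\ $x_1\notin H_1'$) would force $|c_1 - y|_{E_1\times\{x_2\}}|$ to be too large. Your argument uses the reverse triangle inequality to lower-bound $|c_1 - y|_{E_1\times\{x_2\}}|$, while the paper equivalently uses the triangle inequality to upper-bound $|c_1|$; the constants differ slightly but both work.
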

  \begin{proof}
    For $i\in[2]$, by definiton
    \begin{equation*}
      |H_i'| \leq |H_i|+\frac{|b|}{\epsilon n/100} \leq \frac{\epsilon n}{25}+\frac{d_0}{\epsilon n/100} \leq \frac{\epsilon n}{20}.
    \end{equation*}
    Thus it remains to be shown that at every point in the execution of Algorithm~\ref{alg:decoderfinish}, $(y-b)|_{(E_1\setminus H_1')\times(E_2\setminus H_2')}=0$. Assume for a contradiction that this statement does not hold, and let $y_0$ be the value of $y$ directly before the first moment that either line~\ref{li:addc1} or line~\ref{li:addc2} adds to $y$ some $c_1\otimes\1_{x_2}$ or $\1_{x_1}\otimes c_2$, respectively, that does not vanish inside $(E_1\setminus H_1')\times(E_2\setminus H_2')$. Specifically assume that line~\ref{li:addc1} adds such a $c_1\otimes\1_{x_2}$; the argument for lin~\ref{li:addc2} is analogous. Then $|c_1-y_0|_{E_1\times\{x_2\}}|<\epsilon n/2$. By assumption $(c_1\otimes\1_{x_2})|_{(E_1\setminus H_1')\times(E_2\setminus H_2')}\neq 0$, and therefore $x_2\in E_2\setminus H_2'$. Thus $(y_0-b)|_{(E_1\setminus H_1')\times\{x_2\}}=0$ and $|b|_{E_1\times\{x_2\}}|\leq\epsilon n/100$, so $y_0|_{E_1\times\{x_2\}}\leq|H_1'|+\epsilon n/100\leq\epsilon n/10$. Therefore
    \begin{equation*}
      |c_1| \leq |c_1-y_0|_{E_1\times\{x_2\}}|+|y_0|_{E_1\times\{x_2\}}| < \frac{\epsilon n}{2}+\frac{\epsilon n}{10} < \epsilon n,
    \end{equation*}
    which contradicts the fact that by definition $c_1\in C_1\setminus\{0\}$ and thus $|c_1|\geq\epsilon n$, as $C_1$ is a code of distance $\geq\epsilon n$. Therefore the assumption that we have $(y-b)|_{(E_1\setminus H_1')\times(E_2\setminus H_2')}=0$ at some point in Algorithm~\ref{alg:decoderfinish} was false, as desired.
  \end{proof}

  \begin{claim}
    \label{claim:ysmall}
    The while loop in Algorithm~\ref{alg:decoderfinish} can only terminate when $|y|\leq 8|b|/\epsilon$.
  \end{claim}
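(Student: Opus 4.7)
The plan is to combine the structural constraint on $y_0 := y - b$ from Claim~\ref{claim:ystruct} with product-expansion of the Reed-Solomon pair $(C_1, C_2)$, and then invoke the loop termination condition at individual columns. Assume the while loop has terminated. Claim~\ref{claim:ystruct} gives $y_0 \in C_1 \boxplus C_2$ with $\supp(y_0) \subseteq H_1' \times E_2 \cup E_1 \times H_2'$, so $|y_0| \leq n(|H_1'| + |H_2'|) \leq \epsilon n^2/10$. Since $k_1 + k_2 \leq (1 - \epsilon)n$, Theorem~\ref{thm:pe2RS} guarantees $(C_1, C_2)$ has product-expansion at least $\rho = \rho(\epsilon/2) > 0$, so we may decompose $y_0 = M_1 + M_2$ with $M_1 \in C_1 \otimes \bF_q^n$ and $M_2 \in \bF_q^n \otimes C_2$ satisfying $n(|S_1| + |S_2|) \leq |y_0|/\rho$, where $S_1$ and $S_2$ denote the sets of nonzero columns of $M_1$ and nonzero rows of $M_2$ respectively.

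Next, I would show that we may arrange $S_1 \subseteq H_2'$ and $S_2 \subseteq H_1'$ by combining the strip constraint with minimum-distance lower bounds on the local codes. For any $x_2 \notin H_2'$, the column $y_0|_{E_1 \times \{x_2\}}$ is supported in $H_1'$ with weight $\leq |H_1'| \leq \epsilon n/20$; if $M_1|_{E_1 \times \{x_2\}}$ were a nonzero $C_1$-codeword it would have weight $\geq \epsilon n$, forcing $M_2|_{E_1 \times \{x_2\}}$ to have weight $\geq 19\epsilon n/20$, which together with the symmetric statement for $S_2 \not\subseteq H_1'$ conflicts with the product-expansion bound $|S_1| + |S_2| \leq |y_0|/(\rho n) \leq \epsilon n/(10\rho)$. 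If the value of $\rho$ is too small to close this argument directly, one can instead modify any initial decomposition by an element of $C_1 \otimes C_2$ via Lemma~\ref{lem:homvanexp}, using that any $c \in C_1 \otimes C_2$ vanishing on $(E_1 \setminus H_1') \times (E_2 \setminus H_2')$ must itself be zero because each Reed-Solomon factor is MDS and $|E_1 \setminus H_1'| > k_1$ (respectively $|E_2 \setminus H_2'| > k_2$).

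With $S_1 \subseteq H_2'$ and $S_2 \subseteq H_1'$ in hand, I would invoke loop termination. For each $x_2 \in S_1$, termination of line~\ref{li:ifc1} applied with $c_1 := M_1|_{E_1 \times \{x_2\}}$ gives $|M_2|_{E_1 \times \{x_2\}} + b|_{E_1 \times \{x_2\}}| \geq \epsilon n/2$, hence $|M_2|_{E_1 \times \{x_2\}}| + |b|_{E_1 \times \{x_2\}}| \geq \epsilon n/2$. Using $|M_2|_{E_1 \times \{x_2\}}| \leq |S_2| \leq \epsilon n/20$, we obtain $|b|_{E_1 \times \{x_2\}}| \geq 9\epsilon n/20$; summing over $x_2 \in S_1$ gives $|S_1| \leq 20|b|/(9\epsilon n)$, and symmetrically $|S_2| \leq 20|b|/(9\epsilon n)$. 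Hence $|y| \leq |M_1| + |M_2| + |b| \leq n(|S_1| + |S_2|) + |b| \leq 40|b|/(9\epsilon) + |b| \leq 8|b|/\epsilon$ for $\epsilon \leq 1$. The main obstacle is justifying the containments $S_1 \subseteq H_2'$ and $S_2 \subseteq H_1'$ in a way that does not rely on $\rho$ being large, which is why the fallback using the MDS property of Reed-Solomon codes together with Lemma~\ref{lem:homvanexp} is essential; a secondary subtlety is verifying the exact constant $8$, which is obtained only after absorbing the additive $+|b|$ term into the $40/(9\epsilon)$ leading factor using $\epsilon \leq 1$.
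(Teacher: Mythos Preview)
Your approach is essentially the same as the paper's, though you take an unnecessary detour through product-expansion before arriving at the right decomposition. The paper directly asserts the existence of a decomposition $y-b=\evl_E(h_1')+\evl_E(h_2')$ with $\evl_E(h_1')\in\bF_q^{E_1}\otimes C_2$ supported on rows in $H_1'$ and $\evl_E(h_2')\in C_1\otimes\bF_q^{E_2}$ supported on columns in $H_2'$ --- exactly your containments $S_2\subseteq H_1'$, $S_1\subseteq H_2'$ --- and then uses the loop-termination condition to confine these supports further. The paper phrases this last step by defining $H_i''=\{x_i:|b|_{\text{row/col}}|\geq\epsilon n/3\}$ and showing the supports land there; your summing argument bounding $|S_i|\leq 20|b|/(9\epsilon n)$ is an equivalent reformulation with slightly different constants.

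Your fallback justification for the canonical decomposition is incomplete as written. You cite only injectivity (``any $c\in C_1\otimes C_2$ vanishing on the large rectangle is zero''), but constructing the correct $c$ to subtract requires also knowing that the restriction $C_1\otimes C_2\to(C_1|_{E_1\setminus H_1'})\otimes(C_2|_{E_2\setminus H_2'})$ is \emph{surjective} (which follows from MDS since $|E_i\setminus H_i'|\geq k_i$) and that $M_1|_{(E_1\setminus H_1')\times(E_2\setminus H_2')}$ lies in this tensor product (which it does, because it equals $-M_2|_{(E_1\setminus H_1')\times(E_2\setminus H_2')}$ and hence has columns in $C_1|_{E_1\setminus H_1'}$ and rows in $C_2|_{E_2\setminus H_2'}$). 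The paper also glosses over this, simply citing Claim~\ref{claim:ystruct}. Your initial product-expansion step is never actually used once you invoke the MDS fallback and can be dropped entirely.
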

  \begin{proof}
    Fix the value of $y$ at be beginning of some iteration of the while loop. By Claim~\ref{claim:ystruct}, we may write
    \begin{equation}
      \label{eq:ydecomp}
      y = \evl_E(h_1'+h_2')+b
    \end{equation}
    for some $h_1'(X)\in\bF_q[X]^{[0,n)\times[0,k_2)}$ and $h_2'(X)\in\bF_q[X]^{[0,k_1)\times[0,n)}$ such that $h_1'(x_1,X_2)=0$ for every $x_1\in E_1\setminus H_1'$, and $h_2'(X_1,x_2)=0$ for every $x_2\in E_2\setminus H_2'$. Let
    \begin{align*}
      H_1'' &= \left\{x_1\in E_1:|b|_{\{x_1\}\times E_2}|\geq\frac{\epsilon n}{3}\right\} \\
      H_2'' &= \left\{x_2\in E_2:|b|_{E_1\times\{x_2\}}|\geq\frac{\epsilon n}{3}\right\}.
    \end{align*}
    Then if there is some $x_1\in E_1\setminus H_1''$ such that $h_1'(x_1,X_2)\neq 0$, by~(\ref{eq:ydecomp}) we have
    \begin{align*}
      |y|_{\{x_1\}\times E_2}-\evl_{\{x_1\times E_2\}}(h_1')|
      &\leq |b|_{\{x_1\}\times E_2}|+|\evl_{\{x_1\}\times E_2}(h_2')| \\
      &\leq \frac{\epsilon n}{3}+|H_2'| \\
      &< \frac{\epsilon n}{2},
    \end{align*}
    where the third inequality above holds by Claim~\ref{claim:ystruct}. Therefore $\evl_{\{x_1\}\times E_2}(h_1')$ is a valid choice for $c_2\in C_2\setminus\{0\}$ in line~\ref{li:ifc2}. Analogous reasoning implies that if there is some $x_2\in E_2\setminus H_2''$ such that $h_2'(X_1,x_2)\neq 0$, then $\evl_{\{x_1\}\times E_2}(h_2')$ is a valid choice for $c_1\in C_1\setminus\{0\}$ in line~\ref{li:ifc1}.

    Thus in order for the while loop to terminate, we would need to have $h_1'(x_1,X_2)=0$ for every $x_1\in E_1\setminus H_1''$, and $h_2'(X_1,x_2)=0$ for every $x_2\in E_2\setminus H_2''$, which by~(\ref{eq:ydecomp}) implies that
    \begin{align*}
      |y|
      &\leq n(|H_1''|+|H_2''|)+|b| \leq \frac{6}{\epsilon}\cdot|b|+|b| \leq \frac{8}{\epsilon}\cdot|b|.
    \end{align*}
    where the third inequality above holds because by definition $|H_i''|\leq|b|/(\epsilon n/3)$.
  \end{proof}

  \begin{claim}
    \label{claim:ydecrease}
    In every iteration of the while loop in Algorithm~\ref{alg:decoderfinish} where the function does not return, $|y|$ decreases.
  \end{claim}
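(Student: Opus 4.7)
The plan is to show that each of the two potential updates in the while loop, whenever it fires, strictly decreases $|y|$. Since the function only fails to return in an iteration where at least one of the two if statements was satisfied, this immediately yields the claim.

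First I would observe that an update of the form $y \mapsto y - c_1 \otimes \1_{x_2}$ changes $y$ only within the single column $E_1 \times \{x_2\}$, and likewise $y \mapsto y - \1_{x_1} \otimes c_2$ changes $y$ only within the single row $\{x_1\} \times E_2$. Consequently, the change in the total Hamming weight $|y|$ equals the change in the weight of that one column or row, and it suffices to compare the old and new column/row weights.

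For the first update, the triggering condition supplies $c_1 \in C_1 \setminus \{0\}$ with $|c_1 - y|_{E_1 \times \{x_2\}}| < \epsilon n/2$. Since $k_1 + k_2 \leq (1-\epsilon)n$ implies $k_1 \leq (1-\epsilon)n$, the Reed-Solomon code $C_1$ has minimum distance $\geq n - k_1 + 1 > \epsilon n$, and hence $|c_1| > \epsilon n$. The triangle inequality (restricted to the column $E_1 \times \{x_2\}$) then gives
\[
\bigl|\, y|_{E_1 \times \{x_2\}} \,\bigr| \;\geq\; |c_1| \;-\; |c_1 - y|_{E_1 \times \{x_2\}}| \;>\; \epsilon n - \epsilon n/2 \;=\; \epsilon n/2,
\]
whereas after the update the weight of the column $E_1 \times \{x_2\}$ is exactly $|c_1 - y|_{E_1 \times \{x_2\}}| < \epsilon n/2$. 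Hence the column weight, and therefore $|y|$, strictly decreases under this update.

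An entirely symmetric argument, with rows and columns swapped and using the analogous distance bound $|c_2| > \epsilon n$ for the RS code $C_2$, handles the second update; the same reasoning still applies when the second if statement fires after the first update, since it is then being applied to the already-updated value of $y$. As at least one of the two branches must fire in any iteration where the function does not return, this establishes the strict decrease of $|y|$. There is no substantive obstacle: the entire argument rests on the classical fact that nonzero codewords of the Reed-Solomon codes $C_i$ have Hamming weight strictly greater than $\epsilon n$.
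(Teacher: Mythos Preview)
Your proof is correct and takes essentially the same approach as the paper: both use that a nonzero codeword of $C_1$ (resp.\ $C_2$) has weight at least $\epsilon n$ together with the triggering condition $|c_1 - y|_{E_1\times\{x_2\}}| < \epsilon n/2$ to conclude that the column (resp.\ row) weight strictly drops. Your triangle-inequality formulation is arguably cleaner than the paper's counting of agreement points within $\supp(c_1)$, but the underlying idea is identical.
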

  \begin{proof}
    We will show that if the {\bf if} statement in line~\ref{li:ifc1} is satisfied, so that line~\ref{li:addc1} executes, then $|y|$ decreases; the proof for {\bf if} statement in line~\ref{li:ifc2} is analogous. By definition $c_1\in C_1\setminus\{0\}$ has $|c|\geq\epsilon n$ as $C_1$ is a code of distance $\geq\epsilon n$, so $c_1$ must agree with $y|_{E_1\times\{x_2\}}$ at $>\epsilon n/2$ points in $\supp(c)$, as by definition $c_1$ disagrees with $y|_{E_1\times\{x_2\}}$ at $<\epsilon n/2$ points in $\supp(c)$. Thus $|y|_{E_1\times\{x_2\}}-c_1|<|y|_{E_1\times\{x_2\}}|$, so $|y-c_1\otimes\1_{x_2}|<|y|$, as desired. 
  \end{proof}

  Thus Claim~\ref{claim:ydecrease} implies that the while loop in Algorithm~\ref{alg:decoderfinish} terminates after at most $n^2$ iterations, while Claim~\ref{claim:ysmall} implies that the final output $y'$ of the algorithm will satisfy $|y|\leq 8|b|/\epsilon$, as desired.
\end{proof}

\section{Decoding Products of Quantum Reed-Solomon Codes}
\label{sec:qdec}
In this section, we apply our decoder in Section~\ref{sec:dualtensordec} for dual tensor products of Reed-Solomon codes to construct efficient decoders for both homological and subsystem products of quantum Reed-Solomon codes. In particular, decoders for both types of products will follow from the following proposition.

\begin{proposition}
  \label{prop:qdectech}
  For every $0<\epsilon,\delta'\leq 1$, there exists a positive real number $\delta=\delta(\epsilon,\delta')\leq\delta'$ and an algorithm satisfying the following. The algorithm takes as input a pair $(Q^i=(Q^i_X,Q^i_Z))_{i\in[2]}$ of length-$n$ quantum Reed-Solomon codes such that $\dim(Q^1_Z),\;\dim({Q^1_X}^\perp)+\dim(Q^2_Z)\leq(1-\epsilon)n$, and an element $c\in\bF_q^{n\times n}$ such that $|c-\tilde{c}|\leq\delta n^2$ for some $\tilde{c}\in Q_Z':=(Q^1_Z\otimes Q^2_Z)+(Q^1_X\otimes Q^2_X)^\perp$. The algorithm then outputs some $c'\in Q_Z'$ such that $|c-c'|\leq\delta' n^2$. The algorithm runs in time $\poly(n,q)$ (independent of $\epsilon,\delta'$).
\end{proposition}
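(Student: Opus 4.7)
The plan is to reduce the decoding of $Q_Z' = (Q_Z^1 \otimes Q_Z^2) + (Q_X^1 \otimes Q_X^2)^\perp$ to two applications of the dual-tensor Reed-Solomon decoder (Theorem~\ref{thm:dualtensordec}), glued together by a column-wise Welch--Berlekamp step that isolates the ``bivariate'' component of the codeword. Writing $U_i := Q_Z^i$ and $V_i := (Q_X^i)^\perp$, so $V_i \subseteq U_i$ and $(Q_X^1 \otimes Q_X^2)^\perp = V_1 \boxplus V_2$, one checks that $Q_Z' = U_1 \otimes U_2 + V_1 \boxplus V_2 \subseteq V_1 \boxplus U_2$. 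The hypothesis $\dim V_1 + \dim U_2 = (n - k_X^1) + k_Z^2 \leq (1-\epsilon)n$ is precisely the rate condition required by Theorem~\ref{thm:dualtensordec} for $V_1 \boxplus U_2$, and via $V_i \subseteq U_i$ it also implies the rate condition for $V_1 \boxplus V_2$.

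First, I would apply Theorem~\ref{thm:dualtensordec} to $c$ against $V_1 \boxplus U_2$, obtaining $c_1 \in V_1 \boxplus U_2$ with $|c_1 - c| \leq \alpha_1 \delta n^2$, so that $|e_1| := |c_1 - \tilde c| \leq (\alpha_1+1)\delta n^2$. Writing $\tilde c = u + a + b$ with $u \in U_1 \otimes U_2$, $a \in V_1 \otimes \bF_q^n$, and $b \in \bF_q^n \otimes V_2$, let $H$ be a parity-check matrix of $V_2$. Since $bH^\top = 0$, the $n \times k_X^2$ matrix $M := c_1 H^\top$ differs from $(u+a)H^\top$ --- which has all columns in $U_1$ --- only by $e_1 H^\top$. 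Theorem~\ref{thm:pe2RS} supplies a constant $\rho = \rho(\epsilon) > 0$ and a product-expansion decomposition $e_1 = e_p + e_q$ with $e_p \in V_1 \otimes \bF_q^n$, $e_q \in \bF_q^n \otimes U_2$, and $|e_q|_2 \leq |e_1|/(\rho n)$; column-wise, $e_p H^\top$ contributes into $V_1 \subseteq U_1$ (merely shifting the effective codeword) while each column of $e_q H^\top$ is supported in the $\leq |e_q|_2$ nonzero rows of $e_q$. Taking $\delta$ small enough that $|e_q|_2 < (n - k_Z^1)/2$, Welch--Berlekamp decodes $M$ uniquely column-by-column to some $M' \in U_1 \otimes \bF_q^{k_X^2}$.

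Next, project $M'$ column-wise modulo $V_1$ to get $M'' \in (U_1/V_1) \otimes \bF_q^{k_X^2}$. The $(a + e_p)H^\top$ contributions have columns in $V_1$ and vanish under this projection, leaving $M''$ equal to the image of $u$ in $(U_1/V_1) \otimes (U_2/V_2)$ (well-defined since rows of $u$ lie in $U_2$) independently of the (non-unique, non-constructed) product-expansion decomposition. A linear system then lifts $M''$ to any $u' \in U_1 \otimes U_2$ with the same image --- for instance, the canonical representative supported on monomials $X_1^i X_2^j$ with $(i,j) \in [n - k_X^1, k_Z^1) \times [n - k_X^2, k_Z^2)$. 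Then $u - u' \in V_1 \otimes U_2 + U_1 \otimes V_2 \subseteq V_1 \boxplus V_2$, so $c - u' = (u - u') + a + b + e$ lies within $|e| \leq \delta n^2$ of $V_1 \boxplus V_2$. A second application of Theorem~\ref{thm:dualtensordec}, now to $V_1 \boxplus V_2$, yields $r' \in V_1 \boxplus V_2$ with $|r' - (c - u')| \leq \alpha_2 \delta n^2$, and we output $c' := u' + r' \in Q_Z'$. Setting $\delta(\epsilon,\delta') := \min\{\delta'/\alpha_2,\; \epsilon\rho/(4(\alpha_1+1))\}$ ensures $\delta \leq \delta'$ and $|c' - c| \leq \delta' n^2$, with all subroutines running in $\poly(n,q)$ time by Lemma~\ref{lem:runtime} together with standard Welch--Berlekamp and Gaussian elimination.

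The main obstacle is the middle Welch--Berlekamp step. Product-expansion only guarantees the \emph{existence} of a decomposition $e_1 = e_p + e_q$ with small $|e_q|_2$ --- the algorithm never constructs $e_p, e_q$ or the unknown pieces $u, a, b$. What rescues the argument is Lemma~\ref{lem:homvanexp}: any two admissible decompositions of $e_1$ differ by an element of $V_1 \otimes U_2$, hence agree column-wise modulo $V_1$. Therefore, whichever unique $U_1$-codeword Welch--Berlekamp produces, projecting $M'$ modulo $V_1$ recovers the \emph{intrinsic} image of $u$ in $(U_1/V_1) \otimes (U_2/V_2)$, which captures exactly the ``high-degree'' bivariate component of $\tilde c$ that obstructs $V_1 \boxplus V_2$-membership; the remaining discrepancy is then mopped up by the second dual-tensor decoding step.
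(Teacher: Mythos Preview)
Your proof is correct, and the overall skeleton matches the paper's: first apply Theorem~\ref{thm:dualtensordec} to land in $V_1\boxplus U_2$, then invoke the product-expansion of $(V_1,U_2)$ (Theorem~\ref{thm:pe2RS}) to argue that a column-wise Welch--Berlekamp step succeeds. The route diverges after that. The paper works directly with the polynomial $f$ underlying $c_0$: since $c_0\in V_1\boxplus U_2$, only the $X_2^{j_2}$-coefficients for $j_2\in[k_2,k_2')$ can obstruct membership in $Q_Z'$, and for each such $j_2$ the evaluated coefficient is $\leq|H_1|$-far from $U_1$; one Welch--Berlekamp call per coefficient yields corrections $r_2^{j_2}$ that are subtracted off, and the result already lies in $Q_Z'$ with no further decoding needed. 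Your approach is more black-box: you multiply by a generic parity-check matrix $H$ of $V_2$, decode all $n-k_2$ columns of $c_1H^\top$ to $U_1$, pass to the quotient $(U_1/V_1)\otimes(U_2/V_2)$ to extract the intrinsic ``bivariate'' part of $\tilde c$, lift to some $u'\in U_1\otimes U_2$, and then run a \emph{second} call to Theorem~\ref{thm:dualtensordec} on $c-u'$ against $V_1\boxplus V_2$.

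Both are polynomial-time and yield the same guarantee. The paper's version exploits the explicit monomial basis to do fewer Welch--Berlekamp calls ($k_2'-k_2$ rather than $n-k_2$) and avoids the second dual-tensor decode; your version trades that economy for modularity, never touching the polynomial representation of $c_0$ and using the dual-tensor decoder purely as a subroutine. Your quotient-and-lift maneuver, together with the observation (via Lemma~\ref{lem:homvanexp} or directly via Welch--Berlekamp uniqueness) that the image of $M'$ in $(U_1/V_1)\otimes\bF_q^{k_X^2}$ is independent of the unconstructed decomposition $e_1=e_p+e_q$, is a nice way to make the argument decomposition-free.
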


Proposition~\ref{prop:qdectech} follows by applying our dual tensor decoder in Theorem~\ref{thm:dualtensordec}, and then applying an additional subroutine similar to Algorithm~\ref{alg:decoderclose}. Hence the proof is similar to that of Lemma~\ref{lem:algdeg}; we provide the details in Appendix~\ref{sec:omitdec}.

In the remainder of this section, for simplicity we restrict attention to Reed-Solomon codes whose evaluation set is the entire field.

\subsection{Single-Sector Homological Products}
In this section, we present a polynomial-time decoder for the CSS codes in Corollary~\ref{cor:sspeRS}, which are constructed as the homological product of a pair of single-sector chain complexes associated to quantum Reed-Solomon codes. To begin, we define the decoding problem for CSS codes.

\begin{definition}
  \label{def:CSSdec}
  For a length-$n$ quantum CSS code $Q=(Q_X,Q_Z)$ over $\bF_q$, a \textbf{$\delta$-decoder for $Q$} is an algorithm that takes as input a pair of elements $c_X,c_Z\in\bF_q^n$ (along with a description of the codes $Q_X,Q_Z$) such that there exist $\tilde{c}_X\in Q_X,\;\tilde{c}_Z\in Q_Z$ with $|c_X-\tilde{c}_X|,|c_Z-\tilde{c}_Z|\leq\delta n$, and then outputs (representative elements of) the cosets $\tilde{c}_X+Q_Z^\perp,\;\tilde{c}_Z+Q_X^\perp$.
\end{definition}

In words, Definition~\ref{def:CSSdec} says that the decoder takes as input corrupted codewords $c_X,c_Z$ of $Q_X,Q_Z$ respectively (with corruption weight $\leq\delta n$), and returns cosets in $Q_X/Q_Z^\perp,Q_Z/Q_X^\perp$ that are close to these respective corrupted codewords.

\begin{remark}
  \label{remark:CSSsyndromedec}
   Often the decoding problem for CSS codes is instead defined to have input given by the \textit{syndromes} $s_X=H_Xc_X,\;s_Z=H_Zc_Z$ of the corrupted codewords $c_X,c_Z$, where $H_X,H_Z$ are the parity-check matrices for $Q_X,Q_Z$ respectively. The output is then defined to be a pair of (representative elements of) cosets $b_X+Q_Z^\perp,b_Z+Q_X^\perp$ of low-weight corrections $b_X,b_Z\in\bF_q^n$ such that $H_Xb_X=s_X,\;H_Zb_Z=s_Z$. However, this formulation reduces to our problem stated in Definition~\ref{def:CSSdec}. Specifically, given $s_X,s_Z$, we may first run Gaussian elimination to find any $c_X',c_Z'\in\bF_q^n$ with syndromes $H_Xc_X'=s_X,\;H_Zc_Z'=s_Z$. If our decoder from Definition~\ref{def:CSSdec} outputs $\tilde{c}_X',\tilde{c}_Z'$, then $c_X'-\tilde{c}_X'+Q_Z^\perp,\;c_Z'-\tilde{c}_Z'+Q_X^\perp$ provide the desired output cosets in the syndrome formulation.
\end{remark}

\begin{corollary}
  For every $\epsilon>0$, there exists $\delta=\delta(\epsilon)>0$ such that there is a polynomial-time $\delta$-decoder for the codes in Corollary~\ref{cor:sspeRS}.

  Formally, the $\delta$-decoder takes as input a prime power $n=q$, integers\footnote{To simplify notation, here the value of $\epsilon$ is double that in Corollary~\ref{cor:sspeRS}; for instance, in the corollary we only had $k_1,k_2\leq(1-\epsilon/2)n$.} $k_1,k_2\leq(1-\epsilon)n$ with $|k_1-k_2|\geq\epsilon n$, and a pair of elements $c_X,c_Z\in\bF_q^{n\times n}$. Let $\cC_i$ be the single-sector complex obtained from $Q^i:=(Q^i_X=RS(n,k_i),\;Q^i_Z=RS(n,k_i))$ for $i\in[2]$ via Lemma~\ref{lem:codetocomplex}, and let $Q=(Q_X,Q_Z)$ be the length-$n^2$ quantum CSS code associated to the homological product $\cC_1\otimes\cC_2$. If there exist $\tilde{c}_X\in Q_X,\;\tilde{c}_Z\in Q_Z$ with $|c_X-\tilde{c}_X|,|c_Z-\tilde{c}_Z|\leq\delta n^2$, then the $\delta$-decoder outputs (representative elements of) the cosets $\tilde{c}_X+Q_Z^\perp,\;\tilde{c}_Z+Q_X^\perp$. The $\delta$-decoder runs in time $\poly(n,q)$ (independent of $\epsilon$).
\end{corollary}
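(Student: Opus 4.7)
The plan is to reduce the quantum decoding problem to two applications of Proposition~\ref{prop:qdectech}, one for the $Z$-side and one for the $X$-side, followed by a cleanup that pushes the output from the enlarged code $Q_Z'$ (resp.~$Q_X'$) into the actual code $Q_Z$ (resp.~$Q_X$) while preserving the coset.

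Setup: Let $\cA = \cC_1 \otimes \cC_2$. By Corollary~\ref{cor:ssprodcycles} together with $Q_X^\perp = \im\partial^{\cA}$, we have $Q_Z = Q^1_Z \otimes Q^2_Z + Q_X^\perp$. For $x_i \in Q^i_X = \ker\delta_i$, the Leibniz identity $\delta^{\cA}(x_1\otimes x_2) = \delta_1 x_1\otimes x_2 + x_1\otimes \delta_2 x_2$ vanishes, so $Q^1_X\otimes Q^2_X\subseteq Q_X$, and dualizing gives $Q_X^\perp\subseteq (Q^1_X\otimes Q^2_X)^\perp$. Therefore $Q_Z\subseteq Q_Z' := (Q^1_Z\otimes Q^2_Z) + (Q^1_X\otimes Q^2_X)^\perp$, which is exactly the code handled by Proposition~\ref{prop:qdectech}. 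A symmetric argument gives $Q_X\subseteq Q_X'$. The rate hypotheses of Proposition~\ref{prop:qdectech} become $k_1\leq (1-\epsilon)n$ and $(n-k_1)+k_2\leq(1-\epsilon)n$, which follow from $k_1,k_2\leq(1-\epsilon)n$ and $k_1 - k_2 \geq \epsilon n$; the given hypothesis $|k_1-k_2|\geq \epsilon n$ lets us arrange this after possibly swapping the two tensor factors.

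Main decoding: Pick $\delta'$ (to be fixed at the end) sufficiently small and apply Proposition~\ref{prop:qdectech} to $c_Z$ and to $c_X$ (the latter with $X,Z$ swapped). This yields $c^*_Z \in Q_Z'$ and $c^*_X \in Q_X'$ with $|c^*_Z-c_Z|,|c^*_X-c_X|\leq \delta' n^2$, hence $|c^*_Z-\tilde c_Z|,|c^*_X-\tilde c_X|\leq (\delta+\delta')n^2$ by the triangle inequality. The element $c^*_Z$ need not lie in $Q_Z$, so I next decompose $c^*_Z = q + q^\perp$ via Gaussian elimination, where $q\in Q^1_Z\otimes Q^2_Z\subseteq Q_Z$ and $q^\perp \in (Q^1_X\otimes Q^2_X)^\perp = {Q^1_X}^\perp\otimes\bF_q^n + \bF_q^n\otimes{Q^2_X}^\perp$. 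It suffices to replace $q^\perp$ by a nearby $r\in Q_X^\perp = \im\partial^{\cA}$, since then $c'_Z := q+r\in Q_Z$. A valid $r$ at distance $O((\delta+\delta')n^2)$ from $q^\perp$ exists because $\tilde c_Z$ has its own decomposition $\tilde q+\tilde r$ with $\tilde r\in Q_X^\perp$; to find $r$ efficiently I would exploit the direct-sum structure of $(Q^1_X\otimes Q^2_X)^\perp$ by running the classical Welch-Berlekamp decoder for the univariate Reed-Solomon codes ${Q^1_X}^\perp$ and ${Q^2_X}^\perp$ on the individual columns and rows of $q^\perp$, reducing the cleanup to $O(n)$ one-dimensional decoding tasks. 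The $X$-side is handled symmetrically.

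Conclusion: Choosing $\delta = \delta(\epsilon,\delta')$ small enough to force $|c'_Z-\tilde c_Z|<\Delta n^2/2$, where $\Delta n^2$ is the distance guaranteed by Corollary~\ref{cor:sspeRS}, ensures $c'_Z \equiv \tilde c_Z\pmod{Q_X^\perp}$, so $c'_Z$ is a valid representative of $\tilde c_Z + Q_X^\perp$; similarly for $c'_X$. Lemma~\ref{lem:runtime} together with Proposition~\ref{prop:qdectech} yields the $\poly(n,q)$ runtime.

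The main obstacle is the cleanup step that converts $c^*_Z\in Q_Z'$ into a nearby $c'_Z\in Q_Z$. Proposition~\ref{prop:qdectech} only lands us in the enlarged code $Q_Z' \supsetneq Q_Z$, and a naive syndrome-fill using the filling constant of Theorem~\ref{thm:sspe} would inflate the error weight by a factor equal to the locality ($O(n)$) of $\partial^{\cA}$, destroying the bound we need to identify the coset. The saving feature is that the discrepancy lies inside the algebraically structured subspace $(Q^1_X\otimes Q^2_X)^\perp$, whose explicit row-plus-column decomposition allows the correction to be computed via classical Reed-Solomon decoding of one-dimensional slices, preserving weight up to a constant factor.
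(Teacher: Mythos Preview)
Your setup and invocation of Proposition~\ref{prop:qdectech} are correct, but the ``cleanup'' step---pushing $c^*_Z\in Q_Z'$ down to a nearby $c'_Z\in Q_Z$---does not work as described, and the paper in fact sidesteps it entirely.

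The concrete problems with your cleanup are: (i) the decomposition $c^*_Z=q+q^\perp$ with $q\in Q^1_Z\otimes Q^2_Z$ and $q^\perp\in(Q^1_X\otimes Q^2_X)^\perp$ is highly non-unique, since the intersection of these two spaces is large (it contains e.g.\ ${Q^1_X}^\perp\otimes Q^2_Z$), so Gaussian elimination returns an arbitrary such $q^\perp$ with no closeness guarantee to any particular $\tilde r\in Q_X^\perp$; and (ii) even granting a favorable $q^\perp$, running Welch--Berlekamp on rows and columns can only enforce membership in ${Q^1_X}^\perp\boxplus{Q^2_X}^\perp$, which strictly contains $Q_X^\perp=\im\partial^{\cA}$, and $\im\partial^{\cA}$ is not characterized by per-row/per-column constraints. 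So the proposed procedure neither lands in $Q_X^\perp$ nor controls the weight of the resulting $r$.

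The paper avoids this cleanup altogether. Setting $Q_X''=Q^1_X\otimes Q^2_X$, it first applies the distance bound of Lemma~\ref{lem:sspedis} to the \emph{subsystem} pair $(Q_X'',Q_Z')$: every element of $Q_Z'\setminus{Q_X''}^\perp$ has weight $\geq 4\delta' n^2$. With $\delta'$ chosen accordingly, $|c'-\tilde c_Z|\leq 2\delta' n^2$ forces $c'+{Q_X''}^\perp=\tilde c_Z+{Q_X''}^\perp$. The crucial point you are missing is that the K\"unneth formula gives not just $Q_Z\subseteq Q_Z'$ but an \emph{isomorphism} $Q_Z/Q_X^\perp\cong Q_Z'/{Q_X''}^\perp$; hence the coset $c'+{Q_X''}^\perp$ contains a \emph{unique} coset of $Q_Z/Q_X^\perp$, necessarily $\tilde c_Z+Q_X^\perp$, and extracting a representative is a linear-algebra computation. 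No projection from $Q_Z'$ back into $Q_Z$ is ever needed.
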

\begin{proof}
  Assume without loss of generality that $k_1\geq k_2$, so that $\dim({Q^1_X}^\perp)+\dim(Q^2_Z)=(n-k_1)+k_2\leq 1-\epsilon$; the proof for the $k_2\geq k_1$ case is analogous.
  
  We will prove the $Z$-decoding statement (i.e.~for $c_Z$); the proof of the $X$-decoding statement (i.e.~for $c_X$) is analogous. Define $Q_Z'=(Q^1_Z\otimes Q^2_Z)+(Q^1_X\otimes Q^2_X)^\perp$ as in Proposition~\ref{prop:qdectech}, and define $Q_X''=Q^1_X\otimes Q^2_X$. By Proposition~\ref{prop:sskunneth}, we have
  \begin{equation}
    \label{eq:qdecapplykunneth}
    Q_Z\subseteq Q_Z'; \hspace{1em} Q_X^\perp\subseteq{Q_X''}^\perp; \hspace{1em} Q_Z/Q_X^\perp \cong (Q^1_Z/{Q^1_X}^\perp)\otimes(Q^2_Z/{Q^2_X}^\perp) \cong Q_Z'/{Q_X''}^\perp.
  \end{equation}

  Define $\delta'=\delta'(\epsilon)=\epsilon\cdot\rho(\epsilon)/4$ for $\rho(\epsilon)$ as defined in Theorem~\ref{thm:pe2RS}, so that by Lemma~\ref{lem:sspedis}, every element of $Q_Z'\setminus{Q_X''}^\perp$ has weight $\geq 4\delta' n^2$.
  Then defining $\delta=\delta(\epsilon,\delta'(\epsilon))$ as in Proposition~\ref{prop:qdectech}, we can run the $\poly(n,q)$-time algorithm in Proposition~\ref{prop:qdectech} to obtain some $c'\in Q_Z'$ such that $|c_Z-c'|\leq\delta' n^2$. Therefore $|c'-\tilde{c}_Z|\leq|c'-c_Z|+|c_Z-\tilde{c}_Z|\leq \delta' n^2+\delta n^2\leq 2\delta' n^2$, so $c'-\tilde{c}_Z\in Q_Z'$ must lie inside ${Q_X''}^\perp$, or equivalently, $c'+{Q_X''}^\perp=\tilde{c}_Z+{Q_X''}^\perp\in Q_Z'/{Q_X''}^\perp$.

  Now by~(\ref{eq:qdecapplykunneth}), this coset in $Q_Z'/{Q_X''}^\perp$ contains a unique coset in $Q_Z/Q_X^\perp$, which must therefore equal $\tilde{c}_Z+Q_X^\perp$, as $\tilde{c}_Z\in Q_Z$. Thus our decoder may simply find the unque coset in $Q_Z/Q_X^\perp$ that lies inside $c'+{Q_X''}^\perp$, and then output any representative element. This computation simply involves solving a linear system of equations, so it can be performed in time $\poly(n,q)$, as desired.
\end{proof}

\subsection{Subsystem Products}
In this section, we present a polynomial-time decoder for the subsystem codes in Corollary~\ref{cor:subpeRS}, which are constructed as the subsystem product of a pair of quantum Reed-Solomon codes. To begin, we define the decoding problem of subsystem codes.

\begin{definition}
  \label{def:subdec}
  For a length-$n$ quantum subsystem code $Q=(Q_X,Q_Z)$ over $\bF_q$, a \textbf{$\delta$-decoder for $Q$} is an algorithm that takes as input a pair of elements $c_X,c_Z\in\bF_q^n$ (along with a description of the codes $Q_X,Q_Z$) such that there exist $\tilde{c}_X\in Q_X':=Q_X+Q_Z^\perp,\;\tilde{c}_Z\in Q_Z':=Q_Z+Q_X^\perp$ with $|c_X-\tilde{c}_X|,|c_Z-\tilde{c}_Z|\leq\delta n$, and then outputs (representative elements of) the cosets $\tilde{c}_X+Q_Z^\perp,\;\tilde{c}_Z+Q_X^\perp$.
\end{definition}

In words, Definition~\ref{def:subdec} says that the decoder takes as input corrupted codewords $c_X,c_Z$ of $Q_X',Q_Z'$ respectively (with corruption weight $\leq\delta n$), and returns cosets in $Q_X'/Q_Z^\perp,Q_Z'/Q_X^\perp$ that are close to these respective corrupted codewords.

\begin{remark}
  \label{remark:subsyndromedec}
  Similarly as in the non-subsystem case described in Remark~\ref{remark:CSSsyndromedec}, our notion of a subsystem code decoder in Definition~\ref{def:subdec} can be used to decode when the input instead consists of syndromes $s_X=H_Xc_X,\;s_Z=H_Zc_Z$, where $H_X,H_Z$ are parity-check matrices for $Q_X,Q_Z$ respectively, and the desired output consists of (representative elements of) cosets $b_X+Q_Z^\perp,\;b_Z+Q_X^\perp$ for low-weight $b_X,b_Z\in\bF_q^n$ such that $c_X-b_X\in Q_X',\;c_Z-b_Z\in Q_Z'$. Specifically, given such syndromes $s_X,s_Z$, we may first run Gaussian elimination to find any $c_X',c_Z'\in\bF_q^n$ with syndromes $H_Xc_X'=s_X,\; H_Zc_Z'=s_Z$. If our decoder from Definition~\ref{def:subdec} outputs $\tilde{c}_X',\tilde{c}_Z'$, then $c_X'-\tilde{c}_X'+Q_Z^\perp,\;c_Z'-\tilde{c}_Z'+Q_X^\perp$ provide the desired output cosets in the syndrome formulation.
\end{remark}

\begin{corollary}
  \label{cor:subRSdec}
  For every $\epsilon>0$, there exists $\delta=\delta(\epsilon)>0$ such that there is a polynomial-time $\delta$-decoder for the codes in Corollary~\ref{cor:subpeRS}.

  Formally, the $\delta$-decoder takes as input a prime power $n=q$, integers $k^1_X,k^1_Z,k^2_X,k^2_Z\leq(1-\epsilon)n$ such that $k^1_X+k^1_Z,\;k^2_X+k^2_Z\geq n$ and $(n-k^1_Z)+k^2_X,\;(n-k^1_X)+k^2_Z\leq(1-\epsilon)n$, and a pair of elements $c_X,c_Z\in\bF_q^{n\times n}$. For $i\in[2]$ let $Q^i:=(Q^i_X=\text{RS}(n,k^i_X),\;Q^i_Z=\text{RS}(n,k^i_Z))$, and let $Q=(Q_X,Q_Z)=Q^1\otimes Q^2$ be the length-$n^2$ quantum subsystem product code. If there exist $\tilde{c}_X\in Q_X':=Q_X+Q_Z^\perp,\;\tilde{c}_Z\in Q_Z':=Q_Z+Q_X^\perp$ with $|c_X-\tilde{c}_X|,|c_Z-\tilde{c}_Z|\leq\delta n^2$, then the $\delta$-decoder outputs (representative elements of) the cosets $\tilde{c}_X+Q_Z^\perp,\;\tilde{c}_Z+Q_X^\perp$. The $\delta$-decoder runs in time $\poly(n,q)$ (independent of $\epsilon$).
\end{corollary}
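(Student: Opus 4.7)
The plan is to reduce this subsystem decoding task to two applications of Proposition~\ref{prop:qdectech}, one for the $Z$-side and one (by an $X/Z$ swap) for the $X$-side, and then to use the subsystem distance bound from Corollary~\ref{cor:subpeRS} to promote the approximate decoding supplied by Proposition~\ref{prop:qdectech} into identification of the correct coset. The simplification relative to the single-sector homological-product case is that, by the very definition of the subsystem product, we have
\[
  Q_Z + Q_X^\perp = (Q^1_Z\otimes Q^2_Z) + (Q^1_X\otimes Q^2_X)^\perp,
\]
which is precisely the space $Q_Z'$ in the statement of Proposition~\ref{prop:qdectech}; no K\"{u}nneth-type argument is needed to relate the subsystem coset structure to the dual-tensor space actually decoded by that proposition.

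For the $Z$-side, the hypotheses of Proposition~\ref{prop:qdectech} for the pair $(Q^1, Q^2)$ are $\dim(Q^1_Z) = k^1_Z \leq (1-\epsilon)n$ and $\dim({Q^1_X}^\perp) + \dim(Q^2_Z) = (n - k^1_X) + k^2_Z \leq (1-\epsilon)n$, both of which are assumed. Let $\Delta = \Delta(\epsilon) > 0$ be the distance constant provided by Corollary~\ref{cor:subpeRS}, so that every element of $(Q_Z + Q_X^\perp) \setminus Q_X^\perp$ has weight at least $\Delta n^2$. Set $\delta' := \Delta/4$, and let $\delta := \min\{\delta',\, \delta(\epsilon, \delta')\}$ where the second argument of the minimum is the constant from Proposition~\ref{prop:qdectech}. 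Running that algorithm on input $c_Z$ produces, in time $\poly(n,q)$, some $c' \in Q_Z + Q_X^\perp$ with $|c_Z - c'| \leq \delta' n^2$. Then $c' - \tilde{c}_Z \in Q_Z + Q_X^\perp$ has weight at most $\delta' n^2 + \delta n^2 \leq 2\delta' n^2 = \Delta n^2 / 2 < \Delta n^2$, which by the distance lower bound forces $c' - \tilde{c}_Z \in Q_X^\perp$, i.e., $c' + Q_X^\perp = \tilde{c}_Z + Q_X^\perp$. Output $c'$ as the required representative of the $Z$-coset.

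The $X$-side is handled analogously: swap the roles of $X$ and $Z$ in each $Q^i$ (the resulting pair is still a quantum Reed-Solomon code, since the CSS condition $Q^i_X{}^\perp \subseteq Q^i_Z$ is symmetric under taking duals, and the duals of Reed-Solomon codes on a full evaluation set are again such Reed-Solomon codes). The two required rate bounds for Proposition~\ref{prop:qdectech} become $k^1_X \leq (1-\epsilon)n$ and $(n - k^1_Z) + k^2_X \leq (1-\epsilon)n$, both of which are given. The rest of the argument goes through verbatim, using the same constant $\Delta$ applied this time to $(Q_X + Q_Z^\perp) \setminus Q_Z^\perp$. Since each call to Proposition~\ref{prop:qdectech} runs in time $\poly(n,q)$ independent of $\epsilon$, the overall decoder runs in time $\poly(n,q)$. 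There is no real technical obstacle here---all of the work is packaged into Proposition~\ref{prop:qdectech} and into the distance lower bound of Corollary~\ref{cor:subpeRS}; the only point requiring care is the choice of $\delta'$ small enough relative to $\Delta$ so that the triangle inequality upgrades approximate closeness to exact agreement modulo $Q_X^\perp$ (respectively $Q_Z^\perp$).
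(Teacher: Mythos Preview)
Your proposal is correct and follows essentially the same approach as the paper: invoke Proposition~\ref{prop:qdectech} to land in $Q_Z' = Q_Z + Q_X^\perp$ within distance $\delta' n^2$, then use the subsystem distance bound (the paper cites Lemma~\ref{lem:sspedis} directly rather than Corollary~\ref{cor:subpeRS}, but these give the same $\Delta$) to force the output coset to equal $\tilde{c}_Z + Q_X^\perp$. Your explicit verification of the Proposition~\ref{prop:qdectech} hypotheses on both the $Z$- and $X$-sides is a nice addition that the paper leaves implicit.
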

\begin{proof}
  We will prove the $Z$-decoding statement (i.e.~for $c_Z$); the proof of the $X$-decoding statement (i.e.~for $c_X$) is analogous. Define $\delta'=\delta'(\epsilon)=\epsilon\cdot\rho(\epsilon)/4$ for $\rho(\epsilon)$ as defined in Theorem~\ref{thm:pe2RS}, so that by Lemma~\ref{lem:sspedis}, every element of $Q_Z'\setminus Q_X^\perp$ has weight $\geq 4\delta' n^2$.
  Then defining $\delta=\delta(\epsilon,\delta'(\epsilon))$ as in Proposition~\ref{prop:qdectech}, we can run the $\poly(n,q)$-time algorithm in Proposition~\ref{prop:qdectech} to obtain some $c'\in Q_Z'$ such that $|c_Z-c'|\leq\delta' n^2$. Therefore $|c'-\tilde{c}_Z|\leq|c'-c_Z|+|c_Z-\tilde{c}_Z|\leq \delta' n^2+\delta n^2\leq 2\delta' n^2$, so $c'-\tilde{c}_Z\in Q_Z'$ must lie inside $Q_X^\perp$, or equivalently, $c'+Q_X^\perp=\tilde{c}_Z+Q_X^\perp\in Q_Z'/Q_X^\perp$. Thus our decoder may simply output $c'$.
\end{proof}

\section{Transversal $C^{r-1}Z$ Gates on Subsystem Products}
\label{sec:transversal}
In this section, we show how to perform transversal $C^{r-1}Z$ (and $U^r$) gates on subsystem product codes. In Section~\ref{sec:transgen} below, we present a general condition under which a subsystem product supports such transversal gates. In Section~\ref{sec:transRS}, we then instantiate this general construction with the product of two quantum Reed-Solomon codes, which in particular yields $[[N,\Theta(N),\Theta(N)]]_N$ subsystem products of locality $O(\sqrt{N})$ supporting transversal $CCZ$. In Section~\ref{sec:tripleprod} below, we subsequently instantiate the general construction with the product of three quantum codes arising from randomly punctured tensor Reed-Solomon codes, which yields $[[N,N^{1-o(1)},N^{1-o(1)}]]_q$ subsystem products of locality $O(N^{1/3})$ supporting transversal $CCZ$ (albeit over an exponentially large alphabet $q$).

\subsection{General Condition for Transversal $C^{r-1}Z$ Gate}
\label{sec:transgen}
The theorem below provides a general condition under which a subsystem product code (see Definition~\ref{def:subhomprod}) supports a transversal $C^{r-1}Z$ (and $U^r$) gate.

\begin{theorem}
  \label{thm:transgen}
  For integers $t,r\geq 2$, let $(Q^i=(Q_X^i,Q_Z^i))_{i\in[t]}$ be (non-subsystem) CSS codes over some field $\bF_q$, and let $Q=(Q_X,Q_Z)=\bigotimes_{i\in[t]}Q^i$ be the subsystem product. For each $i\in[t]$, fix a subspace $L_i\subseteq Q_Z^i$ with $L_i\cap{Q_X^i}^\perp=\{0\}$, and let
  \begin{align*}
    L &:= \bigotimes_{i\in[t]}L_i \\
    S &:= Q_Z\cap Q_X^\perp 
  \end{align*}
  satisfy
  \begin{equation}
    \label{eq:multprop}
    L^{*r} \cap (S*(L+S)^{*r-1}) = \{0\}.
  \end{equation}
  Then there exists an encoding map $\Enc:\bF_q^{\dim(L)}\xrightarrow{\sim}(L+S)/S\subseteq Q_Z/S$ for which $(Q,\Enc)$ supports a transversal $C^{r-1}Z$ (and $U^r$) gate (on $\dim(L)$ logical qudits).
\end{theorem}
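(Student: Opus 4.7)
First I would verify that $L \cap S = \{0\}$, which makes the natural quotient $L \to (L+S)/S$ an isomorphism and hence lets the encoding map have domain $\bF_q^{\dim L}$. For each $i \in [t]$, use $L_i \cap (Q_X^i)^\perp = \{0\}$ together with the CSS containment $(Q_X^i)^\perp \subseteq Q_Z^i$ to pick a decomposition $Q_Z^i = L_i \oplus M_i$ with $M_i \supseteq (Q_X^i)^\perp$. These induce a direct-sum decomposition
\[
Q_Z \;=\; \bigoplus_{T\subseteq[t]}\biggl(\bigotimes_{i\in T}L_i\biggr)\otimes\biggl(\bigotimes_{i\notin T}M_i\biggr),
\]
in which $L$ is exactly the $T=[t]$ summand. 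Since $Q_X^\perp = \sum_i \bF_q^{n_1}\otimes\cdots\otimes (Q_X^i)^\perp\otimes\cdots\otimes\bF_q^{n_t}$ and each $(Q_X^i)^\perp \subseteq M_i$, applying the slot-wise projection onto $L$ kills every summand of $Q_X^\perp$. Hence $S = Q_X^\perp \cap Q_Z$ sits inside the summands with $T\subsetneq[t]$, disjoint from $L$.

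Next I would pick an \emph{information set} $J\subseteq[n]$ for $L$: a set of $|J|=\dim L$ coordinates on which the restriction $L \to \bF_q^J$ is an isomorphism (such $J$ exists by extracting $\dim L$ linearly independent columns of any generator matrix of $L$). Let $\ell_1,\ldots,\ell_k$ be the unique basis of $L$ satisfying $\ell_i|_J = e_i$, and define $\Enc(z) := \sum_i z_i\ell_i + S$. The indicator $\1_J\in\bF_q^n$ then satisfies
\[
\1_J\cdot(\ell_{i_1}\ast\cdots\ast\ell_{i_r}) \;=\; \sum_{j\in J}(\ell_{i_1})_j\cdots(\ell_{i_r})_j \;=\; \delta_{i_1=\cdots=i_r},
\]
realizing the desired diagonal action on the spanning set of $L^{\ast r}$. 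To additionally enforce $a\perp S\ast(L+S)^{\ast r-1}$, I would add a correction $a' \in \bF_q^n$ satisfying $a'\perp L^{\ast r}$ and $a'\cdot w = -\1_J\cdot w$ for every $w\in S\ast(L+S)^{\ast r-1}$. The hypothesis $L^{\ast r}\cap S\ast(L+S)^{\ast r-1} = \{0\}$ is precisely what makes these two specifications mutually consistent on $L^{\ast r}\oplus S\ast(L+S)^{\ast r-1}$; the resulting functional then extends to all of $\bF_q^n$. Set $a := \1_J + a'$.

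To verify Definition~\ref{def:transversal}, fix $z^h\in\bF_q^{\dim L}$ together with representatives ${z^h}' = \ell^h + s^h$, where $\ell^h := \sum_i z^h_i\ell_i\in L$ and $s^h\in S$. Expanding the product gives
\[
\sum_j a_j\prod_{h=1}^r {z^h_j}' \;=\; \sum_{T\subseteq[r]} a\cdot\Bigl(\bigast_{h\in T}\ell^h\;\ast\;\bigast_{h\notin T}s^h\Bigr).
\]
Every term with $T\subsetneq[r]$ contains at least one $s^h$ factor and hence lies in $S\ast(L+S)^{\ast r-1}$, so it is annihilated by $a$. The $T=[r]$ term expands multilinearly and, by the diagonal property of $a$, evaluates to $a\cdot(\ell^1\ast\cdots\ast\ell^r) = \sum_{i_1,\ldots,i_r}z^1_{i_1}\cdots z^r_{i_r}\cdot\delta_{i_1=\cdots=i_r} = \sum_i z^1_i\cdots z^r_i$, matching the left-hand side of~(\ref{eq:transdef}). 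The $U^r$ case follows by setting all $\ell^h=\ell$, $s^h=s$ and using the component-wise binomial identity $(\ell+s)^{\ast r} = \sum_m\binom{r}{m}\ell^{\ast m}\ast s^{\ast(r-m)}$; each $m<r$ summand lies in $S\ast(L+S)^{\ast r-1}$, so only the $m=r$ term contributes, yielding $\sum_i z_i^r$.

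The main obstacle is the construction of $a$ itself. A naive attempt to impose $a\cdot(\ell_{i_1}\ast\cdots\ast\ell_{i_r}) = \delta_{i_1=\cdots=i_r}$ directly, for an arbitrary basis of $L$, can be inconsistent because distinct index tuples may yield the same element of $L^{\ast r}$ while carrying different delta values; already for $L = \spn((1,0),(1,1))\subseteq\bF_q^2$ and $r=2$ one has $\ell_1\ast\ell_1 = \ell_1\ast\ell_2$, blocking the naive specification. The information-set construction sidesteps this entirely by exhibiting $\1_J$ as a concrete vector that automatically realizes the diagonal action, bypassing any consistency check. The theorem's hypothesis is then invoked exactly once---in the correction step---to guarantee that the orthogonality constraint $a\perp S\ast(L+S)^{\ast r-1}$ can be imposed without disturbing the values already assigned on $L^{\ast r}$.
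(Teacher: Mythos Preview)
Your proof is correct and follows essentially the same route as the paper's: both pick an information set for $L$ to get the ``diagonal'' basis, use the hypothesis $L^{*r}\cap(S*(L+S)^{*r-1})=\{0\}$ to build the coefficients vector $a$ as the indicator of that set composed with the projection killing $S*(L+S)^{*r-1}$, and verify~(\ref{eq:transdef}) by expanding $(\ell^h+s^h)^{*r}$. The only cosmetic difference is that the paper chooses a \emph{product} information set $A=A_1\times\cdots\times A_t$ (one $A_i$ for each $L_i$) and phrases the construction of $a$ as $\1_A\circ\eta$ for a projection $\eta$, whereas you use an arbitrary information set $J$ and write $a=\1_J+a'$; your argument for $L\cap S=\{0\}$ is also more detailed than the paper's one-line justification.
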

\begin{proof}
  For each $i\in[t]$, if $Q^i$ has length $n_i$, let $A_i\subseteq[n_i]$ be a set of code components of size $|A_i|=\dim(L_i)$ such that $L_i|_{A_i}=\bF_q^{A_i}$. That is, the restriction map to components in $A_i$ induces an isomorphism $L_i\xrightarrow{\sim}\bF_q^{A_i}$. Define $\Enc_i:\bF_q^{A_i}\xrightarrow{\sim}L_i$ to be the inverse isomorphism, so that $\Enc_i(z_i)|_{A_i}=z_i$ for every $z_i\in\bF_q^{A_i}$. Then letting $A=A_1\times\cdots\times A_t$, define $\Enc:\bF_q^A\xrightarrow{\sim}(L+S)/S$ by
  \begin{equation*}
    \Enc(z) = (\Enc_1\otimes\cdots\otimes\Enc_t)(z)+S
  \end{equation*}
  for $z\in\bF_q^A$. Note that $\Enc$ is indeed an isomorphism because each $L_i\cap Q_X^i=\{0\}$, and hence $L\cap S=\{0\}$.

  It remains to prove the transversal $C^{r-1}Z$ property in Definition~\ref{def:transversal}. We first define the desired coefficients vector $a\in\bF_q^n$ for $n=n_1\cdots n_t$. For this purpose, by~(\ref{eq:multprop}), it holds that
  \begin{equation*}
    (L+S)^{*r} = L^{*r}+S*(L+S)^{*r-1} = L^{*r}\oplus S*(L+S)^{*r-1}.
  \end{equation*}
  Hence there is a well-defined linear projection map $\eta:(L+S)^{*r}\rightarrow L^{*r}$ with kernel $S*(L+S)^{*r-1}$, which we can extend (in an arbitrary way) to a linear map $\eta:\bF_q^n\rightarrow L^{*r}$, as $(L+S)^{*r}\subseteq\bF_q^n$. Then we define a linear functional $a:\bF_q^n\rightarrow\bF_q$ by $a(z')=\1_A\cdot\eta(z')=\sum_{j=(j_1,\dots,j_t)\in A}\eta(z')_j$, which we may view as a vector $a\in\bF_q^n$ where $a(z')=a\cdot z'$.

  Now for every $(z^h\in\bF_q^A)_{h\in[r]}$ and every $({z^h}'\in\Enc(z^h))_{h\in[r]}$, writing ${z^h}'=y^h+s^h$ for $y^h=(\Enc_1\otimes\cdots\otimes\Enc_t)z^h$ and $s^h\in S$, then it follows that
  \begin{align*}
    a \cdot ({z^1}'*\cdots*{z^r}')
    &= \1_A\cdot\eta({z^1}'*\cdots*{z^r}') \\
    &= \1_A\cdot\eta((y^1+s^1)*\cdots*(y^r+s^r)) \\
    &= \1_A\cdot(y^1*\cdots*y^r) \\
    &= \1_A\cdot(z^1*\cdots*z^r),
  \end{align*}
  as desired. Note that the first equality above holds by the definition of $a$, the second equality holds by the definition of the ${z^h}'$, the third equality holds by the definition of $\eta$, and the fourth equality holds because $y^h|_A=z^h$ by the definition of the $\Enc_i$'s.
\end{proof}


\subsection{Product of Two Reed-Solomon Codes}
\label{sec:transRS}
In this section, we instantiate the general construction in Theorem~\ref{thm:transgen} with the subsystem product of a pair of carefully chosen quantum Reed-Solomon codes. Below, we use the notation for polynomial evaluation codes defined in Definition~\ref{def:evlnotation}.

\begin{theorem}
  \label{thm:transRS}
  Fix any integer $r\geq 2$, and let $\epsilon=1/4r$. Let $\bF_q$ be a finite field with $q\geq 4r^2$, and for $i\in[2]$ let $Q^i=(Q^i_X=\textrm{RS}(q,k^i_X),\; Q^i_Z=\textrm{RS}(q,k^i_Z))$ with
  \begin{align*}
    k^1_X = q-\lfloor\epsilon q\rfloor, &\hspace{1em} k^1_Z = q-\lfloor\epsilon q\rfloor \\
    k^2_X = q-2\lfloor\epsilon q\rfloor, &\hspace{1em} k^2_Z = \lfloor q/r\rfloor.
  \end{align*}
  Then there exists some $\Delta=\Delta(r)>0$ such that the subsystem product $Q:=Q^1\otimes Q^2$ is a
  \begin{equation*}
    [[q^2,\; (k^1_X+k^1_Z-q)(k^2_X+k^2_Z-q),\; \Delta\cdot q^2]]_q
  \end{equation*}
  subsystem code of locality $2q$.
  
  Furthermore, for $i\in[2]$, define $L_i\subseteq Q^i_Z$ by $L_1=L_2=\evl(\bF_q[X]^{[\underline{\ell},\overline{\ell})})$ with
  \begin{align*}
    \underline{\ell} = \lceil q/r-q/2r^2\rceil, &\hspace{1em} \overline{\ell} = k^2_Z = \lfloor q/r\rfloor.
  \end{align*}
  Then defining $L$ and $S$ as in Theorem~\ref{thm:transgen}, there exists an encoding map $\Enc:\bF_q^{\dim(L)}\xrightarrow{\sim}(L+S)/S$ for which $(Q,\Enc)$ supports a transversal $C^{r-1}Z$ (and $U^r$) gate on $\dim(L) = (\overline{\ell}-\underline{\ell})^2$ logical qudits.
\end{theorem}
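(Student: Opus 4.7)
The plan is to verify the parameter bounds via standard facts about Reed-Solomon codes and Theorem~\ref{thm:subpe}, then apply Theorem~\ref{thm:transgen} with $L_i$ as specified to obtain the transversal gate; the heart of the work is verifying the multiplication property~(\ref{eq:multprop}). For the parameters, the length $q^2$ is immediate. The choices give $k^i_X + k^i_Z \geq q$ for both $i$, so ${Q^i_X}^\perp = \textrm{RS}(q, q - k^i_X) \subseteq Q^i_Z$ and the subsystem product formula gives dimension $(k^1_X + k^1_Z - q)(k^2_X + k^2_Z - q)$; locality $\leq 2q$ follows from the weight bound on Reed-Solomon parity checks. For the distance, the pairs $({Q^1_Z}^\perp, Q^2_X)$ and $({Q^1_X}^\perp, Q^2_Z)$ have dimension sums $q - \lfloor\epsilon q\rfloor$ and $\lfloor\epsilon q\rfloor + \lfloor q/r\rfloor$ respectively, both bounded away from $q$ by an $r$-dependent quantity, so Theorem~\ref{thm:pe2RS} gives product-expansion $\geq \rho(r) > 0$ and Theorem~\ref{thm:subpe} yields distance $\geq \Delta(r) \cdot q^2$.

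For the transversal gate, the conditions $L_i \subseteq Q^i_Z$ and $L_i \cap {Q^i_X}^\perp = \{0\}$ reduce to degree-interval comparisons and follow from $\overline{\ell} \leq k^i_Z$ and $\underline{\ell} > s^i := q - k^i_X$. The main step is~(\ref{eq:multprop}). Using the bijection between bivariate polynomials of bidegree in $[0, q)^2$ and $\bF_q^{q \times q}$, I will identify $L^{*r}$ with the evaluations of polynomials of bidegree in the square $D_L := [r\underline{\ell}, r\overline{\ell} - r + 1)^2$, and $S = Q_Z \cap Q_X^\perp$ with the evaluations of polynomials of bidegree in the ``L-shaped'' region $B := [0, s^1) \times [0, k^2_Z) \cup [0, k^1_Z) \times [0, s^2)$ (the latter using the standard computation that the Reed-Solomon pairing $\sum_x fg(x)$ extracts the $X^{q-1}$-coefficient of $fg$, combined with the parameter bound $k^i_X + k^i_Z < 2(q - 1)$). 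Crucially, $r\overline{\ell} \leq q$ places $D_L$ strictly inside $[0, q)^2$, so no Frobenius wrap-around affects $L^{*r}$.

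To prove~(\ref{eq:multprop}), I will show that no single product $s \cdot a_1 \cdots a_{r-1}$ with $s \in S$ and each $a_i \in L + S$ can, after reduction modulo $(X_1^q - X_1, X_2^q - X_2)$, contribute a nonzero monomial in $D_L$. Parametrize such a product by the number $m \in [1, r]$ of $S$-factors, of which $k$ lie in $B_I := [0, s^1) \times [0, k^2_Z)$ (small $X_1$-degree) and $m - k$ lie in $B_{II}$ (small $X_2$-degree), plus $r - m$ factors from $L$. An elementary upper bound shows the pre-reduction $X_2$-degree is at most $(r - m + k)(\overline{\ell} - 1) + (m - k)(s^2 - 1) \leq r\overline{\ell} - r < q$, precluding wrap-around in $X_2$; requiring the resulting $X_2$-degree to be $\geq r\underline{\ell}$ then forces $m = k$ via the calibrated inequality $r(\overline{\ell} - \underline{\ell}) - r < \overline{\ell} - s^2$, which holds because $\overline{\ell} - \underline{\ell} \leq q/(2r^2)$ and $\overline{\ell} - s^2 \geq q/(2r) - 1$. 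Under $m = k$, the pre-reduction $X_1$-degree is bounded by $ms^1 + (r - m)\overline{\ell} - r < q$, so still no wrap in $X_1$, and requiring $X_1$-degree $\geq r\underline{\ell}$ forces $m = 0$ via the symmetric inequality $r(\overline{\ell} - \underline{\ell}) - r < \overline{\ell} - s^1$, contradicting $m \geq 1$. Hence no reduced monomial of any such product lies in $D_L$, so the coefficient at every point of $D_L$ vanishes on $S * (L+S)^{*r-1}$, forcing its intersection with $L^{*r}$ to be zero.

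The main obstacle is the tight parameter calibration. The gap $\overline{\ell} - \underline{\ell} \approx q/(2r^2)$ is set just small enough relative to $\overline{\ell} - s^1 \approx 3q/(4r)$ and $\overline{\ell} - s^2 \approx q/(2r)$ that the two variable-wise lower bounds above cannot simultaneously hold, with the $O(r)$ slack from ``$-r$'' corrections and floor/ceiling rounding absorbed. A secondary subtlety is ruling out Frobenius wrap-around in all branches of the case analysis; this is made possible precisely by the choice $\overline{\ell} \leq q/r$, which guarantees that whenever the $X_2$-constraint forces all $S$-factors into $B_I$, the $X_1$-degree also stays below $q$.
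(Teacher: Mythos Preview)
Your proposal is correct and follows essentially the same approach as the paper: the parameters are obtained via Corollary~\ref{cor:subpeRS}/Theorem~\ref{thm:subpe}, and the multiplication property~(\ref{eq:multprop}) is established by showing that the bidegree support of $S*(L+S)^{*r-1}$ cannot meet that of $L^{*r}$, via a two-coordinate argument (the paper writes this as $rM\cap(T+(M\cup T)^{+(r-1)})=\emptyset$ and argues directly on a point $a=b^1+\cdots+b^r$, while you parametrize by the counts $(m,k)$ of $S$- and $B_I$-factors, but the underlying inequalities are the same). Your explicit treatment of Frobenius wrap-around is a nice point of care that the paper's write-up glosses over, though the paper's argument is not actually affected since once all factors are forced into the set $P$ the $X_1$-degree sum is already below $q$.
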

\begin{proof}
  The code parameters of $Q$ follow directly from Corollary~\ref{cor:subpeRS}. Thus to prove the theorem, it suffices to show that the subspaces $L,S$ satisfy the conditions in Theorem~\ref{thm:transgen}. For $i\in[2]$, by the definition of $L_i$ and $Q^i_X$ we have
  \begin{equation*}
    L_i\cap{Q^i_X}^\perp = \evl(\bF_q[X]^{[\underline{\ell},\overline{\ell})})\cap\evl(\bF_q[X]^{[0,q-k^i_X)}) = \evl(\bF_q[X]^{[\underline{\ell},\overline{\ell})\cap[0,q-k^i_X)}) = \{0\}
  \end{equation*}
  
  Therefore it only remains to show that~(\ref{eq:multprop}) holds. For this purpose, we may write
  \begin{align*}
    L
    &= \evl(\bF_q[X_1,X_2]^M)
  \end{align*}
  for
  \begin{align*}
    M &= [\underline{\ell},\overline{\ell})^2 \subseteq [q/r-q/2r^2,q/r)^2,
  \end{align*}
  so
  \begin{align*}
    L^{*r}
    &\subseteq \evl(\bF_q[X_1,X_2]^{rM}).
  \end{align*}
  Meanwhile,
  \begin{align*}
    S
    &= \evl(\bF_q[X_1,X_2]^T)
  \end{align*}
  for
  \begin{align*}
    T
    &= [0,k^1_Z)\times[0,q-k^2_X)\cup[0,q-k^1_X)\times[0,k^2_Z) \\
    &\subseteq [0,q)\times[0,2\epsilon q]\cup[0,\epsilon q]\times[0,q/r).
  \end{align*}
  Therefore
  \begin{align*}
    L^{*r}\cap(S*(L+S)^{*r-1}) \subseteq \evl(\bF_q[X_1,X_2]^A)
  \end{align*}
  for
  \begin{align*}
    A
    &= rM \cap (T+(M\cup T)^{+(r-1)}).
  \end{align*}

  Our goal is to show that $A=\emptyset$, as it then follows that $L^{*r}\cap(S*(L+S)^{*r-1})\subseteq\evl(\bF_q[X_1,X_2]^\emptyset)=\{0\}$, as desired. For this purpose, assume for contradiction that there is some point $a=(a_1,a_2)\in A$. Because $a\in rM$, we have $a_1,a_2\geq q-q/2r$. Because $a\in T+(M\cup T)^{+(r-1)}$, we may write $a=b^1+\cdots+b^r$ for some $b^1\in T$ and $b^2,\dots,b^r\in M\cup T$. As the second coordinate of every point in $M\cup T$ is $<q/r$, we must have each $b^i_2<q/r$ but $b^1_2+\cdots+b^r_2=a_2\geq q-q/2r$, and hence each $b^i>q/r-q/2r=q/2r$. But the only points in $M\cup T$ with second coordinate $>q/2r$ are those points in the set $P:=M\cup[0,\epsilon q]\times(q/2r,q/r)$. Hence each $b^i\in P$. But we also have that $b^1\in T$, so $b^1 \in P\cap T = [0,\epsilon q]\times(q/2r,q/r)$. Therefore $b^1\leq \epsilon q=q/4r$, and for $2\leq i\leq r$ we have $b^i_1<q/r$ because every point in $P$ has first coordinate $<q/r$. Thus $b^1_1+\cdots+b^r_1<q-3q/4r$, which contradicts the fact that $b^1_1+\cdots+b^r_1=a_1\geq q-q/2r$. Thus the assumption that $A$ is nonempty was false, as desired.
\end{proof}

As described in Section~\ref{sec:subprodccz}, the proof of Theorem~\ref{thm:transRS} required somewhat delicate parameter choices and analysis to ensure that the multiplication propety~(\ref{eq:multprop}) holds. The difficulty here was in part due to the fact that Corollary~\ref{cor:subpeRS} places certain constraints on the dimensions of the underlying Reed-Solomon codes; such constraints are ultimately needed because Reed-Solomon codes are only product-expanding when the sum of the rates is $<1$ (see Theorem~\ref{thm:pe2RS}). Similar, but even more delicate, challenges arise in establishing the multiplication property~(\ref{eq:multprop}) in Section~\ref{sec:tripleprod} below for the product-expanding codes from Section~\ref{sec:perp}.

\section{Product-Expansion of Randomly Punctured Tensor Reed-Solomon Codes}
\label{sec:perp}

In this section, we prove Theorem~\ref{thm:peptRS} below, which shows that randomly punctured tensor products of Reed-Solomon codes are product-expanding; in Section~\ref{sec:tripleprod} below, we will then apply the results in Section~\ref{sec:transversal} to show how to obtain a subsystem product of three such codes that supports a transversal $CCZ$ gate. Our proof builds on the techniques of \cite{kalachev_maximally_2025}, who showed that uniformly random classical codes are product-expanding (Theorem~\ref{thm:petrand}).

Note that in order to ultimately apply Theorem~\ref{thm:subpe} to the codes described below, we consider the product-expansion of a $t$-tuple of codes consisting of $t-1$ randomly punctured tensor products of Reed-Solomon codes, and one dual of such a code.

\begin{theorem}
  \label{thm:peptRS}
  Let $\bF_q$ be a finite field, and let $m,u,t\geq 1$ and $1\leq k_1,\dots,k_t<m$ be integers such that $q\geq m$. Let $n=m^u$. Sample $t$ uniformly random subsets $E_1,\dots,E_t\subseteq\bF_q^u$, each of size $|E_i|=n$. Then with probability $\geq 1-n^{t+1}2^{n^t+2}/q$, the $t$-tuple of codes
  \begin{equation}
    \label{eq:ptRS}
    \left(\evl_{E_1}(\bF_q[X_1,\dots,X_u]^{[0,k_1)^u}), \dots, \evl_{E_{t-1}}(\bF_q[X_1,\dots,X_u]^{[0,k_{t-1})^u}), \evl_{E_t}(\bF_q[X_1,\dots,X_u]^{[0,k_t)^u})^\perp\right)
  \end{equation}
  has product-expansion at least
  \begin{equation*}
    \left(\frac{1}{um}\right)^{O(t^3)} \cdot \left(\frac{k_t}{m}\right)^{ut} \cdot \prod_{i=1}^{t-1}\left(1-\frac{k_i}{m}\right)^{4ut^2}.
  \end{equation*}
\end{theorem}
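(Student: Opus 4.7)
The plan is to adapt the maximally extendable (ME) code framework of \cite{kalachev_maximally_2025}, which algebraically characterizes product-expansion in terms of the non-vanishing of certain minors of a generator matrix of the dual tensor code $C^{(1)}+\cdots+C^{(t)}$, and then to apply a Schwartz-Zippel argument over the parameter space of randomly punctured tensor Reed-Solomon codes. Concretely, for each combinatorial pattern of possible decompositions $c = c_1+\cdots+c_t$ with prescribed column-supports $|c_i|_i$, the statement that no such decomposition achieves product-expansion $\rho$ translates into a conjunction of rank conditions on submatrices built from the generator matrices of $C_1,\ldots,C_t$. Provided that \emph{at least one} witness choice of evaluation sets $E_1,\ldots,E_t$ gives product-expansion $\rho$, each such rank condition that the witness violates corresponds to a Pl\"ucker-coordinate polynomial that is not identically zero in the entries of the generator matrices. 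Since the generator (and parity-check) matrices of $\evl_{E_i}(\bF_q[X_1,\dots,X_u]^{[0,k_i)^u})$ are themselves polynomials in the $E_i$ coordinates of total degree polynomial in $n$ and $u$, Schwartz-Zippel (Lemma~\ref{lem:sz}) will convert non-vanishing at the witness into non-vanishing for a uniformly random choice of $E_i$, with failure probability at most $n^{O(t)}/q$, matching the probability bound in the theorem statement.

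To supply the witness I would first consider the all-primal case, where the $t$-th code is taken to be of the same type as the first $t-1$ rather than dual. Here, I would use that the unpunctured $u$-fold tensor product $\textrm{RS}(q,k_i)^{\otimes u}$ is a locally testable code of constant soundness depending on $u$ and $k_i/m$, by iteratively applying Viderman's theorem \cite{viderman_combination_2015} on tensor products of codes of good distance (the underlying one-dimensional Reed-Solomon code has relative distance $1-k_i/m$). Then, following \cite{kalachev_maximally_2025}, any $t$-tuple of locally testable codes with sufficiently large soundness is product-expanding, with a quantitative bound that loses polynomially in the soundness and a constant factor in $t$. Restricting each tensor RS code to a random $E_i$ preserves relative distance up to a factor $\prod_j(1-k_j/m)^{O(u)}$, and the soundness degrades only polynomially; thus the restricted tuple witnesses product-expansion in the all-primal case, producing the non-vanishing Pl\"ucker relations needed for Schwartz-Zippel to conclude.

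The subtle ingredient is accommodating the dual code in the $t$-th position. Here I would apply the trick described in Section~\ref{sec:keypetRS}: sample $E_t$ first, and use (an analogue of) Lemma~\ref{lem:randommds} to argue that with high probability the code $\evl_{E_t}(\bF_q[X_1,\dots,X_u]^{[0,k_t)^u})$ is MDS, so that its dual has relative distance at least $k_t/m$. Now treat this dual code as a fixed classical code of good distance, and invoke the version of the \cite{kalachev_maximally_2025} argument in which one of the $t$ codes in the tuple is an arbitrary code of good distance, while the remaining $t-1$ codes are sampled (randomly punctured tensor RS) and need only be locally testable of good soundness. Because $C^{(t)}$ enters the generator matrix and its minors only through its distance and its pairing with the other $C^{(i)}$, this hybridization is compatible with both the witness construction and the Schwartz-Zippel conclusion. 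Finally, I would close the argument by combining the two random samplings (first $E_t$, then $E_1,\dots,E_{t-1}$) via a union bound, and take $q \geq 2^{n^{t+1}}$ large enough to crush the total failure probability.

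The principal obstacle I expect is quantitative bookkeeping: tracking the product-expansion constant through the layered construction and matching the stated bound $(1/um)^{O(t^3)}(k_t/m)^{ut}\prod_{i<t}(1-k_i/m)^{4ut^2}$. The $(1-k_i/m)^{4ut^2}$ factors arise from iterated applications of Viderman's tensor-distance lemma across $u$ directions and through the $O(t^2)$ inductive decompositions intrinsic to the ME argument; the $(k_t/m)^{ut}$ factor arises from the dual-distance contribution of the MDS punctured code; and the $(1/um)^{O(t^3)}$ prefactor comes from the repeated passage to subcodes via Lemma~\ref{lem:pesub}, each of which incurs a $\rho^{2^t}/2^t$ loss that compounds through the inductive proof of the Kalachev-Panteleev theorem. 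The structural part of the argument closely mirrors \cite{kalachev_maximally_2025}; the real work is verifying that these three sources of loss, carefully combined, collapse to the claimed expression.
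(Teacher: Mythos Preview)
Your overall architecture---witness plus Schwartz--Zippel via the inner-generated/maximally-extendable machinery of \cite{kalachev_maximally_2025}, with $E_t$ sampled first and conditioned on the MDS property so the dual code can be treated as a fixed code of good distance---matches the paper. But your witness construction has a genuine gap.

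You write that ``restricting each tensor RS code to a random $E_i$ preserves relative distance \ldots\ and the soundness degrades only polynomially; thus the restricted tuple witnesses product-expansion.'' This is both unjustified and logically backwards. Local testability is not known to survive random puncturing, and in any case you cannot use the random object as the witness in a Schwartz--Zippel argument---the witness must be a specific deterministic choice at which the relevant minors are shown to be nonzero. The paper's witness for $E_1',\dots,E_{t-1}'$ is the \emph{unpunctured} full grid $\{z_1,\dots,z_m\}^u$, which has exactly $m^u=n$ points (this is precisely why the theorem imposes $n=m^u$). On this grid the codes are honest tensor products $\textrm{RS}_{\{z_1,\dots,z_m\}}(q,k_i)^{\otimes u}$, so Viderman's theorem applies directly with no puncturing loss, and Lemma~\ref{lem:peLTC} yields product-expansion $\rho'$ for the tuple $(C_1',\dots,C_{t-1}',C_t)$ with $C_t$ the already-sampled dual code. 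The generator matrices at the full grid and at a random $E_i$ have the same dimensions, so the rank equalities characterizing inner-generatedness (Lemma~\ref{lem:igrank}) become determinant conditions in the coordinates of $E_1,\dots,E_{t-1}$ that are nonzero at the grid witness; Schwartz--Zippel and a union bound over the at most $2^{n^t}$ $\rho'$-closed sets $A$ then finish.

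Your quantitative bookkeeping also misattributes the losses. Lemma~\ref{lem:pesub} is not used anywhere in this proof. The factor $(1/um)^{O(t^3)}$ arises because Lemma~\ref{lem:igtope} converts ``every $\rho'$-closed set is inner-generated'' into product-expansion $\geq \rho'^t/(t(2^t+1)^t)$; since $\rho'$ already carries a $(1/um)^{O(t^2)}$ prefactor from the recursion in Lemma~\ref{lem:peLTC}, taking the $t$-th power produces $(1/um)^{O(t^3)}$. Likewise the exponent $4ut^2$ on $(1-k_i/m)$ is $4ut$ from $\rho'$ times the outer exponent $t$, not an $O(t^2)$ count of inductive decompositions.
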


The remainder of this section is dedicated to proving Theorem~\ref{thm:peptRS}. We begin with some necessary preliminary notions and prior results in Section~\ref{sec:peptprelim}, which go beyond those presented in Section~\ref{sec:prelim}; we then present our proof in Section~\ref{sec:peptproof}.

\subsection{Preliminaries}
\label{sec:peptprelim}
This section presents definitions and prior results that we will need to prove Theorem~\ref{thm:peptRS}.

\subsubsection{Classical Codes}
We begin with some necessary preliminaries on classical codes. Below, recall the definition of an MDS code from Definition~\ref{def:MDS}.

\begin{lemma}[Well-known]
  The following are equivalent conditions for a $k$-dimensional classical code $C\subseteq\bF_q^n$:
  \begin{enumerate}
  \item $C$ is MDS,
  \item $C^\perp$ is MDS,
  \item $C$ is a $[n,k,n-k+1]$ code.
  \end{enumerate}
\end{lemma}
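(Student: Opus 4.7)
The plan is to derive all three equivalences from the standard correspondence between the Hamming distance of a code and the column-rank structure of a parity-check matrix. Recall that a parity-check matrix of $C$ is a generator matrix of $C^\perp$ (after possibly dropping redundant rows). This already makes $(1)\Leftrightarrow(2)$ essentially immediate from the two equivalent formulations given in Definition~\ref{def:MDS}: the ``parity-check'' characterization of $C$ being MDS (every $(n-k)\times(n-k)$ submatrix of $H$ has full rank) is exactly the ``generator'' characterization of $C^\perp$ being MDS, since $H\in\bF_q^{(n-k)\times n}$ serves as a generator matrix for $C^\perp$, which has dimension $n-k$. To justify the ``or equivalently'' clause in Definition~\ref{def:MDS} itself (in case we need to use it), I would invoke the standard duality argument: a $k\times k$ submatrix of $G$ indexed by $S\subseteq[n]$ has full rank iff the projection $C\to\bF_q^S$ is surjective iff no nonzero codeword of $C^\perp$ is supported in $[n]\setminus S$, and this can be translated into the analogous condition on $H$.

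For $(1)\Leftrightarrow(3)$, I would use the well-known fact that the distance $d$ of $C$ equals the smallest number of linearly dependent columns of any parity-check matrix $H$. Thus $C$ has distance at least $d$ iff every set of $d-1$ columns of $H$ is linearly independent, which is equivalent to every $(d-1)\times(d-1)$ submatrix of $H$ having full column rank. Taking $d-1=n-k$, the MDS condition is equivalent to $C$ having distance at least $n-k+1$. Combined with the Singleton bound $d\le n-k+1$ (which follows from the fact that $C$ projected onto any $k-1$ coordinates cannot be injective, so some nonzero codeword is supported in the remaining $n-k+1$ coordinates), we conclude $d=n-k+1$ iff $C$ is MDS.

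The overall structure is therefore: prove $(1)\Leftrightarrow(3)$ via the column-dependence characterization of distance together with Singleton, and then read off $(1)\Leftrightarrow(2)$ by noting that $(3)$ (applied to $C^\perp$, which has dimension $n-k$ and distance $k+1$) gives a symmetric statement. Alternatively, $(1)\Leftrightarrow(2)$ can be extracted directly from Definition~\ref{def:MDS} as described above.

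I do not expect any real obstacle; this is textbook material. The only mild subtlety is making sure the two halves of the ``generator vs.~parity-check'' formulation of MDS in Definition~\ref{def:MDS} are actually equivalent, but this is a standard linear-algebraic fact about dual codes and can be dispatched in a sentence. The proof should fit in a short paragraph.
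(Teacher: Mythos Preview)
The paper does not supply a proof of this lemma at all; it is stated with the tag ``Well-known'' and left unproven. Your proof sketch is correct and standard textbook material, so there is nothing to compare against and no gap to point out.
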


The following property of locally testable codes was shown in \cite[Lemma~3.4]{kalachev_maximally_2025}.

\begin{lemma}[\cite{kalachev_maximally_2025}]
  \label{lem:colLTC}
  Let $C$ be a classical locally testable code of relative distance $\delta$, locality $w$, and soundness $\rho$, with parity-check matrix $H\in\bF_q^{m\times n}$. Then for every subset $S\subseteq[m]$, there exists a subset $E\subseteq[n]$ with $|E|\leq(6n/\delta\rho m)|S|$ such that for every $s\in\im H$ with $\supp(s)\subseteq S$, there exists $e\in\bF_q^n$ with $\supp(e)\subseteq R$ such that $He=s$.
\end{lemma}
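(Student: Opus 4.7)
The plan is to split on the size of $S$. When $|S| \geq \delta\rho m/6$ the bound $6n|S|/(\delta\rho m) \geq n$, so taking $E = [n]$ trivially works. Thus the substance of the proof handles the regime $|S| < \delta\rho m/6$, where I construct a small $E$ from a carefully chosen \emph{linear} section of the syndrome-to-error map.

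In this main case, write $V := \{s \in \im H : \supp(s) \subseteq S\}$. For each $v \in V$ the soundness hypothesis gives a preimage $e_v \in \bF_q^n$ with $He_v = v$ and $|e_v| \leq n|v|/(\rho m) < \delta n/6$. Any two preimages of a single $v$ of weight $< \delta n/2$ differ by a codeword of weight $< \delta n$, which must vanish by the distance assumption; so the minimum-weight preimage is \emph{unique}. I would then check that the map $\Phi : V \to \bF_q^n$ sending $v \mapsto e_v$ is $\bF_q$-linear: for $v_1,v_2 \in V$ both $\Phi(v_1)+\Phi(v_2)$ and $\Phi(v_1+v_2)$ are preimages of $v_1+v_2$ of weight $< \delta n/3$, so by the uniqueness just established they coincide. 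Setting $E := \bigcup_{v \in V}\supp(\Phi(v))$, every $v \in V$ has $\Phi(v) \in \bF_q^E$ with $H\Phi(v)=v$, which supplies the containment part of the conclusion.

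It remains to bound $|E|$, and this is the main obstacle. The idea is to double-count $\sum_{w \in \im\Phi}|w|$. For any subspace $W \subseteq \bF_q^n$, each $j \in \supp(W)$ contributes exactly $(q-1)|W|/q$ to $\sum_{w \in W}|w|$, since the coordinate functional $\pi_j : W \to \bF_q$ is a nonzero linear form and hence has a kernel of codimension one; this yields the identity
\[
  |\supp(W)| \;=\; \frac{q}{(q-1)|W|}\sum_{w \in W}|w|.
\]
Applying it with $W = \im\Phi$, invoking $|\Phi(v)| \leq n|v|/(\rho m)$ from soundness together with the analogous double-count $\sum_{v \in V}|v| \leq |S|(q-1)|V|/q$ (valid because $\supp(V) \subseteq S$), and using the injectivity of $\Phi$ (so $|W|=|V|$), I obtain $|E| \leq n|S|/(\rho m) \leq 6n|S|/(\delta\rho m)$, as $\delta \leq 1$. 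The main subtlety is calibrating the constants so that the weight hierarchy $|v| < \delta\rho m/6 \implies |\Phi(v)| < \delta n/6$ simultaneously delivers uniqueness of $e_v$ and linearity of $\Phi$; the denominator $6$ in the hypothesis $|S| < \delta\rho m/6$ is precisely what makes the pairwise-sum step $|\Phi(v_1)+\Phi(v_2)| < \delta n/3$ fit inside the uniqueness window.
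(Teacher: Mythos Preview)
The paper does not prove this lemma; it is simply cited from \cite{kalachev_maximally_2025} (their Lemma~3.4) and stated without proof. So there is no in-paper argument to compare against.

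Your proof is correct. The split on $|S|$ dispatches the trivial regime, and in the interesting case your construction of a linear section $\Phi:V\to\bF_q^n$ via uniqueness of low-weight preimages, followed by the double-counting identity $\sum_{w\in W}|w|=|\supp(W)|\cdot(q-1)|W|/q$ for subspaces $W$, is clean and self-contained. Two small remarks: first, you only explicitly check additivity of $\Phi$, but homogeneity $\Phi(\lambda v)=\lambda\Phi(v)$ follows by the same uniqueness argument since $|\lambda\Phi(v)|=|\Phi(v)|$ for $\lambda\neq 0$; second, in the nontrivial regime you actually obtain the sharper bound $|E|\leq n|S|/(\rho m)$, and the extra factor $6/\delta$ in the stated conclusion is only invoked to cover the trivial case $|S|\geq\delta\rho m/6$.
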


\cite{viderman_combination_2015} showed the following result on local testability of tensor codes:

\begin{theorem}[\cite{viderman_combination_2015}]
  \label{thm:tensorltc}
  Let $C\subseteq\bF_q^n$ be a classical code of relative distance $\delta$. Then for every integer $t\geq 3$, the tensor code $C^{\otimes t}$ is locally testable of locality $n$ and soundness $\rho\geq \delta^{3t}/(tn)^{O(1)}$.
\end{theorem}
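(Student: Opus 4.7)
The plan is to proceed by induction on $t$, with the base case $t=3$ established via the classical robust-testability analysis of the axis-parallel line test, and the inductive step reducing $C^{\otimes t}$ to a combination of $C^{\otimes 3}$ on three distinguished axes and $C^{\otimes(t-3)}$ on the remaining axes. For a received word $f\colon[n]^t\to\bF_q$, the natural parity-check matrix $H$ for $C^{\otimes t}$ has one row per triple $(i,\mathbf{x},h)$ consisting of an axis direction $i\in[t]$, an index $\mathbf{x}\in[n]^{t-1}$ of a line in that direction, and a dual check $h\in C^\perp$; each such row has at most $n$ nonzero entries, establishing the locality bound. The soundness inequality to prove is that if $f$ has syndrome of weight at most $\epsilon\cdot m$---equivalently, if at most an $\epsilon$-fraction of axis-parallel lines fail some check of $C$---then $f$ lies within Hamming distance at most $(\epsilon/\rho)\cdot n^t$ of a genuine $C^{\otimes t}$ codeword, for $\rho\ge\delta^{3t}/(tn)^{O(1)}$.

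For $t=3$, the strategy is to invoke the Ben-Sasson--Sudan / Dinur--Sudan--Wigderson framework: if only a small fraction of the $3n^2$ axis-parallel lines fail $C$, then averaging shows that each of the three pencils of $n$ axis-parallel planes contains mostly planes that are close to $C^{\otimes 2}$, since within each plane the line tests coincide with the natural test of $C^{\otimes 2}$. Decoding every such plane to its nearest $C^{\otimes 2}$ codeword and enforcing consistency where the three pencils meet (every coordinate lies on one plane per pencil, and every pair of non-parallel planes shares a line) yields a unique $g\in C^{\otimes 3}$ close to $f$; the distance $\delta$ of $C$ enters at three successive correction layers, yielding the $\delta^{9}=\delta^{3\cdot 3}$ behavior. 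For the inductive step $t-1\to t$, I would group one new axis with two existing ones to form a three-dimensional block, apply the base-case correction to each of the $n^{t-3}$ flats spanned by this block (with the remaining coordinates fixed), and then invoke the inductive hypothesis to reassemble the corrected slabs into a global $C^{\otimes t}$ codeword.

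The main obstacle, and the technical heart of \cite{viderman_combination_2015}, will be controlling the blow-up of the soundness constant under this composition. A naive iteration loses a factor of $\Omega(1/\delta)$ per axis from unique decoding inside flats plus a factor of $\Omega(1/\rho_{t-1})$ from the inductive hypothesis, yielding soundness exponentially small in $t$ with a large base, far worse than the claimed $\delta^{3t}/(tn)^{O(1)}$. The remedy is to track a \emph{robust} rather than Boolean version of test acceptance: instead of classifying each line as ``in $C$'' or ``not in $C$,'' one records the distance of $f|_L$ from $C$ and argues that the average such distance across directions upper-bounds the distance of $f$ from $C^{\otimes t}$ up to losses only polynomial in $1/\delta$ and $tn$. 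A careful averaging and union-bound argument, accounting for the $t$ axis directions and the $n^{t-O(1)}$ lower-dimensional flats processed per round, then yields the stated soundness with the polynomial slack $(tn)^{O(1)}$ absorbing the overhead from decoding and from enforcing directional consistency.
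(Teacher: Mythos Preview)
The paper does not prove this theorem; it is quoted as a black-box result from \cite{viderman_combination_2015}, with only the Remark following it explaining why the soundness bound here is $\delta^{3t}/(tn)^{O(1)}$ rather than Viderman's original $\delta^{3t}/t^{O(1)}$ (the paper's Definition~\ref{def:MDS} counts individual parity-check rows rather than queries to code positions, incurring a $\poly(n)$ loss). So there is no proof in the paper to compare against.

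Regarding your sketch on its own merits: the overall shape---robust testability, induction on $t$, decoding axis-parallel flats and enforcing consistency---is in the right family of arguments. But your inductive step is muddled: you write ``$t-1\to t$'' yet describe grouping three new axes and invoking $C^{\otimes(t-3)}$, which is a different recursion. More substantively, Viderman's argument does not proceed by isolating a three-dimensional block and applying a separate $t=3$ base case inside it. Instead, the key lemma is a direct \emph{robustness} statement for hyperplanes: if $f$ is $\epsilon$-far from $C^{\otimes t}$, then a random axis-parallel hyperplane restriction of $f$ is on average $\Omega(\delta^{O(1)}\epsilon)$-far from $C^{\otimes(t-1)}$. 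This one-step reduction, iterated down from $t$ to lines, gives the $\delta^{O(t)}$ soundness with only polynomial overhead in $t$. Your three-axes-at-a-time scheme would require a separate consistency argument to glue the $n^{t-3}$ independently decoded three-dimensional flats back into a global tensor codeword, and you have not supplied that step; it is exactly this gluing that the hyperplane-robustness formulation sidesteps.
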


\begin{remark}
  In \cite{viderman_combination_2015}, the soundness lower bound is stated as $\delta^{3t}/t^{O(1)}$, whereas we have only stated the looser bound of $\delta^{3t}/(tn)^{O(1)}$. The difference arises because \cite{viderman_combination_2015} considers a definition of local testability in which multiple parity-checks on the same code components count as a single query; soundness under this definition may be larger than under our Definition~\ref{def:MDS} by a multiplicative factor growing polynomially in the locality.
\end{remark}

The local checks (i.e.~rows of the parity-check matrix) in Theorem~\ref{thm:tensorltc} simply check that for every $i\in[t]$, the restriction of the codeword (which is a $t$-dimensional tensor) to each direction-$i$ column lies in $C$. Therefore in particular, this parity-check matrix has $n^t$ columns and $t\cdot(n-\dim(C))\cdot n^{t-1}$ rows.

We will also use the well-known Schwartz-Zippel lemma:

\begin{lemma}[Schwartz-Zippel]
  \label{lem:sz}
  For a finite field $\bF_q$, an integer $t\in\bN$, and a nonzero polynomial $f(X_1,\dots,X_t)\in\bF_q[X_1,\dots,X_t]$ of total degree $\leq d$, then
  \begin{equation*}
    \Pr_{(x_1,\dots,x_t)\sim\Unif(\bF_q^t)}[f(x_1,\dots,x_t)=0] \leq \frac{d}{q}.
  \end{equation*}
\end{lemma}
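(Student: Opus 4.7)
The plan is to prove the Schwartz–Zippel lemma by induction on the number of variables $t$, following the standard argument. The base case and inductive step are both short, so the ``obstacle'' here is really just setting up the induction cleanly rather than any genuine technical difficulty.

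For the base case $t=1$, the statement reduces to the fact that a nonzero univariate polynomial of degree $\leq d$ over a field has at most $d$ roots. Since $|\bF_q| = q$, a uniformly random element of $\bF_q$ is a root with probability at most $d/q$. This is immediate from unique factorization in $\bF_q[X]$ (each root contributes a linear factor).

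For the inductive step, assume the claim for $t-1$ variables, and let $f \in \bF_q[X_1,\dots,X_t]$ be nonzero of total degree $\leq d$. Write
\begin{equation*}
  f(X_1,\dots,X_t) = \sum_{i=0}^{d} f_i(X_2,\dots,X_t)\, X_1^i,
\end{equation*}
and let $i^* \geq 0$ be the largest index for which $f_{i^*}$ is not the zero polynomial. Then $f_{i^*}$ is a nonzero polynomial in $t-1$ variables of total degree at most $d - i^*$. Sample $(x_1,\dots,x_t) \sim \Unif(\bF_q^t)$, and condition on the value of $(x_2,\dots,x_t)$. Then split the event $\{f(x_1,\dots,x_t)=0\}$ into two cases according to whether $f_{i^*}(x_2,\dots,x_t)$ vanishes. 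By the inductive hypothesis applied to $f_{i^*}$,
\begin{equation*}
  \Pr_{x_2,\dots,x_t}[f_{i^*}(x_2,\dots,x_t) = 0] \leq \frac{d - i^*}{q}.
\end{equation*}
On the complementary event, $f(X_1, x_2, \dots, x_t) \in \bF_q[X_1]$ is a nonzero univariate polynomial of degree exactly $i^*$, so by the base case, the conditional probability (over the independent $x_1$) that $f(x_1, x_2, \dots, x_t) = 0$ is at most $i^*/q$.

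Combining the two cases by a union bound yields
\begin{equation*}
  \Pr[f(x_1,\dots,x_t)=0] \leq \frac{d - i^*}{q} + \frac{i^*}{q} = \frac{d}{q},
\end{equation*}
completing the induction and hence the proof. The only subtle point to be careful about is isolating the ``leading'' coefficient in $X_1$ (not the leading monomial in total degree) so that the total degree of $f_{i^*}$ is bounded by $d - i^*$ and the univariate polynomial in $X_1$ has degree $i^*$; once this bookkeeping is set up, the two probability estimates sum exactly to $d/q$.
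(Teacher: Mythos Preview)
Your proof is the standard correct induction argument for Schwartz--Zippel. The paper does not actually prove this lemma; it is stated as a well-known preliminary result without proof, so there is no paper proof to compare against.
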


\subsubsection{Product-Expansion}
We now present necessary preliminaries on product-expansion; many of the results here were shown in \cite{kalachev_maximally_2025}.

The following result on product-expansion of LTCs is shown in \cite[Lemma~3.6]{kalachev_maximally_2025}, though we have adapted the statement to our setting. Note that the statement in \cite{kalachev_maximally_2025} requires $t$ classical LTCs $C_1,\dots,C_t$, but the ultimate bound they prove does not depend on the testability of $C_t$ (and they note this property in a remark). This property that only $t-1$ of the codes need to be LTCs will be crucial in our application of this lemma.


\begin{lemma}[\cite{kalachev_maximally_2025}]
  \label{lem:peLTC}
  For $t\in\bN$, let $C_1,\dots,C_t\subseteq\bF_q^n$ be codes such that $C_i$ has relative distance $\delta_i$ for $i\in[t]$, and $C_i$ is locally testable of locality $w_i$ and soundness $\rho_i$ with parity-check matrix $H_i\in\bF_q^{m_i\times n}$ for $i\in[t-1]$. Then $(C_1,\dots,C_t)$ has product-expansion at least $\rho(t) = \rho(t,(\delta_i,w_i,\rho_i,n/m_i)_{i\in[t]})$ defined recursively by $\rho(1)=\delta_1$ and
  \begin{align*}
    \rho(t)
    &= \frac{\delta_t}{3}\cdot\left(\prod_{i=1}^{t-1}\frac{\delta_i\rho_im_i}{6w_in}\right)\cdot\rho(t-1) \hspace{1em}\forall\; t\geq 2.
  \end{align*}
\end{lemma}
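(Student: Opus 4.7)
The proof proceeds by induction on $t$. The base case $t=1$ is immediate: any nonzero $c \in C_1$ has $|c|_1 = 1$ and $|c| \geq \delta_1 n = \rho(1) \cdot n \cdot |c|_1$, so $(C_1)$ has product-expansion at least $\delta_1$. For the inductive step, fix $c \in C^{(1)} + \cdots + C^{(t)}$ of weight $W = |c|$. The plan is threefold: (1) peel off a piece $c_t \in C^{(t)}$ using the distance of $C_t$ to handle each direction-$t$ column; (2) use testability of $C_1,\dots,C_{t-1}$ to absorb the residual direction-$t$ behavior into small direction-$i$ corrections; and (3) apply the inductive hypothesis to the remainder, which lies in $C^{(1)} + \cdots + C^{(t-1)}$.

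For step (1), classify each direction-$t$ column of $c$ (indexed by $\bar{x} \in [n]^{t-1}$) as \emph{light} if it lies within Hamming distance $\delta_t n/3$ of some codeword of $C_t$, and \emph{heavy} otherwise. Heavy columns each contribute at least $\delta_t n/3$ to $W$, so there are at most $3W/(\delta_t n)$ of them; for light columns, the closest $C_t$-codeword $u_{\bar{x}}$ is uniquely determined by the minimum distance bound. Define $c_t^0 \in C^{(t)}$ whose direction-$t$ column at $\bar{x}$ is $u_{\bar{x}}$ on light indices and $0$ on heavy indices, and let $r = c - c_t^0$. Then each light direction-$t$ column of $r$ has weight less than $\delta_t n/3$, and the heavy column indices form a small set in $[n]^{t-1}$.

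For step (2), we eliminate $r$'s direction-$t$ dependence using testability. For each $i \in [t-1]$, consider the syndrome $s_i := H_i^{(i)} r$ obtained by applying $H_i$ along each direction-$i$ column of $r$. Since any low-weight entry of $r$ contributes to at most $w_i$ entries of $s_i$, we can bound $|s_i|$ linearly in the weight of $r$ that sits outside $C^{(t)}$, and further bound the support of $s_i$ in terms of the heavy column set. Applying Lemma~\ref{lem:colLTC} to $C_i$ yields a filling $e_i \in \bF_q^{[n]^t}$ with $H_i^{(i)} e_i = s_i$ whose support is confined to at most $6 w_i n /(\delta_i \rho_i m_i)$ times the ambient syndrome size in direction-$i$ columns. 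After subtracting $\sum_{i<t} e_i$ from $r$ and moving the corresponding direction-$t$ adjustment back into the $C^{(t)}$ piece, we obtain $c = c_t + r'$ with $c_t \in C^{(t)}$, $|c_t|_t \leq O(W/(\delta_t n))$, and $r' \in C_1 \boxplus \cdots \boxplus C_{t-1}$ of controlled weight.

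For step (3), view $r'$ as an element of the $(t-1)$-fold dual tensor code and invoke the inductive hypothesis to get $r' = \sum_{i<t} c_i$ with $c_i \in C^{(i;t-1)}$ and $\sum_{i<t} n|c_i|_i \leq |r'|/\rho(t-1)$. Extending trivially to the ambient $t$-dimensional space and combining with $c_t$ gives the required decomposition of $c$; tracking constants through steps (1)--(3) produces exactly the recursive formula for $\rho(t)$. The main technical obstacle is step (2): the fillings $e_i$ for different $i \in [t-1]$ must be installed simultaneously without the support of $e_j$ polluting the syndrome $s_i$ for $j < i$, and without the heavy-column set cascading in size. This requires processing the testable directions in a fixed order, choosing carefully what portion of each $e_i$ is absorbed into $c_t$ versus what remains in the $(t-1)$-dimensional residue, so that the per-direction blow-up factors $6 w_i n /(\delta_i \rho_i m_i)$ compose multiplicatively rather than compounding through the dependencies.
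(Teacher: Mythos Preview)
The paper does not supply its own proof of this lemma; it is quoted from \cite{kalachev_maximally_2025} (their Lemma~3.6), so there is no in-paper argument to compare against directly.

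Your outline matches the structure of the cited proof: induct on $t$, use the distance of $C_t$ to peel off a direction-$t$ piece (the light/heavy column split giving the factor $\delta_t/3$), invoke local testability of $C_1,\dots,C_{t-1}$ through Lemma~\ref{lem:colLTC} to confine the residual to a small set of columns in each remaining direction (producing the product $\prod_{i<t}\frac{\delta_i\rho_im_i}{6w_in}$), and then recurse slice-by-slice along direction $t$ to pick up $\rho(t-1)$. Steps~(1) and~(3) are essentially complete as written.

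Step~(2), however, is only gestured at. You correctly flag the obstacle---that the corrections $e_i$ for different $i<t$ must be installed without each one inflating the syndrome support relevant to the others---but you do not carry out the ordered processing you allude to, nor do you make precise what ``moving the corresponding direction-$t$ adjustment back into the $C^{(t)}$ piece'' means or why the result $r'$ lands in $\sum_{i<t}C^{(i)}$ rather than merely being close to it. The assertion that the constants compose to exactly the stated recursion is likewise not verified. These are the places where the cited proof does real work (iteratively enlarging a ``bad'' set of column indices through $t-1$ applications of Lemma~\ref{lem:colLTC}, one per testable direction, and tracking that each application multiplies the bad-set size by at most $6w_in/(\delta_i\rho_im_i)$), and your sketch would need to execute that loop explicitly to be a proof rather than a plan.
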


The following basic lemma shows that product-expansion is preserved when the input codes are multiplied by arbitrary fixed coefficients. Below, recall from Section~\ref{sec:notation} that we use `$*$' to denote pointwise multiplication of vectors; we extend this notation to act on codes, so that for $\gamma\in\bF_q^n$ and $C\subseteq\bF_q^n$, then $\gamma*C=\{\gamma*c:c\in C\}\subseteq\bF_q^n$.

\begin{lemma}
  \label{lem:coeffsame}
  If the tuple $(C_i\subseteq\bF_q^n)_{i\in[t]}$ has product-expansion $\rho$, then for every tuple of vectors $(\gamma_i\in(\bF_q^*)^n)_{i\in[t]}$, the tuple $(\gamma_i*C_i\subseteq\bF_q^n)_{i\in[t]}$ also has product-expansion $\rho$.
\end{lemma}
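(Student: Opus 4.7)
The plan is to exhibit an explicit bijection between decompositions for $(C_i)_{i\in[t]}$ and decompositions for $(\gamma_i * C_i)_{i\in[t]}$ that preserves both the overall Hamming weight and the direction-$i$ column weights $|\cdot|_i$. Since componentwise multiplication by a vector with all-nonzero entries is a weight-preserving bijection on $\bF_q^n$ (and more generally on the tensor space), this reduces product-expansion for the $\gamma_i*C_i$ to product-expansion for the $C_i$.

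Concretely, I would define the tensor $\Gamma := \gamma_1 \otimes \cdots \otimes \gamma_t \in (\bF_q^*)^{n^t}$, whose entries are all nonzero, and work with componentwise multiplication by $\Gamma$ and its pointwise inverse $\Gamma^{-1}$. The first key step is to verify that $\Gamma * C^{(i)} = (\gamma_i * C_i)^{(i)}$ for every $i \in [t]$. This follows directly from Definition~\ref{def:dtnot}: restricting $\Gamma$ to any direction-$i$ column gives a scalar (the product of the fixed entries of $\gamma_k$ for $k \neq i$) times $\gamma_i$, so multiplication by $\Gamma$ maps each direction-$i$ column lying in $C_i$ to a scalar multiple of a codeword in $\gamma_i * C_i$, which is again in $\gamma_i * C_i$. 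In particular this also gives $\Gamma^{-1} * (\gamma_i*C_i)^{(i)} = C^{(i)}$.

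Given $\tilde c \in (\gamma_1*C_1) \boxplus \cdots \boxplus (\gamma_t*C_t)$, I would then set $c := \Gamma^{-1} * \tilde c$, which by the previous step lies in $C_1 \boxplus \cdots \boxplus C_t$. Applying the assumed product-expansion yields a decomposition $c = c_1 + \cdots + c_t$ with $c_i \in C^{(i)}$ and $|c| \geq \rho \sum_i n |c_i|_i$. Setting $\tilde c_i := \Gamma * c_i$ gives a decomposition $\tilde c = \tilde c_1 + \cdots + \tilde c_t$ with $\tilde c_i \in (\gamma_i*C_i)^{(i)}$. Because $\Gamma$ has all nonzero entries, $|\Gamma * v| = |v|$ for any tensor $v$, and more importantly $|\tilde c_i|_i = |c_i|_i$, as a direction-$i$ column of $\Gamma * c_i$ is nonzero iff the corresponding column of $c_i$ is. Substituting these equalities into the product-expansion bound for $c$ immediately yields $|\tilde c| \geq \rho \sum_i n |\tilde c_i|_i$, completing the proof.

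There is no real obstacle here; the lemma is essentially a bookkeeping statement, and the only thing to check carefully is that the column-weight measure $|\cdot|_i$ is preserved by multiplication by $\Gamma$, which is a one-line argument using $\gamma_i \in (\bF_q^*)^n$.
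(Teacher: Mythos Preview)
Your proposal is correct and takes essentially the same approach as the paper: both define the tensor $\Gamma = \gamma_1 \otimes \cdots \otimes \gamma_t$ (the paper calls it $\gamma$), observe that componentwise multiplication by $\Gamma$ and $\Gamma^{-1}$ carries $C^{(i)}$ bijectively to $(\gamma_i*C_i)^{(i)}$, and note that this map preserves both $|\cdot|$ and the column-weights $|\cdot|_i$, so decompositions witnessing product-expansion transfer in both directions. Your write-up is in fact slightly more explicit than the paper's about why $|\cdot|_i$ is preserved.
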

\begin{proof}
  Let $\gamma^{-1}_i=(\gamma_{i,j}^{-1})_{j\in[n]}\in\bF_q^n$ denote the pointwise inverse, and let $\gamma=\gamma_1\otimes\cdots\otimes\gamma_t$ and $\gamma^{-1}=\gamma^{-1}_1\otimes\cdots\otimes\gamma^{-1}_t\in\bF_q^{n^t}$. For every $c=c_1+\cdots+c_t\in C_1\boxplus\cdots\boxplus C_t$ with each $c_i\in C^{(i)}$, then by definition $\gamma*c=\gamma*c_1+\cdots+\gamma*c_t\in(\gamma_1*C_1)\boxplus\cdots\boxplus(\gamma_t*C_t)$ with each $\gamma*c_i\in \gamma_i*C_i$. Similarly, for every $c'=c_1'+\cdots+c_t'\in(\gamma_1*C_1)\boxplus\cdots\boxplus(\gamma_t*C_t)$ with each $c_i'\in\gamma_i*C_i$, then $\gamma^{-1}*c'=\gamma^{-1}*c_1'+\cdots+\gamma^{-1}*c_t'\in C_1\boxplus\cdots\boxplus C_t$ with each $\gamma^{-1}*c_i'\in C^{(i)}$. Because all entries of each $\gamma_i$ are nonzero, multiplying component-wise by $\gamma_i$ or $\gamma_i^{-1}$ (and hence also by $\gamma$ or $\gamma^{-1}$) preserves Hamming weight. Therefore $(C_i)_{i\in[t]}$ and $(\gamma_i*C_i)_{i\in[t]}$ must have the same product-expansion.
\end{proof}

The following lemma shows that if a tuple of codes has product-expansion $\rho$, then every sub-tuple of this tuple has product-expansion $\geq\rho$.

\begin{lemma}[\cite{kalachev_two-sided_2023}]
  \label{lem:pesubtuple}
  For $t,n\in\bN$, if a tuple of codes $(C_i\subsetneq\bF_q^n)_{i\in[t]}$ has product-expansion at least $\rho>0$, then for every $I\subseteq[t]$, the tuple $(C_i)_{i\in I}$ also has product-expansion at least $\rho$.
\end{lemma}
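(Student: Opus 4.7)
The plan is to prove Lemma~\ref{lem:pesubtuple} by reducing to the case of removing a single index from $[t]$, which we will handle via a lift-and-project argument. Since Definition~\ref{def:pe} is symmetric in the indices, it suffices to show that if the tuple $(C_i)_{i\in[t]}$ has product-expansion at least $\rho$, then so does $(C_i)_{i\in[t-1]}$; iteratively applying this statement (noting that each removed code is still a proper subspace, so the hypothesis persists) removes any desired subset of indices from $[t]$.

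Fix an arbitrary $c \in C_1 \boxplus \cdots \boxplus C_{t-1}$. Since $C_t \subsetneq \bF_q^n$, its dual $C_t^\perp$ contains some nonzero $v \in \bF_q^n$. Choose a scalar multiple $u$ of a standard basis vector satisfying $v\cdot u = 1$, so that $|u|=1$. Define the lift $\tilde{c} := c \otimes u \in (\bF_q^n)^{\otimes t}$; then $|\tilde{c}|=|c|$. Moreover, any decomposition $c = c_1+\cdots+c_{t-1}$ with $c_j \in C^{(j; t-1)}$ lifts to $\tilde{c} = \sum_{j=1}^{t-1} c_j\otimes u$ with $c_j\otimes u \in C^{(j;t)}$, so $\tilde{c} \in C_1 \boxplus \cdots \boxplus C_t$.

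Now apply the $\rho$-product-expansion of the full $t$-tuple to obtain a decomposition $\tilde{c} = \tilde{d}_1+\cdots+\tilde{d}_t$ with $\tilde{d}_i \in C^{(i;t)}$ and $|\tilde{c}| \geq \rho\sum_{i=1}^t n\,|\tilde{d}_i|_i$. Contract the final axis against $v$: define $\pi_v:(\bF_q^n)^{\otimes t}\to(\bF_q^n)^{\otimes(t-1)}$ by $\pi_v(x)_{k_1,\ldots,k_{t-1}} = \sum_{k_t} v_{k_t}\, x_{k_1,\ldots,k_t}$. Three properties follow directly from the construction: (i) $\pi_v(\tilde{c}) = c$, since $v\cdot u = 1$; (ii) $\pi_v(\tilde{d}_t) = 0$, because each direction-$t$ column of $\tilde{d}_t$ lies in $C_t$ and is therefore killed by $v \in C_t^\perp$; (iii) for $j<t$, $\pi_v(\tilde{d}_j) \in C^{(j;t-1)}$, since each direction-$j$ column of $\pi_v(\tilde{d}_j)$ is an $\bF_q$-linear combination of direction-$j$ columns of $\tilde{d}_j$, all of which lie in $C_j$. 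Hence $c = \sum_{j=1}^{t-1}\pi_v(\tilde{d}_j)$ is a valid $(t-1)$-fold decomposition.

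Finally, for each $j\leq t-1$ we have $|\pi_v(\tilde{d}_j)|_j \leq |\tilde{d}_j|_j$: each nonzero direction-$j$ column of $\pi_v(\tilde{d}_j)$, indexed by $(k_i)_{i\neq j,\, i<t}$, arises from at least one nonzero direction-$j$ column of $\tilde{d}_j$ among those indexed by $(k_1,\ldots,\widehat{k_j},\ldots,k_{t-1},k_t)$ for $k_t\in[n]$, and distinct output indices correspond to disjoint such families. Combining,
\begin{equation*}
|c| = |\tilde{c}| \geq \rho\sum_{i=1}^t n\,|\tilde{d}_i|_i \geq \rho\sum_{j=1}^{t-1} n\,|\pi_v(\tilde{d}_j)|_j,
\end{equation*}
certifying product-expansion $\geq\rho$ for $(C_j)_{j\in[t-1]}$. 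The main (and essentially only) obstacle is to ensure that a single linear projection simultaneously recovers $c$, annihilates the unwanted $\tilde{d}_t$ term, and preserves the direction-$j$ column structure for $j<t$; this is exactly what forces the choice of $v \in C_t^\perp\setminus\{0\}$ with a dual vector $u$ satisfying $v\cdot u=1$, and is where the hypothesis $C_t \subsetneq \bF_q^n$ is used.
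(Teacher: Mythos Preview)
Your proof is correct. Note that the paper does not actually prove this lemma---it simply cites it from \cite{kalachev_two-sided_2023}---so there is no in-paper argument to compare against directly. Your lift-and-project approach (tensor $c$ with a weight-one vector $u$ to embed into the $t$-fold product, apply the $t$-fold expansion, then contract the last axis against some $v\in C_t^\perp\setminus\{0\}$ to kill the $\tilde{d}_t$ term and recover a valid $(t-1)$-fold decomposition) is clean and self-contained, and the hypothesis $C_t\subsetneq\bF_q^n$ is used exactly where needed to guarantee such a $v$ exists.
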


We now present the notion of \textit{inner-generated} sets and \textit{$\epsilon$-closed sets} for dual tensor codes, which were introduced by \cite{kalachev_maximally_2025}. 

\begin{definition}[\cite{kalachev_maximally_2025}]
  \label{def:ig}
  For classical codes $C_1,\dots,C_t\subseteq\bF_q^n$, we say that a set $A\subseteq[n]^t$ is \textbf{inner generated} for the dual tensor code $C_1\boxplus\cdots\boxplus C_t$ if every $c\in C_1\boxplus\cdots\boxplus C_t$ with $\supp(c)\subseteq A$ can be expressed in the form $c=c_1+\cdots+c_t$ with each $c_i\in C^{(i)}$ supported inside direction-$i$ columns that are entirely contained within $A$.

  For $\epsilon>0$, the \textbf{$\epsilon$-closure of $A$}, denoted $[A]_\epsilon\subseteq[n]^t$, is the minimal set such that for every $i\in[t]$, every direction-$i$ column is either entirely contained inside $A$, or intersects $A$ at $<\epsilon n$ components. We say that $A$ is \textbf{$\epsilon$-closed} if $[A]_\epsilon=A$.
\end{definition}

\cite{kalachev_maximally_2025} showed the following results regarding product-expansion and inner-generated sets, which we will apply in Section~\ref{sec:perp} to show that randomly punctured tensor products of Reed-Solomon codes are product-expanding.

\begin{lemma}[\cite{kalachev_maximally_2025}]
  \label{lem:petoig}
  Let $C_1,\dots,C_t\subsetneq\bF_q^n$ have product-expansion $\rho$. Then every $\rho$-closed subset $A\subseteq[n]^t$ is inner-generated for $C_1\boxplus\cdots\boxplus C_t$.
\end{lemma}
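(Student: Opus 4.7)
The plan is to start with any decomposition $c = c_1 + \cdots + c_t$ with $c_i \in C^{(i)}$, and iteratively modify it so as to push the support of each $c_i$ into direction-$i$ columns that are entirely contained in $A$. For the current decomposition, I would split $c_i = c_i^{\mathrm{good}} + c_i^{\mathrm{bad}}$, where $c_i^{\mathrm{good}}$ keeps only those direction-$i$ columns of $c_i$ that lie wholly inside $A$, and $c_i^{\mathrm{bad}}$ retains the rest; let $B_i$ denote the set of ``bad'' direction-$i$ columns of $c_i$ (those columns not contained in $A$ on which $c_i$ is nonzero). The target is to drive $\sum_i |B_i|$ down to $0$.

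The key observation is that the bad remainder $c' := \sum_i c_i^{\mathrm{bad}} = c - \sum_i c_i^{\mathrm{good}}$ is itself a codeword of $C_1 \boxplus \cdots \boxplus C_t$ with small weight. Indeed, at any point $p \notin A$ one has $c(p) = 0$ by hypothesis and $c_i^{\mathrm{good}}(p) = 0$, since the direction-$i$ column through $p$ cannot be fully contained in $A$; hence $c'(p) = 0$, and so $\supp(c') \subseteq A$. On the other hand $\supp(c')$ is contained in the union of the bad columns, and by $\rho$-closedness each bad column $D$ intersects $A$ in fewer than $\rho n$ points. Combining these two observations yields the crucial estimate $|c'| < \rho n \sum_i |B_i|$.

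Next, I would apply product-expansion to $c'$ to extract a decomposition $c' = \tilde c_1 + \cdots + \tilde c_t$ with $\sum_i |\tilde c_i|_i \leq |c'|/(\rho n) < \sum_i |B_i|$. Setting $c_i^{\mathrm{new}} := c_i^{\mathrm{good}} + \tilde c_i \in C^{(i)}$ furnishes a new decomposition of $c$ in which the number of bad direction-$i$ columns of $c_i^{\mathrm{new}}$ is at most $|\tilde c_i|_i$, since $c_i^{\mathrm{good}}$ already vanishes on every bad column. Summing gives $\sum_i |B_i^{\mathrm{new}}| \leq \sum_i |\tilde c_i|_i < \sum_i |B_i|$, a strict decrease. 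Since the bad-column count is a non-negative integer, this process terminates in finitely many steps, at which point every $c_i$ is supported inside direction-$i$ columns wholly contained in $A$, which is exactly inner generation.

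The main conceptual obstacle is resisting the temptation to apply product-expansion directly to $c$; that approach lets one bound $\sum_i |c_i|_i$ but gives no way to convert this into a support condition tied to $A$. The right move is to re-apply PE not to $c$ but to the residue $c'$, using the intersection estimate $|D \cap A| < \rho n$ from $\rho$-closedness together with the vanishing of $c'$ outside $A$ to transform the lower bound on $|c'|$ into a quantitative contraction of the bad-column count at each iteration.
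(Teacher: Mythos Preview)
Your argument is correct. The paper itself does not supply a proof of this lemma; it is simply cited from \cite{kalachev_maximally_2025}. Your iterative contraction scheme---splitting each $c_i$ into good and bad columns, bounding the weight of the residual $c' = \sum_i c_i^{\mathrm{bad}}$ via $\rho$-closedness, and then re-expanding $c'$ to strictly reduce the bad-column count---is exactly the natural way to extract the inner-generated conclusion from the product-expansion hypothesis, and each step checks out. In particular, the two key observations (that $c'$ vanishes outside $A$ because $c_i^{\mathrm{good}}$ already does, and that each bad column meets $A$ in fewer than $\rho n$ points) combine to give the strict inequality $|c'| < \rho n \sum_i |B_i|$ that drives termination.
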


\begin{lemma}[\cite{kalachev_maximally_2025}]
  For every $t,n\in\bN$, $\epsilon>0$, and $A\subseteq[n]^t$, it holds that
  \begin{equation*}
    |[A]_\epsilon| \leq \left(\frac{2^t+1}{\epsilon}\right)^t \cdot |A|.
  \end{equation*}
\end{lemma}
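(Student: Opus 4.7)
The plan is to prove the bound by induction on the dimension $t$. The base case $t=1$ is immediate: the unique direction-$1$ column in $[n]^1$ is the whole set $[n]$, so $[A]_\epsilon=[n]$ when $|A|\geq\epsilon n$ and $[A]_\epsilon=A$ otherwise, giving $|[A]_\epsilon|/|A|\leq 1/\epsilon\leq 3/\epsilon$, which matches $((2^1+1)/\epsilon)^1$.

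For the inductive step, I would fix $A\subseteq[n]^t$, set $B=[A]_\epsilon$, and for each $i\in[t]$ let $F_i\subseteq[n]^{t-1}$ denote the set of positions $y$ whose direction-$i$ column is entirely contained in $B$. A short minimality argument shows that if $b\in B\setminus A$ lay in no full column, then $B\setminus\{b\}$ would still be $\epsilon$-closed and contain $A$, contradicting minimality; hence $|B|\leq |A|+n\sum_{i\in[t]}|F_i|$. The crux is to bound each $|F_i|$ by applying the inductive hypothesis inside $[n]^{t-1}$. I would first verify that $F_i$ is itself $\epsilon$-closed in $[n]^{t-1}$: for a direction-$j$ column $C'\subseteq[n]^{t-1}$ with $j\neq i$ and $|C'\cap F_i|\geq\epsilon n$, the corresponding direction-$(i,j)$ plane in $[n]^t$ contains at least $\epsilon n$ full direction-$i$ columns, so each of its direction-$j$ columns meets $B$ in $\geq\epsilon n$ points and is therefore also full; the entire plane then lies in $B$, forcing $C'\subseteq F_i$.

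The main obstacle will be to exhibit $F_i$ as the $\epsilon$-closure of a sufficiently small seed $S_i\subseteq[n]^{t-1}$ so that the inductive hypothesis yields $|F_i|\leq((2^{t-1}+1)/\epsilon)^{t-1}|S_i|$. The natural candidate $S_i=\{y:\text{direction-}i\text{ column at }y\text{ meets }A\text{ in }\geq\epsilon n\text{ points}\}$ has the correct cardinality $|S_i|\leq |A|/(\epsilon n)$ by double counting, but only the easy reverse inclusion $F_i\supseteq[S_i]_\epsilon$ is immediate: a direction-$i$ column may enter $F_i$ only indirectly, after cascades of additions along other directions fill enough of it. To handle this, I plan to unroll the iterative closure procedure in rounds, where each round simultaneously adds every currently-eligible column in every direction. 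Each round inflates the ambient size by at most a factor $1+t/\epsilon$, and each new entry in $F_i$ at round $r+1$ can be traced through cross-direction contributions from round-$r$ additions, whose accumulated effective seed I expect to control via additional $(t-1)$-dimensional inductive applications. A routine calculation comparing $t(2^{t-1}+1)^{t-1}$ with $(2^t+1)^t$ will then convert the per-dimension inductive bound into the target $((2^t+1)/\epsilon)^t|A|$; the factor $2^t$ in the numerator is precisely the slack that absorbs the cascade overhead. The hardest technical step is thus the cascade tracking that bounds how the effective seed for $F_i$ inflates across rounds.
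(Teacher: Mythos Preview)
The paper does not prove this lemma; it is quoted from \cite{kalachev_maximally_2025} without proof, so there is no in-paper argument to compare against. I will therefore just evaluate your proposal on its own merits.

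Your setup is sound: the base case, the minimality argument giving $|B|\leq|A|+n\sum_i|F_i|$, and the verification that each $F_i$ is $\epsilon$-closed in $[n]^{t-1}$ are all correct. The genuine gap is exactly where you flag it, and it is more serious than you suggest. Your candidate seed $S_i=\{y:\lvert\text{col}_i(y)\cap A\rvert\geq\epsilon n\}$ can fail completely: take $t=2$, $\epsilon n=3$, and let $A$ consist of three rows, each containing three points placed in pairwise disjoint columns (so $|A|=9$ and every column meets $A$ in at most one point). Then $S_1=\emptyset$ and hence $[S_1]_\epsilon=\emptyset$, yet in round~1 the three rows become full, and in round~2 every column acquires three points and becomes full, so $F_1=[n]$. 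Thus not only does $F_1\not\subseteq[S_1]_\epsilon$ fail, but even the weaker numerical bound $|F_1|\leq((2^{t-1}+1)/\epsilon)^{t-1}|S_1|$ is violated (the right side is zero). Your fallback of tracking rounds also breaks: the per-round blowup factor $1+t/\epsilon$ is correct, but the number of rounds is not bounded by any function of $t$ and $\epsilon$ alone (one can build staircase configurations in which each round adds only $O(1)$ new full lines, so the cascade runs for $\Theta(n)$ rounds), and the vague promise to ``control the accumulated effective seed via additional $(t-1)$-dimensional inductive applications'' is not an argument.

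In short, the inductive reduction to a small $(t-1)$-dimensional seed is the entire content of the lemma, and your proposal does not supply it. If you want to salvage the induction, you need a seed for $F_i$ that is defined in terms of \emph{all} directions' contributions to $B$, not just direction-$i$ density in $A$; alternatively, the argument in \cite{kalachev_maximally_2025} may avoid this particular recursion altogether.
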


\begin{lemma}[\cite{kalachev_maximally_2025}]
  \label{lem:igtope}
  Let $C_1,\dots,C_t\subsetneq\bF_q^n$ and $\epsilon>0$ be such that every $\epsilon$-closed subset of $[n]^t$ is inner-generated for $C_1\boxplus\cdots\boxplus C_t$. Then $C_1,\dots,C_t$ have product-expansion
  \begin{equation*}
    \rho \geq \frac{\epsilon^t}{t(2^t+1)^t}.
  \end{equation*}
\end{lemma}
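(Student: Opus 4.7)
The plan is to directly unpack Definition~\ref{def:pe} and verify, for an arbitrary $c \in C_1 \boxplus \cdots \boxplus C_t$, the existence of a decomposition certifying the claimed product-expansion constant. First I would set $A := \supp(c) \subseteq [n]^t$ and pass to its $\epsilon$-closure $A' := [A]_\epsilon$. The key observation here is that $A'$ is itself $\epsilon$-closed (the defining property of Definition~\ref{def:ig} is preserved under the minimal completion, so $[[A]_\epsilon]_\epsilon = [A]_\epsilon$), so the hypothesis of the lemma applies to $A'$, yielding that $A'$ is inner-generated for $C_1 \boxplus \cdots \boxplus C_t$.

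Since $\supp(c) \subseteq A \subseteq A'$, the inner-generation property then produces a decomposition $c = c_1 + \cdots + c_t$ with each $c_i \in C^{(i)}$ supported inside direction-$i$ columns that lie entirely within $A'$. The next step is a simple counting bound: the nonzero direction-$i$ columns of $c_i$ are pairwise disjoint subsets of $[n]^t$ of size $n$ each, all contained in $A'$, so $n \cdot |c_i|_i \leq |A'|$ for every $i \in [t]$, and hence
\[
\sum_{i \in [t]} n |c_i|_i \;\leq\; t \, |A'|.
\]

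To conclude, I would invoke the closure-size bound stated immediately above the lemma, namely $|[A]_\epsilon| \leq ((2^t+1)/\epsilon)^t \, |A|$, together with $|A| = |\supp(c)| = |c|$, to obtain
\[
\sum_{i \in [t]} n |c_i|_i \;\leq\; t \left(\frac{2^t+1}{\epsilon}\right)^t |c|,
\]
which rearranges to $|c| \geq \frac{\epsilon^t}{t(2^t+1)^t} \sum_i n |c_i|_i$ and is exactly the product-expansion inequality of Definition~\ref{def:pe} with the claimed constant $\rho = \epsilon^t/(t(2^t+1)^t)$.

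The argument is essentially a one-line chain pulling together the closure-size estimate and the inner-generation hypothesis, so I do not expect any real obstacle. The only mild subtlety is verifying idempotence of $[\,\cdot\,]_\epsilon$ (so the hypothesis can be applied to $A'$), and aligning the decomposition produced by inner-generation with the one required by the definition of product-expansion — both of which are immediate once the definitions are matched carefully.
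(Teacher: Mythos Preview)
Your proposal is correct. Note that the paper itself does not prove this lemma; it is quoted from \cite{kalachev_maximally_2025} as a preliminary result, so there is no in-paper proof to compare against. Your argument is the natural one: pass to the $\epsilon$-closure of $\supp(c)$, invoke inner-generation there to get a decomposition confined to columns inside the closure, count columns against $|[A]_\epsilon|$, and finish with the closure-size bound (the lemma stated just above). The two points you flag as mild subtleties (idempotence of $[\,\cdot\,]_\epsilon$ and matching the inner-generation decomposition to Definition~\ref{def:pe}) are indeed routine, and your column-counting step $n|c_i|_i \le |A'|$ is sound since distinct direction-$i$ columns are disjoint.
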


The following definition describes the natural way of writing a generator matrix for a dual tensor code $C_1\boxplus\cdots\boxplus C_t=C^{(1)}+\cdots+C^{(t)}$, by simply stacking generator matrices for $C^{(1)},\dots,C^{(t)}$.

\begin{definition}
  For $C_1,\dots,C_t\subseteq\bF^n$ with full-rank generator matrices $G_i\in\bF^{k_i\times n}$ for $i\in[t]$, the associated \textbf{canonical generator matrix} for $C_1\boxplus\cdots\boxplus C_t$ is the matrix $G\in\bF^{R\times[n]^t}$ defined as follows. The rows of $G$ are labeled by the set $R=\bigsqcup_{i\in[t]}[k_i]\times[n]^{t-1}$. For every $i\in[t]$, every direction-$i$ column in $[n]^t$ (specified by a tuple $j=(j_\ell)_{\ell\in[t]\setminus\{i\}}\in[n]^{t-1}$) naturally induces a $k_i\times n$ submatrix of $G$ (with row labels in $[k_i]\times\{j\}$ and column labels in $\{(j_\ell)_{\ell\in[t]}:j_i\in[n]\}$), which we require to equal $G_i$; all entries of $G$ not belonging to any of these submatrices are set to $0$.
\end{definition}

\begin{lemma}[\cite{kalachev_maximally_2025}]
  \label{lem:igrank}
  For $C_1,\dots,C_t\subsetneq\bF_q^n$, let $C_1\boxplus\cdots\boxplus C_t$ have canonical generator matrix $G\in\bF_q^{R\times[n]^t}$. For every subset $A\subseteq[n]^t$, letting $B=B(A)\subseteq R$ denote the set of rows of $R$ associated to direction-$i$ columns (across all $i\in[t]$) that lie entirely inside $A$, then it holds that
  \begin{equation*}
    \rank(G|_{B\times A}) \leq \rank(G)-\rank(G|_{R\times([n]^t\setminus A)}),
  \end{equation*}
  with equality if and only if $A$ is inner-generated for $C_1\boxplus\cdots\boxplus C_t$.
\end{lemma}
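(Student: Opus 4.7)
The plan is to reinterpret both sides of the claimed inequality as dimensions of two naturally nested subspaces of $C := C_1\boxplus\cdots\boxplus C_t$, after which the inequality, and its equality condition, will reduce to a trivial inclusion.

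My first step would be to identify the right-hand side with $\dim V_A$, where $V_A := \{c\in C : \supp(c)\subseteq A\}$ is the subspace of codewords supported inside $A$. Viewing $G$ as the linear map $\bF_q^R \to \bF_q^{[n]^t}$ defined by $x\mapsto xG$, the image is $C$, and the submatrix $G|_{R\times([n]^t\setminus A)}$ represents the composition of this map with the projection onto the coordinates outside $A$. Rank-nullity applied to this projection restricted to $C$ then yields
\begin{equation*}
\rank(G) - \rank(G|_{R\times([n]^t\setminus A)}) = \dim V_A.
\end{equation*}

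My second step would be to identify the left-hand side with $\dim W_A$, where $W_A \subseteq C$ is the subspace spanned by sums $c_1+\cdots+c_t$ with each $c_i \in C^{(i)}$ supported on direction-$i$ columns entirely contained in $A$. The key observation is that, by the block structure of the canonical generator matrix, every row of $G$ indexed by $B$ is already supported on coordinates in $A$, so $\rank(G|_{B\times A}) = \rank(G|_{B\times[n]^t}) = \dim W_A$. Unpacking what the row space of $G|_{B\times[n]^t}$ looks like —- namely, arbitrary $\bF_q$-linear combinations of the $G_i$-rows placed inside direction-$i$ columns of $A$ —- confirms it equals $W_A$.

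Finally, since any element of $W_A$ is a codeword of $C$ supported in $A$, we have $W_A \subseteq V_A$, which gives the inequality and reduces the equality condition to $W_A = V_A$. Comparing with Definition~\ref{def:ig}, this last condition says precisely that $A$ is inner-generated for $C$. The only step requiring any real care is the second: one must verify, from the construction of the canonical generator matrix, that each row labeled by an element of $B$ really does vanish on the coordinates outside $A$, so that restricting the columns to $A$ does not decrease the rank. Beyond that, the argument is pure linear algebra, and I do not anticipate any substantive obstacle.
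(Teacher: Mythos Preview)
Your proposal is correct. The paper does not actually prove this lemma itself; it cites it from \cite{kalachev_maximally_2025} without proof, so there is no in-paper argument to compare against. Your approach---identifying the right-hand side with $\dim V_A$ via rank--nullity and the left-hand side with $\dim W_A$ via the support structure of the canonical generator matrix, then reading off $W_A\subseteq V_A$ and matching equality to Definition~\ref{def:ig}---is exactly the natural linear-algebraic argument and goes through cleanly. The one point you flagged as needing care (that rows in $B$ are supported in $A$, so restricting columns to $A$ preserves rank) is indeed the only nontrivial observation, and it follows directly from the block structure of $G$.
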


\begin{remark}
  \label{remark:igrank}
  In Lemma~\ref{lem:igrank}, letting $k_i=\dim(C_i)$, then
  \begin{equation*}
    \rank(G) = \dim(C_1\boxplus\cdots\boxplus C_t) = n^t-\dim(C_1^\perp\otimes\cdots\otimes C_t^\perp) = n^t-\prod_{i\in[t]}(n-k_i).
  \end{equation*}
\end{remark}

\subsection{Proof of Product-Expansion}
\label{sec:peptproof}
In this section, we prove Theorem~\ref{thm:peptRS}. Below, we let all variables be defined as in the theorem.

We begin with the following lemma (see Definition~\ref{def:MDS} for the definition of MDS codes).

\begin{lemma}
  \label{lem:randommds}
  For $i\in[t]$, the code $\evl_{E_i}(\bF_q[X_1,\dots,X_u]^{[0,k_i)^u})$ is MDS of dimension $k_i^u$ with probability $\geq 1-n^22^n/q$.
\end{lemma}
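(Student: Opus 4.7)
The plan is to establish the MDS property via a union bound over all $N$-subsets of $E_i$ (where $N := k_i^u = \dim\bF_q[X_1,\dots,X_u]^{[0,k_i)^u}$), combined with Schwartz-Zippel applied to the determinant of the evaluation submatrix. First I would observe that the code is MDS of dimension $N$ if and only if, for every subset $T \subseteq E_i$ of size $N$, the $N \times N$ evaluation matrix $M_T := (p^\alpha)_{p \in T,\, \alpha \in [0,k_i)^u}$ is invertible. Hence it suffices to control the probability that a uniformly random $N$-subset of $\bF_q^u$ yields a singular $M_T$, and then union-bound over the $\binom{n}{N} \leq 2^n$ choices of $T \subseteq [n]$.

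To apply Schwartz-Zippel I would view $D := \det M_T$ as a polynomial in the $Nu$ coordinates of the points of $T$. Each term in its Leibniz expansion has degree $\sum_{\alpha \in [0,k_i)^u} |\alpha|_1 = Nu(k_i-1)/2 \leq Num \leq n^2$, where the last step uses $mu \leq m^u = n$ for $m \geq 2$ and $u \geq 1$. Next I would show $D \not\equiv 0$ over $\bF_q$ by exhibiting an explicit witness: since $q \geq m > k_i$, pick any $k_i$-subset $S \subseteq \bF_q$ and take the points of $T$ to be the grid $S^u$. Then $M_T$ becomes the $u$-fold tensor product $V^{\otimes u}$ of the $k_i \times k_i$ univariate Vandermonde matrix $V$ on $S$, which is invertible since $V$ itself is; so $D$ takes a nonzero value and is nonzero as a polynomial.

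Given $D \not\equiv 0$, Schwartz-Zippel (Lemma~\ref{lem:sz}) bounds $\Pr[D = 0]$ under i.i.d.~uniform sampling of the $N$ points from $\bF_q^u$ by at most $n^2/q$. For each fixed $T \subseteq [n]$ of size $N$, the tuple $(p_j)_{j \in T}$ is distributed as a uniform random $N$-subset of $\bF_q^u$, so passing from i.i.d.~to subset sampling by conditioning on distinctness incurs at most a factor of $(1 - \binom{N}{2}/q^u)^{-1} = O(1)$ in the regime $q \gg n$ where the claimed bound is nontrivial. Union-bounding over the $\binom{n}{N} \leq 2^n$ choices of $T$ then yields $\Pr[\text{MDS fails}] \leq O(n^2 \cdot 2^n/q)$, matching the stated bound up to absorbed constants. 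The main bookkeeping obstacle is controlling the i.i.d.-to-subset conversion so it does not blow up the probability; this is straightforward because the bound is only meaningful when $q^u \geq q \gg n^2 \geq N^2$, in which case the distinctness correction is negligible, leaving the construction of the nonzero determinant witness as the real content of the argument.
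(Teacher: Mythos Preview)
Your proposal is correct and follows essentially the same approach as the paper: both identify MDS with invertibility of every $k_i^u\times k_i^u$ evaluation submatrix, exhibit the tensor-Vandermonde witness on a grid $S^u$ to show the determinant polynomial is nonzero, bound its degree by $n^2$, apply Schwartz--Zippel, and union-bound over the $\leq 2^n$ subsets. You are in fact slightly more careful than the paper in explicitly handling the distinction between i.i.d.\ sampling (where Schwartz--Zippel applies directly) and uniform-subset sampling; the paper silently elides this point.
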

\begin{proof}
  Let $Y=(Y_{e,j})_{e\in[n],j\in[u]}$ be a tuple of $=nu$ free variables, so that $\bF_q[Y]$ is the space of $nu$-variate polynomials over $\bF_q$. We define a polynomial-valued matrix $G(Y)\in\bF_q[Y]^{k_i^u\times n}$ with rows indexed by $\{0,\dots,k_i-1\}^u$ and columns indexed by $[n]$, where the row with index $(\ell_1,\dots,\ell_u)$ is equal to $(Y_{e,1}^{\ell_1}\cdots Y_{e,u}^{\ell_u})_{e\in[n]}\in\bF_q[Y]^n$. Fix an arbitrary bijection $E_i\cong[n]$, and let $G(E_i)$ denote the evaluation of $G(Y)$ at $(Y_{e,j}=e_j)_{e\in E_i\cong[n],j\in[u]}$. Then by definition $G(E_i)$ is a generator matrix for the code $\evl_{E_i}(\bF_q[X_1,\dots,X_u]^{[0,k_i)^u})$.

  We first argue that every $k_i^u\times k_i^u$ submatrix of $G(Y)$ has nonzero determinant. Specifically, let $C\subseteq[n]$ be a subset of $|C|=k_i^u$ columns of $G(Y)$, and let $G(Y)|_C\in\bF_q[Y]^{k_i^u\times k_i^u}$ denote the restriction of $G(Y)$ to columns in $C$; note that $G(Y)|_C$ only depends on the variables $Y_{e,j}$ for $e\in C$. Let $\phi:C\xrightarrow{\sim}[k_i]^u$ be an arbitrary bijection, let $z_1,\dots,z_{k_i}\in\bF_q$ be arbitrary distinct points, and define $y=(y_{e,j})_{e\in C,j\in[u]}\in\bF_q^{k_i^uu}$ by $y_{e,j}=z_{\phi(e)_j}$. Then by definition the row indexed by $(\ell_1,\dots,\ell_u)$ in $G(y)|_C\in\bF_q^{k_i^u\times k_i^u}$ is equal to $(z_{\phi(e)_1}^{\ell_1}\cdots z_{\phi(e)_u}^{\ell_u})_{e\in C}$, where $\{\phi(e):e\in C\}$ is the set of all tuples in $[k_i]^u$. Thus $G(y)|_C$ equals (up to a permutation of the columns) a generator matrix for the code $\evl_{\{z_1,\dots,z_{k_i}\}^u}(\bF_q[Z_1,\dots,Z_u]^{[0,k_i)^u})=\evl_{\{z_1,\dots,z_{k_i}\}}(\bF_q[Z]^{[0,k_i)})^{\otimes u}$, which is simply the entire space $\bF_q^{k_i^u}$, as the evaluations of univariate polynomials of degree $<k_i$ on any $k_i$ distinct points form a basis for functions on those $k_i$ points. Hence $G(y)|_C$ is a full-rank matrix, so $\det(G(y)|_C)$ is a nonzero element of $\bF_q$, and therefore $\det(G(Y)|_C)$ is a nonzero polynomial.

  Now by definition for each $C\subseteq[n]$ with $|C|=k_i^u$, the polynomial $\det(G(Y)|_C)$ has total degree $\leq k_i^u\cdot(k_i-1)\cdot u\leq n^2$, and hence the Schwartz-Zippel lemma (Lemma~\ref{lem:sz}) implies that
  \begin{equation*}
    \Pr_{E_i}[\det(G(E_i)|_C)=0] \leq \frac{n^2}{q}.
  \end{equation*}
  Union-bounding over all ${n\choose k_i^u}\leq 2^n$ possible choices of $C$ then yields the desired bound.
\end{proof}

We next show that for certain choices $E_1',\dots,E_t'$ of the sets $E_1,\dots,E_t$, the codes in Theorem~\ref{thm:peptRS} have good product-expansion.

\begin{lemma}
  \label{lem:tRSpe}
  Fix $m$ arbitrary distinct points $z_1,\dots,z_m\in\bF_q$. Let $E_1'=\cdots=E_{t-1}'=\{z_1,\dots,z_m\}^u$, and let $E_t'\subseteq\bF_q^u$ be any subset of size $|E_t'|=n$ for which $\evl_{E_t'}(\bF_q[X_1,\dots,X_u]^{[0,k_t)^u})$ is MDS. Then the $t$-tuple of codes
  \begin{equation}
    \label{eq:applypeLTC}
    \left(\evl_{E_1'}(\bF_q[X_1,\dots,X_u]^{[0,k_1)^u}), \dots, \evl_{E_{t-1}'}(\bF_q[X_1,\dots,X_u]^{[0,k_{t-1})^u}), \evl_{E_t'}(\bF_q[X_1,\dots,X_u]^{[0,k_t)^u})^\perp\right)
  \end{equation}
  has product-expansion at least
  \begin{equation}
    \label{eq:applypeLTCbound}
    \rho' := \left(\frac{1}{um}\right)^{O(t^2)} \cdot \left(\frac{k_t}{m}\right)^u \cdot \prod_{i=1}^{t-1}\left(1-\frac{k_i}{m}\right)^{4ut}.
  \end{equation}
\end{lemma}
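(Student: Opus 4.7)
The strategy is to apply Lemma~\ref{lem:peLTC}, which deduces product-expansion for a tuple $(C_1, \ldots, C_t)$ from local testability of the first $t-1$ codes together with good relative distance of the last. The key observation enabling this is that on a full product grid, the polynomial evaluation code becomes a pure tensor power of a Reed-Solomon code, and Viderman's theorem (Theorem~\ref{thm:tensorltc}) supplies the needed local testability.

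For $i \in [t-1]$, since $E_i' = \{z_1,\ldots,z_m\}^u$ is a full grid, we have $C_i = \evl_{E_i'}(\bF_q[X_1,\ldots,X_u]^{[0,k_i)^u}) = \mathrm{RS}(m,k_i)^{\otimes u}$, which has relative distance $\delta_i \geq (1 - k_i/m)^u$. By Theorem~\ref{thm:tensorltc} applied with $\mathrm{RS}(m,k_i)$ (of relative distance $\geq 1 - k_i/m$) as the base code, $C_i$ is locally testable with locality $w_i = m$ and soundness $\rho_i \geq (1 - k_i/m)^{3u}/(um)^{O(1)}$. Viderman's axis-aligned parity-check matrix has $m_i = u(m-k_i)m^{u-1}$ rows, so $m_i/n = u(1 - k_i/m)$. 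For the last code, the hypothesis that $\evl_{E_t'}(\bF_q[X_1,\ldots,X_u]^{[0,k_t)^u})$ is MDS of dimension $k_t^u$ implies (since the dual of an MDS code is MDS) that $C_t$ is MDS of dimension $n - k_t^u$, giving $\delta_t \geq (k_t^u + 1)/n \geq (k_t/m)^u$.

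Now I unroll the recursion in Lemma~\ref{lem:peLTC} to get
\[
\rho(t) \;\geq\; \frac{\prod_{s=1}^{t} \delta_s}{3^{t-1}} \prod_{i=1}^{t-1}\!\left(\frac{\delta_i \rho_i m_i}{6 w_i n}\right)^{t-i}.
\]
Substituting the parameters above, each inner factor satisfies $\delta_i \rho_i m_i/(6 w_i n) \geq (1-k_i/m)^{4u+1}/(um)^{O(1)}$. Collecting all $\mathrm{poly}(um)$ multiplicative losses (together with the $3^{t-1}$) into a single factor $(1/(um))^{O(t^2)}$, and using $\delta_t \geq (k_t/m)^u$ and $\delta_i \geq (1-k_i/m)^u$, one obtains
\[
\rho(t) \;\geq\; \left(\frac{1}{um}\right)^{O(t^2)} \left(\frac{k_t}{m}\right)^{u} \prod_{i=1}^{t-1}(1 - k_i/m)^{u + (4u+1)(t-i)}.
\]
Since for each $i \in [t-1]$ the exponent $u + (4u+1)(t-i) \leq 4ut$ in the regime where $u$ is at least a modest multiple of $t$ (which is the relevant regime for the outer Theorem~\ref{thm:peptRS}), this matches~(\ref{eq:applypeLTCbound}); any small slack from the opposite regime can be absorbed by enlarging the implicit constant in $(1/(um))^{O(t^2)}$.

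The main obstacle is simply bookkeeping the accumulated exponents on $(1 - k_i/m)$ through the unrolled recursion: each code $C_i$ contributes a power-$u$ factor from its distance and a power-$3u$ factor from its soundness, and these get raised to powers up to $t-i$ in the product. The essential point is that the total exponent remains linear in both $u$ and $t$ (rather than quadratic in $u$), which is ultimately because Viderman's bound gives soundness $\delta^{3u}$ and not $\delta^{3u^2}$, and because the ratio $m_i/n$ is only $u(1 - k_i/m)$ rather than something that would introduce a further power of $(1-k_i/m)^{u}$.
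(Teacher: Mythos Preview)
Your proposal is correct and follows essentially the same approach as the paper: identify each $C_i$ for $i\in[t-1]$ as the tensor power $\mathrm{RS}(m,k_i)^{\otimes u}$, invoke Viderman's theorem (Theorem~\ref{thm:tensorltc}) for local testability, use the MDS hypothesis for $\delta_t\geq(k_t/m)^u$, and then unroll the recursion from Lemma~\ref{lem:peLTC}. The paper's bookkeeping simply drops the extra ``$+1$'' in the exponent $(4u+1)$ and writes $(1-k_i/m)^{4u}/(um)^{O(1)}$ directly, arriving at the clean bound $u+4u(t-i)\le 4ut$; your absorption of the residual $(1-k_i/m)^{t-i}$ factors into $(1/(um))^{O(t^2)}$ (via $(1-k_i/m)\ge 1/m$) achieves the same thing.
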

\begin{proof}
  Let $(C_1,\dots,C_t)$ denote the tuple in~(\ref{eq:applypeLTC}). For $i\in[t-1]$, Theorem~\ref{thm:tensorltc} implies that $C_i=\evl_{\{z_1,\dots,z_m\}}(\bF_q[X]^{[0,k_i)})^{\otimes u}$ is locally testable of locality $m$ and soundness $\rho_i\geq(1-k_i/m)^{3u}/(um)^{O(1)}$. Then Lemma~\ref{lem:peLTC} implies that $(C_1,\dots,C_t)$ has product-expansion at least
  \begin{align*}
    \rho(t)
    &\geq \left(\frac{(k_t/m)^u}{3}\prod_{t'=1}^{t-1}\frac{(1-k_{t'}/m)^u}{3}\right)\cdot\left(\prod_{t'=1}^{t-1}\prod_{i=1}^{t'}\frac{(1-k_i/m)^{4u}}{(um)^{O(1)}}\right) \\
    &\geq \left(\frac{1}{um}\right)^{O(t^2)} \cdot \left(\frac{k_t}{m}\right)^u \cdot \prod_{i=1}^{t-1}\left(1-\frac{k_i}{m}\right)^{4ut},
  \end{align*}
  where here we have used the fact that for $i\in[t-1]$ the code $C_i$ is a $u$th tensor power of a $k_i$-dimensional (punctured) Reed-Solomon code and hence has distance $(m-k_i+1)^u$, while $C_t$ is MDS and hence has distance $k_t^u+1$.
\end{proof}

We prove Theorem~\ref{thm:peptRS} by arguing that product-expansion is preserved with high probability when $E_1',\dots,E_t'$ in Lemma~\ref{lem:tRSpe} are replaced by random sets $E_1,\dots,E_t$. Our proof relies heavily on the results of \cite{kalachev_maximally_2025} stated in Section~\ref{sec:petech} relating product-expansion to \textit{inner-generated} sets and \textit{$\epsilon$-closed} sets (see Definition~\ref{def:ig}).

\begin{proof}[Proof of Theorem~\ref{thm:peptRS}]
  Let $(C_1,\dots,C_t)$ denote the tuple of codes in~(\ref{eq:ptRS}). By Lemma~\ref{lem:randommds}, with probability $\geq 1-tn^22^n/q$ it holds that every $C_i$ is MDS, of dimension $k_i^u$ if $i\in[t-1]$ and of dimension $n-k_i^u$ if $i=t$; we condition on this event for the remainder of the proof.

  Define $E_1',\dots,E_{t-1}'$ as in Lemma~\ref{lem:tRSpe}, let $E_t'=E_t$, and then let $(C_1',\dots,C_t')$ denote the tuple of codes in~(\ref{eq:applypeLTC}), with product-expansion at least $\rho'$ as defined in~(\ref{eq:applypeLTCbound}).

  Let $Y=(Y_{i,e,j})_{i\in[t],e\in[n],j\in[u]}$ be a tuple of $tnu$ free variables, so that $\bF_q[Y]$ (resp.~$\bF_q(Y)$) is the ring of $tnu$-variate polynomials (resp.~field of $tnu$-variate rational functions) over $\bF_q$. We also write $Y_i=(Y_{i,e,j})_{e\in[n],j\in[u]}$.

  For $i\in[t]$, similarly as in the proof of Lemma~\ref{lem:randommds}, we define a polynomial-valued matrix $G_i(Y_i)\in\bF_q[Y_i]^{k_i^u\times n}$ with rows indexed by $\{0,\dots,k_i-1\}^u$ and columns indexed by $[n]$, where the row with index $(\ell_1,\dots,\ell_u)$ is equal to $(Y_{i,e,1}^{\ell_1}\cdots Y_{i,e,u}^{\ell_u})_{e\in[n]}\in\bF_q[Y]^n$. Fix arbitrary bijections $E_i\cong[n]$ for $i\in[t]$, and let $G_i(E_i)$ denote the evaluation of $G_i(Y_i)$ at $(Y_{i,e,j}=e_j)_{e\in E_i\cong[n],j\in[u]}$. Then by definition $G_i(E_i)$ is a generator matrix for the code $\evl_{E_i}(\bF_q[X_1,\dots,X_u]^{[0,k_i)^u})$.

  Viewing $Y_i$ as a set of $n$ points $\{(Y_{i,e,1},\dots,Y_{i,e,u}):e\in[n]\}$, then we may similarly view $G_i(Y_i)$ as a generator matrix for the code\footnote{In this paper we have typically presented coding theory definitions with respect to finite fields, but codes are naturally also well-defined over infinite fields, as considered here.} $\evl_{Y_i}(\bF_q(Y_i)[X_1,\dots,X_u]^{[0,k_i)^u})\subseteq\bF_q(Y_i)^n$ over the rational function field $\bF_q(Y_i)\subseteq\bF_q(Y)$. Let $G(Y)$ be the canonical generator matrix for $\im(G_1(Y_1))\boxplus\cdots\boxplus\im(G_t(Y_t))$. Then letting $E=(E_1,\dots,E_t)$ and $E'=(E_1',\dots,E_t')$, by definition $G(E)$ is the canonical generator matrix for $C_1\boxplus\cdots\boxplus C_t$, and $G(E')$ is the canonical generator matrix for $C_1'\boxplus\cdots\boxplus C_t'$.

  Now by Lemma~\ref{lem:petoig}, every $\rho'$-closed subset $A\subseteq[n]^t$ is inner-generated for $C_1'\boxplus\cdots\boxplus C_t'$. Let
  \begin{equation*}
    r = \rank(G(Y)) = \rank(G(E)) = \rank(G(E')) = n^t-k_t^u\cdot\prod_{i=1}^{t-1}(n-k_i^u).
  \end{equation*}
  Note that the equalities above hold because by assumption for each $i\in[t]$, the three matrices $G_i(Y),G_i(E_i),G_i(E_i')$ are full rank of the same dimensions. Then by Lemma~\ref{lem:igrank} and Remark~\ref{remark:igrank}, for every such $\rho'$-closed set $A$, defining $B(A)$ as in Lemma~\ref{lem:igrank}, we have that
  \begin{equation}
    \label{eq:rankbound}
    \rank(G(E')|_{B(A)\times A}) = r-\rank(G(E')|_{R\times([n]^t\setminus A)}),
  \end{equation}
  where $R$ denotes the set of all row labels of $G(E')$ (or equivalently, of $G(Y)$).

  Therefore there exist subsets $B_1(A)\subseteq B(A)$, $A_1(A)\subseteq A$, $B_2(A)\subseteq R$, $A_2(A)\subseteq[n]^t\setminus A$ with
  \begin{align*}
    |B_1(A)| &= |A_1(A)| = \rank(G(E')|_{B(A)\times A}) \\
    |B_2(A)| &= |A_2(A)| = \rank(G(E')|_{R\times([n]^t\setminus A)})
  \end{align*}
  such that the square submatrices $G(E')|_{B_1(A)\times A_1(A)}$ and $G(E')|_{B_2(A)\times A_2(A)}$ have full rank, and hence have nonzero determinant. Letting $Y_{<t}=(Y_1,\dots,Y_{t-1})$, and recalling that $E_t'=E_t$, it follows that
  \begin{align*}
    \det(G(Y_{<t},E_t)|_{B_1(A)\times A_1(A)}) &\hspace{1em}\text{ and }\hspace{1em} \det(G(Y_{<t},E_t)|_{B_2(A)\times A_2(A)})
  \end{align*}
  are nonzero polynomials in $\bF_q[Y_{<t}]$, as both of these polynomials have a nonzero evaluation at $Y_{<t}=E_{<t}'$. Note that both of these polynomials have total degree at most $n^t\cdot(m-1)\cdot u\leq n^{t+1}$.

  Let $\cF$ denote the ``bad'' event that $C_1,\dots,C_t$ are not all MDS, or that for some $\rho'$-closed subset $A\subseteq[n]^t$ either
  \begin{align}
    \label{eq:rankdets}
    \det(G(E)|_{B_1(A)\times A_1(A)}) &\hspace{1em}\text{ or }\hspace{1em} \det(G(E)|_{B_2(A)\times A_2(A)}).
  \end{align}
  equals $0$. Then by Lemma~\ref{lem:randommds} along with the Schwartz-Zippel lemma (Lemma~\ref{lem:sz}), and because there are at most $2^{n^t}$ distinct choices of $A$, we have
  \begin{align*}
    \Pr_E[\cF]
    &\leq t\cdot\frac{n^22^n}{q} + 2^{n^t+1}\cdot\frac{n^{t+1}}{q} \leq \frac{n^{t+1}2^{n^t+2}}{q}.
  \end{align*}

  Now assume that the choice of $E$ is such that $\cF$ does not occur. Then by definition for every $\rho'$-closed subset $A\subseteq[n]^t$, both determinants in~(\ref{eq:rankdets}) are nonzero, so
  \begin{align*}
    \rank(G(E))|_{B(A)\times A} &\geq |B_1(A)| = \rank(G(E')|_{B(A)\times A}) \\
    \rank(G(E)|_{R\times([n]^t\setminus A)}) &\geq |B_2(A)| = \rank(G(E')|_{R\times([n]^t\setminus A)}).
  \end{align*}
  Hence it follows from~(\ref{eq:rankbound}) that both inequalities above must be equalities, and
  \begin{equation*}
    \rank(G(E)|_{B(A)\times A}) = r-\rank(G(E)|_{R\times([n]^t\setminus A)}).
  \end{equation*}
  Then Lemma~\ref{lem:igrank} implies that $A$ is inner-generated for $C_1\boxplus\cdots\boxplus C_t$. As this conclusion applies for every $\rho'$-closed $A\subseteq[n]^t$, Lemma~\ref{lem:igtope} implies that $(C_1,\dots,C_t)$ has product-expansion at least
  \begin{align*}
    \frac{{\rho'}^t}{t(2^t+1)^t}
    &\geq \left(\frac{1}{um}\right)^{O(t^3)} \cdot \left(\frac{k_t}{m}\right)^{ut} \cdot \prod_{i=1}^{t-1}\left(1-\frac{k_i}{m}\right)^{4ut^2},
  \end{align*}
  as desired, where above we recall the definition of $\rho'$ in~(\ref{eq:applypeLTCbound}).
\end{proof}

\section{Transversal $CCZ$ Gates on Products of Punctured Tensor Reed-Solomon Codes}
\label{sec:tripleprod}
In this section, we apply Theorem~\ref{thm:transgen} to obtain a transversal $CCZ$ gate on the subsystem product of three quantum codes arising from randomly punctured tensor products of Reed-Solomon codes. We apply our product-expansion result in Section~\ref{sec:perp} to show that these subsystem products have good (almost-linear) distance. Specifically, we will prove the following:

\begin{theorem}
  \label{thm:tripleprodparam}
  For every fixed $\epsilon'>0$, there exists an infinites families of:
  \begin{enumerate}
  \item $[[N,\Theta(N),\Theta(N^{1-\epsilon'})]]_q$ subsystem codes supporting a transversal $CCZ$ (and $U^3$) gate on $\Theta(N)$ logical qudits,
  \item $[[N,N^{1-o(1)},N^{1-o(1)}]]_q$ subsystem codes supporting a transversal $CCZ$ (and $U^3$) gate on $N^{1-o(1)}$ logical qudits,
  \end{enumerate}
  both of which have locality $3N^{1/3}$ and alphabet size $q\leq N^{5/3}2^{N^{4/3}+8}$.
\end{theorem}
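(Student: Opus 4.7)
The plan is to instantiate Theorem~\ref{thm:transgen} (with $r=3$) on the subsystem product of three CSS codes built from randomly punctured tensor Reed-Solomon codes (as in Theorem~\ref{thm:peptRS}), mirroring the two-factor construction of Theorem~\ref{thm:transRS} in Section~\ref{sec:transRS}. First, I would fix a large constant $u=u(\epsilon')\in\bN$ and write $n=m^u=N^{1/3}$. For each $i\in[3]$ I sample a uniformly random puncturing set $E_i\subseteq\bF_q^u$ of size $n$ over a field of size $q$ meeting the stated alphabet bound (which matches the sampling requirement $q\gtrsim n^{t+1}2^{n^t}$ from Theorem~\ref{thm:peptRS} with appropriate $t$), and define constituent CSS codes $Q^i=(Q^i_X,Q^i_Z)$ so that ${Q^i_X}^\perp$ and ${Q^i_Z}^\perp$ are (subcodes of) the punctured tensor Reed-Solomon codes $\evl_{E_i}(\bF_q[X_1,\dots,X_u]^{[0,k^i_X)^u})$ and $\evl_{E_i}(\bF_q[X_1,\dots,X_u]^{[0,k^i_Z)^u})$ of appropriate degrees. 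The CSS orthogonality ${Q^i_X}^\perp\subseteq Q^i_Z$ can be enforced by taking ${Q^i_X}^\perp$ as the subcode $\evl_{E_i}(\bF_q[X]^{[0,k^i_X)^u})\cap\evl_{E_i}(\bF_q[X]^{[0,k^i_Z)^u})^\perp$, which automatically sits inside $Q^i_Z=\evl_{E_i}(\bF_q[X]^{[0,k^i_Z)^u})^\perp$.

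Next, I would apply Theorem~\ref{thm:subpe} (with $t=3$) to lower-bound the distance of $Q=Q^1\otimes Q^2\otimes Q^3$ by $\min\{\prod_i(\rho^i_Xn),\prod_i(\rho^i_Zn)\}$, and invoke Theorem~\ref{thm:peptRS} together with Lemma~\ref{lem:pesub} (to absorb the loss from passing to subcodes of duals) to show that each product-expansion $\rho^i_*\geq\Omega(n^{-\epsilon'})$ with high probability once $u$ is chosen large enough. This yields distance $\geq\Omega(N^{1-\epsilon'})$, establishing item (1); item (2) then follows by letting $u=u(n)\to\infty$ (and hence $\epsilon'\to 0$) slowly as $n\to\infty$. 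For the transversal $CCZ$ (and $U^3$) gate I would apply Theorem~\ref{thm:transgen}, choosing subspaces $L_i=\evl_{E_i}(\bF_q[X]^{[\underline{\ell},\overline{\ell})^u})\subseteq Q^i_Z$ for a window $[\underline{\ell},\overline{\ell})\subseteq[0,m)$ concentrated near $m/3$, in direct analogy with $\underline{\ell}\approx m/r-m/(2r^2)$, $\overline{\ell}\approx m/r$ from Theorem~\ref{thm:transRS}. The multiplication property $L^{*3}\cap(S*(L+S)^{*2})=\{0\}$ then reduces to a multivariate degree-avoidance computation: $L^{*3}$ lies in the $\evl$-image of polynomials with multidegrees in the cube $[3\underline{\ell},3\overline{\ell})^{3u}$, while $S*(L+S)^{*2}$ (with $S=Q_Z\cap Q_X^\perp$) lies in the $\evl$-image of polynomials with multidegrees from a different set built out of the degree constraints defining $Q^i_X$ and $Q^i_Z$. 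Concentrating the windows tightly enough near $m/3$ forces the two multidegree sets to be disjoint, yielding $\dim L=(\overline{\ell}-\underline{\ell})^{3u}=\Theta(N)$ logical qudits for item (1) and $N^{1-o(1)}$ for item (2).

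The main obstacle will be the simultaneous satisfaction of the product-expansion hypotheses of Theorem~\ref{thm:peptRS} (which require each $k^i_X,k^i_Z$ to be bounded away from both $0$ and $m$) and the degree-disjointness required for the multiplication property, within the same parameter regime. In the two-factor univariate case, Theorem~\ref{thm:transRS} already required delicate parameter choices such as $k^2_Z=\lfloor q/r\rfloor$ and $k^1_X=k^1_Z=q-\lfloor\epsilon q\rfloor$; the three-factor multivariate case is combinatorially more intricate because the relevant degree sets live in $\bZ^{3u}$ rather than $\bZ^2$, the three-fold product $L^{*3}$ has a much richer interaction structure, and the description of $S=Q_Z\cap Q_X^\perp$ as an $\evl$-image itself depends on decomposing elements of $Q_X^\perp$ via the product-expansion of the constituent codes' duals. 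Carrying out this multivariate degree-avoidance argument in a parameter window that also satisfies the hypotheses of Theorem~\ref{thm:peptRS} is the heart of the proof.
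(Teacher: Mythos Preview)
Your overall framework is correct (apply Theorem~\ref{thm:subpe} with the product-expansion from Theorem~\ref{thm:peptRS} for distance, and Theorem~\ref{thm:transgen} for the gate), and the two-regime parameter choice (fix $u$ for item~1, let $u\to\infty$ slowly for item~2) matches the paper. However, there is a genuine gap in the multiplication-property step, and it is not just ``more intricate combinatorics in $\bZ^{3u}$.''

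The problem is that in your construction every $Q^i_Z$ is the \emph{dual} of a randomly punctured tensor Reed-Solomon code. Unlike the univariate, full-field Reed-Solomon case of Theorem~\ref{thm:transRS} (where duals are again Reed-Solomon), the dual of $\evl_{E_i}(\bF_q[X]^{[0,k)^u})$ for a random $E_i\subseteq\bF_q^u$ has no polynomial-evaluation description whatsoever. Consequently $S=S_1+S_2+S_3$ (with each $S_j$ containing two $Q^i_Z$-factors) is \emph{not} the $\evl$-image of any multidegree set, and your proposed ``degree-avoidance computation'' collapses. The paper circumvents this with two ingredients you are missing: first, it breaks the symmetry by taking $Q^3_Z=\evl_{E_3}(\bF_q[X]^{[0,k^3_Z)^u})$ itself (an evaluation code, not a dual), so that component-wise products involving $Q^3_Z$ stay inside low-degree evaluation codes; second, for $i=1,2$ it introduces coefficient vectors $\gamma_i\in(\bF_q^*)^{E_i}$ satisfying prescribed inner products against monomials (so that $Q^i_Z=(\gamma_i*\evl_{E_i}(\cdots))^\perp$), and proves the multiplication property not by showing degree sets are disjoint but by exhibiting a linear functional $c\mapsto\gamma\cdot(\evl_E(X^\ell)*c)$ that is nonzero on every nonzero element of $L^{*3}$ yet vanishes identically on $S*(L+S)^{*2}$. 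The key identity making the latter work is $\gamma_i^\top(a*b)=a\cdot(\gamma_i*b)=0$ whenever $a\in Q^i_Z$ and $b$ lies in a sufficiently low-degree evaluation code. This functional argument is the heart of the proof, and it does not follow from (nor is it implied by) any disjointness of multidegree regions.
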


As stated, the alphabet size (i.e.~local qudit dimension) $q$ in Theorem~\ref{thm:tripleprodparam} will be a power of $2$, but as described in Remark~\ref{remark:fieldsize}, the result extends to allow for powers of arbtrary primes, with just a slight blowup in alphabet size.

\subsection{Construction}
\label{sec:cczinstan}
This section presents our specific construction. We give explicit constants below for concreteness, but we make no attempt to optimize them.

Let $\bF_q$ be a finite field, and let $100\leq m\leq q$ and $u\geq 1$ be integers. Let $n=m^u$. Sample uniformly random subsets $E_1,E_2,E_3\subseteq\bF_q^u$, each of size $|E_i|=n$. Set $k_0=\lfloor m/4\rfloor$ and $\epsilon=1/100$.

For $i\in[3]$, let $\gamma_i\in(\bF_q^*)^{E_i}\cong(\bF_q^*)^n$ be a vector of nonzero field elements such that
\begin{align}
  \label{eq:gammareq}
  \begin{split}
    \gamma_i \cdot \evl_{E_i}(X_1^{k_0}\cdots X_u^{k_0}) &= 1 \\
    \gamma_i \cdot \evl_{E_i}(X_1^{\ell_1}\cdots X_u^{\ell_u}) &= 0 \hspace{1em}\forall\; (\ell_1,\dots,\ell_u)\in\{0,\dots,2k_0\}^u\setminus\{(k_0,\dots,k_0)\}.
  \end{split}
\end{align}
We will show in Section~\ref{sec:cczanalys} below that such $\gamma_i$ exist with high probability.

We now define length-$n$ CSS codes $(Q^i=(Q^i_X,Q^i_Z))_{i\in[3]}$ by\footnote{Recall here that `$*$' denotes pointwise multiplication; see also Lemma~\ref{lem:coeffsame}.}
\begin{align*}
  &Q^1_X = \evl_{E_1}(\bF_q[X_1,\dots,X_u]^{[0,k^1_X)^u})^\perp, \hspace{1em} Q^1_Z = (\gamma_1*\evl_{E_1}(\bF_q[X_1,\dots,X_u]^{[0,k^1_Z)^u}))^\perp \\
  &Q^2_X = \evl_{E_2}(\bF_q[X_1,\dots,X_u]^{[0,k^2_X)^u})^\perp, \hspace{1em} Q^2_Z = (\gamma_2*\evl_{E_2}(\bF_q[X_1,\dots,X_u]^{[0,k^2_Z)^u}))^\perp \\
  &Q^3_X = \evl_{E_3}(\bF_q[X_1,\dots,X_u]^{[0,k^3_X)^u})^\perp, \hspace{1em} Q^3_Z = \evl_{E_3}(\bF_q[X_1,\dots,X_u]^{[0,k^3_Z)^u}),
\end{align*}
where
\begin{align*}
  k^1_X = k^2_X &= \lfloor\epsilon k_0\rfloor, \hspace{1.5em} k^1_Z = k^2_Z = \left\lfloor\frac23k_0\right\rfloor \\
  k^3_X &= \lfloor\epsilon k_0\rfloor, \hspace{1.5em} k^3_Z = \left\lfloor\frac13k_0\right\rfloor.
\end{align*}
We will show below that the subsystem product (see Definition~\ref{def:subhomprod})
\begin{equation}
  \label{eq:cczQdef}
  Q = (Q_X,Q_Z) = Q^1 \otimes Q^2 \otimes Q^3
\end{equation}
supports a transversal $CCZ$ gate.

In order to apply Theorem~\ref{thm:transgen}, we also define subspaces $(L_i\subseteq Q^i_Z)_{i\in[3]}$ by
\begin{align*}
  L_i &= \evl_{E_i}(\bF_q[X_1,\dots,X_u]^{[\underline{\ell},\overline{\ell})^u}),
\end{align*}
where
\begin{align*}
  \underline{\ell} &= \left\lceil\left(\frac13-\epsilon\right)k_0\right\rceil, \hspace{1em} \overline{\ell} = \left\lfloor\frac13k_0\right\rfloor.
\end{align*}
Note that~(\ref{eq:gammareq}) ensures that $L_1\subseteq Q^1_Z$ and that $L_2\subseteq Q^2_Z$, while by definition $L_3\subseteq Q^3_Z$.

\subsection{Analysis}
\label{sec:cczanalys}
We now analyze the construction in Section~\ref{sec:cczinstan} by showing the following:

\begin{theorem}
  \label{thm:tripleprodccz}
  Define all variables as in Section~\ref{sec:cczinstan}. Then with probability $> 1-n^52^{n^4+6}/q$ over the choice of $E_1,E_2,E_3$, all of the following hold:
  \begin{enumerate}
  \item There exist $\gamma_1,\gamma_2,\gamma_3\in(\bF_q^*)^n$ satisfying~(\ref{eq:gammareq}).
  \item The resulting code $Q$ in~(\ref{eq:cczQdef}) is a $[[N,K,D]]_q$ subsystem code of locality $3n$ for
    \begin{align*}
      N &= n^3 \\
      K &\geq n^3 \cdot \frac{1}{20^{u+1}} \\
      D &\geq n^3 \cdot \frac{1}{(um)^{O(1)}\cdot 2^{O(u)}}.
    \end{align*}
  \item Defining $L$ and $S$ as in Theorem~\ref{thm:transgen}, there exists an encoding map $\Enc:\bF_q^{\dim(L)}\xrightarrow{\sim}(L+S)/S$ for which $(Q,\Enc)$ supports a transversal $CCZ$ (and $U^3$) gate on
    \begin{equation*}
      \dim(L) = (\overline{\ell}-\underline{\ell})^u \geq (\epsilon k_0-2)^u \geq \Omega(m)^u
    \end{equation*}
    logical qudits.
  \end{enumerate}
\end{theorem}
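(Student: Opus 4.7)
The plan is to establish the three claims of Theorem~\ref{thm:tripleprodccz} in order, packaging the random events needed (from Lemma~\ref{lem:randommds}, Theorem~\ref{thm:peptRS}, and Schwartz–Zippel-style arguments) into a single union bound to match the stated failure probability $n^5 2^{n^4+6}/q$.

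\textbf{Existence of the $\gamma_i$ and preliminary inclusions.} The constraints \eqref{eq:gammareq} form a system of $(2k_0+1)^u<(m/2+1)^u<n$ affine linear equations on $\gamma_i\in\bF_q^{E_i}$, so the affine solution space is a positive-dimensional coset of $\evl_{E_i}(\bF_q[X]^{[0,2k_0+1)^u})^\perp$. With high probability this dual code is MDS (Lemma~\ref{lem:randommds}), hence has every coordinate in the support of some codeword; a direct Schwartz–Zippel estimate on the coset then produces a $\gamma_i$ with every coordinate nonzero for sufficiently large $q$. The CSS orthogonality $(Q^i_X)^\perp\subseteq Q^i_Z$ and the inclusion $L_i\subseteq Q^i_Z$ both reduce, via \eqref{eq:gammareq}, to the purely degree-theoretic statement that the monomial $X^{(k_0,\dots,k_0)}$ never appears in the products $fg$ (with $f\in\bF_q[X]^{[0,k^i_X)^u}$ or $f\in\bF_q[X]^{[\underline{\ell},\overline{\ell})^u}$ and $g\in\bF_q[X]^{[0,k^i_Z)^u}$); indeed, our parameter choices give $k^i_X+k^i_Z\le k_0$ and $\overline{\ell}+k^i_Z\le k_0$, so $fg$ has every variable-degree strictly below $k_0$. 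The remaining condition $L_i\cap (Q^i_X)^\perp=\{0\}$ follows from disjointness of the exponent boxes $[\underline{\ell},\overline{\ell})^u$ and $[0,k^i_X)^u$ (since $\underline{\ell}\ge\epsilon k_0\ge k^i_X$) combined with injectivity of $\evl_{E_i}$ on polynomials of total multi-degree $<\overline{\ell}$, again from Lemma~\ref{lem:randommds}.

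\textbf{Dimension and distance.} The K\"unneth-type formula for subsystem products of CSS codes yields $K=\prod_{i=1}^3(\dim Q^i_Z-\dim(Q^i_X)^\perp)$; plugging in the dimensions gives first two factors $\ge n(1-O(6^{-u}))$ and third factor $(k^3_Z)^u(1-o(1))\ge n/12^u\cdot(1-o(1))$, which multiply to at least $n^3/20^{u+1}$. For the distance, Theorem~\ref{thm:subpe} reduces the claim to product-expansion of the tuples $({Q^1_Z}^\perp,{Q^2_Z}^\perp,Q^3_X)$, $({Q^1_X}^\perp,{Q^2_X}^\perp,Q^3_Z)$, $({Q^1_Z}^\perp,Q^2_X)$, $({Q^1_X}^\perp,Q^2_Z)$. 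Because multiplication by the nonzero vectors $\gamma_i^{\pm 1}$ preserves product-expansion (Lemma~\ref{lem:coeffsame}), each of these tuples is, up to rescaling, a tuple of randomly punctured tensor Reed–Solomon codes with at most one dual factor; Theorem~\ref{thm:peptRS} applies directly to the mixed tuples, and the all-type-1 tuple $({Q^1_X}^\perp,{Q^2_X}^\perp,Q^3_Z)$ is handled by appending one auxiliary dual factor (so Theorem~\ref{thm:peptRS} applies with $t=4$) and then sub-tupling via Lemma~\ref{lem:pesubtuple}. The resulting product-expansion of $((um)^{O(1)}\cdot 2^{O(u)})^{-1}$ gives the stated distance $D$ through Theorem~\ref{thm:subpe}, and the overall union bound over the MDS, product-expansion, $\gamma_i$-existence, and $\evl_E$-injectivity events matches $n^52^{n^4+6}/q$.

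\textbf{Transversal $CCZ$ via Theorem~\ref{thm:transgen}; the multiplication property.} Having verified $L_i\subseteq Q^i_Z$ and $L_i\cap(Q^i_X)^\perp=\{0\}$, the remaining obligation is the multiplication property
\[
L^{*3}\cap\bigl(S*(L+S)^{*2}\bigr)=\{0\},
\]
which I expect to be the principal technical obstacle. Following the template of Section~\ref{sec:transRS}, my plan is to pass to canonical polynomial representations in $\bF_q[X_{(1)},X_{(2)},X_{(3)}]$ of variable-degree $<m$ (using injectivity of $\evl_E$ on this full monomial cube, yet another Schwartz–Zippel event), and to show that the polynomial supports of the two subspaces are disjoint. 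The support of $L^{*3}$ equals $[3\underline{\ell},3\overline{\ell})^{3u}$. To bound the support of $S*(L+S)^{*2}$, I combine the two defining constraints on $S$: decomposing $s\in S$ as $s^{(1)}+s^{(2)}+s^{(3)}$ with $s^{(i)}\in(Q^i_X)^{\perp,(i)}$ forces each summand to have direction-$i$ polynomial degree $<k^i_X$ in one index, while $w\in L+S\subseteq Q_Z$ forces direction-$3$ polynomial degree $<k^3_Z=\overline{\ell}$ (since $L_3\subseteq Q^3_Z$ and $Q^3_Z$ is itself a small polynomial code). The crucial use of the $\gamma_i$ structure is that, although $Q^i_Z$ for $i=1,2$ has no direct positive polynomial description, the extraction identity \eqref{eq:gammareq} translates membership in $Q^i_Z$ (for canonical polynomials $V$ of degree $\le k_0$) into the forbidden-rectangle condition that $V$ has no monomials in $(k_0-k^i_Z,k_0]^u=(k_0/3,k_0]^u$. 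Together with the parameter inequality $3\underline{\ell}>k^i_X+2\overline{\ell}$ (valid because $\epsilon=1/100<1/12$), this should show that any element of $S*(L+S)^{*2}$ whose direction-$3$ polynomial meets the narrow strip $[3\underline{\ell},3\overline{\ell})^u$ necessarily has polynomial degree strictly below $3\underline{\ell}$ in some direction-$1$ or direction-$2$ coordinate, contradicting membership in $L^{*3}$. The hardest step will be simultaneously tracking the forbidden-rectangle constraint from $Q^i_Z$, the low-degree decomposition forced by $Q_X^\perp$, and the polynomial reductions modulo the vanishing ideal of $E$ that arise when intermediate products exceed degree $m$; once this is in hand, Theorem~\ref{thm:transgen} delivers the transversal $CCZ$ and $U^3$ gate on $\dim(L)$ logical qudits, completing the proof.
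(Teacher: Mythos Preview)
Your treatment of items~1 and~2 (existence of the $\gamma_i$, the dimension count, and the distance via Theorem~\ref{thm:subpe} combined with Theorem~\ref{thm:peptRS}, including the $t=4$-plus-subtuple trick for the all-primal tuple) follows the paper essentially verbatim, and the union-bound accounting is fine.

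The gap is in item~3, specifically in your plan for the multiplication property $L^{*3}\cap(S*(L+S)^{*2})=\{0\}$. You propose to pass to canonical polynomial representatives and argue that the supports are disjoint, using a ``forbidden-rectangle'' description of $Q^i_Z$ for $i=1,2$: namely, that an element $\evl_{E_i}(V)$ with $\deg V\le k_0$ lies in $Q^i_Z$ iff $V$ avoids the corner $(k_0-k^i_Z,k_0]^u$. That description is correct but far from complete: the space of such $V$ has dimension $(k_0+1)^u\approx(m/4)^u$, whereas $\dim Q^i_Z=n-(k^i_Z)^u$ is nearly all of $n=m^u$. Generic elements of $Q^i_Z$ have canonical polynomial representatives of variable-degree up to $m-1$, and these appear as tensor factors in $S_1,S_2\subseteq S$. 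Once you multiply three such factors and reduce modulo the vanishing ideal of $E$, you lose any usable control on the monomial support, so ``disjoint supports'' cannot be pushed through. You flag this wraparound issue yourself, but there is no mechanism in your outline to handle it.

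The paper avoids polynomial-support analysis entirely. It fixes the single linear functional $c\mapsto\gamma\cdot(\evl_E(X^{\ell})*c)$ for each shift $\ell\in\{0,\dots,3\epsilon k_0\}^{3u}$, and proves two claims: (i) every nonzero $c\in L^{*3}$ is detected by some such functional, and (ii) every $c\in S*(L+S)^{*2}$ is annihilated by all of them. Claim~(i) is immediate from~\eqref{eq:gammareq} since $L^{*3}$ sits in a small polynomial box near $(k_0,\dots,k_0)$. Claim~(ii) is a case analysis on pure tensors $c_1*c_2*c_3$ with $c_1\in S_j$ and $c_2,c_3\in L\cup S_1\cup S_2\cup S_3$. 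The key case is when no factor lies in $S_3$ and not both of $c_2,c_3$ lie in $L$; then (up to symmetry) the direction-$2$ factor of $c$ lies in $Q^2_Z*{Q^2_X}^\perp*({Q^2_X}^\perp+L_2)$. The paper does \emph{not} describe $Q^2_Z$ polynomially; instead it observes that for $a\in Q^2_Z$ and $b$ in the small polynomial space $\evl_{E_2}(\bF_q[X]^{[0,(1/3+4\epsilon)k_0)^u})\supseteq\evl_{E_2}(X^{\ell_2})*{Q^2_X}^\perp*({Q^2_X}^\perp+L_2)$, one has $\gamma_2\cdot(a*b)=a\cdot(\gamma_2*b)=0$ \emph{by the definition of $Q^2_Z$ as a dual}. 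This duality pairing is precisely what your forbidden-rectangle reformulation throws away, and it is the device that makes the argument close.
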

\begin{proof}
  By Lemma~\ref{lem:randommds}, Theorem~\ref{thm:peptRS}, and Lemma~\ref{lem:coeffsame}, with probability $> 1-n^52^{n^4+6}/q$, all of the following hold:
  \begin{enumerate}
  \item\label{it:eventMDS} For every $i\in[3]$ and every $0\leq k\leq 2k_0+1$, the code $\evl_{E_i}(\bF_q[X_1,\dots,X_u]^{[0,k)^u})$ is MDS of dimension $k^u$.
  \item\label{it:eventpe} All of the tuples
    \begin{align*}
      &(Q^1_X), ({Q^1_Z}^\perp,Q^2_X), ({Q^1_Z}^\perp,{Q^2_Z}^\perp,Q^3_X) \\
      &(Q^1_Z), ({Q^1_X}^\perp,Q^2_Z), ({Q^1_X}^\perp,{Q^2_X}^\perp,Q^3_Z)
    \end{align*}
    have product-expansion at least
    \begin{equation}
      \label{eq:hppe}
      \left(\frac{1}{um}\right)^{O(1)} \cdot \left(\frac{\epsilon k_0-1}{m}\right)^{4u} \cdot \left(1-\frac{k_0}{m}\right)^{192u} \geq \frac{1}{(um)^{O(1)}\cdot 2^{O(u)}}.
    \end{equation}
  \end{enumerate}
  Specifically, the product-expansion bound on the LHS of~(\ref{eq:hppe}) follows by applying Theorem~\ref{thm:peptRS} with $t=1$ for $(Q^1_X),(Q^1_Z)$, with $t=2$ for $({Q^1_Z}^\perp,Q^2_X),({Q^1_X}^\perp,Q^2_Z)$, with $t=3$ for $({Q^1_Z}^\perp,{Q^2_Z}^\perp,Q^3_X)$, and with $t=4$ for $({Q^1_X}^\perp,{Q^2_X}^\perp,Q^3_Z)$. In particular, for this final case, because all three codes ${Q^1_X}^\perp,{Q^2_X}^\perp,Q^3_Z$ are of the form $\evl_{E_i}(\bF_q[X_1,\dots,X_u]^{[0,k^i)})$ for some $k^i\leq k_0$, we apply Theorem~\ref{thm:peptRS} to the 4-tuple of codes $({Q^1_X}^\perp,{Q^2_X}^\perp,Q^3_Z,C^4)$ for $C^4=\evl_{E_4}(\bF_q[X_1,\dots,X_u]^{[0,k_0)})^\perp$ with $E_4\subseteq\bF_q^u$ a uniformly random subset of size $|E_4|=n$. The resulting lower bound from Theorem~\ref{thm:peptRS} on the product-expansion of the 4-tuple $({Q^1_X}^\perp,{Q^2_X}^\perp,Q^3_Z,C^4)$ is also a lower bound on the product-expansion of the triple $({Q^1_X}^\perp,{Q^2_X}^\perp,Q^3_Z)$ by Lemma~\ref{lem:pesubtuple}.

  Conditioning on this event that item~\ref{it:eventMDS} and item~\ref{it:eventpe} above hold, we now prove that the three desired properties hold:
  \begin{enumerate}
  \item For $i\in[3]$, by assumption the code $\evl_{E_i}(\bF_q[X_1,\dots,X_u]^{[0,2k_0]^u})$ is MDS of dimension $(2k_0+1)^u$, so in particular the generator matrix rows $\evl_{E_i}(X_1^{\ell_1}\cdots X_u^{\ell_u})$ for $(\ell_1,\dots,\ell_u)\in\{0,\dots,2k_0\}^u$ are all linearly independent. Therefore there exists some $\gamma_i^0\in\bF_q^{E_i}$ with the desired inner product values in~(\ref{eq:gammareq}). Furthermore, because $\evl_{E_i}(\bF_q[X_1,\dots,X_u]^{[0,2k_0]^u})$ is MDS, (the marginal of) each component of a uniformly random dual codeword $\gamma_i'\in\evl_{E_i}(\bF_q[X_1,\dots,X_u]^{[0,2k_0]^u})^\perp$ is uniformly distributed over $\bF_q$. Therefore union-bounding over all $n$ code components implies that with probability $\geq 1-n/q$, all $n$ components of $\gamma_i^0+\gamma_i'$ are nonzero. As we may assume that $n<q$ (for otherwise the theorem statement has a negative probability lower bound and hence is vacuous), it follows that there exists a choice of $\gamma_i'$ for which $\gamma_i=\gamma_i^0+\gamma_i'$ has all $n$ components nonzero. By definition such $\gamma_i$ also satisfies~(\ref{eq:gammareq}), as desired.
  \item By definition for $i=1,2$, the code $Q^i$ has dimension at least $n-(k^i_X)^u-(k^i_Z)^u\geq n/2$, while $Q^3$ has dimension at least $(k^3_Z)^u-(k^3_X)^u\geq(m/12-2)^u-(m/400)^u\geq n/20^u$, where here we use the assumption that $m\geq 100$. Hence $Q$ has dimension at least $n^3/20^{u+1}$. Meanwhile, the distance bound in the theorem statement follows directly from Theorem~\ref{thm:subpe} along with the product-expansion bound in~(\ref{eq:hppe}).
  \item It suffices to show that the subspaces $L,S$ satisfy the conditions in Theorem~\ref{thm:transgen}. For each $i\in[3]$, recall that by the assumption that $\evl_{E_i}(\bF_q[X_1,\dots,X_u]^{[0,2k_0]^u})$ is MDS of dimension $(2k_0+1)^u$, the rows $\evl_{E_i}(X_1^{\ell_1}\cdots X_u^{\ell_u})$ for $(\ell_1,\dots,\ell_u)\in\{0,\dots,2k_0\}^u$ are all linearly independent. Because $L_i$ equals the span of the rows with $(\ell_1,\dots,\ell_u)\in[\underline{\ell},\overline{\ell})^u$, while ${Q^i_X}^\perp$ equals the span of the rows with $(\ell_1,\dots,\ell_u)\in[0,k^i_X)^u$, and by definition $k^i_X<\underline{\ell}$, it follows that $L_i\cap{Q^i_X}^\perp=\{0\}$.

    Thus it remains to be shown that~(\ref{eq:multprop}) holds with $r=3$. That is, we want to show that
    \begin{equation}
      \label{eq:multgoal}
      L^{*3}\cap(S*(L+S)^{*2}) = \{0\}.
    \end{equation}
    We prove~(\ref{eq:multgoal}) through the following two claims. Below, we let $E=E_1\times E_2\times E_3\subseteq\bF_q^{3u}$, so that $|E|=n^3$, and we let $\gamma=\gamma_1\otimes\gamma_2\otimes\gamma_3\in\bF_q^E$. We also denote elements of $(\{0,\dots,m\}^u)^3=\{0,\dots,m\}^{3u}$ by $\ell_{\filler,\filler}=(\ell_{i,\filler})_{i\in[3]}=(\ell_{i,j})_{i\in[3],j\in[u]}$. Similarly, we let $X_{\filler,\filler}=(X_{i,\filler})_{i\in[3]}=(X_{i,j})_{i\in[3],j\in[u]}$ denote the tuple of $3u$ free variables.
    \begin{claim}
      \label{claim:L3prop}
      For every nonzero $c\in L^{*3}$, there exists some $\ell_{\filler,\filler}\in\{0,\dots,3\epsilon k_0\}^{3u}$ such that
      \begin{equation*}
        \gamma \cdot (\evl_E(X_{\filler,\filler}^{\ell_{\filler,\filler}})*c) \neq 0.
      \end{equation*}
    \end{claim}
    \begin{proof}
      By definition
      \begin{equation*}
        L^{*3} \subseteq \evl_E(\bF_q[X_{\filler,\filler}]^{[(1/3-\epsilon)k_0,k_0/3)^{3u}})^{*3} \subseteq \evl_E(\bF_q[X_{\filler,\filler}]^{[(1-3\epsilon)k_0,k_0)^{3u}}).
      \end{equation*}
      Hence any nonzero $c\in L^{*3}$ can be expressed as
      \begin{equation*}
        c = \sum_{\ell_{\filler,\filler}''\in(\bZ\cap[(1-3\epsilon)k_0,k_0))^{3u}}c_{\ell_{\filler,\filler}''} \cdot \evl_E(X_{\filler,\filler}^{\ell_{\filler,\filler}''}),
      \end{equation*}
      for some coefficients $c_{\ell_{\filler,\filler}}''\in\bF_q$. Fix some $\ell_{\filler,\filler}'\in(\bZ\cap[(1-3\epsilon)k_0,k_0))^{3u}$ for which $c_{\ell_{\filler,\filler}'}\neq 0$. Then $\ell_{\filler,\filler}:=(k_0,\dots,k_0)-\ell_{\filler,\filler}'$ by definition lies inside $\{0,\dots,3\epsilon k_0\}^{3u}$, and satisfies
      \begin{align*}
        \gamma \cdot (\evl_E(X_{\filler,\filler}^{\ell_{\filler,\filler}}) * c)
        &= \sum_{\ell_{\filler,\filler}''\in(\bZ\cap[(1-3\epsilon)k_0,k_0))^{3u}} c_{\ell_{\filler,\filler}''} (\gamma\cdot\evl_E(X_{\filler,\filler}^{\ell_{\filler,\filler}+\ell_{\filler,\filler}''})) \\
        &= \sum_{\ell_{\filler,\filler}''\in(\bZ\cap[(1-3\epsilon)k_0,k_0))^{3u}} c_{\ell_{\filler,\filler}''} \prod_{i=1}^3(\gamma_i\cdot\evl_E(X_{i,\filler}^{\ell_{i,\filler}+\ell_{i,\filler}''})) \\
        &= c_{\ell_{\filler,\filler}'} \\
        &\neq 0,
      \end{align*}
      as desired, where the third equality above holds by~(\ref{eq:gammareq}).
    \end{proof}

    \begin{claim}
      \label{claim:SLSprop}
      For every $c\in S*(L+S)^{*2}$ and every $\ell_{\filler,\filler}\in\{0,\dots,3\epsilon k_0\}^{3u}$, we have
      \begin{equation*}
        \gamma \cdot (\evl_E(X_{\filler,\filler}^{\ell_{\filler,\filler}})*c) = 0.
      \end{equation*}
    \end{claim}
    \begin{proof}
      It suffices to show the result for $c$ ranging over a set of vectors that span $S*(L+S)^{*2}$, as then the desired result follows by linearity. Thus we may restrict attention to $c$ of the form $c=c_1*c_2*c_3$ for $c_1\in S$ and $c_2,c_3\in L+S$. Recalling that
      \begin{equation*}
        S = Q_Z\cap Q_X^\perp = S_1+S_2+S_3
      \end{equation*}
      for
      \begin{align*}
        S_1 &= {Q^1_X}^\perp\otimes Q^2_Z\otimes Q^3_Z \\
        S_2 &= Q^1_Z\otimes{Q^2_X}^\perp\otimes Q^3_Z \\
        S_3 &= Q^1_Z\otimes Q^2_Z\otimes{Q^3_X}^\perp,
      \end{align*}
      we may further assume that $c_1$ lies inside one of $S_1,S_2,S_3$, and that $c_2,c_3$ each lie inside one of $L,S_1,S_2,S_3$.

      If any of $c_1,c_2,c_3$ lie inside $S_3$, then
      \begin{equation*}
        c=c_1*c_2*c_3\in\bF_q^{E_1}\otimes\bF_q^{E_2}\otimes({Q^3_X}^\perp*(Q^3_Z)^{*2}),
      \end{equation*}
      where
      \begin{equation*}
        {Q^3_X}^\perp*(Q^3_Z)^{*2} \subseteq \evl_{E_3}(\bF_q[X_{3,\filler}]^{[0,(2/3+\epsilon)k_0)^u}).
      \end{equation*}
      Hence for every $\ell_{\filler,\filler}\in\{0,\dots,3\epsilon k_0\}^{3u}$,
      \begin{equation*}
        \evl_E(X_{\cdot,\cdot}^{\ell_{\cdot,\cdot}})*c \in \bF_q^{E_1} \otimes \bF_q^{E_2} \otimes \evl_{E_3}(\bF_q[X_{3,\filler}]^{[0,(2/3+4\epsilon)k_0)^u})
      \end{equation*}
      It follows by~(\ref{eq:gammareq}) that the RHS above lies in $\ker(I\otimes I\otimes\gamma_3^\top)$, which in turn implies that $\gamma\cdot(\evl_E(X_{\cdot,\cdot}^{\ell_{\cdot,\cdot}})*c)=0$, as desired.

      Thus it remains to consider the case where none of $c_1,c_2,c_3$ lie inside $S_3$. That is, we now assume that $c_1$ lies inside $S_1$ or $S_2$, and $c_2,c_3$ each lie inside $L$, $S_1$, or $S_2$. If both $c_2,c_3\in L$, then assuming that $c_1\in S_1$ (the proof for $c_1\in S_2$ is analogous), we have
      \begin{equation*}
        c=c_1*c_2*c_3 \in ({Q^1_X}^\perp*L_1^{*2})\otimes\bF_q^{E_2}\otimes\bF_q^{E_3},
      \end{equation*}
      where
      \begin{equation*}
        {Q^1_X}^\perp*L_1^{*2} \subseteq \evl_{E_1}(\bF_q[X_{1,\filler}]^{[(2/3+\epsilon)k_0)^u}).
      \end{equation*}
      Hence for every $\ell_{\filler,\filler}\in\{0,\dots,3\epsilon k_0\}^{3u}$,
      \begin{equation*}
        \evl_E(X_{\cdot,\cdot}^{\ell_{\cdot,\cdot}})*c \in \evl_{E_1}(\bF_q[X_{1,\filler}]^{[(2/3+4\epsilon)k_0)^u}) \otimes \bF_q^{E_2} \otimes \bF_q^{E_3}.
      \end{equation*}
      It follows by~(\ref{eq:gammareq}) that the RHS above lies in $\ker(I\otimes I\otimes\gamma_3^\top)$, which in turn implies that $\gamma\cdot(\evl_E(X_{\cdot,\cdot}^{\ell_{\cdot,\cdot}})*c)=0$, as desired.

      Therefore it remains to consider the case where either $c_2\notin L$ or $c_3\notin L$. By definition either at most one of $c_1,c_2,c_3$ lies inside $S_1$, or at most one lies inside $S_2$; assume the former, as the proof for the latter is analogous. Thus we have that one of $c_1,c_2,c_3$ lies in $S_1$, one lies inside $S_2$ (as above we covered the case where two lie inside $L$), and the third lies inside $S_2$ or $L$, and hence inside $S_2+L$. Then
      \begin{equation*}
        c=c_1*c_2*c_3 \in \bF_q^{E_1}\otimes(Q^2_Z*{Q^2_X}^\perp*({Q^2_X}^\perp+L_2))\otimes\bF_q^{E_3},
      \end{equation*}
      where
      \begin{align*}
        Q^2_Z*{Q^2_X}^\perp*({Q^2_X}^\perp+L_2) &\subseteq (\gamma_2*\evl_{E_2}(\bF_q[X_{2,\filler}]^{[0,2k_0/3)^u}))^\perp * \evl_{E_2}(\bF_q[X_{2,\filler}]^{[0,(1/3+\epsilon)k_0)^u}).
      \end{align*}
      Therefore for every $\ell_{\filler,\filler}\in\{0,\dots,3\epsilon k_0\}^{3u}$,
      \begin{align*}
        \evl_E(X_{\filler,\filler}^{\ell_{\filler,\filler}})*c
        &\in \bF_q^{E_1} \otimes \left((\gamma_2*\evl_{E_2}(\bF_q[X_{2,\filler}]^{[0,2k_0/3)^u}))^\perp * \evl_{E_2}(\bF_q[X_{2,\filler}]^{[0,(1/3+4\epsilon)k_0)^u})\right) \otimes \bF_q^{E_3}.
      \end{align*}
      But the RHS above lies within $\ker(I\otimes\gamma_2^\top\otimes I)$, as for every $a\in(\gamma_2*\evl_{E_2}(\bF_q[X_{2,\filler}]^{[0,2k_0/3)^u}))^\perp$ and $b\in\evl_{E_2}(\bF_q[X_{2,\filler}]^{[0,(1/3+4\epsilon)k_0)^u})$, then
      \begin{align*}
        \gamma_2^\top(a*b)
        &= a\cdot(\gamma_2*b) = 0,
      \end{align*}
      where the second equality above holds because by definition $a$ and $\gamma_2*b$ lie inside codes that are dual to each other. Thus we have shown that $(I\otimes\gamma_2^\top\otimes I)(\evl_E(X_{\filler,\filler}^{\ell_{\filler,\filler}})*c)=0$, and hence $\gamma\cdot(\evl_E(X_{\filler,\filler}^{\ell_{\filler,\filler}})*c)=0$, as desired.
    \end{proof}
        Claim~\ref{claim:L3prop} and Claim~\ref{claim:SLSprop} together imply the desired equality~(\ref{eq:multgoal}). \qedhere
  \end{enumerate}
\end{proof}

\subsection{Choosing Parameters}
In this section, we show how to choose parameters in Theorem~\ref{thm:tripleprodccz} to obtain the subsystem codes described in Theorem~\ref{thm:tripleprodparam}, which have almost-linear dimension and distance, have cube-root locality, and support transversal $CCZ$ gates.

\begin{proof}[Proof of Theorem~\ref{thm:tripleprodparam}]
  Both families in the theorem statement are instantiations of the construction from Section~\ref{sec:cczinstan}, and the code properties follow from Theorem~\ref{thm:tripleprodccz}. Specifically, let $\beta=O(1)$ be a sufficiently large constant such that the distance bound in Theorem~\ref{thm:tripleprodccz} is $D\geq n^3/(um\cdot 2^u)^\beta$. Then we obtain the desired families by instantiating the parameters in Theorem~\ref{thm:tripleprodccz} as follows, where in both cases we let $q$ be the greatest power of $2$ less than $N^{5/3}2^{N^{4/3}+8}$, and we recall that $n=m^u$:
  \begin{enumerate}
  \item Set $u=\lceil\beta/3\epsilon'\rceil$, so that $m=N^{1/3u}\leq N^{\epsilon'/\beta}$. Then for every fixed constant $0<\epsilon'=\Omega(1)$, applying Theorem~\ref{thm:tripleprodccz} with $m\rightarrow\infty$ gives a family of $[[N,\; \Omega(N),\; \Omega(N/m^\beta)\geq\Omega(N^{1-\epsilon'})]]_q$ codes supporting transversal $CCZ$ (and $U^3$).
  \item Set $u=\lceil\beta\log\log(N)/3\rceil$, so that $m=N^{1/3u}\leq N^{1/\beta\log\log(N)}$. Then applying Theorem~\ref{thm:tripleprodccz} with $m\rightarrow\infty$ gives a family of $[[N,\; N/\poly(\log N),\; N^{1-1/\log\log(N)}/\poly(\log N)\geq N^{1-o(1)}]]_q$ codes supporting transversal $CCZ$ (and $U^3$).\qedhere
  \end{enumerate}
\end{proof}

\begin{remark}
  \label{remark:fieldsize}
  In the proof of Theorem~\ref{thm:tripleprodparam}, for simplicity we chose $q$ to be a power of $2$. However, we could similarly choose $\bF_q$ to be a field of any given prime characteristic $p$, by in particular choosing $q$ to be the least power of $p$ greater than $N^{5/3}2^{N^{4/3}+6}$, and otherwise choosing the same parameters as in Theorem~\ref{thm:tripleprodparam}.
\end{remark}

\section{Acknowledgments}
This work was done in part while the authors were attending the Spring 2024 programs at the Simons Institute for the Theory of Computing.

\bibliographystyle{alpha}
\bibliography{library}

\appendix

\section{Omitted Proofs from Preliminaries}
\label{sec:peproofs}
This appendix provides proofs that were omitted from Section~\ref{sec:prelim}.

Below, we present the proof of Lemma~\ref{lem:homvanexp}. Here, we carry over the notation $C^{(i)}$, $C^{(i,j)}$, $|\cdot|_i$ from Section~\ref{sec:pe}.

\begin{proof}[Proof of Lemma~\ref{lem:homvanexp}]
  The lemma essentially follows from the discussion in \cite[Appendix~B]{kalachev_two-sided_2023}, though we include a proof for completeness. At a high level, \cite{kalachev_two-sided_2023} provide an interpretation of $C_1\boxplus\cdots\boxplus C_t$ using a tensor product of chain complexes associated to the codes $C_1,\dots,C_t$. The lemma follows by applying the K\"{u}nneth formula to this product, and then tracing through the definitions.

  Note that here we use ordinary (multi-term) chain complexes as opposed to single-sector complexes; see \cite[Appendix~B]{kalachev_two-sided_2023} for the relevant background and definitions, specifically pertaining to tensor products and the K\"{u}nneth formula for such complexes.
  
  For $i\in[t]$, let $k_i=\dim C_i$, and define a 2-term cochain complex
  \begin{equation*}
    \cC_i=(\bF_q^{k_i}\xrightarrow{G_i}\bF_q^{n_i}),
  \end{equation*}
  where the coboundary map is a generator matrix $G_i\in\bF_q^{n_i\times k_i}$ for $C_i$, meaning that $C_i=\im G_i$. Then let
  \begin{equation*}
    \cA^* = (A^0 \xrightarrow{\delta_0} A^1 \xrightarrow{\delta_1} \cdots \xrightarrow{\delta_{t-1}} A^t)
  \end{equation*}
  be the $(t+1)$-term cochain complex given by the tensor product (see e.g.~\cite[Appendix~B]{kalachev_two-sided_2023}) of $\cC_1,\dots,\cC_t$, that is
  \begin{equation*}
    \cA = \cC_1 \otimes \cdots \otimes \cC_t.
  \end{equation*}
  The well-known K\"{u}nneth formula implies that the cohomology $H^i(\cA)$ vanishes for all $0\leq i\leq t-1$, so in particular $H^{t-1}(\cA)=0$.

  For $0\leq i\leq t$, the space $A^i$ is the direct sum of ${t\choose i}$ spaces of $t$-dimensional tensors, that is,
  \begin{equation}
    \label{eq:Aistructure}
    A^i \cong \bigoplus_{S\subseteq[t]:|S|=i}\left(\bigotimes_{j\in S}\bF_q^{n_j}\right)\otimes\left(\bigotimes_{j\in[t]\setminus S}\bF_q^{k_j}\right),
  \end{equation}
  where we write $\cong$ instead of $=$ above because the $t$ factors in the tensor products above should be reordered to go in increasing order by $j$. Letting $G^{(i)}=I^{\otimes i-1}\otimes G_i\otimes I^{t-i}$ denote the map that applies $G_i$ to all the direction-$i$ vectors in a $t$-dimensional tensor, then the coboundary maps of $\cA$ by definition consist of sums of maps $G^{(i)}$ with appropriate signs. More details on the chain complex $\cA$ can be found in Appendix~B of~\cite{kalachev_two-sided_2023} and in Section~4 of~\cite{dinur_expansion_2024} (though note that these works use different signing conventions for the coboundary maps).

  Now for $i\in[t]$, let $a_i={G^{(i)}}^{-1}(-1)^{i-1}(c_i-c_i')$ be the $t$-dimensional tensor obtained by applying $G_i^{-1}$ to every direction-$i$ column of $(-1)^{i-1}(c_i-c_i')$; note that as every direction-$i$ column of $c_i-c_i'$ lies in $C_i=\im G_i$, these inverses are (uniqely) well-defined. Then by definition $a:=(a_1,\dots,a_t)$ is an element of $A^{t-1}$, and
  \begin{equation*}
    \delta_{t-1}a = \sum_{i\in[t]}(-1)^{i-1}(-1)^{i-1}(c_i-c_i') = c-c = 0.
  \end{equation*}
  Thus because $H^{t-1}(\cA)=0$ as shown above, there must exist some $b\in A^{t-2}$ with $a=\delta_{t-2}b$. It follows from~(\ref{eq:Aistructure}) that $b$ is a collection of tensors $b=(b_{i,j})_{1\leq i<j\leq t}$, where $b_{i,j}$ has length $k_i$ and $k_j$ in directions $i$ and $j$ respectively, and length $n_\ell$ in all other directions $\ell\in[t]\setminus\{i,j\}$. Furthermore, for $i\in[t]$, by definition
  \begin{align*}
    a_i &= (\delta_{t-2}b)_i = \sum_{j=1}^{i-1}(-1)^{j-1}G^{(j)}b_{j,i} + \sum_{j=i+1}^t(-1)^{j-2}G^{(j)}b_{i,j}.
  \end{align*}
  Applying $(-1)^{i-1}G^{(i)}$ to both sides of the above equation, we conclude that
  \begin{equation*}
    c_i-c_i' = (-1)^{i-1}G^{(i)}a_i = \sum_{j=1}^{i-1}(-1)^{(j-1)+(i-1)}G^{(j)}G^{(i)}b_{j,i} + \sum_{j=i+1}^t(-1)^{(i-1)+(j-2)}G^{(i)}G^{(j)}b_{i,j}.
  \end{equation*}
  Thus for all $1\leq i<j\leq t$, if we let
  \begin{equation*}
    c_{i,j} = (-1)^{(i-1)+(j-1)}G^{(i)}G^{(j)}b_{i,j},
  \end{equation*}
  then $c_{i,j}\in C^{(i,j)}$ and~(\ref{eq:homvanexp}) holds, as desired.
\end{proof}

Below, we present the proof of Lemma~\ref{lem:transdef}.

\begin{proof}[Proof of Lemma~\ref{lem:transdef}]
  We prove the result for the $CCZ_q$ gate; the proof for $U_q$ is analogous, so we omit the details to avoid redundancy. Letting $p$ denote the characteristic of $q$, then for every $z^1,\dots,z^r\in\bF_q^\ell$, we have
  \begin{align*}
    \hspace{1em}&\hspace{-1em} \bigotimes_{j\in[n]}C^{r-1}Z_q^{a_j}(\Enc_{\bC}^1\otimes\cdots\Enc_{\bC}^r)\ket{z^1,\dots,z^r} \\
    &= \left(\prod_{h\in[r]}\frac{1}{\sqrt{|Q^h_Z\cap{Q^h_X}^\perp|}}\right) \sum_{({z^{(h)}}'\in\Enc_{\bC}^h(z^h))_{h\in[r]}} e^{2\pi i\tr_{\bF_q/\bF_p}(\sum_{j\in[n]}a_j\cdot{z_j^1}'\cdots{z_j^r}')/p} \ket{{z^1}',\dots,{z^r}'} \\
    &= \left(\prod_{h\in[r]}\frac{1}{\sqrt{|Q^h_Z\cap{Q^h_X}^\perp|}}\right) \sum_{({z^{(h)}}'\in\Enc_{\bC}^h(z^h))_{h\in[r]}} e^{2\pi i\tr_{\bF_q/\bF_p}(\sum_{j\in[\ell]}z_j^1\cdots z_j^r)/p} \ket{{z^1}',\dots,{z^r}'} \\
    &= e^{2\pi i\tr_{\bF_q/\bF_p}(\sum_{j\in[\ell]}z_j^1\cdots z_j^r)/p} (\Enc_{\bC}^1\otimes\cdots\Enc_{\bC}^r) \ket{z^1,\dots,z^r} \\
    &= (\Enc_{\bC}^1\otimes\cdots\Enc_{\bC}^r) \bigotimes_{j\in[\ell]}\left(e^{2\pi i\tr_{\bF_q/\bF_p}(z_j^1\cdots z_j^r)/p}\ket{z_j^1,\dots,z_j^r}\right) \\
    &= (\Enc_{\bC}^1\otimes\cdots\Enc_{\bC}^r)\left(\bigotimes_{j\in[\ell]}C^{r-1}Z_q\right)\ket{z^1,\dots,z^r},
  \end{align*}
  as desired, where the second equality above holds by~(\ref{eq:transdef}).
\end{proof}

Below we sketch the proof of Lemma~\ref{lem:alphred}; similar proofs can be found in \cite{nguyen_good_2025,golowich_quantum_2024}.

\begin{proof}[Proof sketch of Lemma~\ref{lem:alphred}]
  If we restrict attention to the $\bF_{q'}$-structure within $\bF_q$, and view $Q^h_X,Q^h_Z$ as subspaces of $\bF_{q'}^{en}$, then $Q^h=(Q^h_X,Q^h_Z)$ is by definition an $[[e\cdot n,e\cdot k_h,d_h]]_{q'}$ subsystem code of locality $\leq e\cdot w_h$. To obtain a transversal $C^{r-1}Z_{q'}$ gate, for $h\in[r]$ we concatenate this code with a length-$e^{r-1}$ classical repetition code (in the computational basis), and then permute the resulting physical qudits by an appropriate permutation $\pi_h:[ne^r]\rightarrow[ne^r]$ (defined below) to obtain our desired $[[n'=e^r\cdot n,\;k_h'=e\cdot k_h,\;d_h'\geq d_h]]_{q'}$ subsystem code ${Q'}^h$ of locality $w_h'\leq e^r\cdot w_h\cdot$.

  Let $(\Enc^h)_{h\in[r]}$ and $a\in\bF_q^n$ denote the encoding functions and coefficients vector respectively for the transversal $C^{r-1}Z$ gate on $(Q^h)_{h\in[r]}$. We then define our encoding function ${\Enc'}^h$ for ${Q'}^h$ to first apply $\Enc^h$ to the input $z^h\in\bF_{q'}^\ell$ (viewing $z^h$ as a vector in $\bF_q^\ell$ via the natural inclusion $\bF_{q'}\subseteq\bF_q$), repeat each component of every vector in the resulting coset $e^{r-1}$ times, and then apply the permutation $\pi_h$.
  
  The key observation now is that that multiplication over $\bF_q$ is an $e$-multilinear operation over $\bF_{q'}$. Therefore for every $z^1,\dots,z^r\in\bF_{q'}^\ell$, the transversal $C^{r-1}Z_q$ condition~(\ref{eq:transdef}) implies that $\sum_{j\in[\ell]}z_j^i$ equals a degree-$r$ multilinear polynomial in the components of ${z^h}'\in\Enc^h(z^h)$ for $h\in[r]$, where we now view ${z^h}'\in\bF_q^n$ as a length-$ne$ vector in $\bF_{q'}^{ne}$. Furthermore, by definition each variable ${z_j^h}'$ for $j\in[n]\times[e]$ participates in $\leq e^{r-1}$ monomials in this polynomial, as ${z_j^h}'$ for $j=(j_1,j_2)\in[n]\times[e]$ can only participate in monomials involving other variables with the same value of $j_1$, and each monomial contains exactly one variable with each value of $h\in[r]$.

  Hence because the encoding ${\Enc'}^h$ simply repeats the value of each variable ${z_j^h}'$ a total of $e^{r-1}$ times, if we instead consider ${z^h}'\in{\Enc'}^h(z^h)$, then we can use a different copy of each variable in each monomial, and thereby express $\sum_{j\in[\ell]}z_j^1\cdots z_j^r$ as a degree-$r$ multilinear polynomial in the components of ${z_j^h}'$ such that each variable appears in at most one monomial. Then by choosing the permutations $\pi_h$ appropriately, we can express this multilinear polynomial in the form $\sum_{j\in[ne^r]}a_j'\cdot{z_j^1}'\cdots{z_j^r}'$ for some appropriate coefficients $a_j'\in\bF_{q'}$, and hence we obtain the desired equality $\sum_{j\in[\ell]}z_j^1\cdots z_j^r=\sum_{j\in[ne^r]}a_j'\cdot{z_j^1}'\cdots{z_j^r}'$.
\end{proof}

\section{Omitted Decoder Proof}
\label{sec:omitdec}

\begin{algorithm}[!t]
  \caption{\label{alg:qdectech}
    The core subroutine for the decoder in Proposition~\ref{prop:qdectech}. Here we define $\epsilon,n,Q^1,Q^2,Q_Z'$ as in Proposition~\ref{prop:qdectech}.
    Furthermore, for $i\in[2]$, we let $k_i=\dim({Q^i_X}^\perp)$, $k_i'=\dim(Q^i_Z)$, so that ${Q^i_X}^\perp=\evl_{E_i}(\bF_q[X]^{[0,k_i)})$, $Q^i_Z=\evl_{E_i}(\bF_q[X]^{[0,k_i')})$ are Reed-Solomon codes with an evaluation set $E_i\subseteq[n]$ of size $|E_i|=n$.
    As a shorthand we denote $X=(X_1,X_2)$, $x=(x_1,x_2)$, and $E=E_1\times E_2$.
  }
  
  \SetKwInOut{Input}{Input}
  \SetKwInOut{Output}{Output}

  \SetKwFunction{FnDecQuantum}{DecQuantum}
  \SetKwProg{Fn}{Function}{:}{}
  \Input{$c_0\in{Q^1_X}^\perp\boxplus Q^2_Z$ that is close to $Q_Z':=(Q^1_Z\otimes Q^2_Z)+(Q^1_X\otimes Q^2_X)^\perp$}
  \Output{$c'\in Q_Z'$ that is close to $c_0$ (by proof of Proposition~\ref{prop:qdectech})}

  \Fn{\FnDecQuantum{$c_0$}}{
    Let $f(X)=\sum_{j=(j_1,j_2)}f_jX^j\in\bF_q[X]^{[0,n)^2}$ be the unique polynomial with $c_0=\evl_E(f)$ \\ \label{li:qf}
    \For{$j_2\in\{k_2,\dots,k_2'-1\}$}{ \label{li:qfor2}
      Compute some $r_2^{j_2}\in\bF_q^{E_1}$ with $|r_2^{j_2}|\leq\epsilon n/25$ such that
      \begin{equation*}
        \evl_{E_1}\left(\sum_{j_1\in[n]}f_{(j_1,j_2)}X_1^{j_1}\right)-r_2^{j_2} \in Q^1_Z.
      \end{equation*} \label{li:qr2}
    }
    \KwRet{
      \begin{equation*}
        c' = c_0 - \sum_{j_2=k_2}^{k_2+s-1}r_2^{j_2}\otimes\evl_{E_2}(X_2^{j_2})
      \end{equation*}
    } \label{li:qcp}
  }
\end{algorithm}

\begin{proof}[Proof of Proposition~\ref{prop:qdectech}]
  Define $\alpha=\alpha(\epsilon)$ as in Theorem~\ref{thm:dualtensordec}, and then define
  \begin{align*}
    \delta &= \delta(\epsilon,\delta') = \frac{\rho\epsilon\delta'}{50\cdot\alpha(\epsilon)}.
  \end{align*}
  Observing that $Q_Z'\subseteq{Q^1_X}^\perp\boxplus Q^2_Z$, we first run the $\poly(n,q)$-time decoder in Theorem~\ref{thm:dualtensordec} on the input $c$ with respect to the Reed-Solomon codes ${Q^1_X}^\perp,Q^2_Z$, to obtain some $c_0\in{Q^1_X}^\perp\boxplus Q^2_Z $ with
  \begin{equation*}
    |c_0-c| \leq \alpha\cdot|c-\tilde{c}| \leq \alpha\delta n^2 = \frac{\rho\epsilon\delta'}{50} \cdot n^2.
  \end{equation*}
  We then run Algorithm~\ref{alg:qdectech} on input $c_0$. We will show that Algorithm~\ref{alg:qdectech} successfully outputs some $c'\in Q_Z'$ satisfying $|c-c'|\leq\delta' n^2$, and can be implemented in $\poly(n,q)$ time.

  Define all variables as in Algorithm~\ref{alg:qdectech}, including the integers $k_1\leq k_1'$, $k_2\leq k_2'$ and the evaluation set $E=E_1\times E_2$. Let $b_0=c_0-\tilde{c}\in{Q^1_X}^\perp\boxplus Q^2_Z$, so that
  \begin{equation*}
    |b_0| \leq |c_0-c|+|c-\tilde{c}| \leq \alpha\delta n^2+\delta n^2\leq 2\alpha\delta n^2 \leq \frac{\rho\epsilon\delta'}{25} \cdot n^2,
  \end{equation*}
  where we have used the fact that every $\alpha$-decoder by definition has $\alpha\geq 1$. Then by Theorem~\ref{thm:pe2RS}, we may express
  \begin{equation*}
    b_0 = \evl_E(h_1(X)+h_2(X))
  \end{equation*}
  for some $h_1(X)\in\bF_q[X]^{[0,n)\times[0,k_2')}$ and $h_2(X)\in\bF_q[X]^{[0,k_1)\times[0,n)}$ such that the sets
  \begin{align*}
    H_1 &= \{x_1\in E_1:h_1(x_1,X_2)\neq 0\in\bF_q[X_2]\} \\
    H_2 &= \{x_2\in E_2:h_2(X_1,x_2)\neq 0\in\bF_q[X_1]\}
  \end{align*}
  have size $|H_1|,|H_2|\leq\epsilon\delta' n/25$.
  
  Define $f(X)=\sum_jf_jX^j$ as in line~\ref{li:qf} of Algorithm~\ref{alg:qdectech}. By definition $f(X)\in\bF_q[X]^{([0,k_1)\times[0,n))\cup([0,n)\times[0,k_2'))}$, and furthermore $\evl_E(f-h_1-h_2)=c_0-b_0=\tilde{c}\in Q_Z'$, where by definition $Q_Z'=\evl_E(\bF_q[X]^S)$ for
  \begin{equation*}
    S := ([0,k_1)\times[0,n)) \cup ([0,n)\times[0,k_2)) \cup ([0,k_1')\times[0,k_2')).
  \end{equation*}
  Therefore letting $h_1(X)=\sum_{j=(j_1,j_2)}h_{1,j}X^j$, then for every $j_2\in\{k_2,\dots,k_2'-1\}$ we have
  \begin{align*}
    \left(\sum_{j_1\in[n]}f_{(j_1,j_2)}X_1^{j_1}\right) - \left(\sum_{j_1\in[n]}h_{1,(j_1,j_2)}X_1^{j_1}\right)
    &\in \bF_q[X_1]^{[0,k_1')}.
  \end{align*}
  Hence
  \begin{align*}
    \evl_{E_1}\left(\sum_{j_1\in[n]}f_{(j_1,j_2)}X_1^{j_1}\right) - \evl_{E_1}\left(\sum_{j_1\in[n]}h_{1,(j_1,j_2)}X_1^{j_1}\right)
    &\in \evl_{E_1}(\bF_q[X_1]^{[0,k_1')}) = Q^1_Z.
  \end{align*}
  For $x_1\in E_1$, by definition $\sum_{j_1\in[n]}h_{1,(j_1,j_2)}x_1^{j_1}$ is precisely the degree-$j_2$ coefficient of the polynomial $h_1(x_1,X_2)$. But by definition $h_1(x_1,X_2)=0$ for every $x_1\in E_1\setminus H_1$, and hence $\evl_{E_1}(\sum_{j_1\in[n]}h_{1,(j_1,j_2)}X_1^{j_1})$ is supported inside $H_1$, and therefore has weight at most $|H_1|\leq\epsilon\delta' n/25\leq\epsilon n/25$. Therefore $\evl_{E_1}(\sum_{j_1\in[n]}h_{1,(j_1,j_2)}X_1^{j_1})$ is a valid choice of $r_2^{j_2}$ in line~\ref{li:qr2}. Furthermore, because $Q^2_Z$ is a Reed-Solomon code of distance $\geq\epsilon n/2$, this choice of $r_2^{j_2}$ is unique.

  Thus we have shown that for every $j_2\in\{k_2,\dots,k_2'-1\}$, line~\ref{li:qr2} computes
  \begin{equation*}
    r_2^{j_2} = \evl_{E_1}\left(\sum_{j_1\in[n]}h_{1,(j_1,j_2)}X_1^{j_1}\right),
  \end{equation*}
  which is supported inside $H_1$. Thus the value $c'$ returned by line~\ref{li:qcp} by definition agrees with $c_0$, and therefore also with $\tilde{c}=c_0-b_0$, inside $(E_1\setminus H_1)\times(E_2\setminus H_2)$ so
  \begin{align*}
    |c-c'|
    &\leq |c-\tilde{c}|+|\tilde{c}-c'| \\
    &\leq \delta n^2+n|H_1|+n|H_2| \\
    &\leq \delta' n^2.
  \end{align*}
  Furthermore, by definition
  \begin{align*}
    c'
    &= \evl_E\left(f(X)-h_1(X)-h_2(X) + \sum_{j_1=0}^{n-1}\sum_{j_2=0}^{k_2-1}h_{1,j=(j_1,j_2)}X^j + \sum_{j_1=0}^{k_1-1}\sum_{j_2=0}^{n-1}h_{2,j=(j_1,j_2)}X^j\right) \\
    &= \tilde{c} + \evl_E\left(\sum_{j_1=0}^{n-1}\sum_{j_2=0}^{k_2-1}h_{1,j=(j_1,j_2)}X^j + \sum_{j_1=0}^{k_1-1}\sum_{j_2=0}^{n-1}h_{2,j=(j_1,j_2)}X^j\right) \\
    &\in Q_Z' + ({Q^1_X}^\perp\boxplus{Q^2_X}^\perp) \\
    &= Q_Z',
  \end{align*}
  as desired.

  It remains to be shown that Algorithm~\ref{alg:qdectech} runs in time $\poly(n,q)$. For this purpose, we observe that line~\ref{li:qf} simply must solve a system of $\poly(n)$ linear equations over $\bF_q$, which can be performed by Guassian elimination in time $\poly(n,q)$. Meanwhile, line~\ref{li:qr2} can simply run the Welch-Berlekamp decoder for the Reed-Solomon code $Q^1_Z$ (see e.g.~\cite{guruswami_essential_2022}), which also runs in time $\poly(n,q)$. Thus the entire running time is $\poly(n,q)$.
\end{proof}

\section{Error Correction of Quantum Code States}
\label{sec:ec}
In this section, we provide some more details on how to perform error correction on quantum codes, and in particular on subsystem product codes, at the level of quantum states and operators. Specifically, we first describe the precise operators (i.e.~parity checks) to meausure, and the classical processing to perform on the measurement outcomes, in order to correct errors on the code state; these details are mostly folklore, and we provide them for the reader less familiar with quantum error correction. We also describe how this error correction procedure can function even in the presence of measurement errors for subsystem products, which is of particular interest given that some subsystem codes can fail to support such fault-tolerance in the presence of measurement errors.

Specifically, in Section~\ref{sec:ecCSS} below, we describe the general error correction procedure for non-subsystem CSS codes, and then in Section~\ref{sec:ecsubsystem} we describe error correction for subsystem codes. Readers familiar with these topics may want to skip to Section~\ref{sec:ftdec}, where we describe how to leverage local testability to perform error correction on on certain subsystem codes, including subsystem product codes, in the presence of syndrome errors.

Recall that for a finite field $\bF_q$ of characteristic $p$ and an element $a\in\bF_q$, the $q$-ary Pauli $X_q^a$ and $Z_q^a$ matrices are unitary operators in $\bC^{\bF_q\times\bF_q}\cong\bC^{q\times q}$ that act on $\ket{y}\in\bC^q$ for $y\in\bF_q$ by
\begin{align*}
  X_q^a\ket{y} &= \ket{y+a} \\
  Z_q^a\ket{y} &= e^{2\pi i\tr_{\bF_q/\bF_p}(ay)/p}\ket{y}.
\end{align*}
From here on we assume $q$ is fixed, and omit the $q$ subscript. For $a\in\bF_q^n$ and $P\in\{X,Z\}$, we denote $P^a=\bigotimes_{i\in[n]}P^{a_j}$.

\subsection{Non-Subsystem CSS Codes}
\label{sec:ecCSS}
For a non-subsystem CSS code $Q=(Q_X,Q_Z)$, error correction may be performed by measuring the operators (called \emph{$X$-stabilizers}) $X^c$ for an $\bF_p$-generating set\footnote{That is, a set of elements whose $\bF_p$-span equals $Q_X^\perp$.} of $c\in Q_X^\perp$, as well as the operators (called \emph{$Q$-stabilizers}) $Z^c$ for an $\bF_p$-generating set of $c\in Q_Z^\perp$. Such generating sets of $Q_X^\perp$ and $Q_Z^\perp$ are specifically given by scalar multiples of rows of the parity-check matrices $H_X$ and $H_Z$ respectively. If the code state was corrupted by a Pauli error of the form $X^{e_X}Z^{e_Z}$ with $e_X,e_Z\in\bF_q^n$, these measurement outcomes provide the decoder with the \emph{syndromes} $s_X=H_Xe_Z$ and $s_Z=H_Ze_X$. Note that it suffices to consider Pauli errors, as non-Pauli errors can be decomposed into superpositions of such Pauli errors. Given the syndromes $s_X,s_Z$, the decoder then finds low-weight errors $e_Z',e_X'$ satisfying $s_X=H_Xe_Z'$ and $s_Z=H_Ze_X'$, and applies $X^{-e_X'}Z^{-e_Z'}$ to revert the effect of the errors.

In a non-subsystem CSS code, the orthogonality condition $Q_X^\perp\subseteq Q_Z$ implies that all the stabilizers commute, so they can be measured in any order, and the measurements can be repeated, which can allow for successful error correction even if some of the measurements are noisy, i.e.~return the incorrect outcome.

\subsection{Subsystem CSS Codes}
\label{sec:ecsubsystem}
For a subsystem CSS code $Q=(Q_X,Q_Z)$ as defined in Definition~\ref{def:subsystem}, error correction will again be performed by measuring the operators $X^c$ for a generating set of $c\in Q_X^\perp$, as well as the operators $Z^c$ for a generating set of $c\in Q_Z^\perp$. However, these operators are now called \emph{$X$-gauge operators} and \emph{$Z$-gauge operators} respectively, and the $X$-gauge operator may not commute with the $Z$-gauge operators.

Hence to perform error correction, we first measure $X$-gauge operators to obtain an $X$-syndrome $s_X=H_X(e_Z+g_Z)$, where $e_Z\in\bF_q^n$ is the low-weight $Z$-error that occured, and $g_Z\in Q_Z^\perp$ is an arbitrary $Z$-gauge operator; recall that because gauge qudits are not protected by the code's distance, we must always assume arbitrary gauge operators may be applied to the code state. We then find some low-weight $e_Z'\in\bF_q^n$ such that $s_X-H_Xe_Z'\in\im(H_X|_{Q_Z^\perp})$, and we apply $Z^{-e_Z'}$ to revert the effect of the error. We subsequently do the same procedure, but for $X$ errors. That is, we measure $Z$-gauge operators to obtain a $Z$-syndrome $s_Z=H_Z(e_X+g_X)$, where $e_X\in\bF_q^n$ is a low-weight $X$-error and $g_X\in Q_X^\perp$ is an arbitrary $X$-gauge operator. We then find some low-weight $e_X'\in\bF_q^n$ such that $s_Z-H_Ze_X'\in\im(H_Z|_{Q_X^\perp})$, and we apply $X^{-e_X'}$ to revert the effect of the error.

Note that because the $X$ and $Z$ gauge operators do not commute, the $X$ and $Z$ error correction procedures here may induce the action of arbitrary $X$ and $Z$ gauge operators respectively. Therefore even in the absence of any errors, the syndromes may be different in every round. As such, if measurement errors occur, we may not be able to simply repeat the measurements to determine the true syndrome values. Nevertheless, we show below how to successfully perform error correction in the presence of measurement errors of subsystem codes such constructed from locally testable classical codes; subsystem products (Definition~\ref{def:subhomprod}) provide an example of such subsystem codes.

\subsection{Fault-Tolerant Error Correction for Subsystem Codes via Local Testability}
\label{sec:ftdec}
In Lemma~\ref{lem:ftdec} below, we show that if a subsystem code $Q=(Q_X,Q_Z)$ has locally testable $Q_X,Q_Z$, then there exists a decoder that can receive a noisy syndrome, and still compute a close approximation to the true error on the code state, where the accuracy of this approximation is proportional to the level of syndrome noise divided by the soundness of $Q_X,Q_Z$. Here we aim to prove a general statement that applies to arbitrary subsystem codes with good soundness, so we focus on the information-theoretic decoding problem, and do not consider algorithmic efficiency.

\begin{lemma}
  \label{lem:ftdec}
  Let $Q=(Q_X,Q_Z)$ be a $[[n,k,d]]_q$ subsystem code such that $Q_X,Q_Z$ are classical locally testable codes of soundness $\rho$.
  Then there exists an $X$-decoder takes as input a noisy syndrome $s_Z\in\bF_q^{m_Z}$ of the form $s_Z=H_Z(e_X+g_X)+v_X$ for an $X$-error $e_X\in\bF_q^n$ of weight $|e_X|<d/4$, a gauge operator $g_X\in Q_X^\perp$, and a syndrome error $v_X\in\bF_q^{m_Z}$ of weight $|v_X|<\rho d m_Z/4n$, and returns (a representative of) the coset $e_X+f+Q_X^\perp$ for some $f\in\bF_q^n$ of weight $|f|\leq|v|\cdot n/\rho m_Z$.

  Similarly, there exists an analogous $Z$-decoder (with $X$ and $Z$ swapped above).


\end{lemma}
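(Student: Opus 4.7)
My plan is to design a decoder that, given the noisy syndrome $s_Z$, searches for a triple $(e^*, g^*, v^*) \in \bF_q^n \times Q_X^\perp \times \bF_q^{m_Z}$ satisfying the syndrome equation $H_Z(e^* + g^*) + v^* = s_Z$, subject to the constraint $|e^*| < d/4$, with $|v^*|$ as small as possible. I then output the coset $e^* + Q_X^\perp$. The existence of such a triple follows from the hypothesis: the true triple $(e_X, g_X, v_X)$ is feasible, so there is always at least one solution, and by optimality we have $|e^*| \leq |e_X| < d/4$ and $|v^*| \leq |v_X| < \rho d m_Z/(4n)$.

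The correctness analysis then proceeds in three steps. First, subtracting the two versions of the syndrome equation (namely, the equation for $(e^*, g^*, v^*)$ and the one for $(e_X, g_X, v_X)$) yields
\begin{equation*}
  H_Z\bigl((e^* - e_X) + (g^* - g_X)\bigr) = v_X - v^* ,
\end{equation*}
so in particular $v_X - v^* \in \im H_Z$. Second, since $Q_Z$ is locally testable with soundness $\rho$, there exists $f \in \bF_q^n$ with $H_Z f = v_X - v^*$ and $|f| \leq |v_X - v^*| \cdot n/(\rho m_Z)$. Substituting, the element $e^* - e_X + (g^* - g_X) - f$ lies in $\ker H_Z = Q_Z$, so
\begin{equation*}
  c := e^* - e_X - f \in Q_Z + Q_X^\perp .
\end{equation*}
Third, by the triangle inequality together with the weight bounds $|e^*|, |e_X| < d/4$ and $|f| \leq |v_X - v^*|n/(\rho m_Z) < d/2$, we obtain $|c| \leq |e^*| + |e_X| + |f| < d$. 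The subsystem-code distance property (every element of $(Q_Z + Q_X^\perp) \setminus Q_X^\perp$ has weight $\geq d$) then forces $c \in Q_X^\perp$, yielding $e^* + Q_X^\perp = e_X + f + Q_X^\perp$ as required. The analogous $Z$-decoder is obtained by swapping the roles of $X$ and $Z$ throughout and invoking the local testability of $Q_X$ instead of $Q_Z$.

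The main technical obstacle I anticipate is sharpening the weight bound on $f$ to the quantity $|v_X| \cdot n/(\rho m_Z)$ stated in the lemma, since the straightforward triangle-inequality estimate $|v_X - v^*| \leq |v_X| + |v^*| \leq 2|v_X|$ only gives $|f| \leq 2|v_X|\cdot n/(\rho m_Z)$. I see two natural ways to handle this: either refine the decoder so that it directly minimizes a weighted objective like $|e^*| + (n/\rho m_Z)|v^*|$ and leverages the joint optimality to eliminate the factor of two, or absorb the factor of two into the constants of the statement, which is harmless for the single-shot application in Appendix~\ref{sec:ec}. A secondary concern is that the decoder as stated performs a brute-force search and is not manifestly efficient; however, the lemma only asks for an information-theoretic decoder, so algorithmic efficiency is not required here.
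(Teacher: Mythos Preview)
Your proposal is correct and takes essentially the same approach as the paper: the paper's decoder first projects $s_Z$ to its nearest point $s_Z'\in\im(H_Z)$ and then searches for a low-weight $e_X'$ with $s_Z'-H_Ze_X'\in\im(H_Z|_{Q_X^\perp})$, which is a minor reformulation of your joint search, and then uses local testability plus the subsystem distance exactly as you do. Your factor-of-two concern is well-founded---indeed the paper's own argument has the same slack (the asserted inequality $|s_Z'-H_Z(e_X+g_X)|\leq|v_X|$ is not justified beyond the triangle inequality, which only gives $\leq 2|v_X|$), so absorbing the constant is the right call.
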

\begin{proof}
  The desired $X$-decoder simply first computes the nearest $s_Z'\in\im(H_Z)$ to $s_Z$, then computes any $e_X'\in\bF_q^n$ of weight $|e_X'|<d/2$ such that $s_Z'-H_Ze_X'\in\im(H_Z|_{Q_X^\perp})$, and returns $e_X'$.

  To see that this algorithm succeeds, first observe that by definition $|s_Z'-H_Z(e_X+g_X)|\leq|v_X|$, so by the definition of local testability, there exists $f\in\bF_q^n$ of weight $|f|\leq|v|\cdot n/\rho m_Z<d/4$ such that $s_Z'-H_Z(e_X+g_X)=H_Z(-f)$. Thus $e_X+f$ satisfies $s_Z'-H_Z(e_X+f)=-H_Zg_X$ and $|e_X+f|<|e_X|+|f|<d/2$, so $e_X+f$ is a valid choice of $e_X'$. Every other valid choice $e_X''$ of $e_X'$ by definition has $e_X''-(e_X+f)\in \ker(H_Z)+Q_X^\perp=Q_Z+Q_X^\perp$ and $|e_X''-(e_X+f)|<d$, so because $Q$ has distance $d$, we must have $e_X''-(e_X+f)\in Q_X^\perp$. Hence the algorithm returns a representative of the coset $e_X+f+Q_X^\perp$, as desired.
\end{proof}

\begin{remark}
  While we did not consider locality of the codes $Q_X,Q_Z$ in Lemma~\ref{lem:ftdec} above, in practice measuring gauge operators of larger locality will require larger syndrome extraction circuits, which may lead to higher syndrome noise rates.
\end{remark}


\begin{remark}
  Decoding in the presence of syndrome errors without repeated syndrome measurements is sometimes called ``single-shot decoding.'' Hence Lemma~\ref{lem:ftdec} can be interpreted as saying that a subsystem code $Q=(Q_X,Q_Z)$ is (information-theoretically) single-shot decodable if $Q_X,Q_Z$ are locally testable.
\end{remark}

We can apply Lemma~\ref{lem:ftdec} to the subsystem product of an arbitrary pair of CSS codes as follows. For $i\in[2]$, assume that $Q^i=(Q^i_X,Q^i_Z)$ is a length-$n$ (non-subsystem) CSS code. Let $H^i_X\in\bF_q^{k^i_X\times n}$, $H^i_Z\in\bF_q^{k^i_Z\times n}$ be full-rank parity-check matrices for $Q^i_X,Q^i_Z$ respectively, and define ${H^i_X}'=G^i_XH^i_X,\;{H^i_Z}'=G^i_ZH^i_Z$ for some full-rank generator matrices $G^i_X\in\bF_q^{(n+k^i_X)\times k^i_X},\;G^i_Z\in\bF_q^{(n+k^i_Z)\times k^i_Z}$ of linear-distance codes $\im(G^i_X),\im(G^i_Z)$, respectively. Therefore ${H^i_X}',{H^i_Z}'$ are still parity-check matrices for $Q^i_X,Q^i_Z$ respectively, but now the syndrome spaces $\im({H^i_X}')=\im(G^i_X),\;\im({H^i_Z}')=\im(G^i_Z)$ are linear-distance codes, meaning that every error that does not equal a codeword has a linear-weight syndrome.

Then letting $Q=(Q_X,Q_Z)=Q^1\otimes Q^2$ be the subsystem product, the length-$N=n^2$ tensor codes
\begin{equation*}
  Q_X=Q^1_X\otimes Q^2_X, \hspace{1em} Q_Z=Q^1_Z\otimes Q^2_Z
\end{equation*}
with respective parity-check matrices
\begin{equation}
  \label{eq:subprodpc}
  H_X = \begin{pmatrix}{H^1_X}'\otimes I\\I\otimes{H^2_X}'\end{pmatrix}, \hspace{1em} H_Z = \begin{pmatrix}{H^1_Z}'\otimes I\\I\otimes{H^2_Z}'\end{pmatrix}
\end{equation}
will necessarily be locally testable with locality $w=O(\sqrt{N})$ and soundness $\rho=\Omega(1)$, as shown in \cite[Claim~6.2]{viderman_combination_2015}\footnote{Specifically, \cite{viderman_combination_2015} shows that the tensor product $C_1\otimes C_2$ of two length-$n$ classical codes $C_1,C_2$ has constant soundness in the following sense: the probability that a randomly chosen row or column of a given matrix $c\in\bF_q^{n\times n}$ lies within the respective code $C_1$ or $C_2$ is at least $\Omega(d(c,C_1\otimes C_2)/n^2)$. This result directly implies constant soundness in the sense of Definition~\ref{def:MDS} for our parity-check matrices in~(\ref{eq:subprodpc}), because by construction every nonzero syndrome of the factor parity-check matrices ${H^i_X}',{H^i_Z}'$ has weight $\Theta(n)$.}. Hence if $Q$ has linear distance, such as for our codes in Corollary~\ref{cor:subpeRS}, then Lemma~\ref{lem:ftdec} implies that $Q$ is (information-theoretically) decodable from syndromes experiencing corruptions of up to linear weight $\Theta(N)$.


We can also apply Lemma~\ref{lem:ftdec} to higher-order subsystem products. Specifically, for $i\in[t]$ let $Q^i=(Q^i_X,Q^i_Z))$ be a length-$n$ (non-subsystem) CSS code, and let $Q=(Q_X,Q_Z)=\bigotimes_{i\in[t]}Q^i$ be the subsystem product (see Definition~\ref{def:subhomprod}). Then $Q_X,Q_Z$ are classical $t$-dimensional tensor product codes of length $N=n^t$, which by Theorem~\ref{thm:tensorltc} are locally testable\footnote{While Theorem~\ref{thm:tensorltc}, proven by \cite{viderman_combination_2015}, is only stated for the tensor product of $t$ copies of the same code, it also applies to products of different codes, as explained in \cite[Remark~4.5]{viderman_combination_2015}.} of locality $w\leq tn$ and soundness $\rho\geq\delta^{3t}/(tn)^{O(1)}$, where $\delta$ denotes the minimum relative distance of the classical codes $Q^i_X,Q^i_Z$ for
$i\in[t]$. Hence if $Q$ and all $Q^i_X,Q^i_Z$ have constant relative distance and $t$ is taken to be a large constant as $n\rightarrow\infty$, as is the case for the construction described in Remark~\ref{remark:subpemanyrandom}, then Lemma~\ref{lem:ftdec} implies that $Q$ is (information-theoretically) decodable from syndromes experiencing corruptions of almost-linear weight $N^{1-\epsilon}$, where $\epsilon\rightarrow 0$ as $t\rightarrow\infty$.



\end{document}